\newcolumntype{L}[1]{>{\raggedright\arraybackslash}p{#1}}
\newcolumntype{C}[1]{>{\centering\arraybackslash}p{#1}}
\newcolumntype{R}[1]{>{\raggedleft\arraybackslash}p{#1}}
\definecolor{weborange}{rgb}{.8,.3,.3}
\definecolor{webblue}{rgb}{0,0,.8}
\definecolor{internallinkcolor}{rgb}{0,.5,0}
\definecolor{externallinkcolor}{rgb}{0,0,.5}
\definecolor{DarkBlue}{rgb}{0,0,0.8}  \definecolor{DarkOrange}{rgb}{0.8,0.4,0}  \def\mylinkcolor{DarkBlue}
\newcommand{\hynote}[1]{\textcolor{magenta}{\small (Henry: #1)}}
\newcommand{\hmnote}[1]{\textcolor{red}{\small (Hamoon: #1)}}
\newcommand{\snote}[1]{\textcolor{blue}{\small (Sajjad: #1)}}
\renewcommand{\hynote}[1]{\textcolor{magenta}{}}
\renewcommand{\hmnote}[1]{\textcolor{red}{}}
\renewcommand{\snote}[1]{\textcolor{blue}{}}
\definecolor{White}{rgb}{1,1,1} \definecolor{Black}{rgb}{0,0,0} \definecolor{LightGray}{rgb}{.8,.8,.8} \colorlet{ChannelColor}{LightGray} \colorlet{ChannelTextColor}{Black} \colorlet{ReadoutColor}{White}
\renewcommand{\cal}[1]{\mathcal{#1}}
\newcommand{\N}{\mathbb{N}}
\numberwithin{equation}{section}
\newtheorem{theorem}{Theorem}[section]
\newtheorem{lemma}[theorem]{Lemma}
\newtheorem{claim}[theorem]{Claim}
\newtheorem{conjecture}[theorem]{Conjecture}
\newtheorem{corollary}[theorem]{Corollary}
\newtheorem{proposition}[theorem]{Proposition}
\newtheorem{definition}[theorem]{Definition}
\newtheorem{remark}[theorem]{Remark}
 \newcommand{\complex}{\mathbb{C}} \let\epsilon=\varepsilon
\newcommand{\microspace}{\mspace{.5mu}} \newcommand{\ket}[1]{\ensuremath{\lvert\microspace #1
    \microspace\rangle}} \newcommand{\bra}[1]{\ensuremath{\langle\microspace #1
    \microspace\rvert}} \newcommand{\ketbra}[2]{\ensuremath{\lvert\microspace #1
    \microspace\rangle\! \langle \microspace #2 \microspace \rvert}}
\newcommand{\paren}[1]{(#1)}
\newcommand{\Bigparen}[1]{\Big(#1\Big)}
\newcommand{\class}[1]{\mathsf{#1}} 
     \newcommand{\QMIP}{\class{QMIP}} \WithSuffix\newcommand\QMIP*{\ensuremath{\class{QMIP}^*}} \newcommand{\PSPACE}{\class{PSPACE}}  \newcommand{\MIP}{\class{MIP}} \WithSuffix\newcommand\MIP*{\ensuremath{\class{MIP}^*}}      \newcommand{\RE}{\class{RE}}
\DeclareMathOperator*{\E}{\mathbf{E}}
\newcommand{\val}{\omega}
\newcommand{\valco}{\omega_{co}}
\newcommand{\TIME}{\mathsf{TIME}}
\newcommand{\alg}[1]{\texttt{#1}}
\newcommand{\strategy}{\mathscr{S}}
\DeclareMathOperator{\tr}{\mathrm{tr}}
\newcommand{\id}{\mathbb{1}}
\newcommand{\eps}{\varepsilon}
\newcommand{\poly}{\mathrm{poly}}
\newcommand{\hilb}{\cal{H}}
\newcommand{\B}{\mathrm{B}}
\newcommand{\algebra}{\mathscr{A}}
\newcommand{\algebraB}{\mathscr{B}}
\newcommand{\hx}{{\hat{x}}}
\newcommand{\cw}{{\overline{W}}}
\newcommand{\ca}{{\overline{A}}}
\newcommand{\cb}{{\overline{B}}}
\newcommand{\ans}{\mathrm{ans}}
\newcommand{\super}{\mathrm{super}}
\newcommand{\orac}{\mathrm{orac}}
\newcommand{\alice}{A}
\newcommand{\bob}{B}
\newcommand{\intro}{\mathrm{intro}}
\newcommand{\Succ}{\mathrm{succ}}
\newcommand{\qs}{\mathrm{QS}}
\newcommand{\sample}{S}
\newcommand{\erase}{E}
\newcommand{\introspect}{I}
\newcommand{\epr}{\mathrm{EPR}}
\newcommand{\verifier}{\mathscr{V}}
\newcommand{\UGS}{\mathscr{G}}
\newcommand{\coRE}{\mathsf{coRE}}
\newcommand{\R}{\mathbb{R}}
\newcommand{\C}{\mathbb{C}}
\newcommand{\Compress}{\texttt{Compress}}
\newcommand{\GapCompress}{\texttt{GappedCompress}}
\newcommand{\GaplessCompress}{\texttt{GaplessCompress}}
\newcommand{\Halt}{\texttt{Halt}}
\newcommand{\trace}[1]{\tau \left ( #1 \right )}
\title{Nonlocal Games, Compression Theorems, \\ and the Arithmetical Hierarchy}
\author{
Hamoon Mousavi\thanks{\texttt{h.mousavi@cs.columbia.edu}} \\ \small{Columbia University}
\and Seyed Sajjad Nezhadi\thanks{\texttt{sajjad@umd.edu}} \\ \small{University of Maryland} 
\and Henry Yuen\thanks{\texttt{hyuen@cs.columbia.edu}} \\ \small{Columbia University}
}
\date{} 
\begin{document}

\maketitle

\begin{abstract}

We investigate the connection between the complexity of nonlocal games and the
arithmetical hierarchy, a classification of languages according to the complexity of arithmetical
formulas defining them. It was recently shown by Ji, Natarajan, Vidick, Wright and Yuen that
deciding whether the (finite-dimensional) quantum value of a nonlocal game is $1$ or at most $\frac{1}{2}$ is complete for the class $\Sigma_1$ (i.e., $\RE$). A result of Slofstra implies that deciding whether the commuting
operator value of a nonlocal game is equal to $1$ is complete for the class $\Pi_1$ (i.e., $\mathsf{coRE}$). 

We prove that deciding whether the quantum value of a two-player nonlocal game is exactly equal to $1$ is complete for $\Pi_2$; this class is in the second level of the arithmetical hierarchy and corresponds to formulas 
of the form ``$\forall x \, \exists y \, \phi(x,y)$''. This shows that exactly computing the quantum value is strictly harder than approximating it, and also strictly harder than computing the commuting operator value (either exactly or approximately). 

We explain how results about the complexity of nonlocal games all follow in a unified manner from a technique known as \emph{compression}. At the core of our $\Pi_2$-completeness result is a new ``gapless'' compression theorem that holds for both quantum and commuting operator strategies. Our compression theorem yields as a byproduct an alternative proof of Slofstra's result that the set of quantum correlations is not closed. We also show how a ``gap-preserving'' compression theorem for commuting operator strategies would imply that approximating the commuting operator value is complete for $\Pi_1$.

\end{abstract}

\newpage
\tableofcontents

\newpage

\newpage

\section{Introduction}
\label{section:introduction}

A nonlocal game describes a scenario in which a (classical) verifier plays a game with two separated, but possibly entangled, players (who we'll call Alice and Bob). In the game, the verifier samples a pair of questions $(x,y)$ from a question distribution $\mu$, sends $x$ to Alice and $y$ to Bob, and then receives answers $a$ and $b$ from the players. The verifier then computes a decision procedure $D(x,y,a,b)$ to determine whether the players win or lose. We assume that Alice and Bob know the question distribution and decision procedure before the game starts, and cooperatively select an entangled strategy to maximize their probability of winning.

Recent results have shown that the optimal winning probability, called the \emph{value}, of a nonlocal game is uncomputable in general. Surprisingly, the study of the complexity of nonlocal games is also intimately tied to questions outside of complexity theory. For example, Slofstra's result about the undecidability of whether a nonlocal game has a perfect quantum strategy (i.e.\ a strategy that wins with probability $1$) was a byproduct of his showing that the set of quantum correlations is not closed~\cite{slofstra_tsirelsons_problem_and_an_embedding_theorem,slofstra_set_of_quantum_correlations}. As another example, the complexity-theoretic result $\MIP^* = \RE$~\cite{ji_mip_re} (which implies that there is no algorithm to even \emph{approximate} the quantum value of a nonlocal game) yields negative answers to both Tsirelson's Problem from quantum information theory and Connes' Embedding Problem from operator algebras~\cite{connes1976classification,ozawa_connes_embedding}. 

These uncomputability results for nonlocal games demonstrate that the space of quantum strategies is terribly complex --- no algorithm can optimize over them, even approximately! This is already quite striking, but a closer look at these results indicates that more can be said: different computational problems for nonlocal games can be uncomputable in \emph{incomparable ways}. To explain this we need to define two relevant models of entangled strategies.

\paragraph{Strategies for nonlocal games.} The most general model we consider is the class of \emph{commuting operator} strategies. Let $G = (\cal{X},\cal{A},\mu,D)$ denote a nonlocal game with question alphabet $\cal{X}$, answer alphabet $\cal{A}$, question distribution $\mu$, and decision procedure $D: \cal{X} \times \cal{X} \times \cal{A} \times \cal{A} \to \{0,1\}$.  A commuting operator strategy $\strategy$ for a game $G$ is specified by the following data: a separable Hilbert space $\hilb$, a unit vector $\ket{\psi} \in \hilb$ (called the \emph{state}), and sets of \emph{measurements} $A = \{ A^x \}_{x \in \cal{X}}$ and $B = \{ B^y \}_{y \in \cal{X}}$ acting on $\hilb$ satisfying the following: 
\begin{itemize}
	\item For all $x,y$, the measurements $A^x = \{A^x_a\}_{a \in \cal{A}}$ and $B^y = \{B^y_b\}_{b \in \cal{A}}$ are sets of bounded positive operators on $\hilb$, with each set summing to the identity, and
	\item For all $x,y,a,b$, the operators $A^x_a$ and $B^y_b$ commute.
\end{itemize}
Given questions $(x,y)$, the probability that the players respond with answers $(a,b)$ is given by $\bra{\psi}A^x_a \, B^y_b \ket{\psi}$. The two conditions on the measurement operators above ensure that this is a valid probability distribution over $\cal{A} \times \cal{A}$, and furthermore the commutation condition ensures that the strategy is \emph{non-signaling}, meaning that the marginal probability that a player responds with an answer only depends on their question (and not the other player's question).

The \emph{value} of a commuting operator strategy $\strategy = (\ket{\psi},A,B)$ in a game $G$ is given by 
\[
	\val(G,\strategy) := \sum_{x,y,a,b} \mu(x,y) \cdot \bra{\psi}A^x_a B^y_b \ket{\psi} \cdot D(x,y,a,b)~.
\]
The \emph{commuting operator} value of a game $G$ is defined as 
\[
	\val_{co}(G) := \sup_{\text{\tiny commuting operator } \strategy} \, \val(G,\strategy).
\]
Intuitively, the commuting operator value of a game represents the players' maximum success probability allowed under quantum mechanics.

An important subclass of commuting operator strategies are the \emph{finite-dimensional} ones, i.e. where the underlying Hilbert space $\hilb$ is equal to $\C^d$ for some integer $d$. We define the \emph{quantum value}\footnote{The reason for this name, as opposed to ``finite-dimensional value'', is historical: the study of nonlocal games has largely focused on the setting of finite-dimensional strategies.} of a game $G$ to be
\[
	\val_q(G) := \sup_{\text{\tiny finite-dimensional } \strategy} \, \val(G,\strategy).
\] 
In the finite-dimensional setting, commuting operator strategies coincide with strategies in the \emph{tensor product model}: one can find two finite-dimensional Hilbert spaces $\hilb_A, \hilb_B$, a bipartite state $\ket{\tilde{\psi}} \in \hilb_A \otimes \hilb_B$, and measurements $\{\tilde{A}^x_a \}$ on $\hilb_A$ and $\{\tilde{B}^y_b\}$ on $\hilb_B$ such that
\[
	\bra{\psi}A^x_a \, B^y_b \ket{\psi} = \bra{\tilde{\psi}} \tilde{A}^x_a \otimes \tilde{B}^y_b \ket{\tilde{\psi}}~.
\]
For a proof, see~\cite[Theorem 1]{scholz2008tsirelson}. Tensor product strategies give a natural way to model the behavior of spatially separated players, and this is perhaps the most commonly studied model of strategies for nonlocal games. General commuting operator strategies, on the other hand, do not assume that there is an \emph{a priori} tensor product decomposition of the Hilbert space, but only that the non-signaling property is enforced via commutativity of the players' measurements. The commuting operator model of quantum correlations arise naturally in algebraic formulations of quantum field theory~\cite{scholz2008tsirelson,fritz2012tsirelson}.

It is easy to see that $\val_q(G) \leq \val_{co}(G)$. Tsirelson's Problem is essentially a question about whether $\val_q(G) = \val_{co}(G)$ for all games $G$; in other words, can all commuting operator strategies (which might be infinite dimensional) be approximated arbitrarily well by finite-dimensional ones~\cite{scholz2008tsirelson}? Furthermore, it was shown that Tsirelson's Problem is equivalent to Connes' Embedding Problem, which was a long-standing question in operator algebras about the approximability of von Neumann algebras by finite-dimensional matrix algebras~\cite{connes1976classification,junge_connes_embedding_problem,fritz2012tsirelson,ozawa_connes_embedding}. As previously mentioned, these questions about finite-dimensional approximability of infinite-dimensional objects are intimately connected to questions about computability of the value of nonlocal games.

\paragraph{Computability of nonlocal games.} We now define computational problems associated with computing the value of nonlocal games. Fix $0 \leq \eps < 1$ and a value type $t \in \{q,co\}$. Define two sets of nonlocal games
\[
	L_t^{yes} := \{ G : \omega_t(G) = 1 \} \qquad \text{and} \qquad L_{t,\eps}^{no} := \{ G : \omega_t(G) < 1 - \eps \}~.
\]
These two sets are disjoint, and when $\eps = 0$, the union of these two sets is all nonlocal games. These two sets give rise to a decision problem: given a nonlocal game $G$ in the union $L_t^{yes} \cup L_{t,\eps}^{no}$, decide whether $G$ is a ``yes'' instance or a ``no'' instance. 

When $\eps = 0$, this decision problem corresponds to \emph{exactly} computing either the quantum or commuting operator value. When $\eps > 0$, this problem corresponds to \emph{approximating} the value, because being able to compute $\val_t(G)$ up to additive $\pm \frac{\eps}{2}$ error allows one to decide whether $G \in L_t^{yes}$ or $G \in L_{t,\eps}^{no}$. Thus we call deciding between $L_t^{yes}$ and $L_{t,0}^{no}$ the \emph{exact $t$-value problem}, and deciding between $L_t^{yes}$ and $L_{t,\eps}^{no}$ for $\eps > 0$ the \emph{approximate $t$-value problem} (we usually think of $\eps$ as $1/2$, but the specific value is immaterial, as long as it is strictly positive). 

We summarize the results known so far about the computability of nonlocal games:
\begin{enumerate}

\item In~\cite{slofstra_tsirelsons_problem_and_an_embedding_theorem}, Slofstra showed that the exact $co$-value problem is hard for the class $\coRE$, which is the complement of $\RE$, the set of recursively enumerable languages. In other words, there exists a computable reduction from Turing machines $M$ to nonlocal games $G$ such that $\val_{co}(G) = 1$ if and only if $M$ does \emph{not} halt.

Furthermore, the exact $co$-value problem is contained in $\coRE$ due to the existence of a semidefinite programming hierarchy that converges from above to the commuting operator value of a given nonlocal game~\cite{navascues2008convergent,doherty2008quantum}. Thus the exact $co$-value problem is complete for $\coRE$. 

\item In~\cite{slofstra_set_of_quantum_correlations}, Slofstra showed that the exact $q$-value problem is also hard for $\coRE$. However, no upper bound on the complexity of the exact $q$-value problem was given.

\item In~\cite{ji_mip_re}, Ji, Natarajan, Vidick, Wright and Yuen showed that the approximate $q$-value problem is hard for $\RE$. In other words, there exists a computable reduction from Turing machines $M$ to nonlocal games $G$ such that if $M$ halts then $\val_q(G) = 1$, otherwise $\val_q(G) \leq \frac{1}{2}$. 

Furthermore, the approximate $q$-value problem is contained in $\RE$ due to the fact that a brute-force enumeration algorithm can find a finite-dimensional strategy that succeeds with probability arbitrarily close to $1$, provided that $\val_q(G) = 1$. Thus, the approximate $q$-value problem is complete for $\RE$. 
\end{enumerate}
While these results show that the exact $q$-value, exact $co$-value, and approximate $q$-value problems are all undecidable, they are undecidable in different ways. For example, a basic result in computability theory is that the classes $\RE$ and $\coRE$ are incomparable (i.e.\ they do not contain each other). Thus the approximate $q$-value problem cannot be reduced to the exact $co$-value problem and vice versa.\footnote{The notion of reduction that we consider here are \emph{many-one reductions}, i.e., yes instances are mapped to yes instances, and no instances are mapped to no instances.} Similarly, because both $\RE$ and $\coRE$ can be reduced to it, the exact $q$-value problem must be \emph{strictly} harder than both the approximate $q$-value and exact $co$-value problem (in the sense that a Turing machine equipped with the ability to compute the exact $co$-value of a game provably cannot solve the exact $q$-value problem). 

We note that (a) since the complexities of the $q$-value and $co$-value problems are different, but (b) a positive answer to Tsirelson's Problem implies that they are the same, it must be that Tsirelson's Problem (and thus Connes' Embedding Problem) has a negative answer.

These results still leave two main open questions about the complexity of nonlocal games:
\begin{enumerate}
	\item What is the complexity of the exact $q$-value problem (i.e.\ deciding whether $\val_q(G) \stackrel{?}{=} 1$).
	\item What is the complexity of the approximate $co$-value problem (i.e.\ deciding whether $\val_{co}(G) = 1$ or $\val_{co}(G) < \frac{1}{2}$)? 
\end{enumerate}

In this paper we resolve the first open question by characterizing the complexity of the exact $q$-value problem:

\begin{theorem}
\label{thm:main-intro}
The problem of deciding whether $\val_q(G) = 1$ for nonlocal games $G$ is complete for $\Pi_2$. 
\end{theorem}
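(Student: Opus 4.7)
The plan is to establish both $\Pi_2$ membership and $\Pi_2$ hardness of the exact $q$-value problem. For the upper bound, I would observe that $\omega_q(G) = 1$ iff for every positive integer $n$, there exists a finite-dimensional tensor-product strategy of value at least $1 - 1/n$. Since the value is a polynomial in the entries of the state and measurement operators, continuity together with the density of rationals lets me restrict the inner existential to strategies with rational-valued entries of bounded dimension and denominator, on which the value is a rational number computable exactly. The inner predicate is thus $\Sigma_1$ (enumerate rational strategies and halt when the value bound is met), so together with the outer ``$\forall n$'' the problem lies in $\Pi_2$.

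For hardness, I would reduce from a generic $\Pi_2$ language $L$, encoded as $x \in L$ iff $\forall n,\, M_{x,n}$ halts for a uniformly computable family $\{M_{x,n}\}$ of Turing machines. Invoking the $\MIP^* = \RE$ theorem of Ji, Natarajan, Vidick, Wright and Yuen cited above, I obtain a computable map $M_{x,n} \mapsto G_{x,n}$ with $\omega_q(G_{x,n}) = 1$ when $M_{x,n}$ halts and $\omega_q(G_{x,n}) \leq 1/2$ otherwise. I then build a single ``mixture'' game $G_x$ whose verifier samples an index $n$ from a full-support distribution $\mu$ (say $\mu(n) = 2^{-n}$), then samples and runs the $G_{x,n}$ protocol with questions tagged by $n$; the resulting verifier is clearly computable from $x$.

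The crux is to show $\omega_q(G_x) = 1$ iff $\omega_q(G_{x,n}) = 1$ for every $n$, which is equivalent to $x \in L$. The forward direction follows from the bound $\omega_q(G_x) \leq \sum_n \mu(n)\, \omega_q(G_{x,n})$ (obtained by restricting any finite-dimensional strategy for $G_x$ to each question tag), which falls strictly below $1$ as soon as some sub-value drops to at most $1/2$. For the reverse direction, given $\eps > 0$ I would truncate to $n \leq N$ with tail mass below $\eps/2$, pick a finite-dimensional $(1 - \eps/2)$-optimal strategy $S_n$ for each $G_{x,n}$ with $n \leq N$, and combine them on the tensor-product Hilbert space $\bigotimes_{n \leq N} \hilb_{S_n}$, with product state and measurements acting on the relevant tensor factor; the combined strategy achieves value $\geq (1 - \eps/2)^2 \geq 1 - \eps$ in $G_x$.

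The main obstacle I expect is precisely this simultaneous-achievability step: producing a single finite-dimensional strategy that is near-optimal across an arbitrarily long prefix of sub-games. The explicit tensor-product construction above works but is ad hoc, and I suspect the paper's new gapless compression theorem is designed to absorb it by building the recursion directly into a single game via Kleene's fixed-point theorem, yielding one recursive game whose ``value equals $1$'' property inductively encodes the whole $\Pi_2$ statement. Such a compression-based route has the advantage of subsuming the JNVWY and Slofstra undecidability results in one unified framework, and, as the abstract notes, yielding the non-closure of the set of quantum correlations as a byproduct.
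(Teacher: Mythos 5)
Your upper bound is essentially the paper's: expressing $\omega_q(G) = 1$ as "$\forall n \,\exists$ a finite-dimensional strategy with value $> 1 - 1/n$" and observing the inner quantifier is $\Sigma_1$ via enumeration over a net of rational (or $\epsilon$-close) strategies is exactly what the paper does. That part is fine.

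The hardness argument has a real gap, and it is not quite the one you flag. The mixture game $G_x$ you construct samples a tag $n$ from a full-support distribution over $\N$ and plays $G_{x,n}$; but then the question alphabet of $G_x$ is $\bigsqcup_{n\in\N} \cal{X}_{x,n}$ and its answer alphabet is $\bigsqcup_{n\in\N} \cal{A}_{x,n}$, both \emph{infinite}. Recall that a nonlocal game in the sense of the theorem is a \emph{finite} tuple $(\cal{X},\cal{A},\mu,D)$; the statement "deciding whether $\omega_q(G)=1$ is $\Pi_2$-complete" is about finite games $G$ given explicitly. Your $G_x$ is not a finite nonlocal game but rather an infinite family of games (equivalently, a verifier), so the map $x\mapsto G_x$ is not a reduction to the exact $q$-value problem. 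You cannot fix this by truncating to $n\le N$, because a finite truncation only encodes a $\Sigma_1$ fact ("$M_{x,n}$ halts for all $n\le N$"), destroying the $\Pi_2$-ness. Your tensor-product soundness/completeness calculations are correct for the infinite object, but the infinite object is simply not in the domain of the decision problem.

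This is precisely what the compression machinery is for. The paper first applies the $\Sigma_1\!\to$\,game reduction of Ji--Natarajan--Vidick--Wright--Yuen to each $S_n$ to get a sequence $(H_n)_n$ (as you do), but then, instead of mixing all $n$ at once, it builds a \emph{self-referential} finite game $G_n$ that with half probability plays $H_n$ and with half probability plays the compression $G_{n+1}'$ of the \emph{next} game in its own sequence, using Kleene's recursion theorem for well-definedness. The gapless compression theorem (\Cref{thm:gapless-comp-formal}) guarantees that each $G_{n+1}'$ is a bona fide finite game of polylogarithmic complexity, so each $G_n$ is a finite game, and the first one, $G_1$ (more precisely $G_\kappa^{\super}$ from \Cref{thm:super-comp}), already has $\omega_q = 1$ iff \emph{all} the $S_n$ are true. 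The recursion replaces your explicit infinite union by finitely many questions that point to a compressed version of "the rest of the sequence," and the geometric shrinkage $1 - \omega_q(G_n) \le \tfrac{\alpha}{2}(1 - \omega_q(G_{n+1}))$ takes the role of your tensor-product completeness step. So your intuition about where the compression theorem enters is right, but the actual obstruction it removes is the infinitude of the mixture game, not the ad-hocness of the tensor-product combination.
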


The class $\Pi_2$ is in the second level of the \emph{arithmetical hierarchy}, which is an infinite hierarchy of complexity classes\footnote{In computability theory these classes are usually denoted as $\Sigma_k^0$ and $\Pi_k^0$. For simplicity we have dropped the superscripts.} $\bigcup_{k=0}^\infty \Sigma_k$ and $\bigcup_{k=0}^\infty\Pi_k$ that characterize the complexity of languages according to \emph{arithmetical formulas} that define them. The class $\Sigma_k$ consists of all languages reducible to deciding whether a given \emph{$\Sigma_k$-sentence} is true. A $\Sigma_k$-sentence $S$ is of the form $\exists x_1 \, \forall x_2 \, \exists \cdots \, \phi(x_1,\ldots,x_k)$ for some computable predicate $\phi$. Similarly, the class $\Pi_k$ consists of all languages reducible to deciding a given $\Pi_k$-sentence is true; these are sentences of the form $\forall x_1 \, \exists x_2 \, \forall \cdots \, \phi(x_1,\ldots,x_k)$. 

At the zeroth ($k=0$) level, the classes $\Sigma_0 = \Pi_0$ correspond to the set of decidable languages, and the first level classes $\Sigma_1$ and $\Pi_1$ are simply $\RE$ and $\coRE$ respectively. The class $\Pi_2$ is in the second level of the arithmetical hierarchy, and contains both $\Sigma_1$ and $\Pi_1$. It is a well-known fact from computability theory that the levels of the arithmetical hierarchy are all distinct, and furthermore $\Sigma_k \neq \Pi_k$ for all $k \geq 1$.

Although we do not resolve the second open question, it is conjectured that the approximate $co$-value problem is complete for $\coRE = \Pi_1$. A positive resolution of this conjecture would complete the picture of the computability landscape of nonlocal games, depicted in \Cref{fig:table}, and give a pleasing correspondence between different nonlocal game problems and classes in the arithmetical hierarchy.

\begin{figure}[H]
\begin{center}
{\renewcommand{\arraystretch}{1.5}
\begin{tabular}{  c | l l  } 

	& $\eps = 0$ & $\eps > 0$ \\ 
\hline
$\omega_q(G) \pm \eps $ & $\Pi_2$ (this paper) \qquad \qquad & $\Sigma_1$~\cite{ji_mip_re} \\ 
$\omega_{co}(G) \pm \eps $ & $\Pi_1$~\cite{slofstra_tsirelsons_problem_and_an_embedding_theorem} & $\Pi_1$ (conjectured) \\ 

\end{tabular}
}
\end{center}
\caption{A characterization of the complexity of computing the value of a nonlocal game in terms of the arithmetical hierarchy, depending on whether the quantum or commuting operator value is being considered, and whether the value is being computed exactly or approximately. The top left entry is the main result of this paper, and the lower right entry is conjectured.}
\label{fig:table}
\end{figure}

We mention that the approximate and exact $q$- and $co$-value problems are used in defining the four complexity classes $\MIP^*$, $\MIP^*_0$, $\MIP^{co}$ and $\MIP^{co}_0$, respectively. In particular, the above figure corresponds to the results $\MIP^*=\RE = \Sigma_1$, $\MIP^*_0 = \Pi_2$ and $\MIP^{co} \subseteq \MIP^{co}_0 = \coRE = \Pi_1$.

\emph{A priori}, this tight correspondence between nonlocal games and the arithmetical hierarchy seems quite surprising. On one hand, computing the value of a nonlocal game corresponds to a continuous optimization problem over a space of quantum states and quantum measurements, possibly in infinite dimensions. On the other hand, deciding whether a quantified sentence is true is a discrete problem in symbolic logic ostensibly having nothing to do with quantum physics. Furthermore, the reader may notice that there are several interesting asymmetries in \Cref{fig:table}, illustrating that this correspondence has rich and unexpected behavior: if we assume the conjecture about the approximate $co$-value problem, then both exact and approximate computation of the commuting operator value are equivalent to deciding $\Pi_1$-sentences, whereas for the quantum value, the complexity splits depending on whether we are considering exact or approximate computation. 

\paragraph{Connections with noncommutative polynomial optimization.} We also point out that the aforementioned complexity results can be viewed as characterizations of the complexity of \emph{noncommutative polynomial optimization}, an important subject in mathematics, physics and computer science~\cite{navascues2008convergent,doherty2008quantum,pironio2010random,netzer2014hyperbolic}. The general formulation of noncommutative polynomial optimization (ncPO for short) is the following: given Hermitian polynomials $p,q_1,\ldots,q_m$ in $2n$-noncommutative variables $(x_1,\ldots,x_n,x_1^*,\ldots,x_n^*)$ over $\C$, compute the value of the following optimization program:
\begin{align*}
	\sup \qquad & \bra{\phi} p(X) \ket{\phi} \\
		\text{s.t.} \qquad &q_i(X) \succeq 0 \qquad \text{for $i=1,\ldots,m$}
\end{align*}
The supremum is over choices of tuples $(\cal{H},X,\phi)$ where $\cal{H}$ is a Hilbert space, $X$ is an $n$-tuple of bounded operators acting on $\cal{H}$, and $\ket{\phi}$ is a unit vector on $\cal{H}$. The notation $p(X)$ and $q_i(X)$ indicates that we evaluate each of the indeterminates $x_i$ with the operator $X_i$ and $x_i^*$ with the adjoint $X_i^*$, respectively. We consider two different variations of a ncPO program $P$; if we restrict the supremum to vary only over finite -- but unbounded -- dimensional Hilbert spaces then we call the program \emph{finite-dimensional} and let $\omega_{\mathrm{fin}}(P)$ denote the value of the program. Otherwise we call the program \emph{infinite-dimensional} and let $\omega_{\infty}(P)$ denote the value.

The complexity results in \Cref{fig:table} can be recast as the following. Given a ncPO program $P$ and a real number $c \in R$, deciding whether
\begin{enumerate}
	\item $\omega_{\mathrm{fin}}(P) \geq c$ is complete for $\Pi_2$. 
	\item $\omega_{\infty}(P) \geq c$ is complete for $\Pi_1$.
	\item $|\omega_{\mathrm{fin}}(P) - c | \leq \eps$ for fixed $\eps > 0$ is complete for $\Sigma_1$. 
\end{enumerate}
The reason for this is because on one hand we can encode the $t$-value of a nonlocal game for $t \in \{q,co\}$ as a ncPO program that is finite-dimensional if $t = q$ and infinite-dimensional if $t = co$; on the other hand the complexity of solving an ncPO program is upper-bounded by $\Pi_2$, $\Pi_1$, or $\Sigma_1$ depending on the variant of the problem. Although this connection is fairly straightforward, for completeness we provide the details in \Cref{app:polyopt}. 

We note that, by comparison, the analogous problems for \emph{commutative polynomial optimization} over $\R$ are decidable; this is because deciding whether a semialgebraic set defined by polynomial equalities/inequalities over $\R$ is empty is contained in $\PSPACE$~\cite{existentialtheoryreal}.

\medskip
\vspace{20pt}

The main conceptual result of our paper is that all of the complexity statements about nonlocal games expressed in \Cref{fig:table} can be established in a unified manner via a technique called \emph{nonlocal game compression}. At the heart of the proof of $\MIP^* = \RE$ is a \emph{gap-preserving} compression theorem for the $q$-value of games. The centerpiece of the present paper is a \emph{gapless} compression theorem that holds for both the $q$- and $co$-value of games. 
First we show that this gapless compression theorem directly gives an alternate proof of the $\Pi_1$-completeness of the exact $co$-value problem~\cite{slofstra_tsirelsons_problem_and_an_embedding_theorem}, as well as an alternate proof of Slofstra's result that the set of quantum correlations is not closed (i.e.\ there is a nonlocal game $G$ with $\val_q(G) = 1$, but there is no finite-dimensional strategy with success probability $1$)~\cite{slofstra_set_of_quantum_correlations}.

We then combine our gapless compression theorem with the gap-preserving one of~\cite{ji_mip_re} to obtain the $\Pi_2$-hardness of the exact $q$-value problem, establishing \Cref{thm:main-intro}. Finally, we also show how a gap-preserving compression theorem for the $co$-value of games would imply that the approximate $co$-value problem is complete for $\mathsf{coRE} = \Pi_1$. 

Another goal of this paper is to give a self-contained proof of a compression theorem that (a) illustrates the key ideas of the gap-preserving compression results of~\cite{natarajan_neexp,ji_mip_re}, (b) generalizes these ideas to the infinite-dimensional commuting operator setting, and (c) is presented in a language that is more accessible to researchers coming from operator algebras and related areas of mathematics. The proofs of the gap-preserving compression theorems of~\cite{natarajan_neexp,ji_mip_re} are quite involved and rely on sophisticated results ranging from self-testing~\cite{wu2016device,natarajan2018low} to the quantum soundness of the low-degree test~\cite{ito2012multi,ji2020quantum} to gap amplification methods~\cite{bavarian2017hardness}. These components are needed for the gap-preserving aspect of their compression theorem. Working in the ``gapless regime'' allows us to work with much simpler versions of these components (or circumventing them entirely). 

In \Cref{sec:compression-intro} we give an overview of how compression of nonlocal games yields the complexity characterization shown in \Cref{fig:table}. In \Cref{sec:tutorial} we give an overview of how our gapless compression theorem is proved. In \Cref{sec:synchronous} we explain the \emph{synchronous strategies framework}, which our results are expressed in. This framework gives an elegant way to work with both $q$- and $co$-type strategies in a unified manner, and brings out the connection between nonlocal games and operator algebras.

\subsection{The compression paradigm}
\label{sec:compression-intro}

Intuitively speaking, a nonlocal game compression procedure for $t$-type strategies (where $t \in \{q,co\}$) is a computable map $\Compress$ that takes an infinite sequence $\UGS = (G_n)_{n \in \N}$ of polynomial-complexity nonlocal games to another infinite sequence $\UGS' = (G_n')_{n \in \N}$ such that for every $n\in \N$, 
\begin{itemize}
	\item The optimal success probability of $t$-strategies in $G_n'$ is related in a predictable way to the optimal success probability of $t$-strategies in $G_n$, and
	\item The \emph{complexity} of the game $G_n'$ is much smaller than that of the original game $G_n$, where we measure the complexity of a game based on the number of time steps required by the verifier to compute the decision procedure. 
\end{itemize}
This second item is what motivates the name ``compression''. 

The ``polynomial-complexity'' condition on the input sequence $\UGS = (G_n)_{n \in \N}$ of games means that the complexity of each game $G_n$ is bounded by $O(n^c)$ for some constant $c > 0$, and the compression procedure $\Compress$ will depend on this constant. Furthermore, $\UGS$ and $\UGS'$ are specified via \emph{Turing machines} which play the role of the verifier for the games in the sequences. Thus the map $\Compress$ is a map from Turing machines to Turing machines. Importantly, the map $\Compress$ itself is also computable by a Turing machine.

Depending on which value type $t \in \{q,co\}$ we consider, how the optimal $t$-strategies of $G_n'$ and $G_n$ are related to each other, and how much smaller the complexity of $G_n'$ is than of $G_n$, we obtain different compression procedures. The different compression procedures, in turn, allow us to establish the different entries of the correspondence outlined in \Cref{fig:table}.

We now give a high-level sketch of this connection. 

\paragraph{Gapped compression for $q$-type strategies.} 

The $\MIP^* = \RE$ result of~\cite{ji_mip_re} relies on the following \emph{gap-preserving} (or \emph{gapped} for short) compression procedure for $q$-type strategies (i.e.\ finite-dimensional strategies).

\begin{theorem}[Gap-preserving compression, informally stated~\cite{ji_mip_re}]
\label{thm:gap-comp-informal}
There exists a computable map $\GapCompress_q$ that, given a sequence of games $\UGS = (G_n)_{n \in \N}$, outputs a sequence of games $\UGS' = (G_n')_{n \in \N}$ such that the complexity of $\UGS'$ is $O(\log n)$, and furthermore if the complexity of $\UGS$ is at most $\poly(n)$, then for all $n \in \N$,
\begin{itemize}
	\item If $\val_q(G_n) = 1$, then $\val_q(G_n') = 1$. 
	\item $\cal{E}(G_n',\frac{1}{2}) \geq \max \Big \{ \cal{E}(G_n,\frac{1}{2}) \, , \, 2^n \Big \}$. 
\end{itemize}
\end{theorem}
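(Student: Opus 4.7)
The plan is to construct $\GapCompress_q$ as the composition of two sub-procedures, a \emph{question reduction} step and an \emph{answer reduction} step, each of which separately reduces some aspect of the verifier's complexity while preserving both completeness and a quantitative form of soundness. The overall target is to replace a verifier that runs in time $\poly(n)$ by one that runs in time $O(\log n)$: since the verifier only has $O(\log n)$ time, it cannot even read the questions $(x,y)$ in full, nor can it read the answers $(a,b)$ in full, nor can it evaluate the decision predicate $D(x,y,a,b)$ directly. Instead, all of these tasks must be off-loaded to the players, while the verifier performs only a small, local consistency check. A key feature of the compression map is that it operates on Turing-machine descriptions of the verifier and is itself computable, so iterating it is legal.

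First I would handle \emph{question reduction} via quantum \emph{introspection}. The idea is to design a subgame in which Alice and Bob, sharing many EPR pairs, sample their own questions $(x,y)$ from the distribution $\mu$ of $G_n$ and report them in a format the verifier can sanity-check in time $O(\log n)$. Concretely, $\mu$ is encoded as a small classical circuit (of size $\poly(n)$), and each player holds an encoded version of its question (e.g.\ in a low-degree / Pauli basis encoding) together with sampled local randomness. The verifier picks $O(\log n)$ random coordinates and runs a low-degree or Pauli-braiding test together with a consistency check between the two players. The quantum soundness of the low-degree / Pauli-braiding test (in the spirit of~\cite{natarajan2018low,ji2020quantum}) implies that any strategy passing this introspection subgame with probability close to $1$ must, up to local isometry, actually sample $(x,y)\sim\mu$ using many shared EPR pairs, and the two players must hold consistent classical registers containing $x$ and $y$ respectively.

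Next I would perform \emph{answer reduction} using a classical PCP (or PCPP). Rather than sending $(a,b)$ directly, each player sends back $(a,b)$ together with a short PCP-encoded witness certifying that $D(x,y,a,b) = 1$; the verifier queries $O(\log n)$ positions of the encoded witness. Because $D$ has complexity $\poly(n)$, efficient classical PCPs give a verifier that reads only $O(\log n)$ bits from each player's PCP string to check acceptance. This is composed with the introspection step so the verifier simultaneously runs the low-degree / consistency checks and the PCP checks, probabilistically splitting its $O(\log n)$ coins between the different subtests. Combining with standard gap-amplification (or parallel repetition of the checking subgames) ensures the soundness error of the combined game can be controlled, and completeness is preserved because honest players holding a perfect quantum strategy for $G_n$ together with the canonical introspection and PCP encodings win $G_n'$ with probability $1$.

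Finally, for the entanglement bound $\cal{E}(G_n',\tfrac12) \geq \max\{\cal{E}(G_n,\tfrac12), 2^n\}$: the $2^n$ term comes essentially ``for free'' from the introspection subgame, since any strategy passing the low-degree / Pauli-braiding tests with probability at least $1/2$ must, by rigidity, contain at least $n$ EPR pairs, forcing Hilbert space dimension $\geq 2^n$. The $\cal{E}(G_n,\tfrac12)$ term follows because the soundness of answer reduction lets us decode any $\tfrac12$-winning strategy in $G_n'$ into a $\tfrac12$-winning strategy in $G_n$ using the same (or an isometrically embedded) Hilbert space. The main obstacle, and the technical heart of the proof, is establishing this gap-preserving soundness: one must show quantitatively that if players win $G_n'$ with probability at least $1/2$, then decoded behavior on $G_n$ is also close to winning, despite the fact that the players' introspected questions come from their own quantum measurements. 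This requires robust self-testing / rigidity of the low-degree test together with careful control of error accumulation across the composed subtests, and is precisely where the machinery of~\cite{natarajan_neexp,ji_mip_re} is needed.
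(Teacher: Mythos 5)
Your sketch correctly reproduces the three-part structure of the gap-preserving compression from~\cite{ji_mip_re} -- introspection via a robust quantum low-degree / Pauli-braiding test for question reduction, a PCP-based answer reduction whose quantum soundness rests on the classical low-degree test against entangled provers, and gap amplification via parallel repetition -- which is exactly the decomposition the paper itself summarizes when contrasting the gapped theorem with its own gapless one. Note, however, that the paper does not contain a proof of \Cref{thm:gap-comp-informal}: it is cited as a black box from~\cite{ji_mip_re}, and the paper's original contribution is the \emph{gapless} analogue (\Cref{thm:gapless-comp-informal}), which deliberately avoids the three technically heavy ingredients you invoke (robust low-degree rigidity, entangled PCP soundness, parallel repetition) by working in a regime where only value-$1$ soundness must be preserved.
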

Here, for a nonlocal game $G$ and real number $0 \leq p \leq 1$, the quantity $\cal{E}(G,p)$ is defined to be the minimum dimension of a strategy $\strategy$ such that $\omega(G,\strategy) \geq p$. If there is no strategy that achieves winning probability $p$, then $\cal{E}(G,p)$ is defined to be $\infty$.

The reason $\GapCompress_q$ is called ``gap-preserving'' is because if $\val_q(G_n) = 1$, then $\val_q(G_n') = 1$, and otherwise if $\val_q(G_n) \leq \frac{1}{2}$, then $\val_q(G_n') \leq \frac{1}{2}$. In other words; the gap between $1$ versus $1/2$ in the two different possibilities for $\val_q(G_n)$ is preserved for $\val_q(G_n')$. The second ``if'' follows from the second item of \Cref{thm:gap-comp-informal}: if there are no finite-dimensional strategies for $G_n$ that succeed with probability at least $\frac{1}{2}$, then $\cal{E}(G_n,\frac{1}{2}) = \infty$, and therefore $\cal{E}(G_n',\frac{1}{2}) = \infty$, which implies that there is no finite-dimensional strategy for $G_n'$ that has value at least $\frac{1}{2}$. 

To show that every arithmetical sentence $S$ of the form $\exists x \, \phi(x)$ can be transformed into an equivalent game $G_S$ (which is essentially equivalent to the statement $\MIP^* = \RE$), the compression procedure of \Cref{thm:gap-comp-informal} is used to construct an infinite sequence of games $\UGS = (G_n)_{n \in \N}$ that depends on the sentence $S$. If $\phi(x)$ is true for some $x \leq n$ (meaning that $S$ is true), then the game $G_n$ has the property that $\val_q(G_n) = 1$; otherwise $G_n$ is designed to be equivalent to the game $G_{n+1}'$, the compression of $G_{n+1}$ through the gap-preserving transformation $\GapCompress_q$. In other words, the sequence of games $\UGS$ is effectively a \emph{self-compressing} sequence of games. By inductively utilizing the guarantees of the gapped compression procedure, we get that in the case that $S$ is true, we have $\val_q(G_n) = 1$ for all $n$, and if $S$ is false, $\val_q(G_n) \leq \frac{1}{2}$ for all $n$.\footnote{The choice of $\frac{1}{2}$ is inconsequential here; everything stated here holds true for any constant that's strictly less than $1$.}  Finally, the game $G_S$ is then chosen to be the first member $G_1$ of the sequence $\UGS$. 

Where does the $\poly(n)$-complexity assumption on $\UGS$ and the $O(\log n)$-complexity of $\UGS'$ consequence of \Cref{thm:gap-comp-informal} come in? We can imagine that the behavior of the verifier in the game $G_n$ is specified by the following pseudocode:

\vspace{10pt}
\IncMargin{1em}
\begin{algorithm}[H]
\DontPrintSemicolon

The verifier checks whether $\phi(x)$ is true for some $x \leq n$. If it is, then accept. 

Otherwise, compute $\UGS'$ by running $\GapCompress_q$ on the description of the sequence $\UGS$. 

Play the game $G_{n+1}'$, the $(n+1)$-st game of the sequence $\UGS'$.

\caption{The game $G_n$ encoding $\Sigma_1$-sentences.}\label{fig:intro-mip-re-game}
\end{algorithm}\DecMargin{1em}
\vspace{10pt}

For simplicity we assume that $\phi(n)$ is computable in time $O(n)$. Then the complexity of the game $G_n$ can be computed as $O(n^2) + O(1) + O(\log n) = \poly(n)$. The $O(n^2)$ comes from evaluating $\phi$ on $n$ different inputs; the $O(1)$ comes from the complexity of executing the compression procedure; and the $O(\log n)$ comes from the complexity of the compressed game $G_{n+1}'$. So the sequence of games $\UGS$ has complexity $\poly(n)$, and thus the consequences of the assumption (the first and second items) are satisfied.

\paragraph{Gapless compression for $q$- and $co$-type strategies.} 
We now turn to \emph{gapless} compression procedures. As suggested by the name, these are compression procedures that do not necessarily preserve any gap in the values of the ``input'' sequence of games. The main technical contribution of this paper is the following gapless compression theorem:

\begin{theorem}[Gapless compression, informally stated]
\label{thm:gapless-comp-informal}
For $t \in \{q,co\}$ there exists a computable map $\GaplessCompress_t$ that, given a sequence of games $\UGS = (G_n)_{n \in \N}$, outputs a sequence of games $\UGS' = (G_n')_{n \in \N}$ such that the complexity of $\UGS'$ is $O(\log n)$, and furthermore if the complexity of $\UGS$ is at most $\poly(n)$, then for all $n \in \N$,
\begin{itemize}
	\item If $\val_t(G_n) < 1$, then $\val_t(G_n') < 1$.
	\item $\val_t(G_n') \geq (1 - \alpha) + \alpha \val_t(G_n)$, where $0 < \alpha < 1$ is a universal constant.
	\item $\cal{E}(G_n',1) \geq \max \Big \{ \cal{E}(G_n,1) \, , \, 2^{2n} \Big \}$. 
\end{itemize}
\end{theorem}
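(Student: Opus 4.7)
The plan is to mirror the compression architecture underlying $\MIP^* = \RE$---question reduction, answer reduction, and a consistency/rigidity component---but to (a) carry it out in the synchronous/tracial framework of Section~\ref{sec:synchronous}, so that the same construction applies uniformly to $t = q$ and $t = co$, and (b) exploit the gapless regime so that each stage need only certify \emph{exact} algebraic relations against perfect strategies. This simplification lets us replace the quantum low-degree test, Pauli braiding, and gap amplification of~\cite{ji_mip_re} with algebraic rigidity statements that hold in arbitrary von Neumann algebras, and hence apply verbatim to commuting-operator strategies.

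Stage 1 (question reduction via introspection). In $G_n'$, Alice and Bob are asked to self-generate the $\poly(n)$-length question pair $(x,y) \sim \mu$ of $G_n$ from shared entanglement, with the verifier of $G_n'$ spot-checking only an $O(\log n)$-bit coordinate. Each player commits to a question string via a Magic-Square-based measurement on auxiliary EPR registers, whose basis elements are dictated by the question-sampling circuit of $G_n$ (of size $\poly(n)$ by assumption). In the gapless synchronous framework, a strategy winning this subtest with probability $1$ corresponds to a tracial state on the $\ast$-algebra presented by the Magic-Square relations together with introspection consistency relations; a direct algebraic argument shows that the two players' introspected question operators commute and reproduce the joint distribution $\mu$ exactly.

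Stage 2 (answer reduction). Given introspected questions and answers, verifying $D_{G_n}(x,y,a,b)$ takes $\poly(n)$ time for the verifier of $G_n$ but must be done by the verifier of $G_n'$ in $O(\log n)$ time. We implement this by asking the two players for overlapping $O(\log n)$-bit local views of the transcript of the circuit $D_{G_n}$ and cross-checking gate consistency at a random index, which fits in $O(\log n)$ bits. For the entanglement lower bound, Stage 3 appends sufficiently many (say $n$) parallel copies of the Magic Square game on dedicated auxiliary registers, and invokes standard Magic-Square rigidity to certify enough EPR pairs that the Hilbert space dimension of any finite-dimensional perfect strategy is at least $2^{2n}$. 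The bound $\cal{E}(G_n',1) \geq \cal{E}(G_n,1)$ is automatic: the introspected registers of any perfect $G_n'$-strategy yield a perfect $G_n$-strategy on a Hilbert space of no larger dimension.

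The affine bound $\val_t(G_n') \geq (1-\alpha) + \alpha\, \val_t(G_n)$ follows by letting $\alpha$ be the probability that the verifier runs the $G_n$-dependent subtests and composing an optimal $G_n$-strategy with the canonical perfect strategies for Stages~1 and~3; gapless soundness (if $\val_t(G_n) < 1$ then $\val_t(G_n') < 1$) follows because any hypothetical perfect $G_n'$-strategy would, via Stages~1 and~2, induce a perfect tracial strategy for $G_n$, contradicting the assumption. The main obstacle will be Stage~1 in the commuting-operator regime: tensor-product decompositions are unavailable, so we cannot directly realize $\mu$ as a product of independent local distributions and must instead argue that the $\poly(n)$-size question-sampling circuit of $G_n$ admits an exact presentation by Magic-Square-type relations whose tracial representations reproduce $\mu$. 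Once this algebraic dilation is in hand, Stages~2 and~3 are essentially a direct adaptation of the analogous steps in~\cite{ji_mip_re} to the synchronous setting.
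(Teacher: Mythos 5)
Your overall architecture (introspection for question reduction, a Cook--Levin tableau cross-check for answer reduction, uniform treatment of $q$ and $co$ by working with tracial synchronous strategies) matches the paper's. However, there are two concrete gaps.

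First, your Stage~3 does not work as stated. Appending $n$ parallel copies of the Magic Square game forces the verifier to name a Magic Square question for \emph{each} of the $n$ instances, so the question alphabet has size $|\cal{X}_\ms|^n = 2^{\Theta(n)}$, which immediately destroys the $O(\log n)$ bound on the complexity of $G_n'$. The paper makes this point explicitly: parallel-repeated Magic Square is rigid but not \emph{efficiently} rigid. What is needed is a family certifying $\Theta(n)$ independent anticommuting pairs using only $\poly(n)$ questions; the paper builds this as the ``$2$-out-of-$n$ Magic Square'' game (\Cref{sec:2-of-n-ms}), and the resulting Question Sampling game is already embedded \emph{inside} the introspection game, so the entanglement bound $\cal{E}(G_n',1) \ge 2^{2n}$ falls out of Question Reduction for free rather than needing a separate stage.

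Second, the obstacle you identify at the end --- realizing an arbitrary question distribution $\mu$ by introspection in the commuting-operator setting --- is not present in the paper, because the paper deliberately restricts to games whose question distribution is \emph{uniform} over $\{0,1\}^\ell$. Uniform sampling is exactly what $n$ certified EPR pairs measured in the standard basis give you, so no ``algebraic dilation'' of a general sampling circuit is needed; the Introspection game only has to compare the introspected pair against the sampling and erasure measurements of the Question Sampling game. Your proposal as written leaves the central technical step (introspecting $\mu$) as an open obstacle, whereas the correct move is to normalize $\mu$ away before compressing. If you also replace the parallel-repeated Magic Square by the $2$-out-of-$n$ variant, your outline lines up with the paper's proof. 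A smaller point: in the soundness direction you should argue via ``value $1-\eps$ for $G_n'$ implies value $1-\delta(\eps)$ for $G_n$'' and take $\eps \to 0$, rather than invoking a hypothetical \emph{perfect} strategy, since for $t=q$ the supremum defining $\val_q$ need not be attained.
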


Notice that the first and second items imply that $\val_t(G_n) =1$ if and only if $\val_t(G_n') = 1$. In the case of $t = q$, this gapless compression theorem appears to be a weaker version of \Cref{thm:gap-comp-informal}, except the second item makes it incomparable: whereas the gapped compression theorem only works on games that either have value $1$ or at most $\frac{1}{2}$, the gapless compression theorem works for all games. In fact, the compression procedure of \Cref{thm:gapless-comp-informal} is \emph{gap-shrinking}: given a game $G_n$ with value $\val_t(G_n) < 1$, the compressed game $G_n'$ has value $\val_t(G_n) < \val_t(G_n') < 1$. Intuitively, by repeatedly applying a gapless compress procedure to an initial game with value strictly less than $1$, the sequence of compressed games obtained have value that get arbitrarily close to $1$. 

Gapless compression theorems allow us to show that deciding the truth of sentences $S$ of the form $\forall x \, \phi(x)$ (i.e. $\Pi_1$-sentences) can be reduced to deciding whether the quantum (or commuting operator) value of nonlocal games is exactly $1$. Analogously to the proof sketched for $\MIP^* = \RE$, we construct a self-compressing sequence of games $\UGS = (G_n)_{n \in \N}$ that depends on the given sentence $S = \forall x \, \phi(x)$. In pseudocode, the games have the following behavior:

\vspace{10pt}
\IncMargin{1em}
\begin{algorithm}[H]
\DontPrintSemicolon

The verifier checks whether $\phi(x)$ is false for some $x \leq n$. If it is, then reject.

Otherwise, compute $\UGS'$ by running $\GaplessCompress_t$ on the description of $\UGS$. 

Play the game $G_{n+1}'$, the $(n+1)$-st game of the sequence $\UGS'$.

\caption{The game $G_n$ encoding $\Pi_1$-sentences.}\label{fig:intro-mip-0-core-game}
\end{algorithm}\DecMargin{1em}
\vspace{10pt}

Again we assume that $\phi(n)$ is computable in $O(n)$ time, implying that the games in the sequence $\UGS$ have $\poly(n)$-complexity. The difference between this construction of $G_n$ and the previous one is that instead of checking whether $\phi(x)$ is true for some $x \leq n$, the verifier now checks whether it is \emph{false} for some $x$. Using the gapless compression theorem, we get that if $\phi(x)$ is true for all $x$ (meaning $S$ is true), then we have the following chain of inequalities:
\begin{align*}
	\val_t(G_n) &= \val_t(G_{n+1}') \\
				  &\geq (1 - \alpha) + \alpha \, \val_t(G_{n+1}) \\
				  &= (1 - \alpha) + \alpha \, \val_t(G_{n+2}') \\
				  &\geq (1 - \alpha) + \alpha((1 - \alpha) + \alpha \, \val_t(G_{n+2}')) \\
				  &\cdots
\end{align*}
The equalities follow from construction of the games, and the inequalities follow from the second item of \Cref{thm:gapless-comp-informal}. Since $\alpha < 1$, this implies that $\val_t(G_n)$ is at least $(1-\alpha)(\alpha + \alpha^2 + \alpha^3 + \cdots) = 1$, and thus $\val_t(G_n) = 1$. 

On the other hand, if $S$ is false, then there is some $n$ for which $\val_t(G_n) = 0$. Working backwards, we deduce that $\val_t(G_n') < 1$ (by the first item of the gapless compression theorem), so therefore $\val_t(G_{n-1}) < 1$, which means that $\val_t(G_{n-1}') < 1$, and so on. Thus for all $k \leq n$ we have $\val_t(G_k) < 1$. 

Finally, the game $G_S$ is then chosen to be the first member $G_1$ of the sequence $\UGS$. 

Since deciding the truth of $\Pi_1$-sentences is an undecidable problem, this gives an alternate proof of the undecidability of determining whether $\val_t(G) = 1$ for $t \in \{q,co\}$, first proved by Slofstra~\cite{slofstra_tsirelsons_problem_and_an_embedding_theorem,slofstra_set_of_quantum_correlations}. His proof is based on very different techniques based on group theory and approximate representation theory. As mentioned previously, the main result of Slofstra's work is that the set of quantum correlation $C_q$ is not closed. We can also prove this separation as a corollary of our results in \cref{subsec:application_pi1}.

\paragraph{Combining gapped and gapless compression. } 
The main application of our gapless compression theorem is to combine it with the gapped compression theorem of~\cite{ji_mip_re} to prove \Cref{thm:main-intro}, which establishes the $\Pi_2$-completeness of deciding whether the quantum value of a nonlocal game is equal to $1$. The two compression theorems, interleaved together, allow us to transform sentences $S$ of the form $\forall x \, \exists y \, \phi(x,y)$ (i.e. $\Pi_2$-sentences) to an equivalent nonlocal game $G_S$ (i.e. $S$ is true if and only if $\val_q(G_S) = 1$). 

Fix a $\Pi_2$-sentence $S = \forall x \, \exists y \, \phi(x,y)$. The key idea is that $S$ can be equivalently expressed as $S = \forall m \, S_m$ where $m$ ranges over the positive integers (rather than binary strings) and $S_m$ is the $\Sigma_1$-sentence $\exists n\, \phi(m,n)$, where $n$ also ranges over the positive integers. Leveraging the $\Sigma_1$-sentences-to-nonlocal games reduction from~\cite{ji_mip_re}, we get that for all $m \in \N$ there exists a nonlocal game $H_m$ (computable from $S_m$) such that $\val_q(H_m) = 1$ if and only if $S_m$ is true. In particular $S$ is true if and only if $\forall n \, \val_q(H_n) = 1$.

Now we design a sequence of games $\UGS = (G_n)_{n \in \N}$ encoding the sentence $S$ as follows. 

\vspace{10pt}
\IncMargin{1em}
\begin{algorithm}[H]
\DontPrintSemicolon

Using the reduction from~\cite{ji_mip_re}, compute the description of the game $H_n$ corresponding to the $\Sigma_1$-sentence $S_n$. 

Compute the game sequence $\UGS' = (G_n')_{n \in \N}$ by running $\GaplessCompress_q$ on the description of $\UGS$. 

With probability $\frac{1}{2}$, play the game $G_{n+1}'$, the $(n+1)$-st game of the sequence $\UGS'$.

Otherwise with probability $\frac{1}{2}$, play the game $H_n$

\caption{The game $G_n$ encoding $\Pi_2$-sentences.}
\end{algorithm}\DecMargin{1em}
\vspace{10pt}

Since the reduction of~\cite{ji_mip_re} is polynomial-time computable, the game $H_n$ has $\poly(n)$ complexity. The compressed game $G_{n+1}'$ has $O(\log n)$ complexity, due to the guarantees of the $\alg{GaplessCompress}_q$ procedure. This implies that each game $G_n$ in the sequence $\UGS$ has $\poly(n)$ complexity.
If $S$ is true (meaning that $S_m$ is true for all $m$) then we can establish the following relationship between $\val_q(G_n)$ and $\val_q(G_{n+1})$: 

\begin{align*}
	\val_q(G_n) &= \frac{1}{2}\val_q(G_{n+1}') + \frac{1}{2}\val_q(H_{n}) & \text{(Definition of the game $G_n$)}\\
	              &= \frac{1}{2}\val_q(G_{n+1}') + \frac{1}{2} & \text{($S$ true $\Rightarrow \val_q(H_n) = 1$ for all $n$)}\\
				  &\geq \frac{1}{2} \Big( (1 - \alpha) + \alpha \, \val_q(G_{n+1}) \Big) + \frac{1}{2} & \text{(\Cref{thm:gapless-comp-informal})} \\
				  &= 1 - \frac{\alpha}{2} \Big(1 - \val_q(G_{n+1})\Big)
\end{align*}
This is equivalent to $1 - \val_q(G_n) \leq \frac{\alpha}{2} \Big( 1 - \val_q(G_{n+1}) \Big)$ and by induction this means that $1 - \val_q(G_n) \leq \Big( \frac{\alpha}{2}\Big)^k \Big( 1 - \val_q(G_{n+k}) \Big)$ for all $k \in \N$. As $k$ goes to infinity, this means that $\val_q(G_n)$ is arbitrarily close to $1$, and thus equal to $1$. 

On the other hand, if $S$ is false, then there is some $n$ for which $S_n$ is false and consequently $\val_q(H_n) < 1$. This means $\val_q(G_{n}) < 1$. By the gapless compression theorem (\Cref{thm:gapless-comp-informal}) we deduce that $\val_q(G_n') < 1$, so therefore $\val_q(G_{n-1}) < 1$, which means that $\val_q(G_{n-1}') < 1$, and so on. Thus for all $k \leq n$ we have $\val_q(G_k) < 1$. 

Finally, the desired game $G_S$ is then chosen to be the first member $G_1$ of the sequence $\UGS$. 

We observe that for this argument it did not matter that reduction from $\Sigma_1$-sentences $S_n$ to games $H_n$ is gapped (in the sense that $\val_q(H_n) = 1$ if $S_n$ is true and $\val_q(H_n) \leq \frac{1}{2}$ otherwise). All that mattered was that there was \emph{some} reduction from $\Sigma_1$-sentences to nonlocal games such that the game value reflects the truth of the sentence. This raises an interesting question for whether it is possible to prove the $\Pi_2$-hardness result using ``just'' a gapless compression theorem.

\paragraph{Gapped compression for commuting operator strategies?} It is still unknown whether the problem of approximating the commuting operator value is as hard as deciding $\Pi_1$-sentences, which would mean that exact and approximate computation of the commuting operator value are equivalent in difficulty. Once again, the question boils down to the existence of a gapped compression procedure for commuting operator strategies. Suppose the following conjecture held:

\begin{conjecture}[Gap-preserving compression for commuting operator strategies]
\label{conj:gap-comp-co}
There exists a computable map $\GapCompress_{co}$ that, given a sequence of games $\UGS = (G_n)_{n \in \N}$, outputs a sequence of games $\UGS' = (G_n')_{n \in \N}$ such that the complexity of $\UGS'$ is $O(\log n)$, and furthermore if the complexity of $\UGS$ is at most $\poly(n)$, then for all $n \in \N$,
\begin{itemize}
	\item If $\val_{co}(G_n) = 1$, then $\val_{co}(G_n') = 1$. 
	\item If $\val_{co}(G_n) \leq \frac{1}{2}$, then $\val_{co}(G_n') \leq \frac{1}{2}$. 	
\end{itemize}
\end{conjecture}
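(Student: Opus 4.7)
The plan is to follow the template of the gap-preserving compression theorem for $q$-strategies from~\cite{ji_mip_re} and to systematically upgrade every finite-dimensional argument to one that is valid for arbitrary commuting-operator strategies, ideally working throughout in the synchronous/tracial framework used elsewhere in this paper. At a high level, $\GapCompress_q$ is built from four modular components: (a) an \emph{introspection} transformation, in which the verifier's random coins are delegated to the provers via a sub-game whose soundness relies on self-testing of entanglement; (b) a \emph{question reduction} that encodes the original verifier's polynomial-time decision procedure as an efficient algebraic computation to be checked on the provers' side; (c) a quantum low-degree (or Pauli-basis) test ensuring that the provers' answers are consistent with an honest evaluation of this computation; and (d) a \emph{gap amplification} / parallel repetition step that restores a constant soundness gap after the above transformations have shrunk it. To prove \Cref{conj:gap-comp-co}, I would aim to establish a $co$-sound analogue of each component, glued together in exactly the same way as in the $q$-setting, so that gap preservation on $\val_{co}$ follows by the same accounting.

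For (a), the introspection game is built from self-testing the Magic Square game and a handful of sub-tests. Since Magic Square and CHSH-type self-tests are already known to admit synchronous, tracial formulations (they enforce relations in a C*-algebra that hold in every representation, finite- or infinite-dimensional), the introspection transformation should transfer to $\val_{co}$ with only notational changes, using the synchronous machinery of \Cref{sec:synchronous}. Component (b) is essentially a symbolic manipulation of question/answer labels and should lift automatically. Component (d) is the genuinely open subroutine of parallel repetition for $\val_{co}$; for anchored or projection games one can invoke known $co$-value repetition results, and an acceptable strategy is to precede the compression step with an anchoring transformation so that the outer amplification only has to be applied to anchored games, where $co$-sound parallel repetition (with polynomial rate) is known.

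The main obstacle is component (c), the low-degree / Pauli test. In the $q$-case, soundness of these tests is proved by rounding the provers' measurement operators to operators supported on a low-degree subspace; the rounding uses a tensor-product decomposition and bounds errors in the operator norm on a finite-dimensional Hilbert space. For $co$-strategies there is no \emph{a priori} tensor decomposition, and the natural analogue — bounding the Hilbert--Schmidt or tracial norm between the provers' measurements and an ``honest'' low-degree measurement inside the von Neumann algebra generated by the strategy — runs directly into obstructions related to the failure of Connes' Embedding Problem (itself implied by $\MIP^* = \RE$). The plan for this step is to reformulate the low-degree test entirely in a synchronous/tracial language: treat both honest and cheating strategies as $*$-representations of a presented tracial C*-algebra, express soundness as a norm bound on a certain element of this algebra, and then verify this bound using approximate-representation / stability arguments that do not rely on finite dimension. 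Recent progress on tracial low-degree and Pauli tests, combined with the synchronous gapless compression developed in this paper, provides a plausible launching point, but whether a fully $co$-sound version with polylogarithmic complexity is achievable is genuinely open.

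Putting it together, the overall strategy is: start with a $\UGS$ of complexity $\poly(n)$, apply a synchronous introspection transformation, follow with question reduction and a tracial low-degree test, and conclude with $co$-sound parallel repetition on an anchored variant to amplify the gap back to $\frac{1}{2}$. If each ingredient can be proved in the synchronous/tracial setting with the same quantitative guarantees as its $q$-analogue, the proof of \Cref{conj:gap-comp-co} mirrors the $q$-proof line by line, and yields the $\Pi_1$-completeness of the approximate $co$-value problem via the self-referential construction already used for the gapless case. The tracial low-degree test is where I expect almost all the work — and risk — to lie.
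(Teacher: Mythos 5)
This statement is a \emph{conjecture}: the paper does not prove it, and explicitly identifies it as open. The paper's own discussion of the conjecture (in the subsection ``From gapless to gapped compression'') matches your outline closely: it identifies three missing $co$-analogues --- soundness of the classical low-degree test, soundness of the quantum low-degree test, and a parallel repetition theorem --- and notes that, given these, the $\Pi_1$-hardness of the approximate $co$-value problem would follow by the self-referential argument already laid out in the introduction. So your high-level plan is the same as the authors' sketch.

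There is, however, one concrete error in your proposal. You write that ``for anchored or projection games one can invoke known $co$-value repetition results,'' but no such result exists: the anchored parallel repetition theorem (Bavarian--Vidick--Yuen) is proved only for finite-dimensional (tensor-product) strategies, and the paper explicitly states that aside from Cleve et al.'s result for XOR games, \emph{nothing} is known about parallel repetition in the commuting operator model. Anchoring does not sidestep this; it would still leave component (d) genuinely open, contrary to the way your write-up treats it as handled. This is not a presentation quibble --- if parallel repetition for $\val_{co}$ of anchored games were known, the conjecture would be considerably closer to resolution than it is. A smaller issue: you should distinguish the \emph{classical} low-degree test (used in the PCP-based answer reduction, and claimed resolved in forthcoming work cited by the paper) from the \emph{quantum} low-degree / Pauli-basis test (used for efficient robust rigidity in question reduction, and genuinely open for $co$-strategies); your component (c) blurs the two, and the tracial-reformulation strategy you describe speaks mainly to the second. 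Both need $co$-sound versions, and they are technically quite different.
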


We can then design a sequence of games $\UGS$ as follows. Let $M$ denote a Turing machine that, given a description of a nonlocal game $F$ (note that this is a single game, rather than a sequence of games), halts if $\val_{co}(F) < 1$ and otherwise runs forever. The semidefinite programming hierarchies of~\cite{navascues2008convergent,doherty2008quantum}, or the procedure described by~\cite{goldbring_computability_theoretic_reformulation_of_connes}, can be used to implement $M$. 

\vspace{10pt}
\IncMargin{1em}
\begin{algorithm}[H]
\DontPrintSemicolon

The verifier checks whether $\phi(x)$ is false for some $x \leq n$. If it is, then reject.

Compute the description of the nonlocal game $G_1$, the first game of the sequence $\UGS$. 

Run $M$ on input $G_1$ for $n$ steps. If it halts, then accept.

Otherwise, compute $\UGS'$ by running $\GaplessCompress_{co}$ on the description of $\UGS$.

Play the game $G_{n+1}'$, the $(n+1)$-st game of the sequence $\UGS'$.

\caption{The game $G_n$ to decide $\Pi_1$-sentences.}\label{fig:intro-mip-gap-core-game}
\end{algorithm}\DecMargin{1em}
\vspace{10pt}

The complexity of $\UGS$ is $\poly(n)$ so the consequences of \Cref{conj:gap-comp-co} hold. Let $S$ denote the sentence $\forall x\, \phi(x)$ for some $O(n)$-time computable predicate $\phi$. Suppose $S$ were true. Then Step 1 of \Cref{fig:intro-mip-gap-core-game} would never reject. Suppose that $\val_{co}(G_1) < 1$. Then by definition, $M$ will halt in some number of steps $T$. Thus $\val_{co}(G_n) = 1$ for all $n \geq T$. For $n < T$, we have that $\val_{co}(G_n) = 1$ if and only if $\val_{co}(G_{n+1}') = 1$ (by design of $G_n)$, which is if and only if $\val_{co}(G_{n+1}) = 1$ (by \Cref{conj:gap-comp-co}). By an inductive argument we get that $\val_{co}(G_1) = 1$, which contradicts our assumption. Thus we get $\val_{co}(G_1) = 1$.

On the other hand, suppose that $S$ were false. Let $m$ denote the least integer such that $\phi(m)$ is false. First, it cannot be the case that $M$ halts in fewer than $m$ steps. If it halted in $n$ steps for $n < m$, then $\val_{co}(G_n) = 1$ by construction. However, by construction and \Cref{conj:gap-comp-co} this means that $\val_{co}(G_{n-1}) = 1$, and so on, ultimately yielding that $\val_{co}(G_1) = 1$. This is a contradiction, as the fact that $M$ halts implies that $\val_{co}(G_1) < 1$. 

Next, we see that $\val_q(G_m) = 0$ because $\phi(m)$ is false. By \Cref{conj:gap-comp-co}, this means that $\val_{co}(G_{m-1}) \leq \frac{1}{2}$, and so on, ultimately yielding that $\val_{co}(G_1) \leq \frac{1}{2}$, as desired. Letting $G_S = G_1$, this completes the reduction from the problem of deciding $\Pi_1$-sentences to approximate $co$-value problem.

We discuss a plausible approach to proving \Cref{conj:gap-comp-co} in \Cref{sec:tutorial}. 

Finally, we note that there is something bizarre about the use of the Turing machine $M$ in this construction. Regardless of whether $S$ is true or false, in \emph{both cases}, the verifier in the game $G_n$ never witnesses the Turing machine $M$ halting! Thus, it may appear that $M$'s halt/non-halt behavior is irrelevant to the decision procedures of the games $\{G_n\}$. However, if we remove line 3 from \Cref{fig:intro-mip-gap-core-game}, then it is no longer clear how to reason about the value of the game $G_1$! In particular, when $S$ is true, there is no $n$ for which we can definitively identify the value of $G_n$, because we have an ``infinite recursion'' where $G_n$ is the same game as the compression of $G_{n+1}$, which in turn is the same game as the compression of $G_{n+2}$, and so on. Thus, inserting $M$ in the description of the games seems to force the sequence of games $\{G_n\}$ to ``examine its own (commuting operator) value,'' which in turn allows us -- mathematicians looking in from the outside -- to pin down the value of $G_n$ for all $n$. We find it a fascinating question of whether it is possible to deduce the value of the games $\{G_n\}$ with line 3 removed.\footnote{This trick of inserting the Turing machine $M$ into the description of the game is also used by~\cite{ji_mip_re} to construct an explicit game whose commuting operator value differs from its quantum value.}

\paragraph{Are compression theorems necessary?} 

We have just demonstrated that, equipped with the appropriate compression procedures, we can characterize the complexity of the quantum and commuting operator value of nonlocal games. Could compression theorems be \emph{necessary}? That is, does knowing that (say) exactly computing the commuting operator value is equivalent to deciding $\Pi_1$-sentences imply the existence of a compression procedure like the one given by \Cref{thm:gapless-comp-informal}? 

In~\cite{mousavi2020complexity}, it was shown that $\MIP^* = \RE$ (i.e. the $\Sigma_1$-hardness of the approximate $q$-value problem) implies a gap-preserving compression theorem for quantum strategies (i.e., \Cref{thm:gap-comp-informal}). We show that this equivalence between compression and complexity of nonlocal games is more general:
\begin{itemize}
	\item The $\Pi_1$-hardness of the approximate $co$-value problem implies a gap-preserving compression theorem for commuting operator strategies.
	
	\item The $\Pi_1$-hardness of the exact $co$-value problem implies a gapless compression theorem for commuting operator strategies.
	
	\item The $\Pi_2$-hardness of the exact $q$-value problem implies a gapless compression theorem for quantum strategies.

\end{itemize}
We prove these equivalences in \Cref{sec:equivalence}.

\paragraph{Relation to previous work}

The idea of using compression in order to obtain complexity lower bounds for nonlocal games was first due to Ji~\cite{ji_compression_of_quantum_multi_prover}. There, he showed that the complexity of deciding between $\val_q(G) = 1$ and $\val_q(G) \leq 1 - 1/\poly(|G|)$ where $|G|$ denotes the description length of the game $G$ is at least as hard as solving $\mathsf{NEXP}$-complete problems. His result, however, only applied to games with more than two players (in fact his result applies for games with $10$ players). The techniques used to compress games use a variety of tools from quantum information theory, including quantum error correcting codes and the Feynman-Kitaev history state construction. This compression technique was further developed by \cite{fitzsimons_quantum_proof_systems_for_iterated_exponentials}, who prove a gapless compression theorem that can be \emph{recursively composed} in order to obtain arbitrarily large complexity lower bounds for nonlocal games. The lower bounds obtained by~\cite{fitzsimons_quantum_proof_systems_for_iterated_exponentials} still only apply to games with three or more players, however. This is a fundamental limitation of the compression approach of~\cite{ji_compression_of_quantum_multi_prover,fitzsimons_quantum_proof_systems_for_iterated_exponentials} because they rely on using quantum error-correcting codes to perform \emph{secret sharing}, which require $3$ or more parties.   

Obtaining complexity lower bounds for \emph{two} player games have wider implications and require new techniques. For example, the connection between Connes' Embedding Problem and the approximate $q$-value problem only hold for two player games. Compressing two-player nonlocal games was first pioneered by~\cite{natarajan_neexp} and then further developed by~\cite{ji_mip_re} to prove $\MIP^* = \RE$. 
These works use very different tools such as classical and quantum low-degree tests and probabilistically checkable proofs (PCPs).\footnote{View Section $2$ of \cite{natarajan_neexp} for a more in-depth overview of the differences.}
The gapless compression theorem of this paper is based on a simplified version of these techniques, which allows us to obtain our $\Pi_2$-hardness result for two-player games.

In~\cite{mousavi2020complexity}, 
we obtained $\Pi_2$-hardness for the exact $q$-value problem for games with three or more players. This is because we combined the gapless compression theorem of~\cite{fitzsimons_quantum_proof_systems_for_iterated_exponentials} with the gapped compressed theorem of~\cite{ji_mip_re}. However as mentioned the requirement to have games with at least three players is intrinsic to the work of~\cite{ji_compression_of_quantum_multi_prover,fitzsimons_quantum_proof_systems_for_iterated_exponentials}. Furthermore, all previous works only study the setting of finite-dimensional (i.e.\ $q$-type) strategies; ours is the first to study compression of games in the commuting operator setting.

\subsection{Overview of the gapless compression theorem} \label{sec:tutorial}

We now provide an overview of the proof of \Cref{thm:gapless-comp-informal}, our gapless compression theorem. The compression theorem technically is about a procedure for transforming a sequence of games into another, but for simplicity we discuss compression as transforming individual games. 

The high-level structure of the compression procedure follows the paradigm first established by~\cite{natarajan_neexp} and developed further by~\cite{ji_mip_re}. Let $G$ denote an ``input'' game where the question lengths, answer lengths, and complexity of the decision procedure are $\poly(n)$. The game $G$ is transformed into a ``compressed'' game $G'$ where the complexity of the decision procedure is $\poly\log(n)$. This transformation consists of two steps, the first one called \emph{Question Reduction} and the second called \emph{Answer Reduction}. We describe these two steps next. 

Fix an input game $G = (\cal{X},\cal{A},D)$. All games involved use the uniform distribution over questions; for this reason we omit mention of the question distribution when specifying a nonlocal game. Fix a value type $t \in \{q,co\}$. 

\subsubsection{Question Reduction} 

The Question Reduction step transforms $G$ into the \emph{Introspection game} $G^\intro = (\cal{X}^\intro,\cal{A}^\intro,D^\intro)$ where 
\begin{gather*}
	\log |\cal{X}^\intro| = O(\log \log |\cal{X}|) \\
	\log |\cal{A}^\intro| = \poly(\log |\cal{A}|) \\
	\text{Complexity of $D^\intro$} = \poly(\text{Complexity of $D$})~.
\end{gather*}
The Introspection game $G^\intro$ is equivalent to $G$ in the sense that the value of $\val_t(G^\intro) = 1$ if and only if $\val_t(G) = 1$.

At an intuitive level, the question lengths are reduced in $G^\intro$ by asking the players to ``ask themselves'' -- i.e., to introspect -- their own questions from $\cal{X}$. The players in $G^\intro$ are each asked to sample a question $x\in \cal{X}$ and answer with $a\in \cal{A}$ as they would have answered in the original game $G$. If the players' responses are $(x,a)$ and $(y,b)$, the decision procedure in $G^\intro$ will check that $D(x,y,a,b) = 1$. 

In order for the values of $G$ and $G^\intro$ to be meaningfully related, we need to ensure that (a) the players sample their introspected questions $x$ and $y$ from the uniform distribution (instead of, say, always picking a fixed $(x^*,y^*)$ for which they have prepared winning answers), and (b) the first player does not have any knowledge of the second player's question $y$ and the second player does not have any knowledge of the first player's question $x$. 
 
Forcing players to behave honestly according to (a) and (b) crucially relies on a property called \emph{rigidity} that holds for some nonlocal games. A nonlocal game $G$ is rigid if the state and measurement operators of any near optimal strategy for $G$ satisfy very rigid constraints. For introspection, we need a family of games, called \emph{Question Sampling games} where the $n$th member of this family is denoted by $\qs_n$. Each game has two special questions labeled by measure-standard-basis and measure-orthogonal-basis and players in $\qs_n$ are required to respond to these questions with strings in $\{0,1\}^n$. Furthermore these games exhibit rigidity in the following sense; in any near optimal strategy for $\qs_n$ the players must share $n$ EPR pairs, and the player answering the measure-standard-basis (resp. measure-orthogonal-basis) question, must measure their share of entangled state using a measurement that is close, in some metric, to the standard basis measurement (resp. orthogonal basis $\{\ket{+},\ket{-}\}$ measurement). 

For simplicity suppose that the question set for the game $G$ is $\cal{X} = \{0,1\}^n$. Then the Introspection game $G^\intro$, at its core, is the $\qs_n$ game\footnote{To be more precise the game $G^\intro$ is $\qs_n$ extended so that it has a small number of additional special questions. The cross-checks between these special questions force the players to behave ``honestly'' (i.e., to sample $(x,y)$ from the uniform distribution), or risk losing the game with some nonzero probability.
}: to introspect the verifier just asks the player the measure-standard-basis question. The verifier then takes advantage of the other special question, measure-orthogonal-basis, to ensure that the properties (a) and (b) of introspection questions are satisfied. The proof of this fact is a direct consequence of the rigidity property of the Question Sampling game as described earlier. 

There are many candidate games for Question Sampling if we only cared about the rigidity property mentioned above. One example is the \emph{parallel-repeated Magic Square game} \cite{coudron_parallel_magic_square}. What makes the search for a family of games $\qs_n$ more challenging is the additional requirement imposed by the property
\[
	\log |\cal{X}^\intro| = O(\log \log |\cal{X}|).
\]
To satisfy this requirement the Question Sampling can have at most $\poly(n)$ questions. So overall $\qs_n$ must be a game with $\poly(n)$ questions for which any optimal strategy uses $n$ EPR pairs. Any family of games satisfying this property is said to be \emph{efficiently rigid}. Efficiency is referring to the fact that games with small number of questions are certifying Hilbert spaces of large dimension ($2^n$ in the case of $\qs_n$). The family of games where the $n$th game is the $n$th parallel-repeated Magic Square game is not efficiently rigid because the number of questions grows as $2^{O(n)}$. In \Cref{sec:2-of-n-ms} we introduce a family of games called $2$-out-of-$n$ Magic Square and prove it is efficiently rigid.

Introspection first appeared in \cite{natarajan_neexp} followed by a more sophisticated version in the $\MIP^* = \RE$ result. To obtain the gapped compression in that paper, the Question Reduction step must also be gap-preserving, i.e., in addition to the above requirements for introspection, it must be that if $\omega_q(G) < 1/2$, then $\omega_q(G^\intro) < 1/2$. For gapped introspection, in addition to efficient rigidity, we need to make sure that in any strategy winning $\qs_n$ with probability at least $1-\eps$, the measurement for measure-standard-basis question is $\poly(\eps,\log n)$-close (in operator norm) to the standard-basis measurement. The crucial point is that the error function has logarithmic dependence on $n$. This is what we call an \emph{efficiently robust rigidity} result. The $2$-out-of-$n$ Magic Square game is not highly robust because the error function has a polynomial dependence on $n$. The game used in the $\MIP^*=\RE$ result that exhibits this additional robustness requirement is called the \emph{quantum low-degree-test} \cite{natarajan2018low}. The proof of rigidity for this game is considerably more complicated than the proof of rigidity for the $2$-out-of-$n$ Magic Square game.
Also, in our setting we only need to introspect games with uniform question distributions. We believe these simplifications in the gapless setting help illuminate the core ideas behind introspection.

\subsubsection{Answer Reduction}

The Answer Reduction step transforms $G$ into the game $G^\ans = (\cal{X}^\ans,\cal{A}^\ans,D^\ans)$ where 
\begin{gather*}
	\log |\cal{X}^\ans| = \poly(\log |\cal{X}|) \\
	\log |\cal{A}^\ans| = O(1) \\
	\text{Complexity of $D^\ans$} = \poly(\log \text{Complexity of $D$})~.
\end{gather*}
The game $G^\ans$ is equivalent to $G$ in the sense that the value of $\val_t(G^\ans) = 1$ if and only if $\val_t(G) = 1$.

The idea is to delegate computing the decision procedure $D(x,y,a,b)$ to the players. Then have them certify their computation using a constant sized certificate. In this paper we use the \emph{Cook-Levin reduction}: this is an efficient transformation that maps a Turing machine $M$ and input string $w$ to a 3SAT formula $\varphi_M$ and variable assignment $\pi_w$ such that $M(w) = 1$ if and only if $\pi_w$ satisifes $\varphi_M$. Furthermore, $w$ is embedded in the beginning of $\pi_w$ . Clauses of the 3SAT formula $\varphi_M$ can be computed hyper-efficiently (which allows us to exponentially reduce the verifiers runtime). 
We use this to reduce the Turing machine $D_{x,y}$, that computes the decision procedure for fixed questions $(x,y)$, and the players answers $(a,b)$ to a 3SAT formula $\varphi_{x,y}$ and assignment $\pi_{a,b}$. The verifier will now compute a random clause of this formula, and ask the players to provide the assignments specified by $\pi_{a,b}$ to the variables in the clause.

There are three immediate issues we must address in this scheme. First, in our current game no individual player has access to both questions to produce the 3SAT formula $\varphi_{x,y}$. Secondly, if we allow one of the players to have access to both questions, in order to compute $\varphi_{x,y}$, we must ensure that the answers $(a,b)$ (and certificate $\pi_{a,b}$) are produced in such way that $a$ only depends on $x$ and $b$ only depends on $y$. Lastly, we have to make sure the player in fact returns the corresponding assignments specified by $\pi_{a,b}$ and does not change this depending on the clause we query.

Fortunately, all three issues can be addressed by \emph{oracularization}. This takes our original game and transforms it to a new game $G^\orac$ where the verifier sends one player a question $x \in \cal{X}$ and the other a pair of questions $(x,y) \in \cal{X}^2$. When a player receives a single question $x$ we call them an \emph{isolated player}. When a player receives a pair $(x,y)$ we call them an \emph{oracle player}. The players win if the oracle player responds with an answer pair $(a,b) \in \cal{A}^2$ such that $D(x,y,a,b) = 1$ and the isolated player responds with answer $a$ (resp. responds with answer $b$). Intuitively, in $G^\orac$ an oracle player must ``simulate'' the behavior of the two players in $G$, and the isolated player (who only receives half of the oracle question) is used to check that the oracle player's answers $(a,b)$ are produced in a way that $a$ only depends on $x$ and $b$ only depends on $y$, solving our first two issues.

Now we can go ahead and apply the Answer Reduction protocol on the game $G^\orac$, where the oracle player responds with assignments for our clause queries as described before, but the isolated player is asked a random bit of their original answer $a$ (resp. $b$). In particular we query only from those clauses which contain at least one variable from the beginning of $\pi_{a,b}$ which embeds $a$ (resp. $b$), we make sure the two players answers match on this assignment. This allows us to continue enforcing the no communication requirement after Answer Reduction. It also ensures that the oracle player is in fact providing assignments to the clause variables from $\pi_{a,b}$.
Therefore $G^\ans$ uses constant sized answers and has exponentially more efficient verifier complexity.  

\subsubsection{From gapless to gapped compression}

We highlight the primary differences between our gapless compression theorem and the gapped compression theorem of~\cite{ji_mip_re}.
\begin{itemize}
    \item In $\MIP^*=\RE$, instead of using the Cook-Levin reduction, the Answer Reduction transformation uses \emph{probabilistically checkable proofs} (PCPs) in order to control the amount of gap shrinkage. The soundness of the PCP construction in~\cite{ji_mip_re} is based on the soundness of something called the \emph{classical low-degree test} against entangled provers~\cite{ji2020quantum}, which is a very technically challenging part of their analysis.
    
\item As explained earlier, the Question Reduction step in $\MIP^*=\RE$ uses the robust rigidity of the quantum low-degree test~\cite{natarajan2018low}. Contrast this with our gapless compression theorem that does not require a robust rigidity test.
    
    \item The proof of $\MIP^*=\RE$ uses a \emph{parallel repetition theorem}. Roughly speaking, parallel repetition theorems state that if the quantum value of a game $G$ is less than $1$, then the value of the game $G^n$, that is obtained from $G$ by playing $n$ instances of $G$ in parallel, decays exponentially with $n$. This is needed because both the Question Reduction and Answer Reduction transformations shrink the gap by some amount, and parallel repetition is used to amplify the gap back to some constant amount.
\end{itemize} 

In this paper we transfer many of the ideas from \cite{ji_mip_re} to the infinite dimensional setting, allowing us to get a gapless compression theorem for commuting operator strategies. As discussed earlier proving \Cref{conj:gap-comp-co} requires a gapped compression theorem for the commuting operator strategies. Just like in the case of $q$-strategies, we would also need to establish commuting-operator analogues of the three ingredients described above: (1) soundness of the classical low-degree test, (2) soundness of the quantum low-degree test, and (3) a parallel repetition theorem.

The first item has been resolved in a forthcoming paper~\cite{ji2021quantum}. 
The second item requires a proof that the quantum low-degree test is sound against commuting operator strategies. Finally, parallel repetition is well studied in the context of (finite-dimensional) quantum strategies \cite{parallelreptition-Yao, parallelreptition-Vidick, bavarian2017hardness} but nothing is known yet in the context of commuting operator strategies (aside from the parallel repetition result of~\cite{cleve2008perfect}, but this only holds for XOR games).

Given the commuting-operator analogues of these tools, however, the $\Pi_1$-completeness of the approximate $co$-value problem should then follow from the argument described in \Cref{sec:compression-intro}.

\subsection{The synchronous strategies framework}
\label{sec:synchronous}
As mentioned, another goal of this paper is to present the proof of the gapless compression theorem (\Cref{thm:gapless-comp-informal}) in a way that distills, into their simplest form, the techniques and conceptual components that go into establishing its much more sophisticated cousin, the gap-preserving compression theorem of~\cite{ji_mip_re}. To that end, we express and prove all our results in the framework of \emph{synchronous strategies}, a class of strategies first studied by~\cite{paulsen2016estimating}. 
Working with these strategies simplifies our arguments both notationally as well as conceptually (as compared to working with general nonlocal games and general strategies).

A synchronous strategy $\strategy$ for a game $G$ 
is specified by a separable Hilbert space $\hilb$ (which could be infinite-dimensional), a von Neumann algebra $\algebra$ on $\hilb$, a tracial state on the algebra $\algebra$, \footnote{A \emph{von Neumann algebra} $\algebra$ on a Hilbert space $\hilb$ is a $*$-subalgebra of $\B(\hilb)$ (the set of bounded operators on $\hilb$) that contains the identity operator and is closed under the weak operator topology. A \emph{tracial state} $\tau$ on the algebra $\algebra$ is a positive, unital linear functional that satisfies the \emph{trace property}: $\trace{AB} = \trace{BA}$ for all $A,B \in \algebra$. 
} and a set of projective measurements $\{ M^x \}_{x \in \cal{X}}$ in the algebra $\algebra$ (each $M^x$ is a set of projections $\{M^x_a\}_{a \in \cal{A}}$ summing to the identity). Given questions $(x,y)$, the probability of obtaining answers $(a,b)$ is given by $\tau(M^x_a \, M^y_b)$. Thus the probability that the strategy $\strategy$ succeeds in the game $G$ is given by
\[
	\sum_{x,y \in \cal{X}} \mu(x,y) \, \sum_{a,b \in \cal{A}} D(x,y,a,b) \, \tau \Big( M^x_a \, M^y_b \Big)~.
\]

Readers who are not familiar with von Neumann algebras and tracial states may find the finite-dimensional setting easier to understand. When $\hilb= \C^r$ for some dimension $r$, then we can without loss of generality take the algebra $\algebra$ to be the set $\B(\hilb)$ of all bounded operators on $\hilb$ (which in finite dimensions is simply the set of all linear operators). In this case there is a \emph{unique} tracial state, which is the normalized trace $\tau(X) = \frac{1}{r} \tr(X)$. In terms of strategies for nonlocal games, this corresponds to the players using the same projective measurements for each question and sharing the maximally entangled state $\ket{\Phi} = \frac{1}{\sqrt{r}} \sum_{e = 1}^r \ket{e} \ket{e}$. Such a strategy has the property that if both players receive the same question $x \in \cal{X}$, they always output the same answer $a \in \cal{A}$ (this is why these strategies are called ``synchronous'').

In the infinite-dimensional setting, synchronous strategies give rise to \emph{commuting operator} strategies: for every synchronous strategy $\strategy = (\tau,\{M^x\})$ with Hilbert space $\hilb$, there exist another Hilbert space $\hilb'$, a state $\ket{\psi} \in \hilb'$, and measurements $\{A^x\}, \{B^x\}$ on $\hilb'$ for the players respectively such that for all $x,y \in \cal{X}$ and $a,b\in \cal{A}$, the operators $A^x_a$ and $B^y_b$ commute and we have
\[
	\tau(M^x_a \, M^y_b) = \bra{\psi} A^x_a \, B^y_b \ket{\psi}~.
\]
For a proof, see \cite[Theorem 5.5]{paulsen2016estimating}.

\begin{remark}
On the need to specify a von Neumann algebra $\algebra$ as part of the strategy: unlike in the finite-dimensional setting, we cannot without loss of generality take $\algebra$ to be all of $\B(\hilb)$; this is because there may not necessarily be a tracial state on $\B(\hilb)$.
\end{remark}

Synchronous strategies arise naturally when considering \emph{synchronous games}: these are games where the players must output the same answers whenever they receive the same question (i.e.\ $D(x,x,a,b) = 0$ whenever $a \neq b$). This simple restriction on the rules of the game has the following consequences for optimal strategies:

\begin{theorem}[Adapted from Theorem 3.2 of \cite{helton2017algebras} and Theorem 3.6 of \cite{kim2018synchronous}]
\label{thm:synchronous}
Let $G = (\cal{X},\cal{A},\mu,D)$ be a synchronous game such that $\mu(x,x) > 0$ for all $x \in \cal{X}$. Then if $\omega_{co}(G) = 1$ then there exists a \emph{synchronous strategy} $\strategy = (\tau, \{M^x\})$ for $G$ that achieves value $1$. If furthermore $\omega_q(G) = 1$, then there exists a sequence $\{ \strategy_n \}_{n \in \N}$ of \emph{finite-dimensional} synchronous strategies whose values approach $1$.
\end{theorem}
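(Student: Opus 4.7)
The plan is to treat the two assertions by separate arguments. For the commuting operator assertion, I would pass to a limit of near-optimal strategies and invoke the characterization of synchronous correlations by tracial states; for the finite-dimensional assertion, I would symmetrize near-optimal finite-dimensional strategies and then quantitatively ``round'' them to strictly synchronous ones.

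\textbf{Commuting operator case.} First I would fix any sequence $(\strategy_n)_{n\in\N}$ of commuting operator strategies with $\val(G,\strategy_n)\to 1$, and write $p_n(a,b\mid x,y)=\bra{\psi_n}A^x_{n,a}B^y_{n,b}\ket{\psi_n}$ for the associated correlations. Because $G$ is synchronous and $\mu(x,x)>0$ for every $x\in\cal{X}$, the contribution of the diagonal questions to the loss probability forces
\[
\sum_{a\ne b} p_n(a,b\mid x,x)\;\xrightarrow[n\to\infty]{}\;0\qquad\text{for every }x\in\cal{X}.
\]
Since the set $C_{qc}$ of commuting operator correlations is closed in the product topology (by a standard compactness and diagonal argument on the unit ball of a free product C*-algebra), a subsequence of $p_n$ converges to a correlation $p\in C_{qc}$ with $p(a,b\mid x,x)=0$ whenever $a\ne b$; that is, $p$ is exactly synchronous. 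Next I would invoke the representation theorem for synchronous $qc$-correlations from Paulsen et al.~\cite{paulsen2016estimating} (specifically Theorem 5.5 there, already cited in the paper), which provides a tracial von Neumann algebra $(\algebra,\tau)$ and projective measurements $\{M^x\}_{x\in\cal{X}}$ in $\algebra$ satisfying $p(a,b\mid x,y)=\tau(M^x_a M^y_b)$. The value of the resulting synchronous strategy then equals $\sum_{x,y,a,b}\mu(x,y)D(x,y,a,b)\,p(a,b\mid x,y)=\lim_n \val(G,\strategy_n)=1$, yielding the first claim. The main subtlety here is making sure the limiting correlation lies in $C_{qc}$ rather than merely in its closure; this is handled precisely because we allow the algebra $\algebra$ to be arbitrary (including direct integrals coming from the GNS construction on a free product), not just $\B(\hilb)$.

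\textbf{Finite-dimensional case.} Now suppose $\val_q(G)=1$ and fix finite-dimensional strategies $(\tilde\strategy_n)$ with $\val(G,\tilde\strategy_n)\to 1$. The natural first move is to symmetrize: replace $(\ket{\tilde\psi_n},\{\tilde A^x_{n,a}\},\{\tilde B^y_{n,b}\})$ by the maximally mixed / tracial version in which both players share the maximally entangled state on $\hilb_A\otimes\hilb_A$ (after an isometric identification) and both apply the same POVM coming from Alice's operators. The resulting strategy $\strategy_n'$ is synchronous in form (both players use the same measurement family on a maximally entangled state, inducing the normalized trace on $M_{d_n}(\C)$), and standard calculations show that its correlations are close to a symmetrization of Alice's marginals. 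However, value need not be preserved by this brute symmetrization, so the genuine step is to argue that the near-perfect winning probability, together with synchronicity of $G$, forces $\tilde\strategy_n$ to be \emph{approximately} synchronous, so that the symmetrized $\strategy_n'$ still wins with probability $1-o(1)$. Concretely, I would extract from $\val(G,\tilde\strategy_n)\ge 1-\eps_n$ with $\eps_n\to 0$ an operator-norm-type bound of the form $\|\tilde A^x_{n,a}\otimes\id-\id\otimes(\tilde B^x_{n,a})^T\|_{\tilde\psi_n}=o(1)$ on the state vector, via the diagonal clauses with $\mu(x,x)>0$.

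\textbf{Main obstacle and finish.} The hardest step is producing \emph{genuinely} (not approximately) synchronous finite-dimensional strategies whose values still tend to $1$, since the symmetrized strategy $\strategy_n'$ is synchronous but is quantitatively close to a strategy that may only be $\eps_n$-synchronous, and naively rounding projections risks losing the near-perfect value. I would handle this by appealing to the rounding lemma that underlies Theorem~3.6 of \cite{kim2018synchronous}: for a finite-dimensional strategy on a maximally entangled state with Alice-measurements $\{A^x_a\}$, the synchronous value in terms of $(\tau_{d},\{A^x_a\})$ (with $\tau_d$ the normalized trace on $M_d(\C)$) is exactly $\sum \mu(x,y) D(x,y,a,b)\,\tau_d(A^x_a A^y_b)$, and the gap between this quantity and $\val(G,\tilde\strategy_n)$ is controlled by the approximate-synchronicity bound just derived. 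Passing $n\to\infty$ gives a sequence of finite-dimensional synchronous strategies $\strategy_n=(\tau_{d_n},\{A^x_{n,a}\})$ with $\val(G,\strategy_n)\to 1$, completing the proof. Throughout, the one technical point to verify carefully is that the approximate-synchronicity estimate has no dependence on $|\cal{X}|$ beyond what $\min_x\mu(x,x)>0$ already controls, so that taking the limit is legitimate.
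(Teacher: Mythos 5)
The paper itself offers no proof of this theorem --- it is labeled ``Adapted from'' Theorem~3.2 of \cite{helton2017algebras} and Theorem~3.6 of \cite{kim2018synchronous}, so you are really being asked to reconstruct those results.

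Your commuting operator half is essentially correct and is in fact close to the Helton--Meyer--Paulsen--Satriano route. Since $C_{qc}$ is compact (the image of the state space of the max tensor product of the universal POVM algebras under a continuous map), the supremum defining $\omega_{co}(G)$ is attained at some $p^*\in C_{qc}$; the condition $\mu(x,x)>0$ forces $p^*(a,b\mid x,x)=0$ for $a\ne b$, so $p^*$ is exactly synchronous; the tracial characterization of synchronous $qc$-correlations then produces $(\tau,\{M^x\})$ with value~$1$. You phrase this via a convergent subsequence rather than attainment, but that is an equivalent argument. The one small omission is that the characterization hands you a trace on a $C^*$-algebra; to land in the paper's framework (a von Neumann algebra with a \emph{normal} tracial state) one passes through the GNS representation of $\tau$ and takes the weak closure, which is routine but should be said.

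Your finite-dimensional half has a real gap, and it sits exactly where you flagged a worry. Replacing $\ket{\tilde\psi_n}$ by the maximally entangled state while reusing Alice's measurements does \emph{not} preserve the winning probability, and the approximate-synchronicity estimate does not repair this. Writing $\ket{\tilde\psi_n}=(I\otimes K_n)\ket{\Phi}$ with $K_n$ the Schmidt-coefficient matrix, the original correlations are $\tfrac{1}{d}\tr\bigl(K_n (\tilde A^x_{n,a})^T K_n^* \tilde B^y_{n,b}\bigr)$ while the symmetrized ones are $\tfrac{1}{d}\tr\bigl(\tilde A^x_{n,a}\tilde A^y_{n,b}\bigr)$. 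The bound you extract from the diagonal checks controls roughly $K_n(\tilde A^x_{n,a})^T - (\tilde B^x_{n,a})^T K_n$, but it gives no control on the distortion coming from the weight $K_n^*K_n$, which can be far from a scalar even for value-$(1-\eps_n)$ strategies (a trivial synchronous game is won perfectly by an arbitrarily unbalanced, even product, state). There is no ``rounding lemma'' in \cite{kim2018synchronous} that closes this gap as you have invoked it. The actual route taken there (and what the theorem is asserting) is algebraic: take a limit correlation $p^*\in C_{qa}=\overline{C_q}$ with value~$1$, observe it is exactly synchronous by the same diagonal argument as in the $co$ case, and then invoke the characterization of synchronous $C_{qa}$-correlations via traces that are weak$^*$ limits of finite-dimensional traces; this is what produces the approximating sequence of finite-dimensional synchronous strategies. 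If you want to keep a strategy-level argument you would need a genuine rounding theorem for almost-synchronous finite-dimensional strategies (of the Gowers--Hatami/Vidick type), which is considerably more work and is not what the cited reference does.
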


Many games studied in quantum information theory and theoretical computer science are synchronous games; for example the games constructed in the proof of $\MIP^* = \RE$ are all synchronous. In this paper, we also focus exclusively on synchronous games. For this reason, we focus on analyzing the \emph{synchronous value} of games: we define
\[
	\omega_{co}^s(G) := \sup_{\text{\tiny synchronous } \strategy} \, \omega(G,\strategy) \qquad \text{and} \qquad \omega_q^s(G) := \sup_{\substack{\text{\tiny finite-dimensional} \\ \text{\tiny synchronous }  \strategy}} \, \omega(G,\strategy)~.
\]
Since synchronous strategies correspond to commuting operator strategies, we have that $\omega_{co}^s(G) \leq \omega_{co}(G)$ and similarly $\omega_q^s(G) \leq \omega_q(G)$; \Cref{thm:synchronous} implies that $\omega_{t}^s(G) = 1$ if and only if $\omega_t(G) = 1$ for $t \in \{q,co\}$. Thus we do not lose any generality by restricting our attention to synchronous strategies. 

The benefits of working within the synchronous games framework is that strategies only require specifying one set of measurements for both players (instead of having to keep track of one for Alice and one for Bob), and furthermore the state $\tau$ has the cyclic trace property. Working in the synchronous setting significantly simplified many of our proofs, in particular those of rigidity and introspection. Previous rigidity results needed to characterize the shared state upto isometry and find a concrete representation of the measurement operators as matrices. In the synchronous setting however we are able to completely sidestepped these technical issues. We need only to show that certain algebraic relations such as commutation or anticommutation are satisfied by any optimal strategy, which allows for a much cleaner argument. Furthermore, working in the synchronous games framework allows for a unified treatment of both the finite- and infinite-dimensional settings.

This paper builds upon arguments and techniques from a number of previous results. There has been great success in pinning down the algebra of optimal strategies within the synchronous games setting.
It is our hope that expressing our results in the language of synchronous games will facilitate connecting our work to the world of functional analysis and operator algebras.

\vspace{20pt}

\paragraph{Acknowledgments.} We thank Vern Paulsen and William Slofstra for helpful comments. We thank the FOCS referees for their feedback. The research presented in this paper was initiated at the University of Toronto. H.M. acknowledges the support of the Natural Sciences and Engineering Research Council of Canada (NSERC). H.Y. is supported by an NSERC Discovery Grant, a Google Research Award, and AFOSR award FA9550-21-1-0040.

\newpage
\section{Preliminaries}
\label{section:preliminaries}

For an integer $d \in \N$ we write $[d]$ to denote $\{1,2,\ldots,d\}$. For functions $f,g_1,\ldots,g_k: \N^k \to \N$, we write $f \leq \poly(g_1,\ldots,g_k)$ if there exists a constants $C, E \geq 0$ such that for all sufficiently large $a_1,\ldots,a_k$,
\[
    f(a_1,\ldots,a_k) \leq C \prod_{i=1}^\ell g_i(a_1,\ldots,a_k)^E.
\]

Let $A(x_1,\ldots,x_k)$ denote a $k$-input Turing machine, which is a Turing machine with $k$ input tapes, a single work tape, and a single output tape. Then $\TIME_A(x_1,\ldots,x_k)$ denotes the maximum of the description length of $A$, and the running time of $A$ on input $(x_1,\ldots,x_k)$ (which may be $\infty$ if $A$ never halts on that input). For an integer $n \in \N$, we let $\TIME_A(n)$ denote the maximum of $\TIME_A(n,x_2,\ldots,x_k)$ over all $x_2,\ldots,x_k \in \{0,1\}^*$ (where $n$ is provided to $A$ in binary). 

\subsection{Algebras, states, and norms}
Let $\hilb$ be a separable Hilbert space and let $\B(\hilb)$ denote the set of bounded linear operators on $\hilb$. We write $\id_\hilb$ to denote the identity operator on $\hilb$ (and simply write $\id$ when the Hilbert space is clear from context). 

A von Neumann algebra on a Hilbert space $\hilb$ is a unital $*$-subalgebra of bounded operators $\B(\hilb)$ that is closed in the \emph{weak operator topology}. Given two von Neumann algebras $\algebra$ and $\algebraB$ on Hilbert spaces $\hilb_A, \hilb_B$ respectively, the tensor product algebra $\algebra \otimes \algebraB$ is defined to be the closure under the weak operator topology of the $*$-subalgebra generated by $\{ A \otimes B \in\B(\hilb_A \otimes \hilb_B) : A \in \algebra,B \in \algebraB \}$. 

Let $\algebra \subseteq \B(\hilb)$ denote a von Neumann algebra on $\hilb$. We say that a positive linear functional $\tau: \algebra \to \C$ is
\begin{itemize}
	\item \emph{Unital} if $\tau(\id) = 1$~;
	\item \emph{Normal} if for all families $(P_i)_{i \in I}$ of pairwise orthogonal projections in $\algebra$, we have $\tau \Big( \sum_{i \in I} P_i \Big) = \sum_{i \in I} \tau(P_i)$~;
	\item \emph{Tracial} if for all $A,B \in \algebra$, we have $\tau(AB) = \tau(BA)$~;
\end{itemize}
In this paper, $\tau$ will always represent a positive linear functional that is tracial, normal, and unital. We call such functionals a \emph{normal tracial state}. For brevity we often drop the ``normal'' qualifier. For an in-depth reference to von Neumann algebras, we refer the reader to Blackadar's textbook~\cite{blackadar2006operator}. 

We record some basic properties of tracial states. First, tracial states satisfy the Cauchy-Schwarz and H\"{o}lder inequalities, i.e.
\[
    |\tau(A^* B)|^2 \leq \tau(A^* A) \, \tau(B^* B) \qquad \text{and} \qquad |\tau(A^* B)| \leq \|A\| \cdot \tau(|B|)
\]
where $\| \cdot \|$ denotes the operator norm, and $|B| = \sqrt{B^* B}$. Second, tracial states give rise to a seminorm on $\algebra$: we define the \emph{$\tau$-norm} of an operator $A \in \algebra$ to be
\[
    \| A \|_\tau = \sqrt{\tau(A^* A)} = \sqrt{\tau(AA^*)}.
\]
The $\| \cdot \|_\tau$ norm satisfies the triangle inequality: i.e., $\| A + B \|_\tau \leq \|A\|_\tau + \|B \|_\tau$. 

If $\hilb$ is finite dimensional (i.e. isomorphic to $\C^d$) then there is a unique tracial state on the algebra $\B(\hilb)$, which is the \emph{dimension-normalized trace} $\frac{1}{d} \tr(A)$. Thus in this case the $\tau$-norm is the normalized Frobenius norm.

\begin{proposition}
\label{prop:tensor-product}
If $\tau$ and $\sigma$ are tracial states on von Neumann algebras $\algebra$ and $\algebraB$ respectively, then $\tau\otimes \sigma$ is a tracial state on the von Neumann algebra $\algebra \otimes \algebraB$.
\end{proposition}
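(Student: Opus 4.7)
The plan is to verify each of the defining properties of a normal tracial state for $\tau \otimes \sigma$: positivity, unitality, normality, and the trace property. Since $\algebra \otimes \algebraB$ is defined as the weak operator closure of the algebraic tensor product $\algebra \odot \algebraB$, the strategy is to first establish the desired properties on the dense subalgebra of finite sums of elementary tensors, and then extend by weak continuity (i.e., normality).

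First I would construct $\tau \otimes \sigma$ as a linear functional on $\algebra \odot \algebraB$ by setting $(\tau \otimes \sigma)(A \otimes B) = \tau(A)\, \sigma(B)$ and extending linearly to finite sums. Unitality is immediate: $(\tau \otimes \sigma)(\id_{\hilb_A} \otimes \id_{\hilb_B}) = \tau(\id)\sigma(\id) = 1$. Positivity on the algebraic tensor product follows from the standard fact that the tensor product of two positive linear functionals on $*$-algebras is positive; one writes a positive element of $\algebra \odot \algebraB$ as $\sum_i X_i^* X_i$ with $X_i \in \algebra \odot \algebraB$, uses the GNS-style decomposition, and applies positivity of $\tau$ and $\sigma$ separately. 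Once positivity and unitality are established on the algebraic tensor product, boundedness (in fact norm $1$) follows, so the functional admits a unique extension by weak continuity to a normal state on the von Neumann algebra tensor product $\algebra \otimes \algebraB$; normality of this extension is built in by construction.

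For the trace property, I would first verify it on products of elementary tensors. For $A_1, A_2 \in \algebra$ and $B_1, B_2 \in \algebraB$, one computes directly
\[
(\tau \otimes \sigma)\bigl((A_1 \otimes B_1)(A_2 \otimes B_2)\bigr) = \tau(A_1 A_2)\,\sigma(B_1 B_2) = \tau(A_2 A_1)\,\sigma(B_2 B_1) = (\tau \otimes \sigma)\bigl((A_2 \otimes B_2)(A_1 \otimes B_1)\bigr),
\]
using that $\tau$ and $\sigma$ are themselves tracial. By bilinearity the identity $(\tau \otimes \sigma)(XY) = (\tau \otimes \sigma)(YX)$ extends to all $X,Y$ in the algebraic tensor product $\algebra \odot \algebraB$.

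The main obstacle is then propagating the trace identity from $\algebra \odot \algebraB$ to the full weak operator closure $\algebra \otimes \algebraB$. For this I would fix $X \in \algebra \otimes \algebraB$, use Kaplansky's density theorem to approximate $X$ by a bounded net $(X_\lambda) \subset \algebra \odot \algebraB$ converging to $X$ in the strong operator topology, and likewise for $Y$. Since $\tau \otimes \sigma$ is normal, it is continuous along bounded nets converging in the strong operator topology, and multiplication is jointly SOT-continuous on bounded sets, so both sides of the trace identity pass to the limit. This yields $(\tau \otimes \sigma)(XY) = (\tau \otimes \sigma)(YX)$ on all of $\algebra \otimes \algebraB$, completing the proof.
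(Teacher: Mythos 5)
The paper states this proposition without proof, treating it as a standard fact, so there is no in-paper argument to compare against; I will assess your proposal on its own merits.

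Your overall strategy is sound, and in particular the propagation of the trace identity from $\algebra \odot \algebraB$ to the weak closure via Kaplansky density and joint SOT-continuity of multiplication on bounded sets is exactly the right mechanism. The one step that is stated imprecisely is positivity. You write "one writes a positive element of $\algebra \odot \algebraB$ as $\sum_i X_i^* X_i$"; that is not the right framing, since the algebraic tensor product is not a $C^*$-algebra and its operator-positive elements need not be sums of squares. What is actually needed is the weaker statement that $(\tau \otimes \sigma)(X^*X) \geq 0$ for $X = \sum_i A_i \otimes B_i \in \algebra \odot \algebraB$, and this is what holds: expanding gives $\sum_{i,j} \tau(A_i^* A_j)\,\sigma(B_i^* B_j)$, which is the entrywise sum of the Schur product of the two positive semidefinite Gram matrices $[\tau(A_i^* A_j)]$ and $[\sigma(B_i^* B_j)]$, hence nonnegative. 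A cleaner route that avoids this bookkeeping entirely, and also disposes of your somewhat hand-waved "normality of this extension is built in by construction", is to pass to the GNS representations: $\tau$ and $\sigma$ become the vector states attached to cyclic vectors $\xi_\tau$ and $\xi_\sigma$, and $\tau \otimes \sigma$ is then the vector state $\langle \xi_\tau \otimes \xi_\sigma, \cdot \,(\xi_\tau \otimes \xi_\sigma)\rangle$ on $\algebra \otimes \algebraB$, which is manifestly positive, unital, and normal. The trace-property verification on elementary tensors and the density argument then go through exactly as you wrote them.
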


\begin{proposition}
\label{prop:holder}
Let $A,B \in \algebra$. Then $\|AB \|_\tau \leq \|A\| \cdot \| B \|_\tau$.
\end{proposition}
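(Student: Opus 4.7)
The plan is to square both sides and reduce the inequality to an operator-inequality bound, then apply positivity of $\tau$. Unfolding the definition of the $\tau$-norm gives $\|AB\|_\tau^2 = \tau\bigl((AB)^*(AB)\bigr) = \tau(B^* A^* A B)$ and $\|B\|_\tau^2 = \tau(B^* B)$, so the desired bound becomes
\[
\tau(B^* A^* A B) \;\leq\; \|A\|^2 \, \tau(B^* B).
\]

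The key step is the operator inequality $A^* A \leq \|A\|^2 \, \id$. This is standard: the $C^*$-identity gives $\|A^* A\| = \|A\|^2$, and any positive bounded operator $P$ satisfies $P \leq \|P\|\, \id$ (by functional calculus, or equivalently because $\mathrm{spec}(P) \subseteq [0,\|P\|]$). Next I would conjugate by $B$ on both sides, using the fact that the map $X \mapsto B^* X B$ is operator-monotone on the positive cone: if $0 \leq P \leq Q$ then $B^*(Q - P)B = C^* C$ with $C = (Q-P)^{1/2}B$, hence nonnegative. This yields $B^* A^* A B \leq \|A\|^2 \, B^* B$.

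Finally, since $\tau$ is a positive linear functional, applying it to both sides preserves the inequality, giving $\tau(B^* A^* A B) \leq \|A\|^2 \, \tau(B^* B) = \|A\|^2 \, \|B\|_\tau^2$. Taking square roots finishes the proof.

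I do not expect any real obstacle here: the argument uses only linearity and positivity of $\tau$, and neither the trace property nor normality plays any role, so the same proof works identically in the finite- and infinite-dimensional settings. The only step a reader might pause on is the operator-monotonicity of $X \mapsto B^* X B$, but this is immediate from the $C^* C$ characterization of positivity.
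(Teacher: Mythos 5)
Your proof is correct, and it takes a genuinely different route from the paper's. The paper's one-line argument uses the trace property to rewrite $\|AB\|_\tau^2 = \tau(BB^*A^*A)$ and then invokes the H\"older inequality for tracial states, $|\tau(A^*B)| \leq \|A\|\cdot\tau(|B|)$, to peel off $\|A^*A\| = \|A\|^2$. You instead work at the level of operator inequalities: $A^*A \leq \|A\|^2\,\id$, conjugation by $B$ preserves the order, and positivity of $\tau$ finishes. The two steps you use (the spectral bound $P \leq \|P\|\id$ for positive $P$, and monotonicity of $X \mapsto B^*XB$) are more elementary ingredients than H\"older, and as you correctly observe, your argument never touches cyclicity or normality, so it holds for an arbitrary positive linear functional $\tau$ --- a slightly stronger conclusion than what the paper states or needs. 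The paper's route is shorter given that it has already catalogued H\"older as a standing tool; yours is self-contained and exposes the operator-order inequality that actually underlies the bound.
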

\begin{proof}
	\begin{align*}
		\|AB \|_\tau &= \sqrt{ \tau(BB^*A^* A) } \\
					&\leq \sqrt{ \| A^* A \| \cdot \tau(BB^*) } & \text{(H\"{o}lder)} \\
					&= \| A \| \cdot \| B\|_\tau
	\end{align*}
\end{proof}

The following proposition allows us to exchange any operator $A$ in any expression $CAD$ with a nearby operator $B$ and obtain a new expression $CBD$ close to the original expression.

\begin{proposition}\label{prop:exchange-operators}
    Let $C,D \in \algebra$ be any operators with $\|C\|,\|D\| \leq 1$. If $A,B \in \algebra$ and $\|A-B\|_\tau \leq \epsilon$, then $\|CAD - CBD\|_\tau \leq \epsilon$ and $|\tau(CAD-CBD)|\leq \epsilon$. \end{proposition}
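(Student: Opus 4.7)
The plan is to reduce both inequalities to a single application of Proposition~\ref{prop:holder} (and the cyclic-trace argument behind it), applied twice to absorb $C$ on the left and $D$ on the right.

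First I would factor $CAD - CBD = C(A-B)D$, so the goal becomes bounding $\|C(A-B)D\|_\tau$. Proposition~\ref{prop:holder} already handles left multiplication: $\|C(A-B)D\|_\tau \leq \|C\|\cdot \|(A-B)D\|_\tau$. To handle the right-multiplication by $D$, I would prove the analogous statement $\|XD\|_\tau \leq \|D\|\cdot \|X\|_\tau$ (with $X = A-B$). This is a short computation using the cyclic property of $\tau$: $\|XD\|_\tau^2 = \tau(D^*X^*XD) = \tau((X^*X)^{1/2}DD^*(X^*X)^{1/2})$, and since $DD^* \leq \|D\|^2 \id$, monotonicity of $\tau$ on positive operators gives $\|XD\|_\tau^2 \leq \|D\|^2\,\tau(X^*X) = \|D\|^2 \|X\|_\tau^2$. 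Combining, $\|C(A-B)D\|_\tau \leq \|C\|\cdot\|D\|\cdot\|A-B\|_\tau \leq \epsilon$, which proves the first claim.

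For the second claim, I would use the tracial Cauchy-Schwarz inequality stated earlier in the preliminaries in the specialized form $|\tau(Y)| = |\tau(\id^* Y)| \leq \|\id\|_\tau \cdot \|Y\|_\tau = \|Y\|_\tau$. Applied to $Y = CAD - CBD$, this yields $|\tau(CAD - CBD)| \leq \|CAD - CBD\|_\tau \leq \epsilon$ by the first part, completing the proof.

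There is no real obstacle here; the only small subtlety is the auxiliary fact $\|XD\|_\tau \leq \|D\|\cdot\|X\|_\tau$, which is not stated as its own proposition in the excerpt but follows by the same cyclic-trace maneuver used in the proof of Proposition~\ref{prop:holder}. One could alternatively deduce it directly from Proposition~\ref{prop:holder} by noting that $\|XD\|_\tau = \|D^*X^*\|_\tau \leq \|D^*\|\cdot\|X^*\|_\tau = \|D\|\cdot\|X\|_\tau$, using that the $\tau$-norm is invariant under the adjoint (since $\tau$ is tracial).
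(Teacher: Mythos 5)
Your proof is correct, and for the first inequality it is essentially the same as the paper's (factor out $C$ and $D$, apply H\"{o}lder-type bounds). You are actually a bit more careful than the paper here: Proposition~\ref{prop:holder} as stated only covers left-multiplication, and the paper silently uses a two-sided version when it writes $\|C(A-B)D\|_\tau^2 \leq \|C\|^2\|D\|^2\|A-B\|_\tau^2$; you explicitly note the need for the right-multiplication analogue $\|XD\|_\tau \leq \|D\|\cdot\|X\|_\tau$ and supply two short derivations of it (the direct cyclic computation, or reduction to the left case via the adjoint-invariance of $\|\cdot\|_\tau$). For the second inequality you take a genuinely different and more economical route: rather than the paper's separate Cauchy--Schwarz computation $|\tau(DC(A-B))|^2 \leq \tau(DCC^*D^*)\,\tau((A-B)^*(A-B))$ followed by the operator-norm bound $\tau(DCC^*D^*)\leq 1$, you simply observe $|\tau(Y)| \leq \|\id\|_\tau \cdot \|Y\|_\tau = \|Y\|_\tau$ and apply the first part, so the trace bound falls out as a one-line corollary. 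Both arguments are valid; your version makes the logical dependence between the two claims explicit and avoids re-deriving what the $\tau$-norm bound already contains.
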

\begin{proof}
By Proposition \ref{prop:holder}
\[
    \|C(A-B)D\|_\tau^2 \leq  \|C\|^2 \|D\|^2 \|A-B\|_\tau^2 \leq \|A-B\|_\tau^2.
\]
We also have
\begin{align*}
    |\tau(C(A-B)D)|^2 &= |\tau(DC(A-B))|^2 \\
    &\leq \tau(D C C^* D^*) \tau((A-B)^*(A-B)) & \text{(Cauchy-Schwarz)}\\
    &\leq \|A-B\|_\tau^2.
\end{align*}
In the last line we used that $\tau(D C C^* D^*) \leq 1$. Indeed, if $\|M\|\leq 1$, then by H\"{o}lder $|\tau(M)|\leq \|M^*\|\tau(I) \leq 1$.
\end{proof}
In applications of Proposition \ref{prop:exchange-operators} we usually find ourselves in a situation where $C$ and $D$ are products of projections and unitaries. Since the operator norm is submultiplicative, i.e., $\|MN\| \leq \|M\| \|N\|$, the operator norm of any product of projections and unitaries is bounded above by $1$. Thus the assumptions of the proposition are readily verified.

\begin{proposition}\label{prop:unitary-close-to-identity}
Let $U$ be any unitary. If $|\tau(\id-U)| \leq \epsilon$, then $\|\id-U\|_\tau \leq \sqrt{2\epsilon}$
\end{proposition}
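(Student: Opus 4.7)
The plan is to expand the square of the $\tau$-norm and use the fact that $\tau$ is a tracial state together with the unitarity of $U$ to reduce the quantity $\|\id - U\|_\tau^2$ to a multiple of the real part of $\tau(\id - U)$. More precisely, I would start by computing
\[
    \|\id - U\|_\tau^2 = \tau\bigl((\id - U)^*(\id - U)\bigr) = \tau(\id - U - U^* + U^*U).
\]
Since $U$ is unitary we have $U^*U = \id$, so this simplifies to $\tau(2\id - U - U^*) = 2 - \tau(U) - \tau(U^*)$.

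Next I would use that $\tau$ is a state, hence satisfies $\tau(A^*) = \overline{\tau(A)}$. Therefore $\tau(U) + \tau(U^*) = 2\,\mathrm{Re}(\tau(U))$, and we get
\[
    \|\id - U\|_\tau^2 = 2 - 2\,\mathrm{Re}(\tau(U)) = 2\,\mathrm{Re}(\tau(\id - U)).
\]
Finally, I would bound the real part by the absolute value to obtain
\[
    \|\id - U\|_\tau^2 \leq 2\,|\tau(\id - U)| \leq 2\epsilon,
\]
from which the desired bound $\|\id - U\|_\tau \leq \sqrt{2\epsilon}$ follows by taking square roots.

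There is essentially no obstacle here; the only subtlety is making sure to invoke $U^*U = \id$ (which is exactly where unitarity of $U$ is used) and the conjugation property of the state $\tau$, both of which are standard facts about tracial states on a von Neumann algebra as recalled in the preliminaries.
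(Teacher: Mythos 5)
Your proof is correct and follows essentially the same route as the paper's: expand $\|\id - U\|_\tau^2 = \tau(2\id - U - U^*)$ using unitarity, then bound this by $2|\tau(\id - U)|$. You simply spell out the intermediate step $\tau(U) + \tau(U^*) = 2\,\mathrm{Re}(\tau(U))$ which the paper leaves implicit.
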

\begin{proof}
\begin{align*}
    \|\id-U\|_\tau^2 = \tau((\id-U)^\ast(\id - U)) = \tau(2\id - U - U^\ast) \leq 2|\tau(\id - U)|.
\end{align*}
\end{proof}

\subsection{Measurements and distance measures on them} 
\label{sec:measurements}
Let $\algebra$ denote a von Neumann algebra with a normal tracial state $\tau$. Let $M = \{M_a\}_{a \in \cal{A}}$ and $N = \{N_a\}_{a \in \cal{A}}$ denote sets of operators in $\algebra$, indexed by a finite set $\cal{A}$. Then we measure the distance between $M$ and $N$, denoted by $\| M - N \|_\tau$, as
\[
    \| M - N\|_\tau = \sqrt{\sum_{a \in \cal{A}} \left \| M_a - N_a \right \|^2_\tau }\;.
\]
We say that $M$ is \emph{$\delta$-far} from $N$, denoted by $M_a \approx_\delta N_a$, if $\|M - N \|_\tau \leq \delta$. We also occasionally use the notation $\|M\|_\tau = \sqrt{\sum_{a\in\cal{A}} \|M_a\|_\tau^2}$. \hynote{added:}

\begin{lemma}
\label{lem:triangle-inequality-tau-norm}
Let $M = \{ M_a \}_{a \in \cal{A}}$ and and $N = \{N_a\}_{a \in \cal{A}}$ denote sets of operators indexed by a finite set $\cal{A}$. Then 
\[
	\| M - N \|_\tau \leq \| M \|_\tau + \| N \|_\tau~.
\]
\end{lemma}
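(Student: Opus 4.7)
The plan is to view $\|\cdot\|_\tau$ on tuples as a composition of two triangle inequalities: first the triangle inequality for $\|\cdot\|_\tau$ on $\algebra$ applied componentwise, then the triangle (Minkowski) inequality for the Euclidean $\ell^2$ norm on $\R^{|\cal{A}|}$.

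Concretely, I would proceed as follows. For each $a \in \cal{A}$, the text notes that $\|\cdot\|_\tau$ is a seminorm on $\algebra$ satisfying $\|X+Y\|_\tau \leq \|X\|_\tau + \|Y\|_\tau$; applying this with $X = M_a$ and $Y = -N_a$ (equivalently, with $X = M_a - N_a$ and $Y = N_a$, then rearranging) yields the pointwise bound $\|M_a - N_a\|_\tau \leq \|M_a\|_\tau + \|N_a\|_\tau$. Squaring and summing over $a$ gives
\[
\sum_{a \in \cal{A}} \|M_a - N_a\|_\tau^2 \;\leq\; \sum_{a \in \cal{A}} \bigl(\|M_a\|_\tau + \|N_a\|_\tau\bigr)^2.
\]
The right-hand side is the squared Euclidean norm of the vector $u + v \in \R^{|\cal{A}|}_{\geq 0}$, where $u_a = \|M_a\|_\tau$ and $v_a = \|N_a\|_\tau$. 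By Minkowski's inequality in $\ell^2$, $\|u + v\|_2 \leq \|u\|_2 + \|v\|_2$, so taking square roots gives
\[
\sqrt{\sum_{a} \|M_a - N_a\|_\tau^2} \;\leq\; \sqrt{\sum_{a} \|M_a\|_\tau^2} + \sqrt{\sum_{a} \|N_a\|_\tau^2},
\]
which by definition is precisely $\|M - N\|_\tau \leq \|M\|_\tau + \|N\|_\tau$.

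There is no real obstacle here: the only ingredients are the scalar triangle inequality for $\|\cdot\|_\tau$ (already recorded in the preliminaries as a property of tracial-state seminorms) and Minkowski's inequality for the standard Euclidean norm on a finite-dimensional real vector space. The proof will be a few lines long, and the key observation to emphasize is simply that $\|M - N\|_\tau$ is nothing more than the $\ell^2$-composition of the pointwise seminorms $\|M_a - N_a\|_\tau$, so both triangle inequalities compose in the obvious way.
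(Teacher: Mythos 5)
Your proof is correct and is essentially the same as the paper's: the pointwise $\tau$-norm triangle inequality followed by an $\ell^2$ aggregation. The only cosmetic difference is that you cite Minkowski's inequality by name, whereas the paper re-derives it inline by expanding the square and applying Cauchy-Schwarz to the cross term.
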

\begin{proof}
We compute:
	\begin{align*}
		\| M - N \|_\tau^2 &= \sum_{a \in \cal{A}} \left \| M_a - N_a \right \|^2_\tau  \\
		&\leq \Big( \sum_{a \in \cal{A}} \| M_a \|^2 \Big) + \Big( \sum_{a \in \cal{A}}  \| N_a\|^2 \Big) + 2 \Big( \sum_{a \in \cal{A}}  \| M_a \|_\tau \cdot \|N_a \|_\tau \Big) \\
		&\leq \Big( \sum_{a \in \cal{A}} \| M_a \|^2 \Big) + \Big( \sum_{a \in \cal{A}}  \| N_a\|^2 \Big) + 2 \sqrt{ \sum_{a \in \cal{A}}  \| M_a \|_\tau^2 } \cdot \sqrt{\sum_{a \in \cal{A}} \|N_a \|_\tau^2 } \\
		&= \Big ( \| M \|_\tau + \| N \|_\tau \Big)^2~.
	\end{align*}
	The first inequality follows from the triangle inequality of the $\tau$-norm, and the second inequality follows from Cauchy-Schwarz.
\end{proof}

A \emph{positive operator-valued measure (POVM) on $\hilb$ with outcomes in a finite set $\cal{A}$} is a set of positive operators $\{M_a\}_{a \in \cal{A}}$ such that $\sum_{a \in \cal{A}} M_a = \id$. A projective measurement is a POVM such that each element $M_a$ is a projection. For a projective measurement $M = \{M_a\}$ it holds that $M_a M_b = \delta_{a,b}M_a$ where $\delta_{a,b}$ is Kronecker delta. So operators belonging to the same projective measurement commute. We say two measurements $M = \{M_a\}$ and $N = \{N_b\}$ commute, if $M_a N_b = N_b M_a$ for all $a,b$. 

To denote ``data processed'' measurements, i.e., apply a function $f: \cal{A} \to \cal{B}$ to the outcome of a measurement, we use the following notation: $M_{[f]}$ denotes the POVM with elements
\[
    M_{[f \mid b]} = \sum_{a : f(a) = b} M_a
\]
for all $b \in \cal{B}$. As an example, suppose $\cal{A} = \{0,1\}^n$ and $\cal{B} = \{0,1\}$. Then we write $M_{[a \mapsto a_i]}$ to denote the processed measurement that measures a string $a$, and then returns the $i$-th bit of $a$. To refer to the element of $M_{[a \mapsto a_i]}$ corresponding to outcome $b \in \{0,1\}$, we write $M_{[a \mapsto a_i \mid b]}$. For a predicate $P: \cal{A} \to \{0,1\}$, we also use the notation
\[
	M_{[a : P(a)]} = \sum_{a : P(a) = 1} M_a ~.
\]
For example, the operator $M_{[a : f(a) \neq b]}$ denotes the sum over all $M_a$ such that $f(a) \neq b$. 

We introduce two important distance measures between POVMs that will be used throughout this paper. All operators referred to in the following are assumed to be elements of a von Neumann algebra $\algebra$ on which a tracial state $\tau$ is defined.

The first distance measure we define is called \emph{inconsistency}. Let $M,N$ denote POVMs with outcomes in a finite set $\cal{A}$ (called the \emph{answer set} or \emph{outcome set}). We say that $M$ and $N$ are \emph{$\delta$-inconsistent} if 
\[
 	\sum_{\substack{a,b \in \cal{A}: \\ a \neq b}} \tau(M_a \, N_b) \leq \delta
\]
When the answer set $\cal{A}$ is clear from context, we write $M_a \simeq_\delta N_a$ to denote that $M$ and $N$ are $\delta$-inconsistent.

The second distance measurement we introduce is called \emph{closeness}. We say that sets of POVMs  $M,N$ are \emph{$\delta$-far} if 
\[
  \| M - N \|_\tau \leq \delta.
\]
Similarly, when the answer set $\cal{A}$ is clear from context, we write $M_a \approx_\delta N_a$ to denote that $M$ and $N$ are $\delta$-far. Observe that this notion of closeness is also well-defined when the operators $M_a$, $N_a$ are not necessarily positive. Thus we will also write $M_a \approx_\delta N_a$ to denote closeness of arbitrary operator sets that are indexed by an answer set $\cal{A}$.

\subsection{Utility lemmas about measurements}

We now establish several utility lemmas concerning consistency, closeness, and measurements.

\begin{lemma}[Cauchy-Schwarz for operator sets]
	Let $M = \{M_a\}_{a \in \cal{A}}$ and $N = \{N_a\}_{a \in \cal{A}}$ denote sets of operators (not necessarily POVMs). Then 
	\[
		\Big | \sum_{a \in \cal{A}} \tau( M_a \cdot N_a) \Big |^2 \leq \Big( \sum_{a \in \cal{A}} \| M_a \|_\tau^2 \Big) \cdot  \Big( \sum_{a \in \cal{A}} \| N_a \|_\tau^2 \Big)  \;.
	\]
\end{lemma}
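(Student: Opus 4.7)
The plan is to reduce the statement to two applications of Cauchy--Schwarz: one tracial (to bound each scalar $|\tau(M_a N_a)|$ by a product of $\tau$-norms) and one classical (to bound the resulting sum of products by a product of $\ell^2$-norms).

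First I would establish the single-term bound $|\tau(AB)| \leq \|A\|_\tau \cdot \|B\|_\tau$ for arbitrary $A,B \in \algebra$. This follows directly from the Cauchy--Schwarz inequality for the tracial inner product $\langle X, Y\rangle = \tau(X^*Y)$ already recorded in Section~2.1: setting $X = A^*$ and $Y = B$ yields $|\tau(AB)|^2 = |\tau(X^* Y)|^2 \leq \tau(X^* X)\tau(Y^* Y) = \tau(AA^*)\tau(B^*B)$, and the trace property gives $\tau(AA^*) = \tau(A^*A) = \|A\|_\tau^2$.

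Next I would chain inequalities. Starting from the triangle inequality for complex numbers,
\[
    \Bigl|\sum_{a \in \cal{A}} \tau(M_a N_a)\Bigr| \;\leq\; \sum_{a \in \cal{A}} |\tau(M_a N_a)| \;\leq\; \sum_{a \in \cal{A}} \|M_a\|_\tau \cdot \|N_a\|_\tau,
\]
where the second inequality uses the single-term bound from the previous paragraph. Then the classical Cauchy--Schwarz inequality on $\mathbb{R}^{|\cal{A}|}$ applied to the sequences $(\|M_a\|_\tau)_{a \in \cal{A}}$ and $(\|N_a\|_\tau)_{a \in \cal{A}}$ gives
\[
    \sum_{a \in \cal{A}} \|M_a\|_\tau \cdot \|N_a\|_\tau \;\leq\; \sqrt{\sum_{a \in \cal{A}} \|M_a\|_\tau^2}\;\cdot\;\sqrt{\sum_{a \in \cal{A}} \|N_a\|_\tau^2}.
\]
Squaring both sides of the composite inequality yields the claim.

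There is no real obstacle here: all ingredients (tracial Cauchy--Schwarz, triangle inequality for $|\cdot|$, scalar Cauchy--Schwarz) are standard, and the finiteness of $\cal{A}$ makes each step immediate. The only minor subtlety is that the statement uses $\tau(M_a N_a)$ rather than $\tau(M_a^* N_a)$, so one has to invoke the trace property $\tau(AA^*) = \tau(A^*A)$ when converting the tracial Cauchy--Schwarz bound into the symmetric form involving $\|M_a\|_\tau$ and $\|N_a\|_\tau$. No positivity of the $M_a$ or $N_a$ is needed, which matches the intended use in later sections where the lemma is applied to arbitrary (possibly signed) operator sets.
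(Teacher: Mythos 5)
Your proof is correct and takes essentially the same route as the paper: a pointwise tracial Cauchy--Schwarz bound $|\tau(M_a N_a)| \leq \|M_a\|_\tau \|N_a\|_\tau$, followed by the triangle inequality for complex numbers and the scalar Cauchy--Schwarz inequality over the index set $\cal{A}$. The only difference is that you spell out the derivation of the single-term bound (via $X = A^*$, $Y = B$ and the trace property), whereas the paper asserts it directly as a known consequence of Cauchy--Schwarz for tracial states.
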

\begin{proof}
	\hynote{simplified proof:}
	For every $a \in \cal{A}$, we have that $|\tau(M_a \cdot N_a)| \leq \| M_a \|_\tau \cdot \| N_a \|_\tau$ by the Cauchy-Schwarz inequality for tracial states. Applying the triangle inequality and Cauchy-Schwarz again we have
	\[
		\Big | \sum_{a \in \cal{A}} \tau( M_a \cdot N_a) \Big |^2 \leq \Big( \sum_{a \in \cal{A}} \Big |  \tau( M_a \cdot N_a) \Big | \Big)^2 \leq \Big( \sum_{a \in \cal{A}} \| M_a \|_\tau \cdot \| N_a \|_\tau \Big)^2 \leq \Big( \sum_{a \in \cal{A}} \| M_a \|_\tau^2 \Big) \cdot  \Big( \sum_{a \in \cal{A}} \| N_a \|_\tau^2 \Big) ~.
	\]

\end{proof}

\begin{lemma}[Data processing inequality for consistency]
\label{lem:data-processing}
Let $M = \{M_a\}$ and $N = \{N_a\}$ be POVMs with outcomes in $\cal{A}$ such that $M_a \simeq_\delta N_a$. Let $f: \cal{A} \to \cal{B}$. Then 
\[
	M_{[f\mid b]} \simeq_\delta N_{[f\mid b]} \;.
\]
\end{lemma}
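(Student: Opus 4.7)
The plan is to unpack both sides using only the definitions of the processed POVM and of $\delta$-inconsistency, and then observe that the right-hand side is literally a subsum of the $\delta$-bound assumed for $M,N$.

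First I would expand the quantity we want to bound. By definition of the processed POVMs and linearity of $\tau$,
\[
  \sum_{\substack{b,c \in \cal{B} \\ b \neq c}} \tau\!\bigl(M_{[f \mid b]}\, N_{[f \mid c]}\bigr)
  \;=\; \sum_{\substack{b,c \in \cal{B} \\ b \neq c}} \; \sum_{\substack{a \in f^{-1}(b)\\ a' \in f^{-1}(c)}} \tau(M_a \, N_{a'}).
\]
The key combinatorial observation is that whenever $f(a) = b$, $f(a') = c$, and $b \neq c$, we must have $a \neq a'$. Therefore the index set on the right-hand side is a subset of $\{(a,a') \in \cal{A} \times \cal{A} : a \neq a'\}$, and every pair $(a,a')$ with $a \neq a'$ appears at most once (specifically, when $f(a) \neq f(a')$).

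Next, I would use positivity to drop the extra terms. For any POVM elements $M_a, N_{a'} \succeq 0$, the quantity $\tau(M_a N_{a'})$ is non-negative: by the trace property, $\tau(M_a N_{a'}) = \tau(N_{a'}^{1/2} M_a N_{a'}^{1/2})$, and this is the trace of a positive operator. Consequently,
\[
  \sum_{\substack{b,c : b \neq c}} \tau\!\bigl(M_{[f \mid b]}\, N_{[f \mid c]}\bigr)
  \;\leq\; \sum_{\substack{a,a' \in \cal{A} \\ a \neq a'}} \tau(M_a \, N_{a'}) \;\leq\; \delta,
\]
where the final inequality is exactly the hypothesis $M_a \simeq_\delta N_a$. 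This establishes $M_{[f \mid b]} \simeq_\delta N_{[f \mid b]}$.

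There is no real obstacle here; the only subtle point is confirming $\tau(M_a N_{a'}) \geq 0$ for positive operators under a tracial state, which is the one ingredient that makes the subset-of-terms argument valid.
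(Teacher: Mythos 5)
Your proof is correct and matches the paper's argument: both expand the processed-measurement sum into a sum over preimage pairs, note that $f(a)\neq f(a')$ forces $a\neq a'$, and bound by the original inconsistency. The only difference is that you make explicit the positivity of $\tau(M_a N_{a'})$ (via $\tau(N_{a'}^{1/2} M_a N_{a'}^{1/2})$), which the paper uses implicitly when replacing the subsum by the full sum; that is a fine addition.
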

\begin{proof}
	\begin{align*}
		\sum_{b \neq b' \in \cal{B}} \tau (M_{[f \mid b]} N_{[f\mid b']}) = \sum_{\substack{b \neq b' \in \cal{B} \\ a,a' \in \cal{A} \\ f(a) = b,  f(a') = b'}} \tau(M_a N_{a'}) \leq \sum_{a \neq a' \in \cal{A}} \tau(M_a N_{a'}) \leq \delta.
	\end{align*}
\end{proof}

\begin{lemma}[Consistency to closeness]
\label{lem:consistency-consequences}
Let $M = \{M_a\}$ and $N = \{N_a\}$ be POVMs with outcomes in $\cal{A}$ such that $M_a \simeq_\delta N_a$. Then $M_a \approx_{\sqrt{2\delta}} N_a$.
\end{lemma}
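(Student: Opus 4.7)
The plan is to bound $\sum_{a \in \cal{A}} \|M_a - N_a\|_\tau^2$ directly and show it is at most $2\delta$; taking square roots then gives the claimed $\sqrt{2\delta}$-closeness. I will expand the square using the fact that POVM elements are self-adjoint, use the operator inequality $P^2 \leq P$ valid for any $0 \leq P \leq \id$ to control the diagonal $\tau(M_a^2)$ and $\tau(N_a^2)$ terms, and finally exploit the completeness relations $\sum_a M_a = \sum_a N_a = \id$ together with the inconsistency hypothesis to extract the bound.

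Concretely, first I would write
\[
\|M_a - N_a\|_\tau^2 = \tau(M_a^2) - 2\tau(M_a N_a) + \tau(N_a^2),
\]
using that $M_a$ and $N_a$ are self-adjoint positive operators. Since $0 \leq M_a \leq \id$, we have $M_a - M_a^2 = M_a^{1/2}(\id - M_a) M_a^{1/2} \geq 0$ and hence $\tau(M_a^2) \leq \tau(M_a)$; likewise $\tau(N_a^2) \leq \tau(N_a)$. Summing over $a \in \cal{A}$ and using $\tau(\sum_a M_a) = \tau(\sum_a N_a) = \tau(\id) = 1$ yields
\[
\sum_{a \in \cal{A}} \|M_a - N_a\|_\tau^2 \leq 2 - 2 \sum_{a \in \cal{A}} \tau(M_a N_a).
\]

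Second, I would rewrite the diagonal sum using completeness: since $\tau\Paren{\bigparen{\sum_a M_a}\bigparen{\sum_b N_b}} = \tau(\id) = 1$, we get
\[
\sum_{a \in \cal{A}} \tau(M_a N_a) = 1 - \sum_{\substack{a,b \in \cal{A} \\ a \neq b}} \tau(M_a N_b) \geq 1 - \delta,
\]
where the inequality is exactly the $\delta$-inconsistency hypothesis $M_a \simeq_\delta N_a$. Substituting back gives $\sum_a \|M_a - N_a\|_\tau^2 \leq 2\delta$, i.e., $\|M - N\|_\tau \leq \sqrt{2\delta}$.

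There is no real obstacle here: the only subtlety is the operator-theoretic step $P^2 \leq P$ for $0 \leq P \leq \id$, which is standard and available because POVM elements are bounded between $0$ and $\id$. Note that the lemma is tight in the sense that both inequalities in the chain above can be saturated by projective measurements (for which $M_a^2 = M_a$), consistent with the fact that the converse direction (closeness implies consistency) typically loses a factor in the same way.
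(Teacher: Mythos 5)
Your proof is correct and takes essentially the same approach as the paper: expand the square, use $M_a^2 \leq M_a$ and $N_a^2 \leq N_a$ (valid since $0 \leq M_a, N_a \leq \id$) to bound the diagonal terms, and then apply the completeness relations together with the inconsistency hypothesis to bound $\sum_a \tau(M_a N_a) \geq 1 - \delta$. The paper's displayed chain organizes the same calculation slightly differently (and cites Jensen's inequality, which is not actually needed), but there is no substantive difference.
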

\begin{proof}
	\begin{align*}
	     \sqrt{ \sum_a \|M_a - N_a\|_\tau^2} &=
		 \sqrt{ \sum_a \tau ((M_a - N_a)^2)} \\
		&\leq \sqrt{ \sum_a \tau (M_a + N_a - M_a N_a)} \\
		&=  \sqrt{ 2 - 2 \sum_a \tau(M_a N_a)} \\
		&\leq \sqrt{2 \sum_a \tau(M_a (\id - N_a))} \\
		&\leq \sqrt{2 \delta}.
	\end{align*}
	The first inequality follows because $M_a - M_a^2 \geq 0$ as $\{M_a\}$ are POVMs. The second inequality follows from Jensen's inequality.
\end{proof}

\begin{lemma}[Closeness to consistency]
\label{lem:closeness-to-consistency}
Let $M = \{M_a\}$ be a projective POVM and let $N = \{N_a\}_{a \in \cal{A}}$ be a POVM with outcomes in $\cal{A}$. Suppose that $M_a \approx_\delta N_a$. Then $M_a \simeq_{\delta} N_a$. 
\end{lemma}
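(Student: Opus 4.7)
The plan is to rewrite the inconsistency $\sum_{a\neq b}\tau(M_aN_b)$ in a way that exposes the closeness $\|M-N\|_\tau$, then apply Cauchy--Schwarz. First I would use the completeness relations $\sum_a M_a = \sum_b N_b = \id$ together with the trace property $\tau(\id)=1$ to decompose
\[
    \sum_{a\neq b}\tau(M_aN_b) \;=\; \tau(\id) - \sum_a \tau(M_aN_a) \;=\; 1 - \sum_a \tau(M_aN_a).
\]
Next, because $M$ is a projective POVM, $M_a^2 = M_a$, and so $\sum_a \tau(M_a^2)=\sum_a\tau(M_a)=\tau(\id)=1$. Substituting this for the leading $1$ yields the key identity
\[
    \sum_{a\neq b}\tau(M_aN_b) \;=\; \sum_a \tau\bigl(M_a(M_a-N_a)\bigr).
\]

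Now I would apply the Cauchy--Schwarz inequality for operator sets (proved just above) to the tuples $\{M_a\}$ and $\{M_a-N_a\}$, obtaining
\[
    \Bigl|\sum_a \tau\bigl(M_a(M_a-N_a)\bigr)\Bigr|^2 \;\leq\; \Bigl(\sum_a \|M_a\|_\tau^2\Bigr)\cdot\Bigl(\sum_a\|M_a-N_a\|_\tau^2\Bigr).
\]
The first factor equals $\sum_a\tau(M_a^\ast M_a)=\sum_a\tau(M_a)=1$ by projectivity and completeness, and the second factor is at most $\delta^2$ by the hypothesis $M_a \approx_\delta N_a$. Finally, since $M_a$ and $N_b$ are both positive, $\tau(M_aN_b)=\tau(M_a^{1/2}N_bM_a^{1/2})\geq 0$ for all $a,b$, so the quantity $\sum_{a\neq b}\tau(M_aN_b)$ is non-negative and the absolute value bound yields $\sum_{a\neq b}\tau(M_aN_b)\leq\delta$, which is exactly $M_a\simeq_\delta N_a$.

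The only place projectivity of $M$ is used is in the step $\sum_a\|M_a\|_\tau^2=1$; without it one would only get $\sum_a\|M_a\|_\tau^2\leq \sum_a \tau(M_a)=1$ from $\|M_a\|^2\leq\|M_a\|\leq 1$, but actually the same bound goes through for any POVM $M$ with $\|M_a\|\leq 1$, so the argument is essentially routine. I do not anticipate any real obstacle; the only subtle point is remembering to use projectivity to convert $\tau(\id)=1$ into $\sum_a\tau(M_a^2)=1$ so that the $\sum_a\tau(M_aN_a)$ term combines cleanly with $M_a^2$ to produce the difference $M_a-N_a$ inside the trace.
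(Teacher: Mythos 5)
Your proof is correct and matches the paper's argument essentially line for line: the paper starts from $\sum_a \tau(M_a(\id - N_a))$ (which, using completeness of both $M$ and $N$, equals your $\sum_{a\neq b}\tau(M_a N_b)$), uses $M_a^2 = M_a$ to rewrite it as $\sum_a\tau(M_a(M_a - N_a))$, and then applies Cauchy--Schwarz with $\sum_a\tau(M_a^2)=1$. One small point: your concluding paragraph is internally inconsistent --- you first claim projectivity is used ``only'' in $\sum_a\|M_a\|_\tau^2=1$ and that the bound would hold for a general POVM, but then (correctly) note that projectivity is also what turns $\tau(\id)$ into $\sum_a\tau(M_a^2)$ in the key identity; that second use is essential, and the lemma is genuinely false for nonprojective $M$ (e.g.\ $M_a = N_a = \frac{1}{2}\id$ for $a\in\{0,1\}$ has $\delta=0$ but inconsistency $\frac{1}{2}$).
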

\begin{proof}
Applying Cauchy-Schwarz twice, we get
	\begin{align*}
		 \sum_a \tau( M_a (\id - N_a)) &=  \sum_a \tau( M_a (M_a - N_a))\\
&\leq  \sqrt{\sum_a \tau(M_a^2)} \cdot \sqrt{\sum_a \tau( (M_a - N_a)(M_a - N_a)^*) }\\ 
		&\leq \delta 
	\end{align*}
	where we used that $\sum_a \tau(M_a^2) = 1$. 
\end{proof}

\begin{lemma}[Consistency implies similar probabilities]
\label{lem:consistency-to-prob-closeness}
Let $M = \{M_a\}$ and $N = \{N_a\}$ be POVMs with outcomes indexed by $\cal{A}$. Suppose that $M_a \simeq_\delta N_a$. Then
\[
	 \sum_{a \in \cal{A}} | \tau(M_a - N_a) | \leq 2\delta.
\]
\end{lemma}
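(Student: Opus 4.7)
The plan is to use the fact that $M$ and $N$ are POVMs (so $\sum_a M_a = \sum_a N_a = \id$) to rewrite each $\tau(M_a)$ and $\tau(N_a)$ as sums of ``cross terms'' $\tau(M_a N_b)$, and then recognize the bulk of these cross terms as being controlled by the inconsistency hypothesis.

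Concretely, first I would expand, for each $a \in \cal{A}$,
\[
	\tau(M_a) = \tau\!\Big( M_a \sum_{b \in \cal{A}} N_b \Big) = \tau(M_a N_a) + \sum_{b \neq a} \tau(M_a N_b),
\]
and similarly $\tau(N_a) = \tau(M_a N_a) + \sum_{b \neq a} \tau(M_b N_a)$, using the trace property for the second identity. Subtracting gives the clean identity
\[
	\tau(M_a - N_a) = \sum_{b \neq a} \tau(M_a N_b) - \sum_{b \neq a} \tau(M_b N_a).
\]

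Next I would note that each term $\tau(M_a N_b)$ is nonnegative: since $M_a,N_b \geq 0$, the trace property yields $\tau(M_a N_b) = \tau(M_a^{1/2} N_b M_a^{1/2}) \geq 0$ because $\tau$ is positive. Thus applying the triangle inequality to the identity above,
\[
	|\tau(M_a - N_a)| \leq \sum_{b \neq a} \tau(M_a N_b) + \sum_{b \neq a} \tau(M_b N_a).
\]
Finally, summing over $a \in \cal{A}$ and reindexing,
\[
	\sum_{a \in \cal{A}} |\tau(M_a - N_a)| \leq 2 \sum_{\substack{a,b \in \cal{A} \\ a \neq b}} \tau(M_a N_b) \leq 2\delta,
\]
where the last inequality is the definition of $M_a \simeq_\delta N_a$.

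I don't expect any serious obstacle here; the only subtlety is to make sure $\tau(M_a N_b) \geq 0$ so that the triangle inequality collapses cleanly and the inconsistency hypothesis can be applied directly, but this follows immediately from positivity of the POVM elements together with positivity and the trace property of $\tau$.
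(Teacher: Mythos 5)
Your proof is correct and takes essentially the same approach as the paper. Both arguments rest on the positivity $\tau(M_a N_b)\geq 0$ together with the identity $\sum_{a\neq b}\tau(M_a N_b)=\sum_a\tau(M_a(\id-N_a))$; the only cosmetic difference is that the paper handles the absolute values by partitioning $\cal{A}$ according to the sign of $\tau(M_a-N_a)$ and bounding each piece, whereas you write $\tau(M_a-N_a)$ as an exact difference of two nonnegative quantities and apply the triangle inequality.
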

\begin{proof}
Let $S_x = \{ a : \tau(M_a) > \tau(N_a) \}$ and $T_x = \{ a : \tau(N_a) \geq \tau(M_a) \}$. Then
\begin{align*}
 \sum_{a \in \cal{A}} |\tau(M_a - N_a)| =  \sum_{a \in S_x} \tau(M_a - N_a) +  \sum_{b \in T_x} \tau(N_a - M_a).
\end{align*}
Then, since $\tau(M_a N_a) \leq \tau(N_a)$, we have
\[
 \sum_{a \in S_x} \tau(M_a - N_a) \leq  \sum_{a \in S_x} \tau(M_a (\id - N_a)) \leq  \sum_{a \in \cal{A}} \tau(M_a (\id - N_a)) \leq \delta.
\]
Similarly $ \sum_{b \in T_x} \tau(N_a - M_a) \leq \delta$. This completes the proof.
\end{proof}

\begin{lemma}
\label{lem:add-a-measurement}
Let $M = \{M_a\}_{a \in \cal{A}},N = \{N_a\}_{a \in \cal{A}}$ be sets of operators (not necessarily POVMs), and let $R = \{R_b\}_{b \in \cal{B}}$ be a set of operators such that $\sum_b R_b^* R_b \leq \id$. Suppose that $M_a \approx_\delta N_a$. Then $R_b M_a \approx_\delta R_b N_a$  where the answer summation is over $(a,b) \in \cal{A} \times \cal{B}$. Similarly, if $\sum_b R_b R_b^* \leq \id$, we have $M_a R_b  \approx_\delta N_a R_b$. 
\end{lemma}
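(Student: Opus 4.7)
The plan is to expand the definition of the $\tau$-norm and exploit the operator inequality $\sum_b R_b^* R_b \leq \id$ in a straightforward manner. First I would write, for fixed $a \in \cal{A}$,
\[
    \sum_{b \in \cal{B}} \| R_b M_a - R_b N_a \|_\tau^2 = \sum_{b \in \cal{B}} \tau\bigl( (M_a - N_a)^* R_b^* R_b (M_a - N_a) \bigr) = \tau\bigl( (M_a - N_a)^* \bigl( \textstyle\sum_b R_b^* R_b \bigr) (M_a - N_a) \bigr).
\]
The assumption $\sum_b R_b^* R_b \leq \id$, together with the standard fact that conjugation $X \mapsto C^* X C$ is operator-monotone (since $C^*(\id - Y)C \geq 0$ whenever $\id - Y \geq 0$), gives $(M_a - N_a)^*(\sum_b R_b^* R_b)(M_a - N_a) \leq (M_a-N_a)^*(M_a-N_a)$. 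Positivity of $\tau$ then bounds the above display by $\|M_a - N_a\|_\tau^2$.

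Next I would sum over $a \in \cal{A}$ to obtain
\[
    \sum_{(a,b) \in \cal{A}\times\cal{B}} \| R_b M_a - R_b N_a\|_\tau^2 \;\leq\; \sum_{a \in \cal{A}} \|M_a - N_a\|_\tau^2 \;\leq\; \delta^2,
\]
where the last inequality uses the hypothesis $M_a \approx_\delta N_a$. Taking square roots yields $R_b M_a \approx_\delta R_b N_a$.

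For the second assertion, the only change is to cycle the operators inside the trace: I would write $\|M_a R_b - N_a R_b\|_\tau^2 = \tau(R_b^*(M_a - N_a)^*(M_a - N_a) R_b) = \tau((M_a - N_a) R_b R_b^* (M_a - N_a)^*)$ using the trace property. Summing over $b$ and invoking $\sum_b R_b R_b^* \leq \id$ gives a bound by $\tau((M_a-N_a)(M_a-N_a)^*) = \|M_a - N_a\|_\tau^2$, and the same summation over $a$ finishes the argument. There is no real obstacle here; the proof is essentially a two-line calculation once one notices that the $\tau$-norm-squared unpacks into a trace in which $\sum_b R_b^* R_b$ (or $\sum_b R_b R_b^*$) appears as a natural sandwiched factor.
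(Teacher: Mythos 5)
Your proof is correct and follows the same route as the paper: expand the $\tau$-norm-squared into a trace, pull out $\sum_b R_b^*R_b$, bound it by $\id$ using positivity of $\tau$, and sum over $a$. The paper leaves the second assertion as "similar," and your explicit cycling via the trace property is the intended argument.
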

\begin{proof}
We prove the approximation $R_b M_a \approx_\delta R_b N_a$:
	\begin{align*}
		\sum_{a \in \cal{A}, b \in \cal{B}} \| R_b (M_a - N_a) \|_\tau^2 &= \sum_{a \in \cal{A}, b \in \cal{B}} \tau \Big( (M_a - N_a)^* R_b^* R_b (M_a - N_a) \Big) \\
		&= \sum_{a} \tau \Big( (M_a - N_a)^* \Big( \sum_b R_b^* R_b \Big) (M_a - N_a) \Big) \\
		&\leq \sum_{a} \tau \Big( (M_a - N_a)^* (M_a - N_a) \Big) \\
		&= \sum_a \| M_a - N_a \|_\tau^2 \\
		&\leq \delta^2.
	\end{align*}
	where in the first inequality we used the assumption that $\sum_b R_b^* R_b \leq \id$. The proof for the approximation $M_a R_b  \approx_\delta N_a R_b$ is similar.
\end{proof}

The following lemma states that POVMs that are almost projective (in the sense that each POVM element is close to its square) is close to a projective maesurement. A version of this was first proved in the finite-dimensional setting by~\cite{kempe2011parallel}, improved quantitatively in~\cite{ji2020quantum}, and recently extended to the setting of von Neumann algebras by de la Salle~\cite{delasalle-orthogonalization}.

\begin{lemma}[Projectivization of POVMs~\cite{delasalle-orthogonalization}]
\label{lem:projectivization}
Let $\{M_a\} \subset \algebra$ be a POVM with outcomes indexed by a finite set $\cal{A}$. Suppose that the following holds:
\[
    \sum_a \tau(M_a - M_a^2) \leq \eps.
\]
Then there exists a projective measurement $\{P_a\} \subset \algebra$ such that 
\[
    P_a \approx_{\delta_{proj}}  M_a
\]
where $\delta_{proj} = \delta_{proj}(\eps)$ is a function that depends on $\eps$ (but independent of $\cal{A}$) and goes to zero as $\eps \to 0$. 
\end{lemma}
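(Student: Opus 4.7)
My plan is to follow the strategy of de la Salle's orthogonalization theorem~\cite{delasalle-orthogonalization}, which extends earlier finite-dimensional constructions of \cite{kempe2011parallel,ji2020quantum} to an arbitrary tracial von Neumann algebra. At its heart, the argument has two stages: first round each $M_a$ to a nearby projection in $\algebra$, and then orthogonalize the resulting (already nearly orthogonal) family of projections to a bona fide projective measurement, carefully tracking $\tau$-norm errors at each stage.

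For the first stage, since $\algebra$ is closed under Borel functional calculus of its self-adjoint elements, I would let $Q_a \in \algebra$ denote the spectral projection of $M_a$ onto $[1/2,1]$. The elementary pointwise estimate $(x - \chi_{[1/2,1]}(x))^2 \leq 4 x (1-x)$ for $x \in [0,1]$ together with the functional calculus of $M_a$ gives
\[
    \|M_a - Q_a\|_\tau^2 \;=\; \tau\bigl((M_a - Q_a)^2\bigr) \;\leq\; 4\,\tau(M_a - M_a^2),
\]
so that summing over $a \in \cal{A}$ yields $\sum_a \|M_a - Q_a\|_\tau^2 \leq 4\eps$, and in particular $M_a \approx_{2\sqrt{\eps}} Q_a$ in the sense defined in \Cref{sec:measurements}.

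The second stage is the main obstacle, and is precisely where de la Salle's theorem is essential. The family $\{Q_a\}$ consists of projections in $\algebra$, but these need not be pairwise orthogonal, nor need they sum to $\id$. A naive Gram--Schmidt-type orthogonalization (say, forming $S^{-1/2} Q_a S^{-1/2}$ where $S = \sum_a Q_a$) produces a POVM summing to the identity but not a projective measurement; worse, bookkeeping via the triangle inequality of \Cref{lem:triangle-inequality-tau-norm} would naively incur a factor of $\sqrt{|\cal{A}|}$ at each step, which is unacceptable because the statement requires $\delta_{proj}$ to be \emph{independent} of $|\cal{A}|$. De la Salle's construction circumvents this by passing to an operator-valued matrix built out of the $Q_a$'s, applying spectral projection within this larger algebra, and exploiting the POVM identity $\sum_a M_a = \id$ together with the first-stage bound to produce projections $P_a \in \algebra$ forming a projective measurement with $\sum_a \|P_a - Q_a\|_\tau^2 \leq \delta_{orth}(\eps)$ for some function $\delta_{orth}(\eps) \to 0$ as $\eps \to 0$, independent of $|\cal{A}|$.

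Combining the two stages via the triangle inequality of \Cref{lem:triangle-inequality-tau-norm} then gives $\|P - M\|_\tau \leq 2\sqrt{\eps} + \sqrt{\delta_{orth}(\eps)} =: \delta_{proj}(\eps)$, which establishes $P_a \approx_{\delta_{proj}(\eps)} M_a$ with the required dimension-free rate. The non-trivial content of the lemma, and the step where I would simply invoke de la Salle's theorem rather than reprove, is the orthogonalization of Stage 2: the fact that one can pass from an almost-projective POVM to a genuine projective measurement inside $\algebra$ with an error bound that depends only on the almost-projectivity parameter $\eps$ and not on the size of the outcome set.
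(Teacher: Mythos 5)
The paper does not prove this lemma internally: it states it and cites de la Salle~\cite{delasalle-orthogonalization}, so there is no argument of the paper's own to compare against — the lemma \emph{is} the cited theorem applied directly to the POVM $\{M_a\}$. Your proposal ultimately also hands the substance off to de la Salle, so in that sense you and the paper are doing the same thing. But the stage-1 detour introduces a mismatch worth flagging. You replace $M_a$ by the spectral projection $Q_a = \chi_{[1/2,1]}(M_a)$; the pointwise estimate $(x-\chi_{[1/2,1]}(x))^2 \le 4x(1-x)$ and the resulting bound $\sum_a \|M_a - Q_a\|_\tau^2 \le 4\eps$ are correct. However, $\{Q_a\}$ is a family of projections that in general is \emph{not} a POVM — there is no reason $\sum_a Q_a = \id$ — whereas the de la Salle result as quoted in the paper takes a POVM satisfying $\sum_a \tau(M_a - M_a^2) \le \eps$ as its input. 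So the theorem you want to invoke for stage~2 does not literally apply to the family $\{Q_a\}$; you would need either a different (and genuinely harder) orthogonalization statement for non-POVM families of near-orthogonal projections, or to simply discard stage~1 and apply the cited theorem to $\{M_a\}$ directly, as the paper does. If stage~1 is meant as a sketch of de la Salle's \emph{internal} proof, that is fine as exposition, but then stage~2 cannot be ``invoke de la Salle'' — it has to be the actual orthogonalization argument, which is exactly the content you declined to reprove.
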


The next lemma allows us to ``paste'' multiple approximately-commuting measurements together to form a joint projective measurement. 

\begin{restatable}[Pasting lemma]{lemma}{pasting}
\label{lem:pasting}
Let $\{M^{(1)}, M^{(2)}, \ldots, M^{(K)} \} \subset \algebra$ be a set of projective measurements with outcomes in a finite set $\cal{A}$. Suppose that for all $i \neq j$, we have that
\[
	M^{(i)}_a M^{(j)}_b \approx_\eps M^{(j)}_b M^{(i)}_a
\]
where the answer summation is over $(a,b) \in \cal{A}^2$. Then there exists a projective measurement $R = \{ R_{\vec{a}} \} \subset \algebra $ with outcomes in $\cal{A}^K$ such that for all $i \in [K]$,
\[
	R_{[\vec{a} \mapsto a_i\mid b]} \approx_{\delta_{pasting}} M^{(i)}_{b}
\]
where $\delta_{pasting} = \delta_{pasting}(K,\eps)$ is a function that goes to $0$ as $\eps \to 0$.
\end{restatable}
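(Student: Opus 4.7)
My plan is to construct a candidate POVM by conjugating the $M^{(i)}$'s into a palindromic product, verify the marginals behave correctly using approximate commutativity, and then project the result to an honest projective measurement via \Cref{lem:projectivization}. Concretely, set
\[
    T_{\vec a} \;=\; M^{(K)}_{a_K} M^{(K-1)}_{a_{K-1}} \cdots M^{(1)}_{a_1},
    \qquad
    S_{\vec a} \;=\; T_{\vec a}^{\,*} T_{\vec a} \;=\; M^{(1)}_{a_1} M^{(2)}_{a_2}\cdots M^{(K)}_{a_K} M^{(K-1)}_{a_{K-1}}\cdots M^{(1)}_{a_1}.
\]
(In the last equality I used that each $M^{(i)}$ is a projective measurement, so $M^{(K)}_{a_K} M^{(K)}_{a_K} = M^{(K)}_{a_K}$.) That $\{S_{\vec a}\}$ is a POVM is verified by summing from the inside out: $\sum_{a_K} M^{(K)}_{a_K} = \id$ followed by $M^{(K-1)}_{a_{K-1}} M^{(K-1)}_{a_{K-1}} = M^{(K-1)}_{a_{K-1}}$, then $\sum_{a_{K-1}} M^{(K-1)}_{a_{K-1}} = \id$, and so on, which eventually yields $\sum_{\vec a} S_{\vec a} = \id$.

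Next I estimate the marginals. Fix $i \in [K]$ and $b \in \cal{A}$. In the expression for $S_{[\vec a \mapsto a_i \mid b]}$ the factor $M^{(i)}_b$ appears in two symmetric positions. Using the hypothesis $M^{(i)}_a M^{(j)}_c \approx_\eps M^{(j)}_c M^{(i)}_a$ together with \Cref{prop:exchange-operators} and \Cref{lem:add-a-measurement}, I migrate each copy of $M^{(i)}_b$ outward past $M^{(i-1)}, \ldots, M^{(1)}$ on its respective side; each swap contributes at most $\eps$ (in the $\tau$-norm over the remaining answer variables) and there are $O(K)$ such swaps. After all $M^{(i)}_b$'s have reached the outside the remaining inner expression is a palindrome in the other $M^{(j)}$'s, which collapses to $\id$ by the same iterated-projectivity argument used to check that $\{S_{\vec a}\}$ is a POVM. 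A final application of $M^{(i)}_b M^{(i)}_b = M^{(i)}_b$ yields $S_{[\vec a \mapsto a_i \mid b]} \approx_{\delta_1(\eps,K)} M^{(i)}_b$, where $\delta_1(\eps,K) = O(K\eps)$ (with constants tracked carefully using the triangle inequality of \Cref{lem:triangle-inequality-tau-norm}).

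Finally I need to upgrade $S$ to a genuinely projective measurement. When the $M^{(i)}$ commute exactly, $T_{\vec a}$ is itself a product of commuting projections, hence a projection, and $S_{\vec a} = T_{\vec a}$ is projective. In our setting I will show $\sum_{\vec a} \tau(S_{\vec a} - S_{\vec a}^2) \leq \delta_2(\eps,K)$: writing $S_{\vec a}^2 = T_{\vec a}^* T_{\vec a} T_{\vec a}^* T_{\vec a}$ and using approximate commutation to reorganize the middle factor $T_{\vec a} T_{\vec a}^*$ into a palindromic telescoping expression, followed by the trace property of $\tau$, produces a bound $\delta_2(\eps,K)$ that also tends to $0$ as $\eps \to 0$. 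With this bound in hand, \Cref{lem:projectivization} provides a projective measurement $R$ with $R_{\vec a} \approx_{\delta_{\mathrm{proj}}(\delta_2)} S_{\vec a}$. Combining this with \Cref{lem:add-a-measurement} (summing over $\vec a$ with $a_i = b$) and the marginal estimate from the previous paragraph gives the desired $R_{[\vec a \mapsto a_i \mid b]} \approx_{\delta_{\mathrm{pasting}}} M^{(i)}_b$.

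The main technical obstacle is the ``approximate-projectivity'' bound $\sum_{\vec a}\tau(S_{\vec a}-S_{\vec a}^2)\le \delta_2(\eps,K)$: one has a product of $2K$ projections whose squared trace must be rearranged to look like a single palindrome, and making every rearrangement contribute only a controlled amount of $\tau$-error (rather than operator-norm error) requires interleaving the trace property with repeated appeals to \Cref{prop:exchange-operators}, all while summing over $\vec a$ without blowing up the bound. Once this step is in place, everything else is bookkeeping with the utility lemmas of \Cref{sec:measurements}.
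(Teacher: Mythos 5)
Your proposal is correct and takes essentially the same route as the paper: your $S_{\vec a} = T_{\vec a}^* T_{\vec a}$ is literally identical to the paper's $Q_{\vec a} = P^s_{\vec a}(P^s_{\vec a})^*$ (the same palindromic product of the projections), and the overall strategy — check the POVM property by telescoping, estimate the marginals via approximate commutation, establish almost-projectivity $\sum_{\vec a}\tau(S_{\vec a}-S_{\vec a}^2)\to 0$, and then upgrade to a projective measurement via \Cref{lem:projectivization} — matches the paper's argument step for step. The only cosmetic difference is in the marginal estimate: you migrate the two copies of $M^{(i)}_b$ outward and collapse the inner palindrome, whereas the paper bounds $\sum_b\|Q_{[\vec a\mapsto a_i\mid b]}-A^{(i)}_b\|_\tau^2 \le 2 - 2\sum_{\vec a}\tau(Q_{\vec a}A^{(i)}_{a_i})$ and lower-bounds the trace term; these are equivalent calculations, and the paper's utility lemma (\Cref{lem:pasting-utility}) plays the role of your repeated-swap bookkeeping.
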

We prove \Cref{lem:pasting} in \Cref{sec:pasting}.

\subsection{Nonlocal games, strategies, and verifiers}
\label{sec:nonlocal-games}

\paragraph{Nonlocal games.} A \emph{nonlocal game} $G$ is a tuple $(\cal{X},\cal{A},\mu,D)$ where $\cal{X}$ is a finite \emph{question set}, $\cal{A}$ is a finite \emph{answer set}, $\mu$ is a probability distribution over $\cal{X} \times \cal{X}$, and $D: \cal{X} \times \cal{X} \times \cal{A} \times \cal{A} \to \{0,1\}$ is a function called the \emph{decision predicate}. 
A game $G$ is \emph{synchronous} if for all $x \in \cal{X}$, if $D(x,x,a,b) = 1$ if and only if $a = b$. We call a question pair $(x,y) \in \cal{X} \times \cal{X}$ \emph{trivial} if $D(x,y,a,b) = 1$ for all $(a,b) \in \cal{A} \times \cal{A}$; otherwise we call $(x,y)$ \emph{nontrivial}. 
 
 In this paper, we only consider games that are synchronous and whose question distribution is uniform over the question set; thus we denote games $G$ by tuples $(\cal{X},\cal{A},D)$.

\paragraph{Strategies.} A \emph{tracial strategy} $\strategy$ for a game $G=(\cal{X},\cal{A},\mu,D)$ is a pair $(\tau, \{M^x\}_{x \in \cal{X}})$ where there is a separable Hilbert space $\hilb$ such that $\{M^x\}$ is a set of POVMs on $\hilb$ with outcomes in $\cal{A}$, and $\tau$ is a normal tracial state on a von Neumann algebra $\algebra$ containing the set $\{M_a^x\}_{x,a}$. The \emph{value} of a tracial strategy $\strategy$ in $G$ is defined as
\[
    \val(G,\strategy) = \sum_{x,y \in \cal{X}} \mu(x,y) \, \sum_{a,b \in \cal{A}} D(x,y,a,b) \, \tau (M_a^x \, M^y_b)
\]
A tracial strategy $\strategy$ is called \emph{synchronous} if $\{M^x\}$ are projective measurements. A tracial strategy $\strategy$ is \emph{finite dimensional} if $\hilb = \C^d$ for some $d$. A tracial strategy $\strategy$ \emph{commutes on a set $C \subseteq \cal{X} \times \cal{X}$} if for all $(x,y) \in C$ measurements $M^x$ and $M^y$ commute, i.e., $M_a^x M_b^y = M_b^y M_a^x$ for all $a,b \in \cal{A}$.

The \emph{synchronous commuting operator value} of a synchronous game $G$, denoted by $\valco^s(G)$, is defined as the supremum of $\val(G,\strategy)$ over all synchronous strategies $\strategy$ for $G$. The \emph{synchronous quantum value} of $G$, denoted by $\val_q^s(G)$, is defined as the same thing except the supremum is restricted to finite-dimensional synchronous strategies.

The \emph{entanglement requirement} $\mathcal{E} \Big (G, \alpha \Big )$ for a game $G$ and $\alpha \in [0,1]$ is the minimum dimension of any finite-dimensional synchronous strategy $\strategy$ for $G$ with quantum value at least $\alpha$. If no such strategy exists then $\mathcal{E} \Big (G, \alpha \Big ) = \infty$.

We introduce the notion of an oracularizable strategy; the significance of this notion is that the answer reduction transformation (discussed in \Cref{sec:answer-reduction}) requires games to have oracularizable strategies. ``Oracularizability'' is an invariant maintained by our compression procedure (as well as the compression procedures of~\cite{natarajan_neexp,ji_mip_re}).
\begin{definition}[Oracularizable strategy]
\label{def:oracularizable-strategy}
A synchronous strategy $\strategy$ for a synchronous game $G$ is \emph{oracularizable} if the strategy commutes on the set of nontrivial questions of $G$.
\end{definition}

\paragraph{Verifiers.}

We introduce the notion of a \emph{verifier}, which gives a uniform way to describe infinite sequences of nonlocal games.

\begin{definition}[Verifiers]\label{def:verifiers}
Let $\UGS = (G_n)_{n \in \N}$ denote an infinite sequence of synchronous games where $G_n = (\cal{X}_n,\cal{A}_n,D_n)$ and the sets $\cal{X}_n = \{0,1\}^{\ell_n},\cal{A}_n \subset \{0,1\}^*$ for some polynomial-time computable function $\ell_n$ of $n$. A \emph{verifier $\verifier$ for $\UGS$} is a pair $(D,C)$ of Turing machines where $D$ is a $5$-input Turing machine and $C$ is a $3$-input Turing machine, such that for all $n \in \N$, the following hold:
\begin{enumerate}
\item $D(n,x,y,a,b) = D_n(x,y,a,b)$ for all $(x,y) \in \cal{X}_n \times \cal{X}_n$ and $(a,b) \in \cal{A}_n \times \cal{A}_n$, and
	\item $C(n,x,y) = 1$ if and only if $(x,y) \in \cal{X}_n \times \cal{X}_n$ is a nontrivial question pair for $G_n$.
\end{enumerate}
The Turing machines $C$ and $D$ are respectively called a \emph{question checker} (or simply just a \emph{checker}) and \emph{decider} for $\UGS$. When $n$ is written on the first input tape of $D$ and $C$, the Turing machines discard any string that comes after the $\ell_n$'th bit in the second and third input tapes. 
\end{definition}

Verifiers play a crucial role in the compression theorems of this paper and~\cite{ji_mip_re}, as they allow for an effective method (``effective'' in the computability sense) for encoding infinite sequences of nonlocal games.

\begin{remark} Although we have defined the games in the sequence $\UGS$ corresponding to a verifier $\verifier$ to have questions and answers consisting of binary strings, we often treat the questions and answers as sets with more structure, such as tuples. There, we implicitly assume an efficiently computable representation of set elements as binary strings is fixed.
\end{remark}

We note that the Turing machine $D$ in the definition of verifier $\verifier$ for an infinite sequence $\UGS = (G_n)_{n \in \N}$ of games already implicitly specifies the set of nontrivial questions for each $G_n$. For our compression procedure, however, it will be necessary to be able to quickly compute whether a question pair is nontrivial, and having a separate Turing machine $C$ for this is helpful for separately keeping track of the decision procedure complexity versus the complexity of deciding the set of nontrivial questions. 

\subsection{Asymptotics and approximation bounds} 
\label{subsec:asymptotics}
We end the preliminaries section with a short discussion of asymptotics in the analyses of the Rigidity, Question Reduction and Answer Reduction sections. The bounds and approximations in this paper are functions of two quantities: one is the \emph{game index} $n$, which indicates the $n$-th element of an infinite sequence $\UGS = (G_n)_{n \in \N}$ of games; we take $n$ to go to infinity and use $n$ to measure sizes of question/answer alphabets, as well as the time complexity of the deciders. The other quantity is $\eps$ where $1 - \eps$ is a lower bound on the synchronous quantum or synchronous commuting operator value of a nonlocal game $G$ under consideration. We treat $\eps$ as a quantity that goes to $0$. 

All of our approximations in this paper will generally depend on both $n$ and $\eps$. From the assumption that the value of the game is at least $1 - \eps$ we will derive consequences for a pair of measurements $\{M_a\},\{N_a\}$. For example we may prove that $M_a \approx_{\delta(n,\eps)} N_a$ where $\delta: \N\times \R^+ \to \R^+$ is any function that is continuous in the second argument and is such that $\delta(n,0) = 0$ for all $n$. We call such functions \emph{proper error functions}. We usually let the dependence on $n$ to be implicit and simply write $\delta(\epsilon)$ for proper error functions.

Every instance of $\delta$ in this paper should be understood as a function that is different from all the previous instances of $\delta$ except for the aforementioned two properties. For example if $M_a \approx_{\delta(\eps)} N_a$ and $N_a \approx_{\delta(\eps)} P_a$ by the triangle inequality we have
\[\sum_a \|M_a - P_a\|^2 \leq 2\sum_a \|M_a - N_a\|^2 + 2\sum_a\|N_a-P_a\|^2 \]
so we can write $M_a \approx_{\delta(\eps)} P_a$; every occurrence of $\delta(\eps)$ in these three approximations can be a different proper error function.

As such in this paper we usually do not keep track of the specific approximation bounds. For POVMs $\{M_a\}$ and $\{N_a\}$ we will often write $M_a \approx N_a$ to denote $M_a \approx_{\delta(\eps)} N_a$ for some proper error function $\delta(\eps)$. 
We also use the notation $M \approx N$, for any two operators $M,N$, to indicate that $\|M-N\|_\tau \to 0$ as $\eps \to 0$. Similarly we may write $\tau(M) \approx \tau(N)$ to indicate that $\tau(M-N) \to 0$ as $\eps \to 0$. We recommend reading the proof of \Cref{thm:rigidty-of-magic-square} carefully to get used to these conventions. The proof contains techniques that are used over and over in this paper.

\paragraph{Averaging argument.}
A simple but prevailing idea in many of the proofs in this paper is the observation that, if a strategy in a game $G$ has a value at least $1-\eps$, then the winning probability conditioned on any event that has a nonzero probability is at least $1-\delta(\eps)$ for some error function $\delta$ that has some dependence on the probability of the conditioning event (we usually ignore this dependence). So for example since the probability distribution on questions is uniform in all our games, the event that players receive a fixed question pair $(x,y)$ has probability $1/|\cal{X}|^2$ where $\cal{X}$ is the question set of the game. Then the probability of winning conditioned on players receiving question pair $(x,y)$ is at least $1 - |\cal{X}|^2\eps = 1 - \delta(\eps)$. We usually abbreviate this by simply saying ``by an averaging argument, the probability of winning conditioned on players receiving question pair $(x,y)$ is $1-\delta(\eps)$.'' Since we are working in the gapless regime, we do not need to keep track of the dependence of $\delta$ on $|\cal{X}|$ which allows us to just simply write $\delta(\eps)$.

\paragraph{The implication of cross-checks between nontrivial question pairs.}
We explain another proof technique that appears repeatedly in the following sections of the paper. Suppose $\{q,r,qr\} \in \cal{X}$ are three questions in a game $G$ ($qr$ is a single question different from $q$ and $r$). The answer to questions $q,r,qr$ are expected to be in three sets $\cal{A},\cal{B},\cal{A}\times \cal{B}$, respectively. Furthermore suppose that the winning condition dictates that $D(q,qr,a,(a',b')) = 1$ iff $a = a'$ and that $D(r,qr,b,(a',b')) = 1$ iff $b = b'$. Clearly $(q,qr)$ and $(r,qr)$ are nontrivial question pairs in this game.

Now one very useful observation is that if $(\tau,\{N^x\}_{x \in \cal{X}})$ is any strategy that wins this game with probability at least $1- \epsilon$, then it must be that \[N^q_a N^r_b \approx_{\delta(\eps)} N^r_b N^q_a,\] or in other words the measurements $N^q$ and $N^r$ approximately commute. To see this, first note that by an averaging argument the probability of winning conditioned on receiving question pair $(q,qr)$ is $1-\delta(\eps)$. This fact can be stated as follows
\begin{align*}
    1 - \delta(\eps) \leq \sum_{a\in \cal{A},b\in \cal{B}} \tau(N^q_a N^{qr}_{a,b}) = \sum_{a\in \cal{A}} \tau(N^q_a N^{qr}_{a,\cdot})
\end{align*}
where $N^{qr}_{a,\cdot}$ is the marginal measurement projection $\sum_{b\in \cal{B}} N^{qr}_{a,b}$. We can rewrite this as
\[N^q_a \simeq_{\delta(\eps)} N^{qr}_{a,\cdot}~.\]
By an application of \Cref{lem:consistency-consequences} we get
\[N^q_a \approx_{\delta(\eps)} N^{qr}_{a,\cdot}~.\]
By the symmetry we similarly get
\[N^r_b \approx_{\delta(\eps)} N^{qr}_{\cdot,b}~.\]
where $N^{qr}_{\cdot,b}$ is the marginal measurement projection $\sum_{a\in \cal{A}} N^{qr}_{a,b}$.

Using \Cref{prop:exchange-operators}, we get \[N^q_a N^r_b \approx_{\delta(\eps)} N^{qr}_{a,\cdot} N^{r}_b~.\] With another application of \Cref{prop:exchange-operators}, we get \[N^{qr}_{a,\cdot} N^{r}_b \approx_{\delta(\eps)} N^{qr}_{a,\cdot} N^{qr}_{\cdot,b}~.\] By the triangle inequality we can combine these to get 
\[N^q_a N^r_b \approx_{\delta(\eps)} N^{qr}_{a,\cdot} N^{qr}_{\cdot,b}~.\]
Since projection operators belonging to the same projective measurement commute, we have
\[N^{qr}_{a,\cdot} N^{qr}_{\cdot,b} = N^{qr}_{\cdot,b} N^{qr}_{a,\cdot}~.\]
Finally by two more applications of \Cref{prop:exchange-operators} and the triangle inequality, we get the desired result
\[N^q_a N^r_b \approx_{\delta(\eps)} N^r_b N^q_a~.\]
 \newpage
\section{Nonlocal game rigidity}
\label{sec:rigidity}

\newcommand{\equations}{\mathrm{eqs}}
\newcommand{\variables}{\mathrm{vars}}
\newcommand{\ms}{{\textsc{MS}}}

A fundamental component of compression theorems are the use of nonlocal games with specific \emph{rigidity} properties. Informally speaking, a nonlocal game $G$ is rigid if the state and measurement operators of an optimal strategy for $G$ must satisfy very rigid constraints -- even to the point of being uniquely specified up to conjugation by isometries. 

The most well-known example of a rigid game is the CHSH game~\cite{clauser1969proposed}, named after physicists Clauser, Horne, Shimony and Holt. In this game Alice and Bob receive questions $x,y \in \{0,1\}$ and answer with bits $a,b \in \{0,1\}$. They win if and only if $a + b = xy \mod 2$. 

It is well-known that the CHSH game satisfies $\val_q(CHSH) = \val_{co}(CHSH) = \frac{1}{2} + \frac{1}{2\sqrt{2}}$, and the optimum is achieved by a simple two-dimensional strategy (that we call the \emph{canonical strategy}) where the players share the entangled state $\ket{\epr} = (\ket{0} \otimes \ket{0} + \ket{1} \otimes \ket{1})/\sqrt{2}$, and Alice and Bob's measurement operators are defined to be the following: for all $a,b \in \{0,1\}$,
\begin{enumerate}
	\item $A^0_a$ is the projection onto the eigenspace of $Z = \begin{pmatrix} 1 & 0 \\ 0 & -1 \end{pmatrix}$ with eigenvalue $(-1)^a$.
	\item $A^1_a$ is the projection onto the eigenspace of $X = \begin{pmatrix} 0 & 1 \\ 1 & 0 \end{pmatrix}$ with eigenvalue $(-1)^a$.
	\item $B^0_b$ is the projection onto the eigenspace of $(Z + X)/\sqrt{2}$ with eigenvalue $(-1)^b$.
	\item $B^1_b$ is the projection onto the eigenspace of $(Z - X)/\sqrt{2}$ with eigenvalue $(-1)^b$.
\end{enumerate}
(The CHSH game is not a synchronous game and optimal strategies for CHSH are not synchronous, so in general Alice and Bob will have different measurement operators for each question).

It turns out that \emph{any} finite-dimensional strategy achieving the optimum value for CHSH must be \emph{equivalent} to the canonical strategy just described: if the state $\ket{\psi}$ belongs to $\hilb_A \otimes \hilb_B$ for finite-dimensional Hilbert spaces $\hilb_A, \hilb_B$,\footnote{A standard result in the theory of nonlocal games is that any finite-dimensional strategy can be expressed as a tensor-product strategy~\cite[Theorem 1]{scholz2008tsirelson}.} then there exist isometries $V_A, V_B$ acting on $\hilb_A,\hilb_B$ respectively such that $(V_A \otimes V_B) \ket{\psi} = \ket{EPR} \otimes \ket{\phi}$ for some auxiliary state $\ket{\phi}$, and furthermore under the isometries the players' measurement operators are equal to the canonical measurements described above. Since we can only characterize quantum strategies up to local isometries (i.e. applying local isometries to a strategy cannot change its success probability), this shows that the canonical strategy is essentially the unique strategy achieving the optimum winning probability for CHSH.

Furthermore, the rigidity of the CHSH game is \emph{robust}: strategies that are approximately optimal for CHSH must be approximately equivalent, up to local isometries, to the canonical strategy. The rigidity of the CHSH game has been studied extensively in quantum information theory and has found applications to quantum cryptography and quantum complexity theory; see~\cite{vsupic2020self} for a survey of self-testing and its applications.

In this paper, we propose a more abstract formulation of nonlocal game rigidity: we say that a game $G$ is rigid if there is a set of \emph{algebraic relations} that are (approximately) satisfied by the measurement operators in any strategy $\strategy$ for $G$ that (approximately) attains the optimal value. We no longer worry about characterizing the state vector or finding a concrete representation of the measurement operators as matrices.

For example, the rigidity of the CHSH game can be formulated as follows: any quantum strategy where their shared state is $\ket{\psi}$ and Alice's and Bob's projective measurements are $\{A^x_a\}$ and $\{B^y_b\}$ respectively that achieves value $\val_{co}(CHSH)$ in the CHSH game must generate \emph{anti-commuting observables}: defining the self-adjoint unitary operators $U^0 = A^0_0 - A^0_1$ and $U^1 = A^1_0 - A^1_1$, we must have that $U^0 U^1 \ket{\psi} = -U^1 U^0 \ket{\psi}$; the same holds with Bob's operators. Furthermore, this anti-commutation relation establishes that the Hilbert space must have dimension at least $2$. 

Establishing anti-commutation relations between the observables induced by an optimal strategy is usually the first step in ``traditional'' proofs of CHSH rigidity; this step is key to proving that the state and measurements are isometric to $\ket{EPR}$ and the Pauli $Z$ and $X$ observables, respectively. In this paper, however, we solely focus on the algebraic relations between the measurement operators -- these are the only properties that are needed for our applications. This allows us to shortcut some of the complexity of typical arguments for nonlocal game rigidity.

Aside from providing simplifications, we believe that this algebraic perspective on rigidity will be beneficial for studying nonlocal games and their connections to subjects such as approximate representation theory and operator algebras.

\subsection{The Magic Square game} 

We illustrate how rigidity results can be formulated in the synchronous games framework using the \emph{Mermin-Peres Magic Square game} (often called \emph{Magic Square game} for short)~\cite{mermin1990simple,peres1990incompatible,aravind2002simple}. Rigidity of Magic Square is first proved in \cite{Wu_2016}. The Magic Square is a game where the players' goal is to convince the verifier that they can assign values to the cells of a $3 \times 3$ grid such that the sum of cells within a row or column is even, except in the last column, where the sum should be odd. Of course, it is impossible to deterministically assign values satisfying these constraints, but when the players use a quantum strategy it appears as if they are performing the impossible.

We can view the Magic Square game as corresponding to a system of linear equations over $\mathbb{Z}_2$: let $s_{11},\ldots,s_{33}$ denote variables for the nine squares of the $3 \times 3$ grid, as depicted below:
\begin{center}
\begin{tabular}{ |c|c|c| } 
 \hline
 $s_{11}$ & $s_{12}$ & $s_{13}$ \\ 
 \hline
 $s_{21}$ & $s_{22}$ & $s_{23}$ \\ 
 \hline
 $s_{31}$ & $s_{32}$ & $s_{33}$ \\ 
 \hline
\end{tabular}
\end{center}
There are three constraints for the rows and three constraints for the columns:
\begin{align*}
	&s_{11} + s_{12} + s_{13} = 0		\quad	&s_{11} + s_{21} + s_{31} = 0 \\
	&s_{21} + s_{22} + s_{23} = 0		\quad &s_{12} + s_{22} + s_{32} = 0 \\
	&s_{31} + s_{32} + s_{33} = 0		\quad &s_{13} + s_{23} + s_{33} = 1
\end{align*}
In the standard formulation of the Magic Square game, one player is chosen to be a \emph{constraint player}, meaning that they receive a random equation $e = \{s_{i_1 j_1},s_{i_2 j_2},s_{i_3 j_3}\}$ from this linear system. The other player is chosen to be the \emph{variable player}, meaning that they receive a random variable $s_{ij}$ from the equation $e$. The constraint player is supposed to respond with an assignment from $\{0,1\}$ to each of the variables in their received equation, and the variable player is supposed to respond with an assignment to their variable. The players win if the constraint players' assignment satisfies the given equation and if the variable player's assignment is consistent with the constraint player's answers (i.e. the constraint player's assignment for the other player's received variable must match the variable player's response). 

We only deal with games with uniform question distributions in this paper, so the variant of the Magic Square game (which we abbreviate as $\ms$) that we consider is where the questions to Alice and Bob are uniformly and independently chosen from $\cal{X}_\ms = \cal{X}_\equations \cup \cal{X}_\variables$ where 
\begin{align*}
\cal{X}_\equations &= \{r_1,r_2,r_3,c_1,c_2,c_3\},\\
\cal{X}_\variables &= \{s_{11},s_{12},s_{13},s_{21},s_{22},s_{23},s_{31},s_{32},s_{33}\}.
\end{align*}
Here $r_i$ (resp. $c_j$) stands for the equation associated with the $i$th row $\{s_{i 1},s_{i 2},s_{i 3}\}$ (resp. $j$th column $\{s_{1 j},s_{2 j},s_{3 j}\}$). For every constraint $e$ in the Magic Square linear system, let $\cal{A}_e$ denote the set of functions $f_e$ that map variables in $e$ to $\{0,1\}$. The answer set is $\cal{A}_\ms = \cal{A}_\equations \cup \cal{A}_\variables$ where $\cal{A}_\equations$ is the the union of $\cal{A}_e$ over all constraints $e$, and $\cal{A}_\variables = \{0,1\}$. The decision procedure $D_\ms(x,y,a,b)$ for the Magic Square game is described by the following table: if $(x,y)$ (resp. $(y,x)$, as the game is symmetric) is one of the nontrivial question pairs listed, then the players win if and only if the winning condition for the answers $(a,b)$ (resp. $(b,a)$) is satisfied. Otherwise, if the question pair is nontrivial, the players automatically win.

\begin{table}[H]
\centering
\begin{tabularx}{\textwidth}{ L{8cm} L{8cm} } 
 \toprule
 \textbf{Nontrivial Question Pair $(x,y)$} & \textbf{Winning Condition on Answers $(a,b)$} \\
 \midrule
  $x = y$ &  $ a=b$ \\
 \midrule 
 $x \in \cal{X}_\equations, y \in \cal{X}_\variables$ and $y$ is a variable in equation $x$ 
 & $a \in \cal{A}_\equations$ satisfies equation $x$
  and $a(y) = b$ \\
  \bottomrule  
\end{tabularx}
\caption{The nontrivial question pairs and winning conditions for the Magic Square game.}
\label{tab:magic-square}
\end{table}

We now define a value-$1$ synchronous strategy for the Magic Square game. Let $\hilb$ be a Hilbert space and for each variable $s_{ij}$ let $O^{ij}$ denote a self-adjoint unitary operator (called an \emph{observable}) acting on $\hilb$. Suppose that by arranging them into a $3 \times 3$ grid, the observables satisfy the following algebraic relations:
\begin{enumerate}
	\item (\textbf{R1}) The product of observables in a row or column multiply to $\id$, except in the last column, where they multiply to $-\id$.
	\item (\textbf{R2}) Two observables in the same row or column commute with each other;
	\item (\textbf{R3}) Two observables not in the same row or column anti-commute with each other.
\end{enumerate}
First, we note that it is possible to find such a set of observables satisfying these algebraic relations (see \Cref{fig:magic-square-operators} for an example of unitary operators acting on $\C^2 \otimes \C^2$).
\vspace{10pt}
\begin{figure}[H]
\begin{center}
  	 \begin{tabular}{|c | c | c |}
	 \hline
       $Z \otimes \id$ & $\id \otimes Z$ & $Z \otimes Z$ \\
       \hline
         $\id \otimes X$ & $X \otimes \id$ & $X \otimes X$ \\
         \hline
    $Z \otimes X$ & $X \otimes Z$ & $XZ \otimes ZX$ \\
    \hline
    \end{tabular}
 \end{center}
 \caption{An example of optimal observables for the Magic Square game, where the $X$ and $Z$ operators are the same as in the canonical CHSH strategy.}
 \label{fig:magic-square-operators}
\end{figure} 
Second, we note that relation \textbf{R3} is actually a consequence of relations \textbf{R1} and \textbf{R2}. For example to obtain $O^{11}O^{22} = -O^{22}O^{11}$ one could repeatedly apply \textbf{R1} and \textbf{R2} in the following order
\begin{align}
    (O^{11}\,\,O^{22})^2 &= (O^{12}\,\,O^{13})(O^{23}\,\,O^{21})(O^{21}\,\,O^{31})(O^{32}\,\,O^{12})\nonumber\\
    &= O^{12}(O^{13}\,\,O^{23})(O^{21}\,\,O^{21})(O^{31}\,\,O^{32})O^{12}\nonumber\\
    &= -O^{12}\,\,O^{33}\,\,O^{33}\,\,O^{12}= -\id.\label{eq:R1-R2-imply-R3}
\end{align}
However we include \textbf{R3} because the anti-commutation relation turns out to be the most important one in our applications of rigidity.

Given a set $\cal{O} = \{O^{ij}\}$ of observables satisfying relations \textbf{R1}, \textbf{R2}, and \textbf{R3}, we can define the synchronous strategy $\strategy = (\tau,\{M^x\})$ where $\tau$ is a tracial state on the von Neumann algebra generated by the observables $\cal{O}$. For a variable question $s_{ij}$, define the measurement operator $M^{s_{ij}}_b$ to be the projection onto the eigenspace of $O^{ij}$ with eigenvalue $(-1)^b$. To aid notation we abbreviate $M^{s_{ij}}_b$ as $M^{ij}_b$. The operator $M^e_a$ corresponding to a constraint question $e \in \cal{X}_\equations$ is the product
	\begin{align}\label{eq:projection-for-equations-magic-square}
		\prod_{s_{ij} \in e} M^{ij}_{a(s_{ij})}
	\end{align}
	where the product is over variables $s_{ij}$ occurring in equation $e$, and $a$ is an assignment to variables in $e$. Notice that because of relation \textbf{R2}, if $s_{i_1 j_1},s_{i_2 j_2} \in e$ then 
	\begin{align*}M^{i_1 j_1}_{b_1} \,\, M^{i_2 j_2}_{b_2} &= 1/4(\id + (-1)^{b_1}O^{i_1 j_1})(\id + (-1)^{b_2}O^{i_2 j_2})\\ &= 1/4(\id + (-1)^{b_2}O^{i_2 j_2})(\id + (-1)^{b_1}O^{i_1 j_1})\\ &= M^{i_2 j_2}_{b_2} \,\, M^{i_1 j_1}_{b_1}
	\end{align*} for every $b_1,b_2 \in \{0,1\}$. So the order of the product in \Cref{eq:projection-for-equations-magic-square}  doesn't matter, and thus $M^e_a$ is also a projection. 

	It is easy to verify that this strategy for the Magic Square game attains winning probability $1$; this relies on the relations \textbf{R1} and \textbf{R2}. Let us verify this in a few simple steps. Conditioned on players receiving a trivial question pair, the players winning probability is $1$ (as in this case players win regardless of their answers). Conditioned on receiving the same question, the players respond with the same answer with probability $1$ because $\strategy$ is a projective strategy. Indeed conditioned on receiving question pair $(s_{ij},s_{ij})$, the probability of winning is \[\tau(M^{ij}_0 \,\, M^{ij}_0) + \tau(M^{ij}_1 \,\,M^{ij}_1) = \tau(M^{ij}_0 + M^{ij}_1) = \tau(\id) = 1.\] Similarly conditioned on question pair $(e,e) \in \cal{X}_\equations \times \cal{X}_\equations$, the probability of winning is $$\sum_{a \in \cal{A}_e} \tau(M^e_a M^e_a) = \sum_{a \in \cal{A}_e} \tau(M^e_a) = \tau(\id) = 1.$$
	Finally, conditioned on receiving question pair $(r_i,s_{ij})$, the probability that the constraint player's assignment for $s_{ij}$ matches the variable player's answer to $s_{ij}$ is 
	\begin{align*}
	\sum_{a \in \cal{A}_{r_i}}\tau(M^{r_i}_a \, M^{ij}_{a(s_{ij})}) &= \sum_{b \in \cal{A}_\variables}\sum_{\substack{a\in \cal{A}_{r_i}\\ a(s_{ij})=b}} \tau(M^{r_i}_a \, M^{ij}_{b})\\ &=  \sum_{b \in \cal{A}_\variables}\tau(M^{ij}_b \, M^{ij}_b)= \sum_{b \in \cal{A}_\variables}\tau(M^{ij}_b)= \tau(\id)= 1
	\end{align*}
	and the probability that the constraint player's assignment satisfies equation $r_i$ is 
	\begin{align*}
	\sum_{\substack{a \in \cal{A}_{r_i}\\ a(s_{i1}) + a(s_{i2}) + a(s_{i3})=0}}\tau(M^{r_i}_a) &\geq \sum_{\substack{a \in \cal{A}_{r_i}}} (-1)^{a(s_{i1})+a(s_{i2})+a(s_{i3})}\tau(M^{r_i}_a)\\&=\sum_{a\in \cal{A}_{r_i}} (-1)^{a(s_{i1})+a(s_{i2})+a(s_{i3})}\tau(M^{i1}_{a(s_{i1})}\,M^{i2}_{a(s_{i2})}\,M^{i3}_{a(s_{i3})})\\ &=\tau(O^{i1} \, O^{i2} \, O^{i3}) = \tau(\id) = 1.
	\end{align*}
	A similar calculation holds for question pairs $(c_j,s_{ij})$. Since conditioned on any question pair the winning probability is $1$, we conclude that $\val(\ms,\strategy) = 1$. It should also be clear that this strategy is oracularizable, meaning that measurements corresponding to nontrivial question pairs commute. Finally, letting $O^{ij}$ be the Pauli observables in \Cref{fig:magic-square-operators}, we obtain a finite dimensional oracularizable perfect synchronous strategy for the Magic Square game defined over the Hilbert space $\complex^4$. 

We now establish the rigidity of the Magic Square game. Let $\strategy = (\tau,\{M^x\})$ denote a synchronous strategy for the Magic Square game. Each $\{M^{ij}_b\}_{b\in \cal{A}_\ms}$ is a projective measurement with outcomes $b \in \cal{A}_\ms$. Without loss of generality, we assume that the measurements corresponding to variable questions $s_{ij}$ only produce either $0$ or $1$ as answers, i.e.,
\begin{equation}
\label{eq:var-meas-completeness}
M^{ij}_0 + M^{ij}_1 = \id~.
\end{equation}
This is because for variable questions we can always define $M^{ij}_1$ to be the orthogonal projection $\id - M^{ij}_0$, and this cannot decrease the winning probability. Similarly, without loss of generality, we assume that the projective measurement $\{M^e_a\}_{a\in \cal{A}_\ms}$ corresponding to constraint question $e$ only produces assignments in $\cal{A}_e$, that is $\sum_{a \in \cal{A}_e} M^e_a = \id$. 

For every variable $s_{ij} \in \cal{X}_\variables$, define the observable
\[
	O^{ij} = M_0^{ij} - M_1^{ij}~.
\]
Note that $O^{ij}$ is a self-adjoint unitary operator (because of the assumption in~\cref{eq:var-meas-completeness}) and that $M^{ij}_b$ is a projection onto an eigenspace of $O^{ij}$. 

The rigidity of the Magic Square game is expressed in the following way: if $\strategy$ is an (approximately) optimal strategy for the Magic Square game, then the observables must (approximately) satisfy the algebraic relations \textbf{R1}, \textbf{R2}, and \textbf{R3}. 

\begin{theorem}[Rigidity of Magic Square]\label{thm:rigidty-of-magic-square}
Let $\strategy = (\tau,\{M^x\})$ be a synchronous strategy such that $\val(\ms, \strategy) \geq 1-\eps$. Let $\{O^{ij} \} $ denote the observables associated to the strategy. Then
\begin{enumerate}
	\item (\textbf{R1}) The product of observables in a row or column approximately multiply to $\id$, except in the last column, where they approximately multiply to $-\id$:
	\begin{align*}
        O^{i1} \, O^{i2} \, O^{i3} &\approx_{\delta(\eps)} \id \quad \text{for $i = 1,2,3$},\\
        O^{1j} \, O^{2j} \, O^{3j} &\approx_{\delta(\eps)} \id \quad \text{for $j = 1,2$},\\
        O^{13} \, O^{23} \, O^{33} &\approx_{\delta(\eps)} -\id~.
    \end{align*}
	\item (\textbf{R2}) Two observables in the same row or column approximately commute with each other, that is for all $i,j,k \in [3]$
	\begin{align*}
	    O^{ij}\, O^{ik} \approx_{\delta(\eps)} O^{ik}\, O^{ij}~,\\
	    O^{ji}\, O^{ki} \approx_{\delta(\eps)} O^{ki}\, O^{ji}~.
	\end{align*}
	\item (\textbf{R3}) Two observables not in the same row or column anti-commute with each other, so for example
	\[
	    O^{11}\, O^{22} \approx_{\delta(\eps)} - O^{22}\, O^{11}~,
	    O^{12}\, O^{21} \approx_{\delta(\eps)} - O^{21}\, O^{12}~,
    \]
\end{enumerate}
In all of these approximations $\delta$ is some proper error function such that $\delta(\eps) \leq 32|\cal{X}_\ms| \sqrt{\eps}$.
\end{theorem}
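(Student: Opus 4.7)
The plan is to translate the winning conditions of the Magic Square game into approximate algebraic relations on the observables $O^{ij}$, then derive the three relations in a specific order: R2 first, then R1, and finally R3 as an algebraic consequence of R1 and R2. By the averaging argument from \Cref{subsec:asymptotics}, since $\val(\ms,\strategy) \geq 1-\eps$ the winning probability conditioned on any fixed nontrivial question pair is at least $1-\delta(\eps)$ for a suitable proper error function $\delta$.

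The first step is to exploit the nontrivial question pair $(e, s_{ij})$ with $s_{ij} \in e$, whose winning condition demands that the constraint player's marginal answer on $s_{ij}$ match the variable player's answer. Following the cross-check template sketched in \Cref{subsec:asymptotics}, this yields
\[
    M^{ij}_b \simeq_{\delta(\eps)} M^e_{[a \mapsto a(s_{ij}) \mid b]},
\]
which \Cref{lem:consistency-consequences} upgrades to the corresponding $\tau$-norm closeness. For each constraint $e$ and each variable $s_{ij} \in e$, define the auxiliary observable $P^e_{s_{ij}} = M^e_{[a \mapsto a(s_{ij}) \mid 0]} - M^e_{[a \mapsto a(s_{ij}) \mid 1]}$; the family $\{P^e_{s_{ij}}\}_{s_{ij} \in e}$ commutes pairwise because its elements are data-processings of the single projective measurement $M^e$. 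The closeness above translates to $O^{ij} \approx_{\delta(\eps)} P^e_{s_{ij}}$ whenever $s_{ij} \in e$, and R2 follows at once: for two variables sharing a row or column, take $e$ to be the shared constraint and apply \Cref{prop:exchange-operators} to replace the $O$'s by the commuting $P$'s inside a product, then swap them back.

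For R1, I substitute the $O^{ij}$'s by $P^{r_i}_{s_{ij}}$'s inside $\tau(O^{i1} O^{i2} O^{i3})$ via \Cref{prop:exchange-operators} and then compute directly
\[
    P^{r_i}_{s_{i1}} P^{r_i}_{s_{i2}} P^{r_i}_{s_{i3}} = \sum_{a \in \cal{A}_{r_i}} (-1)^{a(s_{i1})+a(s_{i2})+a(s_{i3})}\, M^{r_i}_a.
\]
The equation-satisfaction part of the $(r_i, s_{ij})$ winning rule forces the sum of $\tau(M^{r_i}_a)$ over row-satisfying $a$ to be at least $1-\delta(\eps)$, so the trace of the right-hand side is within $\delta(\eps)$ of $+1$ for rows and the first two columns, and within $\delta(\eps)$ of $-1$ for the last column. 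Since the product $O^{i1} O^{i2} O^{i3}$ of self-adjoint unitaries is itself a unitary, trace-closeness to $\pm 1$ upgrades to $\tau$-norm closeness to $\pm \id$ via \Cref{prop:unitary-close-to-identity} applied to $\pm\, O^{i1} O^{i2} O^{i3}$, establishing R1.

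Finally, R3 is obtained by lifting the exact algebraic derivation in \eqref{eq:R1-R2-imply-R3} to the approximate setting: starting from $(O^{11} O^{22})^2$, every invocation of R1 or R2 takes the form of a substitution $C A D \mapsto C B D$ with $A \approx_{\delta(\eps)} B$ in $\tau$-norm and $\|C\|, \|D\| \leq 1$, so \Cref{prop:exchange-operators} permits it at an additive cost of $\delta(\eps)$. Chaining the constantly-many substitutions gives $(O^{11} O^{22})^2 \approx_{\delta(\eps)} -\id$, and because $O^{11}, O^{22}$ are self-adjoint unitaries this rearranges to $O^{11} O^{22} \approx_{\delta(\eps)} -O^{22} O^{11}$; the other off-diagonal anticommutations follow by the same template. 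The main bookkeeping obstacle is the cascade of substitutions in R3, but since every intermediate factor is a product of self-adjoint unitaries of operator norm $1$, the norm hypothesis of \Cref{prop:exchange-operators} is trivially maintained, and the overall error remains $O(\sqrt{\eps})$ consistent with the claimed bound $\delta(\eps) \leq 32 |\cal{X}_\ms| \sqrt{\eps}$.
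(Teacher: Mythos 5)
Your proof follows essentially the same route as the paper's: extract closeness of $O^{ij}$ to the marginal constraint observables (your $P^e_{s_{ij}}$, the paper's $R^{ij}$/$C^{ji}$) from the $(e,s_{ij})$ cross-checks, obtain R2 from the exact commutation of data-processings of a single projective measurement, obtain R1 from the constraint-satisfaction part of the winning condition together with \Cref{prop:unitary-close-to-identity}, and derive R3 by lifting the exact derivation \eqref{eq:R1-R2-imply-R3} via repeated applications of \Cref{prop:exchange-operators}. The only cosmetic difference is that you do the substitutions for R1 at the trace level and upgrade to $\tau$-norm closeness once at the end, whereas the paper upgrades $R^{i1}R^{i2}R^{i3}\approx\id$ first and then substitutes in norm; both orderings are valid and give the same error dependence.
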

\begin{proof}
We saw earlier that \textbf{R3} is implied by \textbf{R1} and \textbf{R2}. This is also the main idea behind the proof here. We first show that $\{O^{ij}\}$ approximately satisfies \textbf{R1} and \textbf{R2}, then we use a derivation similar to (\ref{eq:R1-R2-imply-R3}), to conclude that \textbf{R3} is approximately satisfied.

We can deduce a number of consistency conditions from the fact that the strategy $\strategy$ succeeds with probability at least $1 - \eps$. First, by a simple averaging argument, since every question pair $(x,y) \in \cal{X}_\ms \times \cal{X}_\ms$ is sampled uniformly at random, the winning probability conditioned on players receiving any fixed question pair $(x,y)$ is at least $1 - |\cal{X}_\ms|^2$.

As a notation aid, let $R^i_a = M^{r_i}_a$ denote a row measurement operator and $C^j_a = M^{c_j}_a$ denote a column measurement operator. By the winning conditions in \Cref{tab:magic-square}, the constraint and variable players' answers must be consistent with high probability. In other words $\sum_{a\in \cal{A}_{r_i}} \trace{R^i_a \, M^{ij}_{a(s_{ij})}}$ is at least as large as the probability of winning conditioned on players receiving question pair $(r_i,s_{ij})$ for every $i,j \in [3]$. So from our remark earlier, we have
\begin{equation}
\label{eq:magic-square-winning-implication0}
	\sum_{a \in \cal{A}_{r_i}} \trace{R^i_a \, M^{ij}_{a(s_{ij})}} \geq 1 - |\cal{X}_\ms|^2 \eps~.
\end{equation}
For every row measurement operator $R^i_a$ we define marginal projection operators: for $j \in [3]$ and $b \in \{0,1\}$ define
\[
	R^{ij}_b = \sum_{a \in \cal{A}_{r_i} :\,\, a(s_{ij}) = b} R^i_a
\]
where the summation is over assignments $a$ that assigns value $b$ to variable $s_{ij}$. This is a projection and notice that for all assignments $a$ to variables in $r_i$, we have
\[
	R^i_a = R^{i1}_{a(s_{i1})} \cdot R^{i2}_{a(s_{i2})} \cdot R^{i3}_{a(s_{i3})}~.
\]
It is also clear that $\{R^{ij}_b\}_{b \in \{0,1\}}$ forms a projective measurement. We can similarly define, for all columns $j$ and variables $s_{ij}$, projective measurement $\{C^{ji}_b\}$ consisting of operators
\[
	C^{ji}_b = \sum_{a \in \cal{A}_{c_j} :\,\, a(s_{ij}) = b} C^j_a.
\]

We can rewrite \eqref{eq:magic-square-winning-implication0} in terms of projective measurements $\{R^{ij}_b\}_{b \in \{0,1\}}$ as follows
\begin{align*}
    1 - |\cal{X}_\ms|^2 \eps \leq \sum_{a \in \cal{A}_{r_i}} \trace{R^i_a \, M^{ij}_{a(s_{ij})}} = \sum_{b\in \cal{A}_\variables}\sum_{\substack{a \in \cal{A}_{r_i}:\\a(s_{ij}) = b}} \trace{R^i_a \, M^{ij}_{b}} = \sum_{b\in \cal{A}_\variables} \trace{R^{ij}_b \, M^{ij}_{b}}. 
\end{align*}
Using the notation for consistency between measurements, we can equivalently express this as 
\[
	R^{ij}_b \simeq_{|\cal{X}_\ms|^2\eps} M^{ij}_b~,
\]
where the answer set is $\cal{A}_\variables = \{0,1\}$. By \Cref{lem:consistency-consequences}, we convert consistency to closeness to obtain
\[
	R^{ij}_b \approx_{|\cal{X}_\ms|\sqrt{2\eps}} M^{ij}_b~,
\]
and with a similar argument for columns we get that \[C^{ji}_b \approx_{|\cal{X}_\ms|\sqrt{2\eps}} M^{ij}_b~.\]

At this point it will be more convenient for us to work with observables, rather than projection operators. We have already defined observable $O^{ij}$ for each variable $s_{ij}$; we now define observables corresponding to the (marginal) constraint operators: for all $i,j \in [3]$, define
\[
	R^{ij} = R^{ij}_0 - R^{ij}_1 \qquad \text{and} \qquad C^{ji} = C^{ji}_0 - C^{ji}_1~.
\]
The closeness between constraints and variable projective measurements can be expressed also in terms of observables using the triangle inequality
\[
\|O^{ij} - R^{ij}\|_\tau^2 \leq 2\|M^{ij}_0 - R^{ij}_0\|_\tau^2 + 2\|M^{ij}_1 - R^{ij}_1\|_\tau^2 \leq 4|\cal{X}_\ms|^2\eps.
\]
The same holds for columns, therefore overall we have proved that
\begin{gather}
    O^{ij} \approx_{2|\cal{X}_\ms|\sqrt{\eps}} R^{ij}~,\label{eq:magic-square-winning-implication1-robust}\\
    O^{ij} \approx_{2|\cal{X}_\ms|\sqrt{\eps}} C^{ji}\label{eq:magic-square-winning-implication2-robust}.
\end{gather}

Now using these relations, we can prove that variable observables in the same row or column approximately commute. This follows from a few simple steps. First, by the triangle inequality, for every $i,j,k \in [3]$ we can write
\begin{align}
    \|O^{ij}\, O^{ik} -  O^{ik} \, O^{ij} \|_\tau^2 &\leq 2\|O^{ij}\, O^{ik} - R^{ij}\, R^{ik}\|_\tau^2 + 2\|R^{ij}\, R^{ik} - R^{ik} \, R^{ij}\|_\tau^2 + 2\|R^{ik} \, R^{ij} - O^{ik} \, O^{ij}\|_\tau^2\nonumber\\
    &= 2\|O^{ij}\, O^{ik} - R^{ij}\, R^{ik}\|_\tau^2 + 2\|R^{ik} \, R^{ij} - O^{ik} \, O^{ij}\|_\tau^2~. \label{eq:magic-square-triangle-inequality}
\end{align}
where we used the equality $R^{ij}\, R^{ik} = R^{ik}\, R^{ij}$ which follows from the fact that projections $R^{ij}_b$ and $R^{ik}_c$ are marginals of the same projective measurement $\{R^i_a\}_{a\in\cal{A}_{r_i}}$ and projections belonging to the same projective measurement commute. By \Cref{prop:exchange-operators}, from (\ref{eq:magic-square-winning-implication1-robust}), we get that $O^{ij}O^{ik} \approx_{2|\cal{X}_\ms|\sqrt{\eps}} R^{ij}O^{ik}$. Again by \Cref{prop:exchange-operators}, from (\ref{eq:magic-square-winning-implication1-robust}), we get that $R^{ij}O^{ik} \approx_{2|\cal{X}_\ms|\sqrt{\eps}} R^{ij}R^{ik}$. So by triangle inequality we have
\begin{align*}
    \|O^{ij}\, O^{ik} - R^{ij}\, R^{ik}\|_\tau^2 \leq 2\|O^{ij}O^{ik} - R^{ij}O^{ik}\|_\tau^2 + 2\|O^{ij}R^{ik} - R^{ij}R^{ik}\|_\tau^2 \leq 16|\cal{X}_\ms|^2\eps .
\end{align*}
This is true for all $i,j,k \in [3]$, so in particular it also holds that
\begin{align*}
    \|R^{ik} \, R^{ij} - O^{ik} \, O^{ij}\|_\tau^2 \leq 16|\cal{X}_\ms|^2\eps.
\end{align*}
Now plugging these in (\ref{eq:magic-square-triangle-inequality}) we get that
\[\|O^{ij}\, O^{ik} -  O^{ik} \, O^{ij} \|_\tau^2 \leq 32|\cal{X}_\ms|^2\eps.\]
An identical argument can be applied to columns, so overall we proved
\begin{gather}
    O^{ij}\, O^{ik} \approx_{4|\cal{X}_\ms|\sqrt{2\eps}}  O^{ik} \, O^{ij}~,\label{eq:magic-square-commutation1}\\
    O^{ji}\, O^{ki} \approx_{4|\cal{X}_\ms|\sqrt{2\eps}} O^{ki}\, O^{ji}~,\label{eq:magic-square-commutation2}
\end{gather}
for every $i,j,k \in [3]$.

As mentioned in \Cref{subsec:asymptotics}, in this paper we do not need to keep track of the specific approximation bounds. As such, instead of carrying around subscripts like $4|\cal{X}_\ms|\sqrt{2\eps}$ in our approximations, we opt to instead write $O^{ij} \approx_{\delta(\eps)} R^{ij}$ where $\delta$ is some error function such that $\delta(\eps) \to 0$ as $\eps \to 0$. For example in the rest of this paper the argument above will be abbreviated as follows: From $O^{ij} \approx_{\delta(\epsilon)} R^{ij}$ for all $i,j \in [3]$ and repeated applications of \Cref{prop:exchange-operators}, we obtain \[O^{ij}\, O^{ik} \approx_{\delta(\epsilon)} R^{ij}\, R^{ik} = R^{ik}\, R^{ij} \approx_{\delta(\epsilon)} O^{ik}\, O^{ij}~,\] so by the triangle inequality \[O^{ij}\, O^{ik} \approx_{\delta(\epsilon)} O^{ik}\, O^{ij}~,\] where $\delta(\eps)$ are proper error functions. It is only in this proof that, for the benefit of the reader who sees these approximations for the first time, we tried to give the arguments in full details and kept track of all the error functions. 

So far we obtained consequences of the fact that in a strategy with large winning probability the constraint and variable players' answers are consistent with high probability. There are some other relations that must hold in any approximately optimal strategy. For instance, with high probability, the measurement outcome of a constraint measurement $\{M^e_a\}_{a \in \cal{A}_e}$ must be a satisfying assignment for the constraint $e$. Let us make this more precise. The probability of winning conditioned on players receiving question pair $(r_i,s_{ij})$ is at least $1 - |\cal{X}_\ms|^2 \epsilon$. By winning conditions in \Cref{tab:magic-square}, if players win on question pair $(r_i,s_{ij})$, then the assignment by the player receiving question $r_i$ must satisfy constraint $r_i$. So we can write
\[\sum_{\substack{a \in \cal{A}_{r_i}\\a(s_{i1}) + a(s_{i2}) + a(s_{i3}))=0}} \trace{R^i_a} \geq 1- |\cal{X}_\ms|^2 \epsilon.\]
Now from the fact that $\{R^i_a\}_{a\in \cal{A}_{r_i}}$ is a projective measurement, we get that
\[\sum_{a \in \cal{A}_{r_i}} (-1)^{a(s_{i1}) + a(s_{i2}) + a(s_{i3})} \trace{R^i_a} \geq 1- 2|\cal{X}_\ms|^2 \epsilon,\]
and in terms of observables this can be equivalently written as
\[\trace{R^{i1}R^{i2}R^{i3}} \geq 1- 2|\cal{X}_\ms|^2 \eps~.\]
By \Cref{prop:unitary-close-to-identity}, we get that
\begin{equation}
    R^{i1} \, R^{i2} \, R^{i3} \approx_{2|\cal{X}_\ms|\sqrt{\eps}} \id \quad \text{for $i = 1,2,3$}~.\label{eq:magic-square-row-rigidity}
\end{equation}
Doing the same for columns we get
\[
    C^{j1} \, C^{j2} \, C^{j3}  \approx_{2|\cal{X}_\ms|\sqrt{\eps}} \id \quad \text{for $j = 1,2$}
\] and
\[ C^{31} \, C^{32} \, C^{33}  \approx_{2|\cal{X}_\ms|\sqrt{\eps}} -\id ~\] 
Now by (\ref{eq:magic-square-winning-implication1-robust}) and (\ref{eq:magic-square-row-rigidity}), and repeated applications of \Cref{prop:exchange-operators} and the triangle inequality, for every $i \in [3]$, we obtain
\begin{align*}
    \|O^{i1} \, O^{i2} \, O^{i3}\|_\tau^2 &\leq 2\|O^{i1} \, O^{i2} \, O^{i3} - R^{i1} \, O^{i2} \, O^{i3}\|_\tau^2 + 2\|R^{i1} \, O^{i2} \, O^{i3} - R^{i1} \, R^{i2} \, O^{i3}\|_\tau^2\\ &\qquad+ 2\|R^{i1} \, R^{i2} \, O^{i3} - R^{i1} \, R^{i2} \, R^{i3}\|_\tau^2 + 2\|R^{i1} \, R^{i2} \, R^{i3} - \id\|_\tau^2\\
    &\leq 32 |\cal{X}_\ms|^2\eps.
\end{align*}
Therefore we have
\begin{align}
    O^{i1} \, O^{i2} \, O^{i3} &\approx_{4|\cal{X}_\ms|\sqrt{2\eps}} \id \quad \text{for $i = 1,2,3$}, \label{eq:magic-square-rel1}
\end{align}
and following the same argument for columns
\begin{align}
    O^{1j} \, O^{2j} \, O^{3j} &\approx_{4|\cal{X}_\ms|\sqrt{2\eps}} \id \quad \text{for $j = 1,2$},\label{eq:magic-square-rel2}\\
    O^{13} \, O^{23} \, O^{33} &\approx_{4|\cal{X}_\ms|\sqrt{2\eps}} -\id~.\label{eq:magic-square-rel3}
\end{align}

Finally to prove the approximate anticommutation $O^{11}\,\,O^{22} \approx - O^{22}O^{11}$, we follow the idea in the derivation \ref{eq:R1-R2-imply-R3}: We start with $(O^{11}\,\,O^{22})^2$ and step by step, using relations (\ref{eq:magic-square-rel1})-(\ref{eq:magic-square-rel3}), substitute $O^{11}$ and $O^{22}$ by unitaries that are nearby. By repeated applications of triangle inequality and \Cref{prop:exchange-operators} and the approximate relations we established so far, we can write
\begin{align*}
    (O^{11}\,\,O^{22})^2 &\approx_{16|\cal{X}_\ms|\sqrt{\epsilon}} (O^{12}\,\,O^{13})(O^{23}\,\,O^{21})(O^{21}\,\,O^{31})(O^{32}\,\,O^{12})\\
    &= O^{12}(O^{13}\,\,O^{23})(O^{21}\,\,O^{21})(O^{31}\,\,O^{32})O^{12}\\
    &=O^{12}(O^{13}\,\,O^{23})(O^{31}\,\,O^{32})O^{12}\\
    &\approx_{8|\cal{X}_\ms|\sqrt{2\epsilon}} -O^{12}\,\,O^{33}\,\,O^{33}\,\,O^{12}\\
    &= -\id,
\end{align*}
So altogether, with another application of triangle inequality, we obtain \[\|(O^{11}\,O^{22})^2 + \id \|_\tau \leq 32|\cal{X}_\ms|\sqrt{\epsilon}.\]
Now since $O^{11}O^{22}$ is a unitary and the $\tau$-norm is unitarily invariant, we conclude that \[\|O^{11}O^{22} + O^{22}O^{11}\|_\tau\leq 32|\cal{X}_\ms|\sqrt{\epsilon}.\] By symmetry, an almost identical argument can be applied to prove anticommutation relations for all other pairs of observables not in the same row or column.
\end{proof}
As mentioned, the rigidity of the Magic Square and CHSH games are important stepping stones for a number of results in quantum complexity theory and quantum cryptography. A crucial component of obtaining strong lower bounds on the complexity of approximating the value of nonlocal games has been through developing nonlocal games with \emph{highly efficient} rigidity properties. 

We measure efficiency via the tradeoff between the complexity of the game versus the complexity of the algebraic relations that (approximately) optimal strategies must satisfy. For example, the Magic Square game has $|\cal{X}_\ms|^2 = 15^2$ question pairs and a similar number of answer pairs, and (approximately) optimal strategies must give rise to two pairs of (approximately) anti-commuting observables $\{O^{11}, O^{22} \}$ and $\{O^{21}, O^{12}\}$, and furthermore these pairs must be \emph{independent} in the sense that they (approximately) commute with each other. This implies that when the probability of winning is sufficiently close to $1$, the dimension of the Hilbert space must be at least $4$. We say that the Magic Square game \emph{certifies} the existence of two independent anti-commuting observables and certifies a Hilbert space of dimension at least $4$. This is a consequence of the following general statement:

\begin{proposition}
\label{prop:dim-bound}
Let $\algebra$ denote a von Neumann algebra on a separable Hilbert space $\hilb$ with a tracial state $\tau$, and let $A^{(1)},\ldots,A^{(n)},B^{(1)},\ldots,B^{(n)} \in \algebra$ denote self-adjoint unitary operators (i.e. observables). Suppose for some $\eps \geq 0$ the following approximate commutation and anticommutation relations hold:
\begin{gather*}
	\forall \, i, \qquad A^{(i)} B^{(i)} \approx_\eps - B^{(i)} A^{(i)} \\
	\forall \, i \neq j, \qquad A^{(i)} A^{(j)} \approx_\eps A^{(j)} A^{(i)}~, \qquad B^{(i)} B^{(j)} \approx_\eps B^{(j)} B^{(i)} ~, \qquad  A^{(i)} B^{(j)} \approx_\eps B^{(j)} A^{(i)}~.
\end{gather*}
Then, for all sufficiently small $\epsilon$, it holds that $\dim \hilb \geq (1 - \delta(\eps)) 2^{n}$ where $\delta(\epsilon)$ is some proper error function.
\end{proposition}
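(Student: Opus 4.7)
The plan is to construct $2^{2n}$ approximately orthonormal ``Pauli-like'' operators in $\algebra$ and then invoke the fact that in finite dimension $d$ the complex vector space $\B(\hilb)$ has dimension $d^2$; if $\hilb$ is infinite-dimensional the conclusion is vacuous. Concretely, for each $(a,b) \in \{0,1\}^n \times \{0,1\}^n$ I would define
\[
  W_{a,b} \;=\; (A^{(1)})^{a_1}(B^{(1)})^{b_1}(A^{(2)})^{a_2}(B^{(2)})^{b_2}\cdots(A^{(n)})^{a_n}(B^{(n)})^{b_n}.
\]
Each $W_{a,b}$ is a product of at most $2n$ self-adjoint unitaries, so it is itself a unitary, $\tau(W_{a,b}^*W_{a,b})=1$, and $\|W_{a,b}\|\le 1$.

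The first step is to lift the hypothesis to products: by repeatedly invoking the approximate (anti)commutation relations together with \Cref{prop:exchange-operators}, I would show
\[
  W_{a,b}^{\,*}\, W_{a',b'} \;\approx_{\delta_1(n,\eps)}\; \sigma(a,b,a',b')\, W_{a \oplus a',\, b \oplus b'}
\]
for a sign $\sigma \in \{\pm 1\}$ determined by the parity of generator swaps required to put the product into canonical order (and using that each $A^{(i)}$ and $B^{(i)}$ squares to $\id$). Since at most $O(n^2)$ adjacent-generator swaps are needed, the accumulated $\tau$-norm error stays controlled provided $\eps$ is small relative to $n$.

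The second step is to argue that $|\tau(W_{c,d})| \le \delta_2(n,\eps)$ whenever $(c,d)\neq(0,0)$. Pick any index $i$ with $c_i=1$ (the case $d_i=1$ is symmetric, using $A^{(i)}$ instead of $B^{(i)}$). In the exact setting $B^{(i)}$ anticommutes with $W_{c,d}$, because $B^{(i)}$ anticommutes with the single factor $A^{(i)}$ inside $W_{c,d}$ and commutes with every other factor. In the approximate setting I would combine the hypothesized anticommutation, \Cref{prop:exchange-operators}, and the trace property $\tau(XY)=\tau(YX)$ to obtain
\[
  \tau(W_{c,d}) \;=\; \tau\bigl(B^{(i)}\, W_{c,d}\, B^{(i)}\bigr) \;\approx\; -\tau(W_{c,d}),
\]
which forces $|\tau(W_{c,d})|$ to be bounded by a proper error function.

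Combining the two previous steps, the Gram matrix $G_{(a,b),(a',b')} := \tau(W_{a,b}^{\,*}\, W_{a',b'})$ is entrywise within some $\delta(n,\eps)$ of the $2^{2n}\times 2^{2n}$ identity. For $\eps$ small enough that $2^{2n}\delta(n,\eps) < 1$ the matrix $G$ has strictly positive minimum eigenvalue, so the operators $\{W_{a,b}\}$ are linearly independent in $\algebra$. Hence $\dim_{\C}\algebra \geq 2^{2n}$, and since $\algebra \subseteq \B(\hilb)$ with $\dim_{\C} \B(\hilb) = (\dim \hilb)^2$, we conclude $\dim \hilb \geq 2^n$; the $(1-\delta(\eps))$ slack in the conclusion is read off from the smallest singular value of $G$. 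The main obstacle will be bookkeeping in the first step: keeping the $O(n^2)$ swap errors under control so that the threshold on $\eps$ in the final step is not exponentially small in $n$, i.e., so the quantitative dependence matches the ``sufficiently small $\eps$'' clause of the statement.
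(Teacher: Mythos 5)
Your proof is correct, and it takes a genuinely different route from the paper's. You construct the $2^{2n}$ Pauli-like unitaries $W_{a,b}$, show via \Cref{prop:exchange-operators} and the cyclicity of $\tau$ that the Gram matrix $\bigl(\tau(W_{a,b}^{*}W_{a',b'})\bigr)$ is entrywise close to the identity, and deduce linear independence of the $W_{a,b}$ inside $\algebra$, hence $\dim_\C \algebra \geq 2^{2n}$ and so $\dim\hilb \geq 2^n$. The paper instead builds a $2^n$-outcome almost-projective POVM $M_s = \bigl(\prod_i \Pi^{(i)}_{s_i}\bigr)\bigl(\prod_i \Pi^{(i)}_{s_i}\bigr)^{*}$ from the $A$-eigenprojections, applies the Projectivization Lemma (\Cref{lem:projectivization}) to replace it with a genuine projective measurement $\{P_s\}$, uses the $B$-observables as approximate ``translation'' unitaries to argue $\tau(P_s) \approx 2^{-n}$ for all $s$, and finally invokes the fact that a nonzero projection in $\B(\hilb)$ has dimension-normalized trace at least $1/\dim\hilb$ --- a step that forces them to first reduce (via the structure theorem for finite-dimensional von Neumann algebras) to the case $\algebra = \B(\hilb)$ with $\tau$ the normalized trace. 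Your argument avoids both the Projectivization Lemma and that reduction: the inequality $\dim_\C\algebra \leq (\dim\hilb)^2$ holds for any subalgebra $\algebra \subseteq \B(\hilb)$ irrespective of its factor decomposition or the faithfulness of $\tau$, so you never need to normalize $\tau$ or worry about the weights in its central decomposition. Both arguments incur an error threshold on $\eps$ that degrades exponentially in $n$ (you from bounding the spectral gap of a $2^{2n}\times 2^{2n}$ Gram matrix by its max entry, the paper from summing $\tau(M_s - M_s^2)$ over $2^n$ outcomes before projectivization), which is harmless in the gapless regime since the paper's proper-error-function convention permits hidden $n$-dependence. One small remark: your argument actually yields the integer bound $\dim\hilb \geq 2^n$ once $\eps$ clears the threshold, which subsumes the stated $(1-\delta(\eps))2^n$; the ``slack'' should be interpreted as an interpolation in $\eps$ to make $\delta$ a continuous proper error function, not as a deficiency read off from the Gram matrix's singular values.
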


\begin{proof}
There is nothing to prove when $\hilb$ is infinite dimensional. So assume that $\hilb$ is finite dimensional. By Theorem 4.4.1 in \cite{jones-lecture-notes}, every finite dimensional von Neumann algebra is a direct sum of $B(\hilb^i)$ where $\hilb^i$ are finite dimensional Hilbert spaces. So without loss of generality we may assume $\algebra = B(\hilb)$ and that $\tau(\cdot) = \tr(\cdot)/\dim \hilb$ is the dimension-normalized trace.

Let $\Pi^{(i)}_b$ be the projection onto $(-1)^b$-eigenspace of $A^{(i)}$. For every $s \in \{0,1\}^n$ let \[M_s \coloneqq \Bigparen{\prod_{i=1}^{n} \Pi^{(i)}_{s_i}} \Bigparen{\prod_{i=1}^{n} \Pi^{(i)}_{s_i}}^\ast~.\] These operators are clearly positive semidefinite and a simple inductive argument shows that $\sum_{s\in \{0,1\}^n} M_s = \id$. Therefore $\{M_s\}_{s \in \{0,1\}^n}$ is a POVM.

From approximate commutation relations between $A^{(i)}$s we get that any pair $\Pi^{(i)}_a$ and $\Pi^{(j)}_b$ must approximately commute. Therefore by repeated applications of \Cref{prop:exchange-operators}, we get that \[M_s^2 \approx_{\delta(\eps)} M_s.\] By \Cref{prop:exchange-operators} again, we obtain that $\tau(M_s - M_s^2) \leq \delta(\eps)$ for every $s$. So by \Cref{lem:projectivization}, there exists a projective measurement $\{P_s\}_{s \in \{0,1\}^n} \subset \algebra$ such that $P_s \approx_{\delta(\eps)} M_s$. 

By approximate anticommutation, we get $B^{(i)} \, A^{(i)} \, B^{(i)} \approx_{\delta(\eps)} -A^{(i)}$. We can express this in terms of projective measurement $\{\Pi_0^{(i)},\Pi_1^{(i)}\}$ \[B^{(i)} \, \Pi^{(i)}_0 \, B^{(i)} - B^{(i)} \, \Pi^{(i)}_1 \, B^{(i)} \approx_{\delta(\eps)} \Pi^{(i)}_1 - \Pi^{(i)}_0.\]
Using the relation $\Pi^{(i)}_0 + \Pi^{(i)}_1 = \id$, we conclude that 
\begin{equation}
B^{(i)} \, \Pi^{(i)}_0 \, B^{(i)} \approx_{\delta(\eps)} \Pi^{(i)}_1.\label{eq:unitaries-act-as-permutation}
\end{equation}

Now if we define unitary operators $U_{s,t} \coloneqq \prod_{i=1}^n \paren{B^{(i)}}^{s_i + t_i},$ it is straightforward to show that \[U_{s,t} \, M_s \, U_{s,t}^* \approx_{\delta(\eps)} M_t\] for every $s,t \in \{0,1\}^n$ using (\ref{eq:unitaries-act-as-permutation}) and approximate commutation and anticommutations between $A$ and $B$ operators. This immediately implies that \[\tau(M_t) \approx_{\delta(\eps)} \tau(U_{s,t} \, M_s \, U_{s,t}^*) = \tau(M_s).\] Now since projections $\{P_s\}$ are close to operators $\{M_s\}$ we also have $\tau(P_s) \approx_{\delta(\eps)} \tau(P_t)$ for every $s,t$.

From $\tau(\sum_s P_s) = \tau(\id) = 1$ and the fact that $\tau(P_s) \approx \tau(P_{t})$ for every $s,t \in \{0,1\}^n$, we get that $\tau(P_s) \approx_{\delta(\eps)} 2^{-n}$. In other words we have \[(1-\delta(\eps))2^{-n} \leq \tau(P_s) \leq (1+\delta(\eps))2^{-n}\] for every $s$. For all $\eps$ sufficiently small, we have $\delta(\eps) < 1$, and thus $\tau(P_s) > 0$. Since $P_s$ is a projection and it is nonzero it must be that $\tr(P_s) \geq 1$ so $\tau(P_s) = \tr(P_s)/ \dim \hilb \geq 1/\dim \hilb$. We can write \[1/\dim \hilb \leq \tau(P_s) \leq (1+\delta(\eps))2^{-n}\] from which we conclude that \[\dim \hilb \geq \frac{2^n}{1+\delta(\eps)} \geq (1-\delta(\eps))2^n.\]
\end{proof}

It is possible to construct games that certify a larger Hilbert space. An example is the \emph{$n$-fold parallel repetition} of the Magic Square game, which is a nonlocal game where the verifier plays $n$ independent instances of the Magic Square game simultaneously with the two players. This game is also rigid, and it certifies $2n$ pairs of independent anti-commuting observables and consequently, by the proposition we just proved, certifies a Hilbert space of dimension $2^{2n}$. However the complexity of the game also scales commensurately with the dimension: the number of questions and answers grows as $2^{O(n)}$. 

Are there games that certify a $d$-dimensional Hilbert space using much fewer than $d$ questions/answer pairs? Chao, Reichardt, Sutherland and Vidick~\cite{chao2018test} and Natarajan and Vidick \cite{natarajan2018low} showed that there exist families of games $\{G_n\}$ where the $n$-th game $G_n$ certifies a $2^n$-dimensional space using $\poly(n)$ question/answer pairs. The rigidity result of~\cite{natarajan2018low} is also highly \emph{robust}, in the sense that strategies for $G_n$ that succeed with probability $1 - \eps$ must be $\delta(\eps)$-close to satisfying the target algebraic relations, for some function $\delta(\eps)$ that has a mild (e.g., logarithmic) dependence on $n$. The existence of games with efficient and robust rigidity properties is a key component of the gap-preserving compression theorem of~\cite{ji_mip_re}.\footnote{In fact, the result of~\cite{ji_mip_re} implies that one can construct games with $m$ questions/answers that certify $d$-dimensional Hilbert spaces, and $d$ can be an arbitrarily large (computable) function of $m$!}

For our gapless compression result, we only need games with efficient rigidity properties (i.e., small game certifying a large Hilbert space), not necessarily highly robust ones. In this paper we use a family of games that we call \emph{$2$-out-of-$n$ Magic Square}, which is inspired by the family of games introduced in~\cite{chao2018test}, which we call $2$-out-of-$n$ CHSH. We describe the $2$-out-of-$n$ Magic Square games next.

\subsection{The $2$-out-of-$n$ Magic Square game}
\label{sec:2-of-n-ms}

\newcommand{\twoofnms}{\textsc{$2$-of-$n$-MS}}

Fix an integer $n > 0$. The basic idea behind the $2$-out-of-$n$ Magic Square game, abbreviated $\twoofnms$, is that the players are asked to play $n$ simultaneous instances of the Magic Square game, but the verifier only asks the players for their responses for $2$ instances. Define the question set $\cal{X}_{\twoofnms} = \{(i,j) \in [n]^2: i \neq j\} \times \cal{X}_{\ms}^2$, and the answer set $\cal{A}_{\twoofnms} = \cal{A}_\ms^2$. The decision predicate $D_\twoofnms(q,r,a,b)$ is specified as follows, via its nontrivial question pairs and the corresponding winning conditions for the answers.

\begin{table}[H]
\centering
\begin{tabularx}{\textwidth}{ L{7cm} L{10cm} } 
 \toprule
 \textbf{Nontrivial Question Pair $(q,r)$} & \textbf{Winning Condition on Answers $(a,b)$} \\
 \midrule
  $q = r$ &  $ a=b$ \\
 \midrule 
 $q = (i,j,x_i,x_j), r = (k,\ell,y_k,y_\ell)$ & $D_\ms(x_w,y_w,u_w,v_w) = 1$ for all $w \in \{i,j\} \cap \{k,\ell \}$ \\
 where $\{i,j\} \cap \{k,\ell \} \neq \emptyset$, and for all $w$ in the intersection, $(x_w,y_w)$ is a nontrivial question pair for $\ms$  & where $a = (u_i,u_j), b = (v_k,v_\ell)$ \\
  \bottomrule  
\end{tabularx}
\caption{The nontrivial question pairs and winning conditions for the $\twoofnms$.}
\label{tab:two-of-n-ms}
\end{table}
In other words, each player gets asked to generate answers for two instances of the Magic Square game, but do not know what instances the other player is asked about. If there is an instance $i$ that is asked to both players, then their questions and answers for instance $i$ must satisfy the Magic Square decision predicate.

It is easy to see that the $\twoofnms$ has a perfect synchronous strategy: let $\strategy_\ms = (\tau,\{M^x\})$, where $\tau$ is a tracial state on some von Neumann algebra $\algebra$ on a Hilbert space $\hilb$,  denote the perfect strategy for the Magic Square game described above. Then define the synchronous strategy $\strategy_\twoofnms = (\tau^{\otimes n}, \{M^{i,j,x,y}\})$, where  $M^{i,j,x,y} = \{M^{i,j,x,y}_{a,b}\}_{a,b \in \cal{A}_\ms}$ is the projective measurement defined such that \[M^{i,j,x,y}_{a,b} \coloneqq \id\otimes\cdots\otimes\id \otimes M^{x}_{a}\otimes \id \otimes \cdots \otimes \id \otimes M^{y}_{b}\otimes\id\otimes\cdots\otimes\id \in \algebra^{\otimes n}\] in which $M^{x}_{a}$ and $M^{y}_{b}$ are acting on the $i$th and $j$th copy of $\hilb$, respectively. Intuitively if a player receives the question $(i,j,x,y)$ they perform independent Magic Square measurements corresponding to questions $x$ and $y$ on the $i$-th and $j$-th copy of $\hilb$, respectively, and respond with their measurement outcomes. Clearly, the players’ will win the instances that are shared between them. The oracularizability of this strategy follows from the oracularizablity of the honest strategy of the Magic Square game and the construction above: for example if $(x_i,y_i)$ is a nontrivial question pair in the Magic Square game, then measurements $M^{i,j,x_i,x_j}$ and $M^{i,k,y_i,y_k}$ commute for all $j\neq k$ since measurements $M^{x_i}$ and $M^{y_i}$ commute by the oracularizability of the honest Magic Square strategy from the previous section.

The next lemma expresses the rigidity properties of the $\twoofnms$. Let $\{ M^{i,j,x,y}_{a,b} \}_{a,b \in \cal{A}_\ms}$ denote a measurement corresponding to a question $(i,j,x,y) \in \cal{X}_{\twoofnms}$. Define the marginal measurement operator
\begin{align*}
	M^{i,x}_{a} = \sum_b M^{i,\Succ(i),x,x}_{a,b}
\end{align*}
where the sum is over answers $b \in \cal{A}_\ms$ and $\Succ(i) = \begin{cases}i+1, &\quad i < n,\\ 1, &\quad i = n.\end{cases}$

Note that for all $(i,x) \in [n] \times \cal{X}_\ms$, the set $\{M^{i,x}_{a}\}_{a \in \cal{A}_\ms}$ forms a projective measurement. Just like with strategies for the Magic Square game, when $x$ is a variable question in the Magic Square game (i.e. it is $s_{cd}$ for some $c,d \in [3]$), we assume without loss of generality that 
\[
	M^{i,s_{cd}}_0 + M^{i,s_{cd}}_1 = \id
\]
for all $i \in [n], c,d \in [3]$. For each variable $s_{cd}$ define the corresponding observable
\[
	O^{i,c,d} = M^{i,s_{cd}}_0 - M^{i,s_{cd}}_1~.
\]

\begin{lemma}[Rigidity of the $\twoofnms$]
\label{lem:2-of-n-ms}
Let $\strategy = (\tau,\{M^x\})$ be a synchronous strategy such that $\val(\twoofnms,\strategy) \geq 1 - \eps$. For all $i \in [n]$ define
\begin{gather*}
    A^{(2i-1)} = O^{i,1,1}~, B^{(2i-1)} = O^{i,2,2}~,\\
    A^{(2i)} = O^{i,1,2}~, B^{(2i)} = O^{i,2,1}~.
\end{gather*}
Then 
\begin{gather*}
	\forall \, k \in [2n], \qquad A^{(k)} B^{(k)} \approx_\delta - B^{(k)} A^{(k)} \\
	\forall \, k, l \in [2n] \text{ and }k\neq l, \qquad A^{(k)} A^{(l)} \approx_\delta A^{(l)} A^{(k)}~, \qquad B^{(k)} B^{(l)} \approx_\delta B^{(l)} B^{(k)} ~, \qquad  A^{(k)} B^{(l)} \approx_\delta B^{(l)} A^{(k)}
\end{gather*}
where $\delta(n,\eps) = \poly(n) \cdot \poly(\eps)$ is a proper error function.
\end{lemma}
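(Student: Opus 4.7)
The plan is to derive the rigidity of $\twoofnms$ from the rigidity of the Magic Square game (\Cref{thm:rigidty-of-magic-square}) applied separately to each of the $n$ copies, and then to bootstrap the cross-copy commutation relations from the fact that marginals of a common joint measurement commute exactly. Throughout, I will use the averaging convention from \Cref{subsec:asymptotics} and absorb the $\poly(n)$ question-alphabet factor into the final error function.

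\textbf{Step 1 (Each copy induces an approximately optimal Magic Square strategy).} For each $i \in [n]$, I consider question pairs of the form $\bigl((i,\Succ(i),x,x),\,(i,\Succ(i),y,y)\bigr)$ where $(x,y)$ is a nontrivial Magic Square question pair. Both shared indices $\{i,\Succ(i)\}$ then carry the Magic Square pair $(x,y)$, so the $\twoofnms$ winning condition requires $D_\ms(x,y,u_i,v_i)=1$ and $D_\ms(x,y,u_{\Succ(i)},v_{\Succ(i)})=1$. An averaging argument gives winning probability $\geq 1-\poly(n)\eps$ on each such pair, and dropping the $\Succ(i)$-constraint and marginalizing over $u_{\Succ(i)},v_{\Succ(i)}$ shows that $\strategy^{(i)} \coloneqq (\tau,\{M^{i,x}\}_{x\in\cal{X}_\ms})$ is a synchronous strategy for $\ms$ with $\val(\ms,\strategy^{(i)}) \geq 1-\poly(n)\eps$. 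Applying \Cref{thm:rigidty-of-magic-square} yields R1, R2, R3 approximately for the observables $\{O^{i,c,d}\}_{c,d\in[3]}$, with error $\poly(n)\sqrt{\eps}$. In particular the within-pair anticommutations $A^{(2i-1)}B^{(2i-1)}\approx -B^{(2i-1)}A^{(2i-1)}$ and $A^{(2i)}B^{(2i)}\approx -B^{(2i)}A^{(2i)}$ follow from R3, and the within-copy cross-pair commutations (i.e.\ the four relations among $\{A^{(2i-1)},B^{(2i-1)},A^{(2i)},B^{(2i)}\}$ that are not anticommutations) follow from R2 because each such pair of observables is in the same row or column.

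\textbf{Step 2 (Cross-copy commutations).} Fix $i\neq j$ in $[n]$ and $(c,d),(e,f)\in[3]^2$. The joint measurement $\{M^{i,j,s_{cd},s_{ef}}_{a,b}\}_{a,b}$ is projective, and its two one-body marginals $\tilde{M}^{i,j,c,d}_a \coloneqq \sum_b M^{i,j,s_{cd},s_{ef}}_{a,b}$ and $\tilde{M}^{i,j,e,f}_b \coloneqq \sum_a M^{i,j,s_{cd},s_{ef}}_{a,b}$ are themselves projective measurements that commute \emph{exactly} (marginals of the same projective measurement commute). Define the corresponding observables $\tilde{O}^{i,j,c,d}$ and $\tilde{O}^{i,j,e,f}$; these exactly commute. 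I then use the cross-check question pair $\bigl((i,j,s_{cd},s_{ef}),(i,k,s_{cd},s_{cd})\bigr)$ for any $k\notin\{i,j\}$ (such a $k$ exists whenever $n\geq 3$; the $n=1,2$ cases will be handled separately, see below), whose only shared index is $i$ with nontrivial Magic Square pair $(s_{cd},s_{cd})$ forcing $u_i = v_i$. Averaging gives $\tilde{M}^{i,j,c,d}_a \simeq M^{i,s_{cd}}_a$ via \Cref{lem:consistency-consequences}, hence $\tilde{O}^{i,j,c,d} \approx O^{i,c,d}$ in $\tau$-norm; symmetrically $\tilde{O}^{i,j,e,f}\approx O^{j,e,f}$. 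Two applications of \Cref{prop:exchange-operators} then convert the exact commutation of $\tilde{O}^{i,j,c,d}$ and $\tilde{O}^{i,j,e,f}$ into the approximate commutation $O^{i,c,d}\,O^{j,e,f}\approx O^{j,e,f}\,O^{i,c,d}$, which covers each of the required cross-copy relations for $A^{(k)},B^{(k)},A^{(l)},B^{(l)}$ with $k,l$ in different copies.

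\textbf{Main obstacle and edge cases.} The only real technical nuisance is Step 2's use of a ``third index'' $k$ for the cross-check, together with some boundary cases. When $n\geq 3$ we can always pick $k\notin\{i,j\}$, so the pair $\bigl((i,j,\cdot,\cdot),(i,k,s_{cd},s_{cd})\bigr)$ has a single shared index $i$ and the cross-check goes through cleanly; when $n=2$ or $k$ is forced to coincide with $\Succ(i)$, one must instead compare to the marginal using the pair $\bigl((i,j,s_{cd},s_{ef}),(j,i,s_{cd},s_{ef})\bigr)$ (both shared indices, synchronous check on each) and relate the resulting marginal on copy $i$ to $M^{i,s_{cd}}_a$ through a short chain of cross-checks of the form $\bigl((j,i,s_{cd},s_{ef}),(i,\Succ(i),s_{cd},s_{cd})\bigr)$. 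In every case the chain is $O(1)$ long, so the final error remains $\poly(n)\poly(\eps)$. Modulo these routine but careful edge-case verifications and the standard triangle-inequality bookkeeping inherited from \Cref{thm:rigidty-of-magic-square}, the lemma follows.
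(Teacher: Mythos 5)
Your proof takes essentially the same route as the paper: apply Magic Square rigidity (\Cref{thm:rigidty-of-magic-square}) copy-by-copy via the reference questions $(i,\Succ(i),x,x)$ to obtain the within-copy (anti)commutations, then for cross-copy commutations use the exact commutation of the two marginals of the joint projective measurement $M^{i,j,x,y}$ together with cross-checks to transfer to the reference marginals $M^{i,x}_a$, $M^{j,y}_b$. Your auxiliary index $k$ plays the role of $\Succ(i)$ in the paper's cross-check pair $((i,\Succ(i),x,x),(i,j,x,y))$, and the edge cases you flag correspond precisely to the ones the paper also sets aside (``the case where $j\neq\Succ(i)$ and $i\neq\Succ(j)$'').
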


\begin{proof}
Fixing $i \in [n]$ and $x,y \in \cal{X}_\ms$, the probability of winning the instance $i$ Magic Square game, conditioned on players receiving questions $(i,\Succ(i),x,x)$ and $(i,\Succ(i),y,y)$ is at least $1-|\cal{X}_{\twoofnms}|^2\eps$, thus
\[\sum_{a,b} \tau(M^{i,x}_a \,\, M^{i,y}_b) D_\ms(x,y,a,b) \geq 1- |\cal{X}_{\twoofnms}|^2\eps.\]
So conditioned on every question pair $(x,y)$, the strategy $(\tau,\{M^{i,x}\}_{x\in \cal{\ms}})$ wins in the Magic Square game with probability at least \[1-|\cal{X}_{\twoofnms}|^2\eps = 1 - \poly(n,\eps).\] Therefore by \Cref{thm:rigidty-of-magic-square}, for every $i \in [n]$, we have
\begin{gather*}
    A^{(2i-1)} \, B^{(2i-1)} \approx_{\poly(n,\eps)} -B^{(2i-1)} \, A^{(2i-1)}~, A^{(2i)} \, B^{(2i)} \approx_{\poly(n,\eps)} -B^{(2i)} \, A^{(2i)}~,\\
    A^{(2i-1)} \, A^{(2i)} \approx_{\poly(n,\eps)} A^{(2i)} \, A^{(2i-1)}~, B^{(2i-1)} \, B^{(2i)} \approx_{\poly(n,\eps)} B^{(2i)} \, B^{(2i-1)}~,\\ A^{(2i-1)} \, B^{(2i)} \approx_{\poly(n,\eps)} B^{(2i)} \, A^{(2i-1)}~, B^{(2i-1)} \, A^{(2i)} \approx_{\poly(n,\eps)} A^{(2i)} \, B^{(2i-1)}~.
\end{gather*}
It is only left to prove that when $k,l \in [2n]$ and $|k-l| > 1$, it holds that
\[\qquad A^{(k)} A^{(l)} \approx_\delta A^{(l)} A^{(k)}~, \qquad B^{(k)} B^{(l)} \approx_\delta B^{(l)} B^{(k)} ~, \qquad  A^{(k)} B^{(l)} \approx_\delta B^{(l)} A^{(k)}.\]
We prove the stronger statement that $M^{i,x}_a \, M^{j,y}_b \approx_\delta M^{j,x}_b \, M^{i,y}_a$ for all $i,j \in [n], i\neq j, x,y \in \cal{X}_\ms, a,b \in \cal{A}_\ms$. 

We give the proof for the case where $j \neq \Succ(i)$ and $i \neq \Succ(j)$. The proof for the other cases follow the same idea. The proof is based on the cross-check between nontrivial question pair $(i,\Succ(i),x,x)$ and $(i,j,x,y)$ on one hand and the cross-check between nontrivial question pair $(i,j,x,y)$ and $(j,\Succ(j),y,y)$ on the other hand. We derive consequences of the fact that, conditioned on players receiving questions $(i,\Succ(i),x,x)$ and $(i,j,x,y)$, they win instance $i$ of the Magic Square with high probability. Similarly we derive consequences of the fact that, conditioned on players receiving questions $(j,\Succ(j),y,y)$ and $(i,j,x,y)$, they win instance $j$ of Magic Square with high probability. The consequences we derive are then used to prove the desired approximate commutation relations. 

Recall that by the winning conditions of the Magic Square game, if players win (in the Magic Square game) when receiving the same question, then they must have responded with the same answer. This can be expressed as
\begin{align*}
    \sum_{a\in \cal{A}_\ms} \sum_{b,c \in \cal{A}_\ms} \tau(M^{i,\Succ(i),x,x}_{a,b} \, M^{i,j,x,y}_{a,c}) \geq 1-|\cal{X}_{\twoofnms}|^2\eps~,
\end{align*}
or in other words
\begin{align*}
    \sum_{a\in \cal{A}_\ms} \tau(M^{i,x}_{a} \, \sum_c M^{i,j,x,y}_{a,c}) \geq 1-|\cal{X}_{\twoofnms}|^2\eps~.
\end{align*}
In terms of consistency relations this can be expressed as $M^{i,x}_a \simeq_\delta \sum_c M^{i,j,x,y}_{a,c}$. 

Similarly we have
\begin{align*}
    \sum_{b\in \cal{A}_\ms} \sum_{c,d \in \cal{A}_\ms} \tau(M^{j,\Succ(j),y,y}_{b,c} \, M^{i,j,x,y}_{d,b}) \geq 1-|\cal{X}_{\twoofnms}|^2\eps~,
\end{align*}
or in other words
\begin{align*}
    \sum_{a\in \cal{A}_\ms} \tau(M^{j,y}_{b} \, \sum_d M^{i,j,x,y}_{d,b}) \geq 1-|\cal{X}_{\twoofnms}|^2\eps~.
\end{align*}
In terms of consistency relations this can be expressed as $M^{j,y}_b \simeq_\delta \sum_c M^{i,j,x,y}_{c,b}$. 

Using \Cref{lem:consistency-consequences} we turn the consistency relations to the following closeness relations
\[
    M^{i,x}_a \approx_\delta \sum_c M^{i,j,x,y}_{a,c}~, M^{j,y}_b \approx_\delta \sum_d M^{i,j,x,y}_{d,b}~,
\]
where $\delta$ is some proper error function. Now using \Cref{prop:exchange-operators}, we can write
\begin{align*}
    M^{i,x}_a \, M^{j,y}_b &\approx  \Bigparen{\sum_c M^{i,j,x,y}_{a,c}}\Bigparen{\sum_d M^{i,j,x,y}_{d,b}}\\
    &= \Bigparen{\sum_d M^{i,j,x,y}_{d,b}}\Bigparen{\sum_c M^{i,j,x,y}_{a,c}}\\
    &\approx  M^{j,y}_b \, M^{i,x}_a,
\end{align*}
where the equality follows from the fact that projection operators belonging to the same projective measurement commute.
\end{proof}

\Cref{prop:dim-bound} immediately implies that any strategy that succeeds for the $\twoofnms$ with probability $1 - \eps$ must be on a Hilbert space of dimension at least $(1 - \poly(n)\poly(\delta)) 2^{2n}$, which is nontrivial for $\delta < 1/\poly(n)$. Furthermore, this game is highly efficient because the number of questions and answers grows only \emph{polynomially} with $n$. Observe that
\[
	|\cal{X}_{\twoofnms}| = n^2 \cdot |\cal{X}_\ms|^2 ~, \qquad |\cal{A}_{\twoofnms}| = |\cal{A}_\ms|^2~,
\]
which means that the total number of question and answer pairs for the $\twoofnms$ is $O(n^4)$, where we treat the question and answer sizes of the Magic Square game as constant.

\subsection{The Question Sampling game}\label{sec:question-sampling-game}

For readers who are familiar with quantum information theory, the $\twoofnms$ can be understood in the following way. In the honest strategy for $\twoofnms$ the two players share the state $\ket{\epr}^{\otimes 2n}$ (i.e. $2n$ maximally entangled Bell pairs), and if we assume the perfect strategy for the Magic Square game is the one coming from \Cref{fig:magic-square-operators}, the observables $A^{(1)},\ldots,A^{(2n)},B^{(1)},\ldots,B^{(2n)}$, defined in \Cref{lem:2-of-n-ms}, are $A^{(i)} = Z_i$ and $B^{(i)} = X_i$ where $Z_i$ (resp. $X_i$) represents the $2n$-qubit operator with the $Z$ (resp. $X$) Pauli operator acting on the $i$-th qubit and identity everywhere else.
Then by the rigidity of $\twoofnms$, in any approximately optimal strategy, there are observable that are close to these Pauli operators. These Pauli operators act nontrivially only on a single qubit. However for the question reduction in \Cref{sec:question-reduction}, we need access to the measurements that simultaneously measure blocks of qubits. To achieve this goal, in this section, we extend the $\twoofnms$ by including a few additional questions. By doing so, and as it becomes clear in a moment, we guarantee that any optimal strategy for the extended game must be using these block-qubit measurement operators.

We now introduce a family of synchronous games called \emph{Question Sampling games}, denoted by $\qs = \{ \qs_n \}_{n \in \N}$. The $n$-th Question Sampling game $\qs_n$ is an extension of the $\twoofnms$ where there are four additional questions $\sample_\alice,\sample_\bob,\erase_\alice,\erase_\bob$, where $\sample$ and $\erase$ stand for \emph{sample} and \emph{erase}, respectively. The answers for these additional questions are $n$-bit strings.

In the honest strategy for the Question Sampling game (which we formally introduce in a moment), the $\sample_\alice$ (resp. $\sample_\bob$) measurement is supposed to correspond to measuring the first $n$ (resp. second $n$) EPR pairs in the standard basis, whereas the $\erase_\alice$ (resp. $\erase_\bob$) measurement is supposed to correspond to measuring the first $n$ (resp. second $n$) EPR pairs in a complementary basis. 

The rigidity of the $\twoofnms$ (\Cref{lem:2-of-n-ms}) implies that measurements of strategy with high winning probability give rise to $2n$ pairs of (approximately) anticommuting observables $(A^{(i)},B^{(i)})_{i \in [2n]}$, and the observables (approximately) commute across different pairs. This rigidity guarantee is also present for the Question Sampling game $\qs_n$, but furthermore the measurements corresponding to the additional questions also satisfy the following:
\begin{itemize}
	\item The measurements corresponding to $\sample_\alice$ (resp. $\sample_\bob$) are approximately consistent with ``simultaneously measuring'' the observables $A^{(1)},\ldots,A^{(n)}$ (resp. $A^{(n+1)},\ldots,A^{(2n)}$) to produce an $n$-bit string answer.
	\item The measurements corresponding to $\erase_\alice$ (resp. $\erase_\bob$) are approximately consistent with ``simultaneously measuring'' the observables $B^{(1)},\ldots,B^{(n)}$ (resp. $B^{(n+1)},\ldots,B^{(2n)}$) to produce an $n$-bit string answer.	
\end{itemize}
Here, ``approximate consistency'' is used in the sense defined in \Cref{sec:measurements}. Furthermore, since the observables referred to in each item above only approximately commute with each other, the notion of simultaneous measurement is only meant in an approximate sense; we formalize this below in \Cref{thm:rigidity-of-question-sampling}.

We now formally define the game $\qs_n = (\cal{Q}_n,\cal{X}_n,D_{\qs_n})$. Its question set is defined to be $\cal{Q}_n = \cal{X}_{\twoofnms} \cup \{ \sample_\alice, \sample_\bob, \erase_\alice, \erase_\bob \}$, and thus $|\cal{Q}_n| = \poly(n)$. Its answer set is defined to be $\cal{X}_n = \cal{A}_{\twoofnms} \cup \{0,1\}^n$, and thus $|\cal{X}_n| = O(2^n)$.

\begin{remark}
The Question Sampling game and the Introspection game, appearing in the next section, are the only games in this paper for which we use the symbol $\cal{Q}$ (instead of $\cal{X}$) to refer to the question set. In fact, for the Question Sampling game the letter $\cal{X}$ is reserved for the answer set. The reason for this convention is because, as the name suggests, the Question Sampling game is meant to sample a question pair $(x,y)$ for another game (this should become clearer in the section on Introspection games).
\end{remark}

The nontrivial questions and winning conditions of the decision procedure $D_{\qs_n}(q,r,x,y)$ are specified as follows (note that the answers are now denoted $(x,y)$). We only consider the case of even $n$. The case of odd $n$ is slightly more tedious to write down.

\begin{table}[H]
\centering
\begin{tabularx}{\textwidth}{ L{8cm} L{8cm} } 
 \toprule
 \textbf{Nontrivial Question Pair $(q,r)$} & \textbf{Winning Condition on Answers $(x,y)$} \\
 \midrule
 	$q = r$ &  $ x=y$ \\
 \midrule
 \midrule
$(q,r)$ is a nontrivial question for $\twoofnms$ & $D_{\twoofnms}(q,r,x,y) = 1$ \\
 \midrule
 \midrule
 $q = (i,j,s_{11},.) \in \cal{X}_{\twoofnms}$ where $i \leq \frac{n}{2}, j > \frac{n}{2}$, and $r = \sample_\alice$ & $x = (a_i,a_j) \in \cal{A}_\ms^2$, $y \in \{0,1\}^n$, and $y_{2i-1} = a_i$ \\
 \midrule
 $q = (i,j,s_{12},.) \in \cal{X}_{\twoofnms}$ where $i \leq \frac{n}{2}, j > \frac{n}{2}$, and $r = \sample_\alice$ & $x = (a_i,a_j) \in \cal{A}_\ms^2$, $y \in \{0,1\}^n$, and $y_{2i} = a_i$ \\
  \midrule
 $q = (i,j,s_{11},.) \in \cal{X}_{\twoofnms}$ where $i > \frac{n}{2}, j \leq \frac{n}{2}$, and $r = \sample_\bob$ & $x = (a_i,a_j) \in \cal{A}_\ms^2$, $y \in \{0,1\}^n$, and $y_{2(i-\frac{n}{2})-1} = a_i$ \\
 \midrule
 $q = (i,j,s_{12},.) \in \cal{X}_{\twoofnms}$ where $i > \frac{n}{2}, j \leq \frac{n}{2}$, and $r = \sample_\bob$ & $x = (a_i,a_j) \in \cal{A}_\ms^2$, $y \in \{0,1\}^n$, and $y_{2(i-\frac{n}{2})} = a_i$ \\
 \midrule
 \midrule
 $q = (i,j,s_{22},.) \in \cal{X}_{\twoofnms}$ where $i \leq \frac{n}{2}, j > \frac{n}{2}$, and $r = \erase_\alice$ & $x = (a_i,a_j) \in \cal{A}_\ms^2$, $y \in \{0,1\}^n$, and $y_{2i-1} = a_i$ \\
 \midrule
 $q = (i,j,s_{21},.) \in \cal{X}_{\twoofnms}$ where $i \leq \frac{n}{2}, j > \frac{n}{2}$, and $r = \erase_\alice$ & $x = (a_i,a_j) \in \cal{A}_\ms^2$, $y \in \{0,1\}^n$, and $y_{2i} = a_i$ \\
  \midrule
 $q = (i,j,s_{22},.) \in \cal{X}_{\twoofnms}$ where $i > \frac{n}{2}, j \leq \frac{n}{2}$, and $r = \erase_\bob$ & $x = (a_i,a_j) \in \cal{A}_\ms^2$, $y \in \{0,1\}^n$, and $y_{2(i-\frac{n}{2})-1} = a_i$ \\
 \midrule
 $q = (i,j,s_{21},.) \in \cal{X}_{\twoofnms}$ where $i > \frac{n}{2}, j \leq \frac{n}{2}$, and $r = \erase_\bob$ & $x = (a_i,a_j) \in \cal{A}_\ms^2$, $y \in \{0,1\}^n$, and $y_{2(i-\frac{n}{2})} = a_i$ \\
 \bottomrule  
\end{tabularx}
\caption{The nontrivial question pairs and winning conditions for the $n$-th Question Sampling game. We used dot for example in $(i,j,s_{11},.) \in \cal{X}_{\twoofnms}$ to indicate that the fourth coordinate does not matter as long as the quadruple is a valid question in $\cal{X}_{\twoofnms}$.}
\label{tab:question-sampling}
\end{table}

We now to describe an oracularizable synchronous strategy for $\qs_n$ with value $1$. Let $\strategy_\ms = (\tau,\{M^q\}_{q\in \cal{X}_\ms})$ be the honest strategy for the Magic Square game on the Hilbert space $\hilb_{\ms} = \complex^4$ and let $\strategy_{\twoofnms} = (\tau^{\otimes n},\{M^q\}_{q\in \cal{X}_\twoofnms})$ be its extension to a perfect oracularizable synchronous strategy for the $\twoofnms$ as defined in \Cref{sec:2-of-n-ms}. We extend this to a perfect finite-dimensional oracularizable synchronous strategy $\strategy_{\qs_n}$ for $\qs_n$. 

For every $y \in \{0,1\}^n$ define 
\begin{align*}
M^{\sample_\alice}_y &\coloneqq M^{s_{11}}_{y_1} \, M^{s_{12}}_{y_2} \otimes M^{s_{11}}_{y_3} \, M^{s_{12}}_{y_4} \otimes \cdots \otimes M^{s_{11}}_{y_{n-1}} \, M^{s_{12}}_{y_n} \otimes \id_{\complex^{2^n}},\\
M^{\sample_\bob}_y &\coloneqq  \id_{\complex^{2^n}} \otimes M^{s_{11}}_{y_1} \, M^{s_{12}}_{y_2} \otimes M^{s_{11}}_{y_3} \, M^{s_{12}}_{y_4} \otimes \cdots \otimes M^{s_{11}}_{y_{n-1}} \, M^{s_{12}}_{y_n},\\
M^{\erase_\alice}_y &\coloneqq M^{s_{22}}_{y_1} \, M^{s_{21}}_{y_2} \otimes M^{s_{22}}_{y_3} \, M^{s_{21}}_{y_4} \otimes \cdots \otimes M^{s_{22}}_{y_{n-1}} \, M^{s_{21}}_{y_n} \otimes \id_{\complex^{2^n}},\\
M^{\erase_\bob}_y &\coloneqq \id_{\complex^{2^n}} \otimes M^{s_{22}}_{y_1} \, M^{s_{21}}_{y_2} \otimes M^{s_{22}}_{y_3} \, M^{s_{21}}_{y_4} \otimes \cdots \otimes M^{s_{22}}_{y_{n-1}} \, M^{s_{21}}_{y_n}.
\end{align*}
Note that measurements $M^{s_{11}}$ and $M^{s_{12}}$ (and similarly $M^{s_{22}}$ and $M^{s_{21}}$) of the honest Magic Square strategy commute as they belong to the same row. It is easily verified that $\{M^{\sample_\alice}_y\}, \{M^{\sample_\bob}_y\}, \{M^{\erase_\alice}_y\}, \{M^{\erase_\bob}_y\}$ are projective measurements and that $\strategy_{\qs_n} = (\tau^{\otimes n},\{M^{q}\}_{q \in \cal{Q}_{\qs_n}})$ is a synchronous strategy for $\qs_n$.\footnote{If we take the Magic Square strategy from \Cref{fig:magic-square-operators}, these formulas simplify to
\begin{align*}
M_y^{S_A} &= \ketbra{y}{y}\otimes \id,\\
M_y^{S_B} &= \id \otimes \ketbra{y}{y},\\
M_y^{\erase_A} &= H^{\otimes n} \ketbra{y}{y} H^{\otimes n} \otimes \id,\\
M_y^{\erase_B} &= \id \otimes H^{\otimes n} \ketbra{y}{y} H^{\otimes n},
\end{align*}
where $H = \frac{1}{\sqrt{2}}\begin{bmatrix}1 & 1\\ 1 & -1\end{bmatrix}$ is the Hadamard transform. } 

Next we show that $\strategy_{\qs_n}$ wins with probability $1$. Fix an $i \leq \frac{n}{2}, j > \frac{n}{2}, t \in \cal{X}_\ms$. Conditioned on players receiving the nontrivial question pair $((i,j,s_{11},t), \sample_\alice)$, which corresponds to the third row in \Cref{tab:question-sampling}, the probability of winning is
\begin{align*}
    \sum_{a \in \cal{A}_\ms} \sum_{y \in \{0,1\}^n} \tau(M^{i,j,s_{11},t}_{y_{2i-1},a} M^{\sample_\alice}_y) = \sum_{y \in \{0,1\}^n} \tau(M^{i,s_{11}}_{y_{2i-1}} M^{\sample_\alice}_y)
    = \sum_{y \in \{0,1\}^n} \tau(M^{\sample_\alice}_y)
    = 1,
\end{align*}
in which $M^{i,s_{11}}_{y_{2i-1}}$ is defined to be the marginal \[M^{i,s_{11}}_{y_{2i-1}} \coloneqq \sum_{a \in \cal{A}_\ms} M^{i,j,s_{11},t}_{y_{2i-1},a}
    = \id_{\hilb_\ms}^{i-1} \otimes M^{s_{11}}_{y_{2i-1}}\otimes \id_{\hilb_\ms}^{n-i-1}.\]
It is similarly verified that the probability of winning conditioned on any other question pair is $1$.

Since $\strategy_{\twoofnms}$ is oracularizable in $\twoofnms$, to verify the oracularizability of $\strategy_{\qs_n}$ we just need to check commutativity between measurements for $\sample_\alice,\sample_\bob,\erase_\alice,\erase_\bob$ on one hand and measurements for $(i,j,q_i,q_j)$ on the other hand. This follows very easily from the construction of the measurements \[M^{\sample_\alice},M^{\sample_\bob},M^{\erase_\alice},M^{\erase_\bob}\]

Finally we note that in the honest strategy $\tau(M^{\sample_\alice}_x M^{\sample_\bob}_y) = 2^{-2n}$ (and similarly $\tau(M^{\erase_\alice}_x M^{\erase_\bob}_y) = 2^{-2n}$) for all $x,y \in \{0,1\}^{n}$. We see in a moment that approximately optimal strategies approximately satisfy these relations. 

Let $\strategy = (\tau,\{M^q\}_{q\in \cal{Q}_{\qs_n}})$ be a synchronous strategy for the Question Sampling game. For convenience we use the notational shorthand
\begin{gather*}
	S^\alice_x = M^{\sample_\alice}_x \qquad \text{and} \qquad S^\bob_x = M^{\sample_\bob}_x \\
	E^\alice_x = M^{\erase_\alice}_x \qquad \text{and} \qquad E^\bob_x = M^{\erase_\bob}_x
\end{gather*}
for all $x \in \{0,1\}^n$. We also define a family of observables derived from these measurements as follows. For all $u \in \{0,1\}^n$, 
\begin{gather*}
	O^{\sample_\alice}_u = \sum_{x \in \{0,1\}^n} (-1)^{u \cdot x} \, S^\alice_x \qquad \text{and} \qquad O^{\sample_\bob}_u = \sum_{x \in \{0,1\}^n} (-1)^{u \cdot x} \, S^\bob_x \\
	O^{\erase_\alice}_u = \sum_{x \in \{0,1\}^n} (-1)^{u \cdot x} \, E^\alice_x \qquad \text{and} \qquad O^{\erase_\bob}_u = \sum_{x \in \{0,1\}^n} (-1)^{u \cdot x} \, E^\bob_x~.
\end{gather*}
Note that by construction these are self-adjoint unitaries, and therefore observables. 
We call $S^\alice,S^\bob$ (resp. $E^\alice,E^\bob$) \emph{sampling measurements} (resp. \emph{erasure measurements}), and $O^{\sample_\alice}, O^{\sample_\bob}$ (resp. $O^{\erase_\alice}, O^{\erase_\bob}$) \emph{sampling observables} (resp. \emph{erasure observables})~. In what follows we write $\ca = \bob, \cb = \alice$.

\begin{theorem}[Rigidity of the Question Sampling game]\label{thm:rigidity-of-question-sampling}
Let $\strategy = (\tau,\{M^q\}_{q\in\cal{Q}_n})$ be a synchronous strategy such that $\val(\qs_n,\strategy) \geq 1 - \eps$. Then for all $W \in \{\alice,\bob\}$, 
\begin{enumerate}
\item The sampling (resp. erasure) measurements almost commute with one another, that is for every $x,y \in \{0,1\}^n$
\begin{align*}
    S^\alice_{x} S^\bob_{y} \approx S^\bob_{y} S^\alice_{x} \qquad \text{and} \qquad E^\alice_{x} E^\bob_{y} \approx E^\bob_{y} E^\alice_{x}~.
\end{align*}
\item Sampling measurements $S_W$ almost commute with erasure measurements $E_\cw$, that is, for every $x,y \in \{0,1\}^n$,
\begin{align*}
    S^W_x E^{\cw}_y \approx E^{\cw}_y S^W_x.
\end{align*}

\item The erasure observables $O^{\erase_W}$ approximately permute the sampling measurements $S^W$ and vice versa. That is, for every $u,x \in \{0,1\}^n$,
\begin{align*}
    O^{\erase_W}_u S^W_x O^{\erase_W}_u \approx S^W_{x+u} \qquad \text{and} \qquad O^{\sample_W}_u E^W_x O^{\sample_W}_u \approx E^W_{x+u}~.
\end{align*}
where the arithmetic in the subscript is bitwise XOR.
\item Finally, for all $x,y \in \{0,1\}^n$, 
\begin{gather*}
	\tau(S^{W}_x) \approx 2^{-n} \qquad \text{and} \qquad \tau(S^{W}_x S^{\cw}_{y}) \approx 2^{-2n}~,\\
	\tau(E^{W}_x) \approx 2^{-n} \qquad \text{and} \qquad \tau(E^{W}_x E^{\cw}_{y}) \approx 2^{-2n}~.
\end{gather*}
\end{enumerate}
\end{theorem}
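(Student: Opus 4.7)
The plan is to relate the sampling and erasure measurements to the observables produced by the embedded $\twoofnms$ subgame, and then transfer the commutation and anticommutation relations guaranteed by \Cref{lem:2-of-n-ms}. By an averaging argument applied to each cross-check row in \Cref{tab:question-sampling} (for instance, the pair $((i,j,s_{11},\cdot),\sample_\alice)$), the bit marginal $S^W_{[y\mapsto y_k]}$ is $\delta(\eps)$-consistent with the variable-type measurement corresponding to the appropriate $A^{(k')}$ observable, with $k'$ running over indices $1,\ldots,n$ when $W=\alice$ and $n+1,\ldots,2n$ when $W=\bob$; analogously, $E^W_{[y\mapsto y_k]}$ is $\delta(\eps)$-consistent with the measurement of the corresponding $B^{(k')}$. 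Converting these to closeness in $\tau$-norm via \Cref{lem:consistency-consequences} and defining the bit observables $T^{S_W,k}=S^W_{[y\mapsto y_k=0]}-S^W_{[y\mapsto y_k=1]}$ (and $T^{E_W,k}$ analogously), we obtain that each $T^{S_W,k}$ (resp.\ $T^{E_W,k}$) is $\delta(\eps)$-close in $\tau$-norm to the matching $A^{(k')}$ (resp.\ $B^{(k')}$).

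By \Cref{lem:2-of-n-ms}, the observables $\{A^{(k)},B^{(k)}\}$ approximately satisfy $A^{(k)}B^{(k)}\approx -B^{(k)}A^{(k)}$ while all other pairs approximately commute; repeated applications of \Cref{prop:exchange-operators} transfer these to the bit observables. Within a single $S^W$ or $E^W$ the bit observables commute exactly, being spectral projections of operators from one projective POVM; across the four measurements they approximately commute whenever the corresponding index sets are disjoint, and $T^{S_W,k}$ with $T^{E_W,k}$ at matching indices approximately anticommute. Parts 1 and 2 now follow by expanding $S^W_x=\prod_k \tfrac{1}{2}(\id+(-1)^{x_k}T^{S_W,k})$, where all factors commute exactly, and analogously for $E^W_x$; one then swaps factors between the two product expansions using \Cref{prop:exchange-operators} and the transferred commutation. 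The $\poly(n)$ exchanges each introduce a $\delta(\eps)$ error and the partial products are contractions, so the total remains a proper error function.

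For Part 3, expand $O^{\erase_W}_u=\prod_k(T^{E_W,k})^{u_k}$ and insert the identity $(O^{\erase_W}_u)^2=\id$ between successive factors of $S^W_x$; each per-factor conjugation transforms $T^{S_W,k}$ into $(-1)^{u_k}T^{S_W,k}$ by moving it past the $T^{E_W,j}$ with $j\neq k$ (which approximately commute with it) and through the single anticommuting $(T^{E_W,k})^{u_k}$, yielding $S^W_{x+u}$ up to accumulated $\delta(\eps)$ error; the symmetric claim with $S$ and $E$ swapped is identical. Finally, for Part 4, traciality and unitarity of $O^{\erase_W}_u$ give $\tau(S^W_x)=\tau(O^{\erase_W}_u S^W_x O^{\erase_W}_u)\approx \tau(S^W_{x+u})$, so all $2^n$ values of $\tau(S^W_x)$ are approximately equal and each is $\approx 2^{-n}$; for the joint trace, Part 2 ensures $O^{\erase_W}_u$ approximately commutes with $S^{\cw}_y$, so simultaneous conjugation by $O^{\erase_W}_u$ and $O^{\erase_{\cw}}_v$ independently shifts $x$ and $y$, forcing all $2^{2n}$ joint traces to be approximately equal and hence each $\approx 2^{-2n}$. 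The main bookkeeping obstacle is Part 3's polynomially many per-factor exchanges inside an $n$-fold product: one must verify that the aggregate $\tau$-norm error remains a proper error function, which is afforded by the polynomial-in-$n$ slack permitted for $\delta$.
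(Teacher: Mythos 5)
Your proposal is correct and follows essentially the same route as the paper: both proofs reduce the sampling and erasure measurements to products of bit projections that are $\tau$-close to the $A^{(k)},B^{(k)}$ observables guaranteed by the rigidity of $\twoofnms$, then transfer the (anti)commutation relations and deduce item~4 from items~1--3 via conjugation by erasure observables and traciality. The one step you gloss over — that the cross-check with $(i,j,s_{11},\cdot)$ yields consistency with the $\twoofnms$ marginal $\sum_{c}M^{i,j,s_{11},\cdot}_{b,c}$, and that an additional appeal to the $\twoofnms$ rigidity (consistency between $M^{i,j,s_{11},\cdot}$ and $M^{i,\Succ(i),s_{11},s_{11}}$ marginals) is needed to reach the canonical marginal $M^{i,s_{11}}_b$ defining $A^{(2i-1)}$ — is handled explicitly in the paper's argument but does not affect the correctness of your outline.
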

We explained the usage of $\approx$ in \Cref{subsec:asymptotics}. For a detailed example see the proof of \Cref{thm:rigidty-of-magic-square}.

\begin{proof}
By the winning conditions of the game, for all $i \leq n/2$ and $j > n/2$, we have 
\begin{align*}
1 - \delta(\eps) &\geq \sum_{b,c \in \{0,1\}}\sum_{\substack{x\in\{0,1\}^n:\\x_{2i-1} = b}} \trace{S^\alice_x  M^{i,j,s_{11},s_{11}}_{b,c}}\\
&= \sum_{b \in \{0,1\}} \trace{S^\alice_{[x\mapsto x_{2i-1}|b]}  \Bigparen{\sum_{c\in \{0,1\}} M^{i,j,s_{11},s_{11}}_{b,c}}}.
\end{align*}
By the proof of rigidity of $\twoofnms$ we have $M^{i,s_{11}}_b \approx \sum_{c\in \{0,1\}} M^{i,j,s_{11},s_{11}}_{b,c}$
where $M^{i,s_{11}}_b$ is the marginal $\sum_{c\in \{0,1\}} M^{i,\Succ(i),s_{11},s_{11}}_{b,c}$ as defined in the previous section. So we can rewrite our earlier inequality as
\[
\sum_{b \in \{0,1\}} \trace{S^\alice_{[x\mapsto x_{2i-1}|b]} M^{i,s_{11}}_b} \geq 1 - \delta(\eps)~.
\]
Using \Cref{lem:consistency-consequences} we can write this as closeness relation \[S^\alice_{[x\mapsto x_{2i-1}|b]} \approx M^{i,s_{11}}_{b}.\]  With a similar argument we obtain \[S^\alice_{[x\mapsto x_{2i}|b]} \approx M^{i,s_{12}}_{b}.\] Now using the identity \[S^\alice_x = \prod_{i=1}^n S^\alice_{[y\mapsto y_i|x_i]}\] and repeated applications of \Cref{prop:exchange-operators}, we obtain
\[S^\alice_x \approx \prod_{i=1}^{n/2} M^{i,s_{11}}_{x_{2i-1}}M^{i,s_{12}}_{x_{2i}}~.\]
With a similar argument we obtain
\begin{align*}
    S^\bob_x &\approx \prod_{i=1}^{n/2} M^{i+n/2,s_{11}}_{x_{2i-1}}M^{i+n/2,s_{12}}_{x_{2i}}~,\\
    E^\alice_x &\approx \prod_{i=1}^{n/2} M^{i,s_{22}}_{x_{2i-1}}M^{i,s_{21}}_{x_{2i}}~,\\
    E^\bob_x &\approx \prod_{i=1}^{n/2} M^{i+n/2,s_{22}}_{x_{2i-1}}M^{i+n/2,s_{21}}_{x_{2i}}~.
\end{align*}
Now by the definition of the sampling and erasure observables, we have 
\begin{align*}
    O^{\sample_\alice}_u &\approx (A^{(1)})^{u_1}(A^{(2)})^{u_2}\cdots (A^{(n)})^{u_n}~,\\
    O^{\sample_\alice}_u &\approx (A^{(n/2+1)})^{u_1}(A^{(n/2+2)})^{u_2}\cdots (A^{(n)})^{u_n}~,\\
    O^{\erase_\alice}_u &\approx (B^{(1)})^{u_1}(B^{(2)})^{u_2}\cdots (B^{(n)})^{u_n}~,\\
    O^{\erase_\alice}_u &\approx (B^{(n/2+1)})^{u_1}(B^{(n/2+2)})^{u_2}\cdots (B^{(n)})^{u_n}~,\\
\end{align*}
where $A^{(i)}$ and $B^{(j)}$ are as defined in \Cref{lem:2-of-n-ms}. Properties 1-3 now follow easily from the rigidity of $\twoofnms$ in \Cref{lem:2-of-n-ms}.

Finally, we prove 4 using 1-3. We have $O^{E_W}_x S^W_x O^{E_W}_x \approx S^{W}_{0^n}$ for every $x\in \{0,1\}^n$. Applying Proposition \ref{prop:exchange-operators}, we obtain $\tau(O^{E_W}_x S^{W}_x O^{E_W}_x) \approx \tau(S^{W}_{0^n})$. By cyclicity of tracial states we have $\tau(S^{W}_x) \approx \tau(S^{W}_{0^n})$. Now \[1 = \tau(\sum_x S^{W}_x) \approx 2^n \tau(S^{W}_{0^n}),\] from which we get that $\tau(M^{S_W}_{0^n}) \approx 2^{-n}$. Similarly $\tau(S^{W}_x) \approx 2^{-n}$ for $x\neq 0^n$.

Similar to the above line of reasoning, by repeated applications of \Cref{prop:exchange-operators} we have
\begin{align*}
    1 &= \sum_{x,y} \tau(S_x^{W} S_{y}^{\cw})\\
      &= \sum_{x,y} \tau((O_x^{\erase_W})^2 (O_{y}^{\erase_\cw})^2 S_x^{W} S_{y}^{\cw})\\
      &\approx \sum_{x,y} \tau(O_x^{\erase_W} S_x^{W} O_x^{\erase_W} O_{y}^{\erase_\cw} S_{y}^{\cw} O_{y}^{\erase_\cw})\\
      &\approx \sum_{x,y} \tau(S_{0^n}^{W} S_{0^{n}}^{\cw})\\
      &= 2^{2n} \tau(S_{0^n}^{W} S_{0^{n}}^{\cw}).
\end{align*}
In the first approximation we used the fact that $W$ operators approximately commute with $\cw$ operators. The proof for erasure measurements is identical.
\end{proof}

\begin{corollary}[Entanglement bound for Question Sampling] \label{cor:entanglement-bound-question-sampling}Let $\strategy = (\tau,\{M^q\}_{q\in \cal{Q}_n})$ be a synchronous strategy for $\qs_n$ over a von Neumann algebra $\algebra \subset B(\hilb)$. If $\val(\qs_n,\strategy)\geq 1-\eps$ for sufficiently small $\epsilon > 0$, then $\dim(\hilb) > (1-\delta(n,\eps))2^{2n}$. 

Furthermore there exists a projection $\Pi \in \algebra$ such that $\tau(\Pi) \approx 2^{-2n}$ and $\Pi \approx S^{\alice}_{0^n}S^{\bob}_{0^n}$.
\end{corollary}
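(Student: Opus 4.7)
The plan is to establish the two assertions separately, both leveraging the rigidity results just developed and the abstract toolkit from \Cref{section:preliminaries}.

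For the dimension bound, I would combine \Cref{lem:2-of-n-ms} with \Cref{prop:dim-bound}. Restricting a winning strategy for $\qs_n$ to the question pairs of $\twoofnms$ (which is an embedded sub-game) yields a synchronous strategy for $\twoofnms$ of value at least $1-\eps$ on the relevant pairs, so \Cref{lem:2-of-n-ms} supplies $2n$ pairs of self-adjoint unitaries $(A^{(k)}, B^{(k)})_{k \in [2n]}$ in $\algebra$ satisfying the hypotheses of \Cref{prop:dim-bound} with error $\delta(n,\eps)$. Invoking \Cref{prop:dim-bound} with the parameter $n$ replaced by $2n$ immediately gives $\dim(\hilb) \geq (1 - \delta(n,\eps)) 2^{2n}$, which is nonvacuous once $\eps$ is small enough that $\delta(n,\eps) < 1$.

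For the projection $\Pi$, the key observation is that $S^\alice_{0^n}$ and $S^\bob_{0^n}$ are projections which approximately commute by part~1 of \Cref{thm:rigidity-of-question-sampling}, so their product is approximately a projection. To bring this into the hypothesis of \Cref{lem:projectivization} I would package it in a POVM. Define
\[
    M_{x,y} := S^\alice_x\, S^\bob_y\, S^\alice_x \qquad \text{for } x,y \in \{0,1\}^n.
\]
Each $M_{x,y}$ is positive, and using $\sum_y S^\bob_y = \id$ and $(S^\alice_x)^2 = S^\alice_x$ we have $\sum_{x,y} M_{x,y} = \sum_x S^\alice_x = \id$, so $\{M_{x,y}\}$ is a genuine POVM. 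Applying part~1 of \Cref{thm:rigidity-of-question-sampling} together with repeated use of \Cref{prop:exchange-operators} and the fact that the $S^\alice_x$ and $S^\bob_y$ are projections, one checks
\[
    \sum_{x,y} M_{x,y}^2 \;\approx\; \sum_{x,y} M_{x,y} = \id,
\]
so $\sum_{x,y}\tau(M_{x,y} - M_{x,y}^2) \to 0$ as $\eps \to 0$. \Cref{lem:projectivization} then yields a projective measurement $\{\Pi_{x,y}\} \subset \algebra$ with $\Pi_{x,y} \approx M_{x,y}$, and in particular $\|\Pi_{0^n,0^n} - M_{0^n,0^n}\|_\tau$ is a proper error function of $\eps$.

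Set $\Pi := \Pi_{0^n, 0^n}$. Combining $\Pi \approx M_{0^n,0^n}$ with $M_{0^n,0^n} = S^\alice_{0^n} S^\bob_{0^n} S^\alice_{0^n} \approx S^\alice_{0^n} S^\bob_{0^n}$ (by approximate commutation once more, and $(S^\alice_{0^n})^2 = S^\alice_{0^n}$) gives $\Pi \approx S^\alice_{0^n} S^\bob_{0^n}$. For the trace, cyclicity of $\tau$ and \Cref{prop:exchange-operators} give $\tau(\Pi) \approx \tau(M_{0^n,0^n}) = \tau(S^\bob_{0^n} S^\alice_{0^n}) \approx 2^{-2n}$, using part~4 of \Cref{thm:rigidity-of-question-sampling}. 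The only real obstacle is bookkeeping: several $\approx$-steps must be composed into a single proper error function, but this is routine under the conventions of \Cref{subsec:asymptotics}.
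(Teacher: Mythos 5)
Your proof is correct and uses the same key ingredients as the paper (\Cref{lem:2-of-n-ms} + \Cref{prop:dim-bound} for the dimension bound; \Cref{lem:projectivization} to produce $\Pi$), but the projectivization step goes through a different POVM. The paper feeds the Projectivization Lemma the \emph{two-outcome} POVM $\{M, \id - M\}$ with $M = S^\alice_{0^n}S^\bob_{0^n}S^\alice_{0^n}$ (checking $0 \preceq S^\alice_{0^n}(\id - S^\bob_{0^n})S^\alice_{0^n} \preceq \id - M$ to confirm it is a POVM), so the almost-projectivity condition requires only a single application of the approximate commutation $S^\alice_{0^n}S^\bob_{0^n} \approx S^\bob_{0^n}S^\alice_{0^n}$ and the resulting error is a single $\delta(n,\eps)$. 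You instead use the $2^{2n}$-outcome POVM $\{M_{x,y}\}_{x,y}$, which is arguably the more natural completion of $M$ to a measurement and avoids the little positive-semidefinite check — but the almost-projectivity condition $\sum_{x,y}\tau(M_{x,y}-M_{x,y}^2) \to 0$ now sums the pointwise commutation errors over all $2^{2n}$ pairs, so the error fed into $\delta_{\mathrm{proj}}$ picks up an exponential factor in $n$, and the extra elements $\Pi_{x,y}$ with $(x,y)\neq(0^n,0^n)$ are never used. In the gapless regime this is harmless (everything still $\to 0$ as $\eps \to 0$ with $n$ fixed, which is all the corollary asserts), but the paper's two-outcome choice is the more economical bookkeeping — worth keeping in mind when you do want to track constants.
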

\begin{proof}
The inequality $\dim(\hilb) > (1-\delta(n,\eps))2^n$ is immediate from \Cref{lem:2-of-n-ms} and \Cref{prop:dim-bound}. We now prove $\Pi$ exists. Let $M = S^\alice_{0^n}S^\bob_{0^n}S^\alice_{0^n}$ and note that $\{M,\id-M\}$ is a POVM. Indeed we have $0 \preceq S^{\alice}_{0^n}(\id-S^{\bob}_{0^n})S^{\alice}_{0^n} \preceq \id - M$ in positive semidefinite ordering. Since $S^\alice_{0^n}$ and $S^\bob_{0^n}$ approximately commute, we can write
\begin{align*}
    M^2 &= S^\alice_{0^n}S^\bob_{0^n}S^\alice_{0^n}S^\alice_{0^n}S^\bob_{0^n}S^\alice_{0^n}\\
    &\approx S^\alice_{0^n}S^\bob_{0^n}S^\alice_{0^n}\\
    &= M.
\end{align*}
Therefore we also have $(\id-M)^2 = \id - 2M + M^2 \approx \id-M$. So we can apply Lemma \ref{lem:projectivization} to obtain a projection $\Pi \in \algebra$ such that $\Pi \approx S^\alice_{0^n}S^\bob_{0^n}S^\alice_{0^n}$. Now again since $S^\alice_{0^n}$ and $S^\bob_{0^n}$ approximately commute, we get that $\Pi \approx S^\alice_{0^n}S^\bob_{0^n}$. An application of Proposition \ref{prop:exchange-operators} gives us $\tau(\Pi) \approx \tau(S^\alice_{0^n}S^\bob_{0^n})$. The result $\tau(\Pi) \approx 2^{-2n}$ now follows from item 4 in the preceding theorem.

\end{proof}

We finish this section by stating a technical lemma. The lemma holds in a more general setting but here we restricted attention only to the Question Sampling game.

\begin{lemma}\label{lem:new-algebra}
Let $\strategy = (\tau,\{M^q\}_{q\in \cal{Q}_n})$ be a synchronous strategy for $\qs_n$ over a von Neumann algebra $\algebra \subset B(\hilb)$ and suppose $\val(\qs_n,\strategy) \geq 1-\eps$. Also let $\Pi$ be the projection in the preceding corollary and let $\hat{\hilb}$ be the subspace $\Pi$ projects onto. Then the set of operators \[\hat{\algebra} = \{\Pi M \Pi : M \in \algebra\} \subset B(\hat{\hilb})\] is a von Neumann algebra with unit $\Pi$. Furthermore, the functional $\sigma: B(\hat{\hilb}) \to \C$ defined by $\sigma(N) = \frac{\tau(N)}{\tau(\Pi)}$, for every $N \in B(\hat{\hilb})$, is a tracial state on $\hat{\algebra}$.
\end{lemma}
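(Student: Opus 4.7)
The plan is to recognize this as the standard ``compression by a projection'' construction in von Neumann algebra theory: given a projection $\Pi$ in a von Neumann algebra $\algebra$, the corner $\Pi \algebra \Pi$ is itself a von Neumann algebra acting on $\Pi \hilb$ with unit $\Pi$, and any tracial state $\tau$ on $\algebra$ descends to a tracial state on the corner after normalizing by $\tau(\Pi)$. Since the projection $\Pi$ from \Cref{cor:entanglement-bound-question-sampling} lies in $\algebra$ and satisfies $\tau(\Pi) \approx 2^{-2n} > 0$ for sufficiently small $\eps$, this framework applies directly, and the proof is largely a matter of routine verification.

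First I would verify that $\hat\algebra$ is a unital $*$-subalgebra of $B(\hat\hilb)$. Linearity and closure under adjoint $(\Pi M \Pi)^* = \Pi M^* \Pi$ are by inspection. For closure under multiplication, one computes $(\Pi M \Pi)(\Pi N \Pi) = \Pi(M \Pi N) \Pi$ and uses $\Pi \in \algebra$ to conclude $M\Pi N \in \algebra$, so the product lies in $\hat\algebra$. The unit is $\Pi = \Pi \cdot \id \cdot \Pi$, which acts as the identity on $\hat\hilb$. Each element $\Pi M \Pi$ annihilates $\hat\hilb^\perp$ and preserves $\hat\hilb$, so it restricts naturally to a bounded operator on $\hat\hilb$. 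For closure in the weak operator topology I would invoke the standard fact (see Blackadar~\cite{blackadar2006operator}) that the compression of a von Neumann algebra by one of its projections is again a von Neumann algebra; alternatively one can argue directly that the map $M \mapsto \Pi M \Pi$ is WOT-continuous on bounded subsets of $\algebra$, so since $\algebra$ is WOT-closed its image $\hat\algebra$ is as well.

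For the functional $\sigma$, unitality is immediate from $\sigma(\Pi) = \tau(\Pi)/\tau(\Pi) = 1$, and positivity and linearity are inherited from $\tau$. For the trace property, given $N = \Pi A \Pi$ and $N' = \Pi B \Pi$, I would apply cyclicity of $\tau$ together with $\Pi^2 = \Pi$ to obtain
\[
    \tau(NN') \;=\; \tau(\Pi A \Pi B \Pi) \;=\; \tau(\Pi B \Pi A) \;=\; \tau(\Pi B \Pi A \Pi) \;=\; \tau(N'N),
\]
where the first cycle moves $\Pi A$ from the front to the back and collapses $\Pi^2 = \Pi$, and the third equality uses the identity $\tau(\Pi C) = \tau(\Pi C \Pi)$ which itself is a one-line consequence of cyclicity and $\Pi^2 = \Pi$. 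Dividing by $\tau(\Pi)$ then gives $\sigma(NN') = \sigma(N'N)$. Normality of $\sigma$ on $\hat\algebra$ is inherited from normality of $\tau$: a family of pairwise orthogonal projections in $\hat\algebra \subset \algebra$ is simultaneously such a family in $\algebra$, and $\sigma$ just rescales $\tau$ by a constant. The only point requiring any care in the whole argument is keeping two viewpoints of $\Pi M \Pi$ straight --- as an element of $\algebra \subset B(\hilb)$ (needed to invoke cyclicity of $\tau$ and to quote WOT-closedness of the corner) versus as an element of $\hat\algebra \subset B(\hat\hilb)$ with unit $\Pi$ --- but there is no substantive mathematical obstacle.
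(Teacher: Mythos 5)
Your proof is correct and follows essentially the same route as the paper: identify $\hat{\algebra}$ as the standard corner $\Pi\algebra\Pi$ of a von Neumann algebra by one of its projections (the paper cites Jones' lecture notes where you cite Blackadar), then check that $\sigma = \tau(\cdot)/\tau(\Pi)$ is a normal tracial state. The only cosmetic difference is that your trace-property computation carries the $\Pi$'s explicitly, whereas the paper dispatches it in one line by observing $\hat{\algebra} \subset \algebra$ and inheriting cyclicity of $\tau$ directly.
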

\begin{proof}
For a proof that $\hat{\algebra}$ is a von Neumann algebra see the section on ``Elementary properties of von Neumann algebras'' in the notes by Vaughan Jones \cite{jones-lecture-notes}.
The functional $\sigma$ is a positive linear functional because $\tau$ is a positive linear functional. It is unital because $\sigma(\id_{\hat{\hilb}}) = \sigma(\Pi) = \tau(\Pi)/\tau(\Pi) =1$. It is cyclic on $\hat{\algebra}$ because $\tau$ is cyclic on $\algebra$ and $\hat{\algebra} \subset \algebra$.
\end{proof}

\newpage
\section{Question Reduction}
\label{sec:question-reduction}

In this section we present the Question Reduction transformation, whose properties are given by the following Theorem.
\begin{theorem}[Question Reduction]
\label{thm:question-reduction}
    For all $\alpha \in \N$, there exists a polynomial-time algorithm $\alg{QuestionReduction}_\alpha$ that takes as input a pair of Turing machines $(D,C)$ and outputs a pair of Turing machines $(D^\intro,C^\intro)$ such that the following holds. If $\verifier = (D,C)$ is a verifier for a sequence of games $\UGS_\verifier = (G_n)_{n \in \N}$ and $n_0 \in \N$ is an integer such that for all $n \geq n_0$, 
      \[
    	\max \Big \{ \TIME_C(n), \TIME_D(n) \Big \} \leq n^\alpha~,
    \]
	then $\verifier^\intro = (D^\intro,C^\intro)$ is a verifier corresponding to a sequence of games $\UGS_{\verifier^\intro} = (G^\intro_n)_{n \in \N}$ with the following properties. There exists $\beta = \poly(\alpha) \in \N$ and $n_0^\intro = \poly(\beta,n_0)$ such that for all $n \geq n_0^\intro$,
    \begin{enumerate}
        \item (Complexity bounds)
        \begin{align*}
        	&\text{The questions of $G_n^\intro$ have length at most $\log^\beta n$, } \\
    &\TIME_{C^{\intro}}(n) \leq \log^\beta n~, \text{ and} \\    	
	&\TIME_{D^{\intro}}(n) \leq n^\beta
		\end{align*}
        \item (Completeness) For all oracularizable synchronous strategies $\strategy$ for $G_n$, there exists an oracularizable synchronous strategy $\strategy^\intro$ for $G_n^\intro$ such that
        \[
            \val(G^{\intro}_n,\strategy^\intro) \geq \val(G_n,\strategy).
        \]
        Furthermore, if $\strategy$ is finite-dimensional, then so is $\strategy^\intro$.
        \item (Soundness) For all $t \in \{q,co\}$ we have 
        \[
        	\val_t^s(G_n) < 1 \Longrightarrow \val_t^s(G^{\intro}_n) < 1~.
		\]
		\item (Entanglement bound)
        \[
        	\mathcal{E} (G_n^{\intro},1) \geq \max \left \{ \mathcal{E} (G_n,1) , 2^{2n} \right \}~. 
        \]
    \end{enumerate}
\end{theorem}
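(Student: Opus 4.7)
\medskip

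\noindent\textbf{Plan of proof.} The plan is to construct the Introspection game $G_n^\intro$ so that, up to an outer ``check'' that the players are using the honest measurements of the Question Sampling game $\qs_{\ell_n}$ of \Cref{sec:question-sampling-game}, the players are forced to (i) sample their own input question to $G_n$ using $\sample_A$ (resp.\ $\sample_B$), (ii) return their answer for the sampled question, and (iii) remain non-signaling about the other player's sampled question (enforced via the $\erase_A, \erase_B$ measurements through rigidity). Concretely, I would let $\cal{Q}_{\ell_n}$ denote the question set of $\qs_{\ell_n}$ (which has size $\poly(\ell_n)=\poly(n)$ and hence descriptions of length $O(\log n)$), and take the question set of $G_n^\intro$ to be $\cal{Q}_{\ell_n}\cup\{\intro_A,\intro_B\}$. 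Answers to $\qs_{\ell_n}$-questions are as in \Cref{tab:question-sampling}; answers to $\intro_A$ (resp.\ $\intro_B$) are of the form $(x,a)$ (resp.\ $(y,b)$) with $x,y\in\{0,1\}^{\ell_n}$ and $a,b$ in the answer alphabet of $G_n$. The decider $D^\intro$ does the following: (1) on questions from $\cal{Q}_{\ell_n}$, it runs the $\qs_{\ell_n}$-decider; (2) on the cross-question $(\intro_A,\sample_A)$, it accepts iff the $x$-component of the $\intro_A$-answer equals the corresponding $\ell_n$-bit string returned for $\sample_A$, and symmetrically for $(\intro_B,\sample_B)$; (3) on question pair $(\intro_A,\intro_B)$ with answers $(x,a),(y,b)$, it accepts iff $D_n(x,y,a,b)=1$, which it checks by invoking $D$ on input $(n,x,y,a,b)$; the synchronous consistency check on $(\intro_A,\intro_A)$ and $(\intro_B,\intro_B)$ is also included. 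The checker $C^\intro$ is built analogously from $C$.

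The complexity bounds are then immediate: question descriptions have length $O(\log n)$ and the checker runs in $\poly\log n$ time (since $\qs_{\ell_n}$'s checker is $\poly(\log n)$-time computable and the only nontrivial check of $C^\intro$ involving $D_n$ is the $(\intro_A,\intro_B)$ pair, which $C^\intro$ can defer to $C$ and $C$ uses input of size $\poly\log n$), while $D^\intro$ runs in $n^\beta$ time because it may need to call $D$ on questions/answers of length $\poly(n)$, and $\TIME_D(n)\le n^\alpha$. Taking $\beta=\poly(\alpha)$ absorbs all constants. Completeness is proved by taking any oracularizable synchronous $\strategy=(\tau,\{M^x\})$ for $G_n$ and tensoring with the honest strategy $\strategy_{\qs_{\ell_n}}=(\tau',\{N^q\})$ described in \Cref{sec:question-sampling-game}: define the $\intro_A$-measurement to be ``measure $\sample_A$ on the $\qs$ register to obtain $x$, then measure $M^x$ on the $G_n$ register to obtain $a$, and return $(x,a)$''. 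The honest $\qs$-strategy uses commuting rank-one projectors for $\sample_A$, so oracularizability is preserved, and all cross-checks are passed exactly. Finite-dimensionality is preserved because $\strategy_{\qs_{\ell_n}}$ is finite-dimensional.

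The soundness step is the main obstacle and the heart of the argument. Given a synchronous $t$-strategy $\strategy^\intro=(\tau,\{M^q\})$ for $G_n^\intro$ of value $1$, first note that conditioning on the $\qs_{\ell_n}$ questions gives a $\qs_{\ell_n}$-strategy of value $1$, so by \Cref{thm:rigidity-of-question-sampling} the sampling/erasure operators $S^A_x,S^B_y,E^A_u,E^B_v$ satisfy the approximate commutation, permutation, and uniform-marginal relations there (with $\eps=0$, hence exactly in the gapless setting). The consistency checks $(\intro_A,\sample_A)$ and $(\intro_B,\sample_B)$ force the ``$x$-register'' of the $\intro_A$-measurement to be projectively consistent with $S^A_x$, and likewise for $\intro_B$ and $S^B_y$, so I may assume the $\intro_A$-measurement has the form $\{S^A_x\cdot N^{A,x}_a\}$ for some POVM $\{N^{A,x}_a\}$, and similarly for $\intro_B$. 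Using the algebra $\hat{\algebra}=\Pi\algebra\Pi$ on $\hat{\hilb}=\Pi\hilb$ from \Cref{lem:new-algebra} with $\Pi\approx S^A_{0^n}S^B_{0^n}$, I then define a synchronous strategy for $G_n$ by pulling back the $N^{A,x}_a$ and $N^{B,y}_b$ to $\hat{\algebra}$ (using the erasure observables $O^{\erase_A}_x,O^{\erase_B}_y$ to ``translate'' the $x,y$ indices to $0^n$, via property 3 of \Cref{thm:rigidity-of-question-sampling}). The tracial-state scaling $\sigma=\tau/\tau(\Pi)$ and the uniform-marginal identity $\tau(S^A_xS^B_y)\approx 2^{-2\ell_n}$ ensure that the recovered strategy has the right distribution over $(x,y)$; the rigidity of $\qs$, combined with the $(\intro_A,\intro_B)$ acceptance condition, then yields that its value in $G_n$ is $1$. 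The contrapositive is exactly the soundness claim. The entanglement bound is immediate: any value-$1$ strategy for $G_n^\intro$ wins $\qs_{\ell_n}$ perfectly, so \Cref{cor:entanglement-bound-question-sampling} gives $\dim\ge 2^{2n}$; and the extraction procedure above produces a value-$1$ strategy for $G_n$ on a subspace of the same dimension, giving $\mathcal{E}(G_n^\intro,1)\ge \mathcal{E}(G_n,1)$. The hard part will be making the extraction in the $co$-case fully rigorous on $\hat{\algebra}$, since one must verify that $\sigma$ remains tracial and that the pulled-back measurements retain their sum-to-identity property after projecting with $\Pi$; this is where the synchronous framework and the structural properties collected in \Cref{sec:measurements} do the heavy lifting.
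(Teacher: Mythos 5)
There is a genuine gap in your construction: you omit the composite introspection questions $\introspect_W\sample_\cw$ and $\introspect_W\erase_\cw$ (and the single question $\introspect$), and it is precisely these that make soundness work. Your question set $\cal{Q}_{\ell_n}\cup\{\intro_A,\intro_B\}$ only supports cross-checks of the form $(\intro_A,\sample_A)$ and $(\intro_B,\sample_B)$, which tell you that the $x$-marginal of $\introspect^A$ agrees with $S^A_x$. They tell you nothing about the relationship between $\introspect^A_{x,a}$ and the \emph{other} party's measurements $\sample^B$, $\erase^B$. Rigidity of $\qs_{\ell_n}$ (\Cref{thm:rigidity-of-question-sampling}) is a statement about $S^A,S^B,E^A,E^B$ alone; it places no constraint on the new $\introspect_W$ measurements, because those are not $\qs$-questions. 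In particular, your claim that the $\erase_A,\erase_B$ measurements ``through rigidity'' enforce non-signaling of the introspected answer is unsupported: with only the questions you allow, a malicious strategy can let $\introspect^A_{x,a}$ depend arbitrarily on the $\bob$ side of the Hilbert space, and nothing in the game penalizes this.

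The paper circumvents this by including the pairs $\{\introspect_W,\introspect_W\sample_\cw\}$, $\{\introspect_W\sample_\cw,\sample_\cw\}$, $\{\introspect_W,\introspect_W\erase_\cw\}$, $\{\introspect_W\erase_\cw,\erase_\cw\}$ as nontrivial question pairs. This is the ``oracularization within a game'' mechanism explained at the end of \Cref{subsec:asymptotics}: having a single player answer both $\introspect_W$ and $\erase_\cw$ (via the composite question $\introspect_W\erase_\cw$), together with consistency cross-checks against players holding the individual questions, is what yields the approximate commutation relations $\introspect^W_{x,a}\sample^\cw_y\approx\sample^\cw_y\introspect^W_{x,a}$ and $\introspect^W_{x,a}\erase^\cw_y\approx\erase^\cw_y\introspect^W_{x,a}$ of \Cref{lem:approximate-commutation}. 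Without them, you cannot justify the step in your extraction where you translate the $\hat A^x_a$ operators using $O^{\erase_A}_x,O^{\erase_B}_y$ and then commute them past the $\bob$-side operators to reach the product form $\rho(\introspect^A_{x,a}\introspect^B_{y,b})$ (cf.\ the chain of approximations culminating in \Cref{eq:hat-operators-to-original-operators}). A direct cross-check between $\introspect_A$ and $\erase_B$ (as separate questions to the two players) does not help either, since their answers live in incomparable spaces and there is nothing to compare; the composite question is the only handle. Your single-question design thus fails to certify the crucial commutation, and soundness collapses. Secondary, smaller issues: you do not need $C^\intro$ to call $C$ at all (the pair $(\intro_A,\intro_B)$ is always nontrivial in $G^\intro$, and it is $D^\intro$ that calls $D$), and the paper's $\introspect$ question --- which has one player output the full quadruple $(x_\alice,a_\alice,x_\bob,a_\bob)$ and is cross-checked against $\introspect_W$ --- replaces your proposal of evaluating $D$ directly on $(\intro_A,\intro_B)$; both designs could in principle work once the commutation relations are in place, but the completeness/soundness analysis needs to be carried out against the actual cross-check structure of the game.
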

Intuitively, the Question Reduction transformation transforms a sequence of games $(G_1,G_2,\ldots)$ to a sequence $(G_1^\intro,G_2^\intro,\ldots)$ of ``Introspection games'' such that the question lengths of the Introspection game $G_n^\intro$ is \emph{polylogarithmic} in the time complexity of the ``original game'' $G_n$ while the value of $G_n^\intro$ approximates the value of $G_n$. In particular, the value of $G_n^\intro$ is $1$ if and only if the value of $G_n$ is $1$. Furthermore, the time complexity of the Introspection game $G_n^\intro$ is polynomial in the time complexity of the original game $G_n$. The reason this is called ``Question Reduction'' is because the question lengths of the original game $G_n$ can be as large as $n^\alpha$ (because that's the time complexity of the decision procedure $D_n$) and the question lengths of the Introspection games are at most $\log^\beta n$. The core of the Question Reduction transformation is the \emph{Introspection protocol}, which is a simplification of the one developed by~\cite{natarajan_neexp,ji_mip_re}. Aside from the fact that we work in the setting of synchronous games, the two other major simplifications are that
\begin{itemize}
    \item we only need to introspect games with uniform question distributions, and
    \item the transformation does not need to be gap preserving.
\end{itemize}
The bulk of this section will be spent on analyzing the Introspection protocol, and then in \Cref{sec:question-reduction-thm-proof} we prove \Cref{thm:question-reduction}.

\subsection{Overview}
\label{sec:introspection-overview}
Let $G = (\cal{X},\cal{A},D)$ be a synchronous game with $\cal{X} = \{0,1\}^\ell, \cal{A} = \{0,1\}^m$. We present a transformation $G \mapsto G^\intro$ where $G^\intro$ is called the \emph{Introspection game} corresponding to $G$. The question lengths of $G^\intro$ will be much smaller than those of $G$, but the values of the two games will still be tightly related. 

At an intuitive level, the question lengths are reduced in $G^\intro$ by asking the players to ``ask themselves'' -- i.e., to introspect -- their own questions from $\cal{X}$. The players in $G^\intro$ are each asked to sample a question $x\in \cal{X}$ and answer with $a\in \cal{A}$ as they would have answered in the original game $G$ if they have received question $x$. The players then each respond with a tuple $(x,a)$. If the players' responses are $(x,a)$ and $(y,b)$, the decision procedure in $G^\intro$ will check that $D(x,y,a,b) = 1$. 

In order for the values of $G$ and $G^\intro$ to be meaningfully related, we need to ensure that the players sample their introspected questions $x$ and $y$ from the uniform distribution (instead of, say, always picking a fixed $(x^*,y^*)$ for which they have prepared winning answers). We ensure this by introducing a small number of special questions in the game $G^\intro$. The cross-checks between these special questions force the players to behave ``honestly'' (i.e., to sample $(x,y)$ from the uniform distribution), or risk losing the game with some nonzero probability.

The Introspection game $G^\intro$ is an extension of the Question Sampling game $\qs_\ell$ from \Cref{sec:question-sampling-game}, where $\ell$ is the bit length of questions in the original game $G$. Recall that the Question Sampling game certifies that the players have measurements for questions $\sample_\alice,\sample_\bob,\erase_\alice,\erase_\bob$ satisfying the rigidity properties detailed in \Cref{thm:rigidity-of-question-sampling}. 

In addition to these questions, the Introspection game has an additional question $\introspect$, which stands for ``introspect''. When a player receives question $\introspect$, they are expected to answer with a tuple $(x,a,y,b) \in (\cal{X} \times \cal{A})^2$, and the players win if $D(x,y,a,b) = 1$. The Introspection game certifies the measurement corresponding to $\introspect$ is consistent with the following measurement process: performing both $\sample_\alice,\sample_\bob$ measurements (which commute with each other) to produce $(x,y) \in \cal{X}^2$, and then performing measurements $N^x$ and $N^y$ (which commute with each other when $(x,y)$ is a nontrivial question pair in the original game) to produce $(a,b) \in \cal{A}^2$. Furthermore, $N^x$ commutes with the $\erase_\bob$ measurement and $N^y$ commutes with the $\erase_\alice$ measurement. 

The fact that the $\introspect$ measurement is consistent with $\sample_\alice,\sample_\bob$ ensures that the distribution of the pair $(x,y)$ is uniform over $\cal{X}^2$. The fact that the the measurements $N^x,N^y$ commute with the $\erase_\bob$ and $\erase_\alice$ measurements, respectively, ensures that the output $a$ of $N^x$ does not depend on $y$ and similarly the output $b$ of $N^y$ does not depend on $x$. Thus the measurements $\{N^x\}$ give rise to a strategy for the original game $G$, and thus the value of $G^\intro$ is related to that of $G$. 

There are several other questions that are used in the Introspection game $G^\intro$ to ensure these consistency properties. Overall, the number of questions in $G^\intro$ is $|\qs_\ell| + 7$, and thus the question lengths represented in binary is $\lceil \log (|\qs_\ell| + 7) \rceil = O(\log(\ell))$.

We formally define the Introspection game next.
\subsection{Definition of Introspection game}
\label{sec:introspection-definition}

Throughout this section, we write $W$ to denote a value from the set $\{\alice,\bob\}$, and we write 
\[
	\overline{W} = \left\{\begin{array}{ll}
		B  & \mbox{if $W = A$},  \\
		A  & \mbox{if $W=B$}.
	\end{array}\right.~.
\]
The Introspection game $G^\intro$ corresponding to $G$ is a synchronous game $(\cal{Q}^\intro, \cal{A}^\intro,D^\intro)$ with
\begin{align*}
    &\cal{Q}^\intro = \cal{Q}_{\qs_\ell} \cup \{ \,\,\, \introspect \,\,\, \} \cup \{\,\,\, \introspect_W, \,\,\, \introspect_W \sample_\cw  \,\,\,, \,\,\, \introspect_W \erase_\cw \,\, \, \}_{W \in \{\alice,\bob\}},\\
    &\cal{A}^\intro = \cal{A}_{\qs_\ell} \cup \cal{X} \cup (\cal{X} \times \cal{A}) \cup (\cal{X} \times \cal{A} \times \cal{X}) \cup (\cal{X}\times\cal{A} \times \cal{X} \times \cal{A})~.
\end{align*}
The symbol $\introspect$ stands for \emph{introspect}, and $\sample$ and $\erase$ stand for \emph{sample} and \emph{erase} as in the Question Sampling game. We emphasize that the symbols $\introspect_W \sample_\cw$ and $\introspect_W \erase_\cw$ respectively are each individual questions; for example $\introspect_\alice \sample_\bob$ is distinct from the questions $\introspect_\alice$ and $\sample_\bob$, and is also distinct from the question $\introspect_\bob \sample_\alice$.

The decision procedure $D^\intro$ is specified by \Cref{tab:introspection}. On question pair $(q,r)$ and answer pair $(\hat{a},\hat{b})$, the decision procedure checks if $(q,r)$ is nontrivial according to the table, and if so, checks the corresponding winning condition. For the sake of clarity, we omit the symmetric case where the question pair is $(r,q)$ and the answer pair is $(\hat{b},\hat{a})$. 

\begin{table}[H]
\centering
\begin{tabularx}{\textwidth}{ L{4cm} L{13cm} } 
 \toprule
 \textbf{Nontrivial Question Pair $(q,r)$} & \textbf{Winning Condition on Answers $(\hat{a},\hat{b})$} \\
 \midrule
 	$q = r$ &  $ \hat{a}=\hat{b}$ \\
 \midrule 
$(q,r)$ is nontrivial for $\qs_\ell$ & $D_{\qs_\ell}(q,r,\hat{a},\hat{b}) = 1$ \\
	 \midrule
	$q = \introspect$  & $\Big( (x_\alice,x_\bob)$ is trivial for $G$ $\Big)$ or $\Big (z = x_W \wedge c = a_W \wedge D(x_\alice,x_\bob,a_\alice,a_\bob) = 1 \Big) $  \\
		$r = \introspect_W$ & where $\hat{a}=(x_\alice,a_\alice,x_\bob,a_\bob) \in (\cal{X} \times \cal{A})^2$ and $\hat{b} = (z,c) \in \cal{X} \times \cal{A}$    \\
	 \midrule
	$q = \introspect_W$  & $z = x_W \wedge c = a_W$ \\
		$r = \introspect_W \sample_\cw$ & where $\hat{a} = (x_W,a_W) \in \cal{X} \times \cal{A}$ and $\hat{b}=(z,c,x_\cw) \in \cal{X} \times \cal{A} \times \cal{X}$ \\
	\midrule
$q = \introspect_W$ & $z = x_W$ \\
$r = \sample_W$  	& where $\hat{a} = (x_W,a_W) \in \cal{X} \times \cal{A}$ and $\hat{b}=z \in \cal{X}$ \\
	\midrule
$q = \introspect_W$  & $z = x_W \wedge c = a_W$ \\
$r = \introspect_W \erase_\cw$ & where $\hat{a} = (x_W,a_W) \in \cal{X} \times \cal{A}$ and $\hat{b}=(z,c,x_\cw) \in \cal{X} \times \cal{A} \times \cal{X}$ \\
	\midrule
$q = \introspect_W \erase_\cw$  & $z = x_\cw$  \\
$r = \erase_\cw$ & where $\hat{a}=(x_W,a_W,x_\cw) \in \cal{X} \times \cal{A} \times \cal{X}$ and $\hat{b}=z \in \cal{X}$ \\
	\midrule
$q = \introspect_W \sample_\cw$  & $z = x_\cw$ \\
$r = \sample_\cw$ &  where $\hat{a}=(x_W,a_W,x_\cw) \in \cal{X} \times \cal{A} \times \cal{X}$ and $\hat{b}=z \in \cal{X}$ \\
\bottomrule  
\end{tabularx}
\caption{The nontrivial question pairs and winning conditions for the Introspection game $G^\intro$.}
\label{tab:introspection}
\end{table}

The nontrivial question pairs of the Introspection game $G^\intro$, apart from those in the Question Sampling game $\qs_\ell$, are also depicted as a graph in \Cref{fig:graph-of-introspection-game}. The questions are connected via an edge if they form a nontrivial question pair (and self-loops are not drawn for clarity). 

\begin{figure}[h]
\centering
\begin{tikzpicture}

    \node (1) {$\introspect_\alice \sample_\bob $};
    \node (2) [right = 3cm of 1]  {$\sample_\bob $};
    
    \node (3) [below = 1cm of 1]{$\introspect_\alice$};
    \node (4) [left = 1cm of 3] {$\introspect_\alice \erase_\bob $}; 
    \node (5) [right = 1.5cm of 3]{$\introspect$};
    \node (6) [below = 1cm of 2]  {$\introspect_\bob$};
    \node (7) [right = 1cm of 6] {$\introspect_\bob \erase_\alice $};
    \node (8) [left = 1cm of 4] {$\erase_\bob$}; 
    \node (9) [right = 1cm of 7] {$\erase_\alice $};
        
    \node (10) [below = 1cm of 3]{$\sample_\alice $};
    \node (11) [below = 1cm of 6]  {$\introspect_\bob \sample_\alice $};

    \path[draw,thick]
	(1) edge node {} (2)
    (1) edge node {} (3)
    (2) edge node {} (6)
    (3) edge node {} (4)
	(3) edge node {} (5)
    (5) edge node {} (6)
    (6) edge node {} (7)
    (8) edge node {} (4)
    (7) edge node {} (9)
    (10) edge node {} (3)
	(10) edge node {} (11)
    (11) edge node {} (6)
    ;
\end{tikzpicture}
\caption{A node indicates a special question in $G^\intro$. A pair of questions are connected with an edge if the pair is a nontrivial question pair as defined in Section \ref{sec:nonlocal-games}. There should also be loops on every node (which we omitted here for clarity). 
}
\label{fig:graph-of-introspection-game}
\end{figure}
The rationale behind the questions $\introspect_W\sample_\cw$ and $\introspect_W \erase_\cw$ is the following.  A player receiving the composite question $\introspect_W\sample_\cw$, for example, is expected to answer both questions $\introspect_W$ and $\sample_\cw$. By cross-checking this player's answers against the other player (who may have received either $\introspect_W$ or $\sample_\cw$ alone), the game ensures that the measurements corresponding to $\introspect_W$ and $\sample_\cw$ \emph{commute}, and this in turn enables the ``honest'' strategy in the completeness case to be oracularizable. This and more will become clear in the next subsection.

\subsection{Completeness of Introspection}
\label{sec:introspection-completeness}
As mentioned earlier, we need to show that the value of the original game and the introspected game are tightly related. This has two directions. First we need to show that if $G$ has a perfect strategy so does $G^\intro$; this is called the \emph{completeness} property. In fact we prove the following stronger statement. 

\begin{proposition}[Completeness of Introspection]\label{prop:introspection-completeness}
For all oracularizable synchronous strategies $\strategy$ for $G$, there exists an oracularizable synchronous strategy $\strategy^\intro$ for $G^\intro$ such that 
	\[
		\omega(G^\intro,\strategy^\intro) \geq \omega(G,\strategy)~.
	\] 
	Furthermore, if $\strategy$ is finite-dimensional then so is $\strategy^\intro$.
\end{proposition}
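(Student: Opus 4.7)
The plan is to construct $\strategy^\intro$ by tensoring the honest finite-dimensional oracularizable synchronous strategy $\strategy_{\qs_\ell} = (\tau_{\qs_\ell}, \{M^q\}_{q \in \cal{Q}_{\qs_\ell}})$ for the Question Sampling game (from \Cref{sec:question-sampling-game}) with the given strategy $\strategy = (\tau, \{M^x\}_{x \in \cal{X}})$. Let $\hilb^\intro := \hilb_{\qs_\ell} \otimes \hilb_\strategy$ and $\tau^\intro := \tau_{\qs_\ell} \otimes \tau$ on the algebra $\algebra^\intro := \algebra_{\qs_\ell} \otimes \algebra$; this is a tracial state by \Cref{prop:tensor-product}. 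For $q \in \cal{Q}_{\qs_\ell}$, set $N^q_a := M^q_a \otimes \id$. For the introspection questions, set
\begin{align*}
N^{\introspect_W}_{(x, a)} &:= M^{\sample_W}_x \otimes M^x_a, \\
N^{\introspect_W \sample_\cw}_{(x, a, y)} &:= M^{\sample_W}_x M^{\sample_\cw}_y \otimes M^x_a, \\
N^{\introspect_W \erase_\cw}_{(x, a, y)} &:= M^{\sample_W}_x M^{\erase_\cw}_y \otimes M^x_a,
\end{align*}
and $N^\introspect_{(x_\alice, a_\alice, x_\bob, a_\bob)} := M^{\sample_\alice}_{x_\alice} M^{\sample_\bob}_{x_\bob} \otimes P^{x_\alice, x_\bob}_{a_\alice, a_\bob}$, where $P^{x_\alice, x_\bob}_{a_\alice, a_\bob} := M^{x_\alice}_{a_\alice} M^{x_\bob}_{a_\bob}$ when $(x_\alice, x_\bob)$ is a nontrivial pair for $G$, and on trivial pairs $P$ assigns $\id$ to a single default answer and $0$ elsewhere.

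The key structural facts making these genuine projective measurements are: in the honest $\strategy_{\qs_\ell}$, the sampling and erasure measurements for $\alice$ and $\bob$ act on disjoint tensor factors and hence commute pairwise; within each factor, operators from the same projective measurement commute; and by oracularizability of $\strategy$, $M^{x_\alice}_{a_\alice}$ and $M^{x_\bob}_{a_\bob}$ commute on every nontrivial pair $(x_\alice, x_\bob)$. Together these verify that each $N^q$ is projective with $\sum_{\hat{a}} N^q_{\hat{a}} = \id$, and checking oracularizability of $\strategy^\intro$ on the nontrivial pairs of $G^\intro$ (listed in \Cref{tab:introspection}) reduces to the same three facts applied case by case.

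It remains to compute $\val(G^\intro, \strategy^\intro)$. All pairs of the form $(q,q)$ win automatically by the synchronous property of $\strategy^\intro$; all remaining nontrivial pairs inherited from $\qs_\ell$ are won with probability $1$ because $\strategy_{\qs_\ell}$ is perfect; and cross-checks that merely enforce consistency between a composite introspection answer and one of its coordinates---for example $(\introspect_W, \sample_W)$, $(\introspect_W, \introspect_W \sample_\cw)$, or $(\introspect_W \erase_\cw, \erase_\cw)$---are also won with probability $1$ by direct computation using $\sum_a M^x_a = \id$ and orthogonality of projections within a fixed measurement. The only pairs that actually invoke the decision procedure $D$ of $G$ are $(\introspect, \introspect_W)$ and their symmetric counterparts: on such a pair, the sampling consistency forces $z = x_W$, oracularizability of $\strategy$ produces the identity $M^{x_\alice}_{a_\alice} M^{x_\bob}_{a_\bob} M^{x_\alice}_c = \delta_{a_\alice, c} M^{x_\alice}_{a_\alice} M^{x_\bob}_{a_\bob}$, and then $\tau_{\qs_\ell}(M^{\sample_\alice}_{x_\alice} M^{\sample_\bob}_{x_\bob}) = 2^{-2\ell}$ reproduces the uniform question distribution of $G$, so the winning probability on such a pair equals exactly $\val(G,\strategy)$. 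Averaging over the uniform distribution on $\cal{Q}^\intro \times \cal{Q}^\intro$ gives $\val(G^\intro, \strategy^\intro) = 1 - \frac{c}{|\cal{Q}^\intro|^2}(1-\val(G,\strategy)) \geq \val(G,\strategy)$ for a small constant $c$, and finite-dimensionality of $\strategy^\intro$ follows from finite-dimensionality of both factors. The main subtlety---and the only place where oracularizability of $\strategy$ is genuinely essential---is the collapse step in the $(\introspect, \introspect_W)$ calculation; on trivial question pairs of $G$, where that commutation fails, the default-answer assignment in $P$ sidesteps the issue at no cost to the value since $G$ imposes no winning constraint there.
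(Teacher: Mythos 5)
Your construction is exactly the one the paper uses: tensor the honest Question Sampling strategy with the given $\strategy$, define the introspection measurements by combining the appropriate sampling/erasure operators with the original game's measurements, use oracularizability to make $N^\introspect$ projective and to collapse $M^{x_\alice}_{a_\alice} M^{x_\bob}_{a_\bob} M^{x_\alice}_c$ to $\delta_{a_\alice,c} M^{x_\alice}_{a_\alice} M^{x_\bob}_{a_\bob}$, and observe that only the $(\introspect, \introspect_W)$ pairs invoke $D$ while every other nontrivial pair is won with probability $1$. The paper and your proposal agree in every essential step, including the default-answer device on trivial pairs and the final averaging argument; your write-up is correct.
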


Recall that a synchronous strategy $\strategy$ for a synchronous game $G$ is oracularizable if for every nontrivial question pair $(q,r)$, the corresponding measurement operators commute (see \Cref{def:oracularizable-strategy}).

\begin{proof}
Let $\strategy = (\sigma,\{N^x\}_{x\in\cal{X}})$ be an oracularizable synchronous strategy for $G$ and let $\strategy_{\qs_\ell} = (\tau,\{M^q\}_{q\in \cal{Q}_{\qs_\ell}})$ be the ``honest'' perfect oracularizable strategy for the Question Sampling game $\qs_\ell$ as defined in \Cref{sec:question-sampling-game}. Let $\hilb_{\qs_\ell}$, $\hilb_\strategy$ and $\algebra_{\qs_\ell} \subseteq \B(\hilb_{\qs_\ell}),\algebra_{\strategy} \subseteq \B(\hilb_\strategy)$ denote the Hilbert spaces and algebras of the two strategies, respectively. We define a synchronous strategy $\strategy^\intro = (\rho,\{P^q\}_{q \in \cal{Q}^\intro} )$, which we call the \emph{honest Introspection strategy}, for $G^\intro$ over the algebra $\algebra_{\qs_\ell}\otimes \algebra_\strategy$ with the tracial state $\rho = \tau\otimes \sigma$. In this proof we use the shorthand notation $S^W_x, E^W_x$ to denote the operators $M^{\sample_W}_x$, $M^{\erase_W}_x$ from the strategy $\strategy_{\qs_\ell}$, respectively.

The measurement operators are defined as follows. For all $q \in \cal{Q}_{\qs_\ell}$ and $x \in \cal{A}_{\qs_\ell}$, let $P^q_x = M^q_x \otimes \id$ where the $\id$ denotes the identity on the Hilbert space $\hilb_\strategy$. Since $M^q_x$ is a projection on $\hilb_{\qs_\ell}$, the operators $\{P^q_x\}$ are also projections and furthermore form a measurement. 

For all other questions $q \in \cal{Q}^\intro \setminus \cal{Q}_{\qs_\ell}$, we define
\begin{gather*}
P^{\introspect_W}_{x,a} \coloneqq S^W_x \otimes N_a^x~, \qquad \qquad P^{\introspect_W\sample_\cw}_{x,a,y} \coloneqq S^W_x \sample^\cw_y \otimes N_a^x~, \qquad \qquad P^{\introspect_W\erase_\cw}_{x,a,y} \coloneqq S^W_x \erase^\cw_y \otimes N_a^x
\end{gather*}
for all $W \in \{ \alice, \bob \}$, $x,y \in \cal{X}$, and $a\in \cal{A}$. The operator $P^{\introspect_W}_{x,a}$ is clearly a projection (because $S^W_x, N^x_a$ are projections), and forms a projective measurement. In the honest Question Sampling strategy the operators $S^W_x$ and $\sample^\cw_y$ commute (by \Cref{thm:rigidity-of-question-sampling}), therefore $P^{\introspect_W\sample_\cw}_{x,a,y}$ forms a projective measurement. Similarly $S^W_x$ and $\erase^\cw_y$ commute, therefore $P^{\introspect_W\sample_\cw}_{x,a,y}$ forms a projective measurement. 

It should be clear now why we choose the notation $\introspect_W\sample_\cw$ and $\introspect_W\erase_\cw$: in the honest Introspection strategy, we have that 
\begin{equation}
\label{eq:composite-introspection-measurements}
	P^{\introspect_W\sample_\cw}_{x,a,y} = P^{\introspect_W}_{x,a} \, \, \sample^\cw_y = \sample^\cw_y \,\,P^{\introspect_W}_{x,a}   \qquad \text{ and } \qquad 
	P^{\introspect_W\erase_\cw}_{x,a,y} = P^{\introspect_W}_{x,a} \,\, \erase^\cw_y = \erase^\cw_y \,\, P^{\introspect_W}_{x,a}~.
\end{equation}

It remains to define the projective measurement $\{P^\introspect_{x,a,y,b}\}$ for the Introspection question $\introspect$. If $(x,y) \in\cal{X}\times\cal{X}$ is a nontrivial question in $G$, we define \[P^\introspect_{x,a,y,b} \coloneqq \sample^\alice_x \,\, \sample^\bob_y \otimes N_a^{x} \,\, N_b^y.\] Since $N_a^x$ and $N_b^y$ commute when $(x,y)$ is nontrivial for $G$ (because $\strategy$ is oracularizable), we see that $P^\introspect_{x,a,y,b}$ is a projection. If on the other hand $(x,y)$ is a trivial question in $G$, we define \[P^\introspect_{x,a,y,b} \coloneqq \left\{
	\begin{array}{ll}
		\sample^\alice_x \,\, \sample^\bob_y \otimes \id  & \mbox{if $(a,b)=(0^m,0^m)$},  \\
		0 & \mbox{otherwise}.
	\end{array}
    \right.\]
This is clearly a projective measurement as well. Intuitively, when a player receives the question $\introspect$, they first perform the sampling measurements $\sample^\alice$ and $\sample^\bob$ (which can be performed simultaneously since they commute) to obtain a pair of questions $(x,y) \in \cal{X} \times \cal{X}$ for the original game $G$. If $(x,y)$ is trivial for $G$, then the player outputs $(x,0^m,y,0^m)$. Otherwise, the player then simultaneously measures $N^x$ and $N^y$ (which commute since $(x,y)$ is nontrivial for $G$) to obtain answers $(a,b) \in \cal{A} \times \cal{A}$. The player then returns $(x,a,y,b)$ as its answer. 

Clearly $\strategy^\intro$ is finite-dimensional when $\strategy$ is finite-dimensional. Next we show that $\strategy^\intro$ is oracularizable and has success probability $1$ in the Introspection game $G^\intro$.

First, if $(q,r)$ is a trivial pair of questions for $G^\intro$ then by definition the players win with probability $1$ on those questions. Assume that $(q,r)$ is a nontrivial question pair.

Suppose that $(q,r) \in \cal{Q}_{\qs_\ell}$. Since $\strategy_{\qs_\ell}$ is oracularizable and $(q,r)$ must also be nontrivial for $\qs_\ell$, the measurement operators $\{P^q_x\}$ and $\{P^r_x\}$ commute. Furthermore, by design the strategy $\strategy_{\qs_\ell}$ succeeds with probability $1$ in the game $\qs_\ell$ and thus succeeds with probability $1$ in $G^\intro$ conditioned on questions from $\cal{Q}_{\qs_\ell}$.

It remains to check the commutativity property and success probability for all question pairs that are connected via an edge in Figure \ref{fig:graph-of-introspection-game}. For self-loops (i.e, question pairs $(q,q)$), commutativity and success probability $1$ are trivially satisfied because the operators $P^q_{\hat{a}}$ are projections. We now check the other nontrivial question pairs.

\medskip

\noindent \underline{$(\introspect_W,\sample_W)$}: Commutativity follows because 
\[P^{\introspect_W}_{x,a} \,\, P^{\sample_W}_{z} = \sample^W_{x} \, \sample^W_{z} \otimes N_{a}^{x} = \sample^W_{z} \, \sample^W_{x} \otimes N_{a}^{x} = P^{\sample_W}_{z} \,\, P^{\introspect_W}_{x,a}~.
\] 
Here we used the fact that $S^W_x, S^W_z$ are elements of the same projective measurement and thus commute. 
The probability of winning conditioned on this question pair is 
\[
\sum_{x,a} \rho(P^{\introspect_W}_{x,a} \,\, P^{\sample_W}_x) = \sum_{x,a} \tau(S^W_x \,\, S^W_x ) \,\, \sigma(N_a^x) = \sum_{x} \tau(S^W_x ) = 1~.
\]

\medskip

\noindent \underline{$(\introspect_W,\introspect_W\sample_\cw)$}: Commutativity follows because 
\[
P^{\introspect_W}_{x,a} \,\, P^{\introspect_W\sample_\cw}_{z,c,y} = \sample^W_{x} \,\, \sample^W_{z} \,\, \sample^\cw_y \otimes N_{a}^{x} \,\, N_{c}^{z} = \sample^W_{z} \,\,\sample^\cw_y\,\, \sample^W_{x} \otimes N_c^z\,\, N_{a}^{x} = P^{\introspect_W \sample_\cw}_{z,c,y} \,\, P^{\introspect_W}_{x,a}.
\]
    The second equality holds because if $x \neq z$, then $\sample^W_{x}\,\, \sample^W_{z}= 0$ and the equality holds trivially. If on the other hand $x = z$, the equality holds because $S^W_x, \sample^\cw_y$ commute with each other and $N_{a}^x, N_c^x$ commute with each other.
    
The probability of winning conditioned on this question pair is 
\[
\sum_{x,a,y} \rho(P^{\introspect_W}_{x,a}\,\, P^{\introspect_W\sample_\cw}_{x,a,y}) = \sum_{x,a,y} \rho(P^{\introspect_W}_{x,a}\,\, P^{\introspect_W}_{x,a} \,\, \sample^W_y ) = \sum_{x,a} \rho(P^{\introspect_W}_{x,a} ) = 1
\]
where in the first equality we used~\eqref{eq:composite-introspection-measurements}.

\medskip

\noindent \underline{$(\introspect_W,\introspect_W\erase_\cw)$}: The argument for this is nearly identical to that for the previous question pair, except we replace the sampling measurement $\sample^\cw$ with the erasure measurement $\erase^\cw$.

\medskip
\noindent \underline{$(\introspect_W\sample_\cw,\sample_\cw)$}: Commutativity follows because 
\[
P^{\introspect_W\sample_\cw}_{x,a,y} \,\, \sample^\cw_z = P^{\introspect_W}_{x,a} \,\, \sample^\cw_y \,\, \sample^\cw_z = \sample^\cw_z \,\, P^{\introspect_W}_{x,a} \,\, \sample^\cw_y = \sample^\cw_z \,\, P^{\introspect_W\sample_\cw}_{x,a,y}
\]
where in the first equality we used~\eqref{eq:composite-introspection-measurements}, and then we used the fact that $\sample^\cw_z$ commute with $P^{\introspect_W}_{x,a}$. 

The probability of winning conditioned on this question pair is 
\[
\sum_{x,a,y} \rho(P^{\introspect_W \sample_\cw}_{x,a,y}\,\, \sample^\cw_y) = \sum_{x,a,y} \rho(P^{\introspect_W}_{x,a} \,\, \sample^\cw_y \,\, \sample^\cw_y) = \sum_{x,a} \rho(P^{\introspect_W}_{x,a} ) = 1
\]
where in the first equality we used~\eqref{eq:composite-introspection-measurements} and in the second equality we used the fact that $\sample^\cw_y$ is a projection and forms a measurement.

\medskip
\noindent \underline{$(\introspect_W\erase_\cw,\erase_\cw)$}: The argument for this is identical to that for the previous question pair, except we replace the sampling measurement $\sample^\cw$ with $\erase^\cw$.

\medskip
\noindent \underline{$(\introspect,\introspect_W)$}: Assume without loss of generality that $W = \alice$. Commutativity is due to the following. Suppose $(x,y)$ is a trivial question pair for $G$. Then 
    \[
    P^{\introspect}_{x,0,y,0} \,\, P^{\introspect_\alice}_{z,c} = \sample^\alice_x \,\, \sample^\bob_y \,\, \sample^\alice_z \otimes N^z_c = \sample^\alice_z \,\, \sample^\alice_x \,\, \sample^\bob_y \otimes N^z_c =  P^{\introspect_\alice}_{z,c} \,\,  P^{\introspect}_{x,0,y,0}
    \]
    where $0$ is shorthand for $0^m$, and for all $(a,b) \neq (0^m,0^m)$ we have
    \[
    	P^{\introspect}_{x,a,y,b} \,\, P^{\introspect_\alice}_{z,c} = 0 = P^{\introspect_\alice}_{z,c}\,\, P^{\introspect}_{x,a,y,b}~.
	\]
    If $(x,y)$ is a nontrivial question pair for $G$ then
    \[
    	P^{\introspect}_{x,a,y,b} \,\, P^{\introspect_\alice}_{z,c} = \sample^\alice_x \,\, \sample^\bob_y \,\, \sample^\alice_z \otimes N^x_a \,\, N^y_b \,\, N^z_c = \sample^\alice_z \,\, \sample^\alice_x \,\, \sample^\bob_y \otimes N^z_c \,\, N^x_a \,\, N^y_b = P^{\introspect_\alice}_{z,c} \,\, P^{\introspect}_{x,a,y,b}
	\]	
    where the second equality holds because if $x \neq z$, then $\sample^\alice_x \,\, \sample^\bob_y \,\,  \sample^\alice_z = 0$ and the equality holds trivially. If on the other hand $x=z$, the equality holds because $N^x_a, N^y_b, N^x_c$ all commute (because $(x,y)$ is a nontrivial question pair and $N^x_a, N^x_c$ are elements of the same projective measurement).
    
    We calculate the probability of success as follows. If $(x,y)$ is a nontrivial question pair in the original game $G$ we have 
    \[
    	\rho(P^{\introspect}_{x,a,y,b} \,\, P^{\introspect_\alice}_{z,c}) = \tau(\sample^\alice_x \,\, \sample^\bob_y \,\, \sample^\alice_z ) \,\, \sigma(N^x_a \,\, N^y_b \,\, N^z_c) = 2^{-2\ell} \,\, \sigma(N^x_a \,\, N^y_b) \,\, \mathbf{1}_{z = x, c = a}
	\] 
	where we used the fact that in the honest strategy $\strategy_{\qs_\ell}$ we have $\tau(\sample^\alice_x \,\, \sample^\bob_y) = 2^{-2\ell}$. Notation $\mathbf{1}_{z = x, c = a}$ denotes the indicator variable for the equalities $z = x, c = a$. If $(x,y)$ is trivial we have 
	\[
		\rho(P^{\introspect}_{x,a,y,b} \,\, P^{\introspect_\alice}_{z,c}) = 2^{-2\ell} \,\, \sigma (N^z_c) \,\, \mathbf{1}_{z = x, a = b = 0^m}~.
	\]	
    
    So the probability of winning using $\strategy^\intro$ conditioned on players receiving question pair $(\introspect,\introspect_\alice)$ is 
    \begin{align*}
    &\sum_{x,a,y,b,z,c} \rho(P^{\introspect}_{x,a,y,b} \,\, P^{\introspect_\alice}_{z,c}) \,\,  D^\intro(\introspect,\introspect_\alice, (x,a,y,b),(z,c))\\ 
    &\qquad\qquad= \frac{1}{2^{2\ell}}\sum_{\substack{(x,y)\\\text{nontrivial for $G$}}} \,\, \sum_{a,b} \sigma(N^x_a \,\, N^y_b ) \,\, D(x,y,a,b) + \frac{1}{2^{2\ell}} \sum_{\substack{(x,y)\\\text{trivial for $G$}}} \,\, \sum_c \sigma(N^x_c) \\
    &\qquad\qquad= \frac{1}{2^{2\ell}}\sum_{\substack{(x,y)\\\text{nontrivial for $G$}}} \,\, \sum_{a,b} \sigma(N^x_a \,\, N^y_b ) \,\, D(x,y,a,b) + \frac{1}{2^{2\ell}} \sum_{\substack{(x,y)\\\text{trivial for $G$}}} \,\, 1 \\    
    &\qquad\qquad= \frac{1}{2^{2\ell}}\sum_{\substack{(x,y)\\\text{nontrivial for $G$}}} \,\, \sum_{a,b} \sigma(N^x_a \,\, N^y_b ) \,\, D(x,y,a,b) + \frac{1}{2^{2\ell}} \sum_{\substack{(x,y)\\\text{trivial for $G$}}} \,\, \sum_{a,b} \sigma(N^x_a \,\, N^y_b) \,\, D(x,y,a,b) \\
    &\qquad\qquad= \val(G,\strategy)   
    \end{align*}
where in the third line we used that $\{N^x_c\}$ is a measurement, and in the fourth line we used that $D(x,y,a,b) = 1$ for all trivial $(x,y)$. 

So conditioned on any pair of questions the players win with probability $1$ using strategy $\strategy^\intro$, except when they receive question pair $(\introspect,\introspect_\alice)$ or $(\introspect,\introspect_\bob)$ in which case they win with probability $\val(G,\strategy)$. From this we conclude that $\val(G^\intro,\strategy^\intro) \geq \val(G,\strategy)$.
\end{proof}

\subsection{Soundness of Introspection}
\label{sec:introspection-soundness}
The second part of showing that the value of the original game and the introspected game are tightly related is called \emph{soundness}. Informally speaking the soundness property states that if the original game has no perfect strategy, then neither does the introspected game. 

In the soundness proposition below, we also prove a lower bound on the dimension of the Hilbert space for any perfect strategy of $G^\intro$. We show this dimension is at least as big as the maximum of $2^{2\ell}$ and the smallest dimension of a Hilbert space among all perfect strategies of $G$. Recall that $\ell$ is the bit length of questions in $G$. This dimension lower bound will be used later in the section on compression. 

\begin{proposition}[Soundness of Introspection]\label{prop:introspection-soundness}
For all $t \in \{q,co\}$
\[\val_t^s(G^\intro) = 1 \implies \val_t^s(G) = 1.\]
Furthermore it holds that
    \[\mathcal{E} (G^{\intro},1) \geq \max \left \{ \mathcal{E} (G,1) , 2^{2\ell} \right\}.\]
\end{proposition}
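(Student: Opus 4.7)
The plan is to show that any synchronous strategy $\strategy^\intro = (\tau, \{P^q\})$ for $G^\intro$ with value $1$ induces a commuting operator strategy for $G$ with value $1$, from which $\val_t^s(G) = 1$ follows via \Cref{thm:synchronous}; the quantitative version follows by tracking proper error functions throughout as in Section 2.4.

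First I would apply \Cref{thm:rigidity-of-question-sampling} to $\strategy^\intro$ restricted to the $\qs_\ell$ question pairs, yielding exact versions of the Question Sampling relations: sampling/erasure measurements on the $W$ side commute with those on the $\cw$ side, $O^{\erase_W}_x S^W_0 O^{\erase_W}_x = S^W_x$, and $\tau(S^A_x S^B_y) = 2^{-2\ell}$. The introspection side-checks $(\introspect_W, \sample_W)$, $(\introspect_W, \introspect_W \sample_\cw)$, $(\introspect_W \sample_\cw, \sample_\cw)$ and their erasure analogues force $\sum_a P^{\introspect_W}_{x,a} = S^W_x$, and produce common refinements from which $P^{\introspect_W}_{x,a}$ commutes with both $S^\cw_y$ and $E^\cw_y$ (the standard fact that two projective measurements with a projective common refinement are mutually commuting with product equal to the refinement). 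The $(\introspect, \introspect_W)$ cross-check, analyzed symmetrically for $W = A$ and $W = B$, then gives (i) $\tau(P^\introspect_{x,a,y,b}) = 0$ for any nontrivial $(x,y) \in \cal{X}^2$ with $D(x,y,a,b) = 0$, and (ii) $[P^{\introspect_A}_{x,a}, P^{\introspect_B}_{y,b}] = 0$ with $P^\introspect_{x,a,y,b} = P^{\introspect_A}_{x,a} P^{\introspect_B}_{y,b}$ (working modulo the support of $\tau$ so that $\tau$ is faithful on a quotient).

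With these relations I extract a commuting-operator strategy for $G$. Let $\Pi = S^A_0 S^B_0$ (a projection with $\tau(\Pi) = 2^{-2\ell}$), view $(\algebra, \tau)$ through its GNS representation with cyclic vector $\ket{\tau}$, and set $\ket{\psi} = 2^\ell \Pi \ket{\tau}$ (a unit vector). For a fixed $a_0 \in \cal{A}$, define Alice's projective measurement
\[
    A^x_a = O^{\erase_A}_x P^{\introspect_A}_{x,a} O^{\erase_A}_x + \mathbf{1}[a = a_0] (\id - S^A_0),
\]
and define $B^y_b$ analogously using the $W = B$ operators. The commutation relations from the previous paragraph imply $[A^x_a, B^y_b] = 0$. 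A direct cyclic-trace computation using $O^{\erase_A}_x S^A_0 = S^A_x O^{\erase_A}_x$ (and the fact that the $(\id - S^A_0)$ correction is annihilated when multiplied by $\Pi$) gives
\[
    \bra{\psi} A^x_a B^y_b \ket{\psi} = 2^{2\ell}\, \tau(P^{\introspect_A}_{x,a} P^{\introspect_B}_{y,b}) = 2^{2\ell}\, \tau(P^\introspect_{x,a,y,b}).
\]
Using property (i) to eliminate the $D = 0$ contributions and the identity $\sum_{x,y,a,b} \tau(P^\introspect_{x,a,y,b}) = \tau(\id) = 1$, the value of the extracted strategy in $G$ equals $\sum_{x,y,a,b} D(x,y,a,b)\, \tau(P^\introspect_{x,a,y,b}) = 1$.

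Hence $\omega_{co}(G) = 1$, and \Cref{thm:synchronous} yields $\omega_{co}^s(G) = 1$. For $t = q$, the same extraction (with error tracking) applied to any sequence of finite-dimensional $(1 - \epsilon_n)$-optimal synchronous strategies for $G^\intro$ produces finite-dimensional $(1 - \delta(\epsilon_n))$-optimal strategies for $G$ in the same algebra, giving $\omega_q^s(G) = 1$. For the entanglement bound, any finite-dimensional perfect strategy for $G^\intro$ of dimension $d$ restricts to a perfect strategy for $\qs_\ell$ (so $d \geq 2^{2\ell}$ by \Cref{cor:entanglement-bound-question-sampling}), and extracts as above to a perfect finite-dimensional strategy for $G$ in the same algebra (so $d \geq \mathcal{E}(G,1)$). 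The main technical obstacle is passing from trace-equality relations (such as $\tau(P^\introspect(\id - P^{\introspect_A})) = 0$ derived from the $(\introspect, \introspect_A)$ cross-check) to operator-equality relations like $P^\introspect = P^{\introspect_A} P^{\introspect_B}$; this requires the standard reduction to the support projection so that $\tau$ becomes faithful, together with careful bookkeeping of the boundary term $(\id - S^A_0)$ attached to the answer $a_0$ when verifying the correlation identity.
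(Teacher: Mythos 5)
Your construction has a genuine gap at the point where you try to extract a commuting-operator strategy $(\ket{\psi}, \{A^x\}, \{B^y\})$: the claim in (ii) that $[P^{\introspect_A}_{x,a}, P^{\introspect_B}_{y,b}] = 0$ does not hold in general, not even in the exact value-$1$ case. The $(\introspect,\introspect_W)$ cross-check only constrains $P^\introspect_{x,a,y,b}$ against $P^{\introspect_W}_{x_W,a_W}$ on \emph{nontrivial} pairs $(x,y)$ of $G$, since the decision predicate automatically accepts when $(x_A,x_B)$ is trivial. For trivial pairs there is no cross-check whatsoever between $\introspect_A$ and $\introspect_B$ (they form a trivial question pair in $G^\intro$), and indeed in the honest strategy $P^{\introspect_A}_{x,a} = S^A_x \otimes N^x_a$ and $P^{\introspect_B}_{y,b} = S^B_y \otimes N^y_b$ can fail to commute precisely when $(x,y)$ is trivial, since oracularizability only guarantees $[N^x_a, N^y_b] = 0$ on nontrivial pairs. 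Consequently your $A^x_a$ and $B^y_b$ need not commute, so $(\ket{\psi},\{A^x\},\{B^y\})$ is not a valid commuting operator strategy and the argument breaks. (Compare \Cref{lem:approximately-uniform-sampling}, whose ``furthermore'' clause is carefully restricted to nontrivial $(x,y)$.) The paper's proof avoids this entirely by never claiming $[A^x_a, B^y_b]=0$: it builds operators $\hat{W}^x_a = \Pi \tilde{W}^x_a \Pi$, projectivizes them inside the compressed algebra $\hat{\algebra}$ with trace $\sigma = \tau/\tau(\Pi)$, and then uses the synchronicity of $G$ (the equality $\sigma(A^x_a B^x_a)$-mass forcing $A^x_a \approx B^x_a$) to replace all $B$-operators by $A$-operators and fold the $\perp$ outcome in, producing a genuine \emph{synchronous} strategy $(\sigma,\{N^x\})$ for which no commutativity between different-question measurements is needed. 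If you want to salvage your route, you have to switch to that synchronous presentation rather than a two-sided one.

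A secondary issue: for $t = q$, $\val_q^s(G^\intro)=1$ only yields finite-dimensional strategies of value $1-\eps_n$ with $\eps_n\to 0$; a perfect one need not exist, so the exact operator identities (e.g.\ $P^{\introspect_W}_{x,\cdot}=S^W_x$, $S^A_0 S^B_0$ being a projection) are unavailable. Deferring this to ``error tracking'' undersells the work: $\Pi$ must be replaced by the genuine projection of \Cref{cor:entanglement-bound-question-sampling} (obtained via \Cref{lem:projectivization}), $\sum_a \tilde{A}^x_a$ is only approximately $S^A_0$ so the $\tilde{W}^x_\perp$ leftover outcome and a projectivization step are needed, and the whole argument lives in the compressed algebra. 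This approximate machinery — not the exact-case calculation — is the technical core of the paper's proof, and it is also what gives the entanglement bound cleanly, since the resulting synchronous strategy lives on the subspace $\hat{\hilb}=\Pi\hilb$ of the original Hilbert space (your extraction to a two-sided GNS strategy does not directly bound the dimension of a \emph{synchronous} strategy for $G$).
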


At a high level, the proof of \Cref{prop:introspection-soundness} proceeds by taking a synchronous strategy $\strategy^\intro = (\rho, \{P^q\}_{q \in \cal{Q}^\intro})$ for $G^\intro$ that succeeds with probability $1 - \eps$, with $\eps$ sufficiently small, and ``extracting'' from it a strategy $\strategy = (\sigma,\{N^x\}_{x \in \cal{X}})$ for the original game $G$ that has value $1 - \delta(\eps)$ where $\delta$ is a proper error function (see \Cref{subsec:asymptotics} for definition of proper error function). The error function $\delta$ also has a dependence on $\ell$, but since we do not need to carry that around, we hide it in our notation $\delta(\eps)$.

Note that $\val_q^s(G^\intro) = 1$ does not imply the existence of a finite-dimensional synchronous strategy with value $1$. All we can guarantee is that for every $\eps > 0$ there exists a finite-dimensional synchronous strategy with value at least $1-\eps$. On the other hand $\val_{co}^s(G^\intro) = 1$ means that there exists a perfect synchronous strategy for $G^\intro$.

To make the notation easier to read, we use the following abbreviations for the measurements $P^q$ corresponding to the questions $q \in \{ \,\, \introspect, \,\, \introspect_W, \,\, \introspect_W \sample_\cw, \,\, \introspect_W \erase_\cw, \,\, \sample_W, \,\, \erase_W \,\, \}_{W \in \{\alice,\bob\}} \subseteq \cal{Q}^\intro$. For all $W \in \{\alice,\bob\}$, $x,y \in \cal{X}$ and $a,b \in \cal{A}$,
\begin{gather*}
	\introspect_{x,a,y,b} = P^{\introspect}_{x,a,y,b}~, \qquad \qquad \introspect^W_{x,a} = P^{\introspect_W}_{x,a}~, \qquad \qquad (\introspect^W \sample^\cw)_{x,a,y} = P^{\introspect_W \sample_\cw}_{x,a,y} \\
	(\introspect^W \erase^\cw)_{x,a,y} = P^{\introspect_W \erase_\cw}_{x,a,y}~, \qquad \qquad S^W_x = P^{\sample_W}_x~, \qquad \qquad \erase^W_x = P^{\erase_W}_x~.
\end{gather*}
Furthermore, we define the \emph{erasure observables}
\[
	O^W_x = \sum_{y \in \cal{X}} (-1)^{x \cdot y} \, \erase^W_y 
\]
for $W \in \{\alice,\bob\}$. Unlike the section on Question Sampling, we do not need to define sampling observables for the purpose of proving the current proposition. We use $\cdot$ in the subscript to indicate the data-processed measurement that ignores part of the measurement outcome, so for example
\begin{align*}
    \introspect_{\cdot,a,y,b} &= \sum_{x\in \cal{X}} \introspect_{x,a,y,b},\\ \introspect_{x,\cdot,y,b} &= \sum_{a\in \cal{A}} \introspect_{x,a,y,b},\\ \introspect_{x,a,\cdot,\cdot} &= \sum_{y \in \cal{X},b \in \cal{A}} \introspect_{x,a,y,b}, 
\end{align*}
etc. We may sometime drop $\cdot$ when there is no risk of ambiguity, for example we may write $\introspect^W_x$ instead of $\introspect^W_{x,\cdot}$.

We first prove two key lemmas establishing that in any strategy with large value certain commutation relations are approximately satisfied and that introspected questions are almost uniformly sampled. Throughout this section, we let $\strategy^\intro = (\rho, \{P^{q}\}_{q\in \cal{Q}^\intro})$ be a fixed synchronous strategy for $G^\intro$ with value $1-\eps$.

\begin{lemma}\label{lem:approximate-commutation}
The following approximate relations hold
\begin{align*}
    \introspect^{W}_x &\approx \sample^{W}_x\\
    \introspect^{W}_{x,a} \, \sample^{W}_{y} &\approx \sample^{W}_{y} \, \introspect^{W}_{x,a}\\
    \introspect^{W}_{x,a} \, \sample^{\cw}_{y} &\approx \sample^{\cw}_{y} \, \introspect^{W}_{x,a}\\
    \introspect^{W}_{x,a} \, \erase^{\cw}_{y} &\approx \erase^{\cw}_{y} \, \introspect^{W}_{x,a}\\
    \introspect^{W}_{x,a} \, O^{\cw}_{u} &\approx O^{\cw}_{u} \, \introspect^{W}_{x,a}.
\end{align*}
\end{lemma}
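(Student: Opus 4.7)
The plan is to derive each of the five approximate relations from cross-checks between nontrivial question pairs of $G^\intro$, following the template laid out at the end of \Cref{subsec:asymptotics}. Throughout, the assumption $\val(G^\intro, \strategy^\intro) \geq 1 - \eps$ plus an averaging argument gives that, conditioned on any specific question pair, the players win with probability at least $1 - \delta(\eps)$, and consistency-to-closeness (\Cref{lem:consistency-consequences}) converts this into $\tau$-norm bounds.

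First I would establish $\introspect^W_x \approx \sample^W_x$ from the nontrivial pair $(\introspect_W, \sample_W)$, whose winning condition forces the $\introspect_W$-player's $x$-coordinate to match the $\sample_W$-player's answer: summing over $(x,a)$ gives $\introspect^W_{x,\cdot} \simeq_{\delta(\eps)} \sample^W_x$, which converts to closeness. For the second relation $\introspect^W_{x,a} \sample^W_y \approx \sample^W_y \introspect^W_{x,a}$, I would use the fact that $\{\introspect^W_{x,a}\}_{x,a}$ is a projective measurement, so its elements commute among themselves; in particular $\introspect^W_{x,a}$ commutes exactly with the marginal $\introspect^W_y = \sum_b \introspect^W_{y,b}$. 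Combining this with $\introspect^W_y \approx \sample^W_y$ (the first relation) and two applications of \Cref{prop:exchange-operators} yields
\[
\introspect^W_{x,a} \sample^W_y \approx \introspect^W_{x,a} \introspect^W_y = \introspect^W_y \introspect^W_{x,a} \approx \sample^W_y \introspect^W_{x,a}.
\]

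For the third and fourth relations, the mediating witness is the composite question. For $\introspect^W_{x,a} \sample^\cw_y \approx \sample^\cw_y \introspect^W_{x,a}$, I would use the two nontrivial pairs $(\introspect_W, \introspect_W\sample_\cw)$ and $(\introspect_W \sample_\cw, \sample_\cw)$. The first forces $\introspect^W_{x,a} \approx (\introspect^W \sample^\cw)_{x,a,\cdot}$ and the second forces $\sample^\cw_y \approx (\introspect^W \sample^\cw)_{\cdot,\cdot,y}$. Since both operators on the right are marginals of the same projective measurement $\{(\introspect^W \sample^\cw)_{x,a,y}\}$, they commute exactly, and four applications of \Cref{prop:exchange-operators} chain together to produce the desired approximate commutation. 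The argument for $\introspect^W_{x,a} \erase^\cw_y \approx \erase^\cw_y \introspect^W_{x,a}$ is identical with $\introspect_W \erase_\cw$ in place of $\introspect_W \sample_\cw$.

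Finally, the fifth relation follows from the fourth by linearity. Since $O^\cw_u = \sum_{y} (-1)^{u\cdot y} \erase^\cw_y$, we have
\[
\introspect^W_{x,a} O^\cw_u - O^\cw_u \introspect^W_{x,a} = \sum_{y} (-1)^{u \cdot y} \bigl(\introspect^W_{x,a} \erase^\cw_y - \erase^\cw_y \introspect^W_{x,a}\bigr),
\]
and a Cauchy–Schwarz application bounds the appropriate sum of squared $\tau$-norms by a factor of $|\cal{X}|$ times the corresponding sum for the fourth relation; since $|\cal{X}| = 2^\ell$ is fixed for a given $n$ and we are in the gapless regime (per \Cref{subsec:asymptotics}), this polynomial blowup is absorbed into $\delta(\eps)$. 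The only mild obstacle is bookkeeping of the answer-sum conventions in $\approx$ — in particular making sure when we reduce to marginals of a projective measurement (as in the third and fourth relations) that the indices match correctly before invoking \Cref{prop:exchange-operators} — but this is mechanical and the proof structure is essentially repeated application of the cross-check template.
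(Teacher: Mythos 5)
Your proposal follows the paper's proof essentially step-by-step: the same five cross-checks (against $\sample_W$, the internal projective commutation, the two composite-question mediators $\introspect_W\sample_\cw$ and $\introspect_W\erase_\cw$, and then linearity for the observable), the same appeal to consistency-to-closeness and to exact commutation of marginals of a single projective measurement, and the same absorption of the $|\cal{X}|$-type blowup into $\delta(\eps)$ for the last line. The only cosmetic difference is that you invoke Cauchy--Schwarz explicitly for the observable step where the paper simply counts the multiplicative $|\cal{X}\times\cal{A}\times\cal{X}|$ factor; both are the same triangle-inequality estimate in the gapless regime.
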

\begin{proof}
As mentioned in Section \ref{subsec:asymptotics}, when we write $\introspect^{W}_x \approx \sample^{W}_x$ we mean $\introspect^{W}_x \approx_{\delta(\eps)} \sample^{W}_x$ for some function $\delta$ such that $\delta(\eps) \to 0$ as $\eps \to 0$.

Since the strategy is winning with probability $1-\eps$, the winning probability conditioned on receiving question $(\introspect_W,\sample_W)$ is at least $1 - |\cal{Q}^\intro|^2\eps$. The expression for the probability of winning conditioned on players receiving question pair $(\introspect_W,\sample_W)$ is
\begin{align*}
\sum_{x,a,y} \rho(\introspect^W_{x,a} \, \sample^W_y)D^\intro(\introspect_W,\sample_W,(x,a),y) &= \sum_{x,a} \rho(\introspect^W_{x,a} \, S^W_x)\\
&= \sum_{x} \rho(\introspect^W_{x} \, \sample^W_x).
\end{align*}
Therefore we have
\[\sum_x \rho(\introspect^{W}_x \, \sample^W_x ) \approx 1,\]
or equivalently that $\introspect^{W}_x \simeq \sample^W_x $.
By Lemma \ref{lem:consistency-consequences}, we get that $\introspect^{W}_x \approx \sample^{W}_x$. By Proposition \ref{prop:exchange-operators}, we obtain that $\introspect^{W}_{x,a} \, \sample^{W}_{y} \approx \introspect^{W}_{x,a} \, \introspect^{W}_{y}$ from which we arrive at our first approximate commutation relation
\[\introspect^{W}_{x,a} \, \sample^{W}_{y} \approx \introspect^{W}_{x,a} \, \introspect^{W}_{y} = \introspect^{W}_{y} \, \introspect^{W}_{x,a} \approx \sample^{W}_{y} \, \introspect^{W}_{x,a}\]
where the equality in the middle follows because operators belonging to the same projective measurement commute. This is the basic idea behind the proof of the remaining approximate relations.

Next we prove the approximate commutation relation $\introspect^{W}_{x,a} \, \erase^{\cw}_{y} \approx \erase^{\cw}_{y} \, \introspect^{W}_{x,a}$ (the relation $\introspect^{W}_{x,a} \, \sample^{\cw}_{y} \approx \sample^{\cw}_{y} \, \introspect^{W}_{x,a}$ is proved nearly identically). Similar to our argument above for $(\introspect_W,\sample_W)$, the players winning probability conditioned on receiving question pair $(\erase_\cw,\introspect_W\erase_\cw)$ is $1-\delta(\eps)$, that is
\[\sum_y \tau(\erase^{\cw}_y (\introspect^W\erase^\cw)_y) \approx 1\]
from which, similar to the argument above, we arrive at $\erase^{\cw}_y \approx (\introspect^W\erase^\cw)_y$. With a similar argument, this time starting from the winning probability conditioned on question pair $(\introspect_W,\introspect_W\erase_\cw)$, we get that $\introspect^{W}_{x,a} \approx (\introspect^W\erase^\cw)_{x,a}$. Putting these together we obtain
\begin{align*}
    \introspect^{W}_{x,a} \, \erase^{\cw}_{y} &\approx (\introspect^W\erase^\cw)_{x,a} (\introspect^W\erase^\cw)_{y}\\
    &= (\introspect^W\erase^\cw)_{y}(\introspect^W\erase^\cw)_{x,a}\\
    &\approx \erase^{\cw}_{y} \, \introspect^{W}_{x,a}.
\end{align*}
Finally the last approximate commutation relation follows
\begin{align*}
    \introspect^{W}_{x,a} \, O^{\cw}_{u} &= \sum_{y\in \cal{X}} (-1)^{y.u}\introspect^{W}_{x,a} \, \erase^{\cw}_{y} \\
    &\approx \sum_{y\in \cal{X}} (-1)^{y.u}\erase^{\cw}_{y} \, \introspect^{W}_{x,a}\\
    &= O^{\cw}_{u} \, \introspect^{W}_{x,a}.
\end{align*}
Switching the order of multiplication in $\introspect^{W}_{x,a} \, \erase^{\cw}_{y}$ incurs an error of $\delta(\eps)$ for each $x,a,y$. So over all the norm of $\sum_{y\in \cal{X}} (-1)^{y.u}\introspect^{W}_{x,a} \, \erase^{\cw}_{y} - \sum_{y\in \cal{X}} (-1)^{y.u}\erase^{\cw}_{y} \, \introspect^{W}_{x,a}$ is bounded above by $|\cal{X}\times \cal{A}\times \cal{X}| \delta(\eps)$ which is another error function $\delta(\eps)$.
\end{proof}

Next lemma establishes that the introspected questions are sampled almost uniformly from the question set of the original game. We then use this to justify that $\introspect_{x,a,y,b}$ is approximately $\introspect^{\alice}_{x,a}\introspect^{\bob}_{y,b}$ when $x,y$ is a nontrivial question pair in the original game. 

\begin{lemma}\label{lem:approximately-uniform-sampling}
Let $\introspect_{x,y} = \introspect_{x,\cdot,y,\cdot}$. Then the following hold 
\begin{gather*}
\introspect_{x,y} \approx \sample^{\alice }_x \, \sample^{\bob }_{y},\\
\rho(\introspect_{x,y}) \approx \frac{1}{2^{2\ell}}.
\end{gather*}
Furthermore, if $x,y$ is a nontrivial question pair in the original game, then for every $a,b \in \cal{A}$ \[\introspect_{x,a,y,b} \approx \introspect^{\alice}_{x,a} \, \introspect^{\bob}_{y,b}.\]
\end{lemma}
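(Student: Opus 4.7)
The plan is to first establish the factorization $\introspect_{x,a,y,b} \approx \introspect^\alice_{x,a}\introspect^\bob_{y,b}$ for nontrivial $(x,y)$ (the third claim), and then to derive the marginal closeness $\introspect_{x,y} \approx \sample^\alice_x\sample^\bob_y$ and the trace estimate $\rho(\introspect_{x,y}) \approx 2^{-2\ell}$ as consequences.

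First I would unpack the winning condition of the cross-check $(\introspect,\introspect_W)$ for each $W \in \{\alice,\bob\}$ using the averaging argument. The predicate auto-wins when the introspected pair $(x,y)$ lies in the set $T$ of trivial pairs and otherwise requires $\introspect^W$ to match the corresponding coordinate of $\introspect$'s answer together with $D(x,y,a,b)=1$. Combining the resulting ``bad answer'' and ``bad match'' bounds yields the clean consistency
\[
    \sum_{(x,y)\notin T,\ a,b} \rho\bigl(\introspect_{x,a,y,b}(\id - \introspect^W_{x,a_W})\bigr) \leq \delta(\eps),
\]
where $a_\alice = a$ and $a_\bob = b$. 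Using the projector identity $\|P - PQ\|_\tau^2 = \rho(P(\id - Q))$ (valid because $\introspect_{x,a,y,b}$ and $\introspect^W_{z,c}$ are projections), this translates into closeness in the $\tau$-norm: for nontrivial $(x,y)$,
\[
    \introspect_{x,a,y,b} \approx \introspect_{x,a,y,b}\introspect^\alice_{x,a} \qquad\text{and}\qquad \introspect_{x,a,y,b} \approx \introspect_{x,a,y,b}\introspect^\bob_{y,b},
\]
with the left-multiplied versions following by taking adjoints (since $\introspect_{x,a,y,b}$ is self-adjoint).

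Next I would prove claim 3 by inserting a resolution of identity via the projective measurement $\{\introspect_{x',a',y',b'}\}$:
\[
    \introspect^\alice_{x,a}\,\introspect^\bob_{y,b} \;=\; \sum_{x',a',y',b'} \introspect^\alice_{x,a}\,\introspect_{x',a',y',b'}\,\introspect^\bob_{y,b}.
\]
For every $(x',a') \neq (x,a)$ with $(x',y')$ nontrivial, the consistency above gives $\introspect^\alice_{x,a}\introspect_{x',a',y',b'} \approx \introspect^\alice_{x,a}\introspect^\alice_{x',a'}\introspect_{x',a',y',b'} = 0$ by orthogonality of distinct elements of the projective measurement $\introspect^\alice$; analogously on the $\bob$-side for $(y',b') \neq (y,b)$. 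Hence only the $(x,a,y,b)$ term survives, yielding $\introspect^\alice_{x,a}\introspect^\bob_{y,b} \approx \introspect^\alice_{x,a}\introspect_{x,a,y,b}\introspect^\bob_{y,b} \approx \introspect_{x,a,y,b}$. Claim 1 then follows by summing over $a,b$ and invoking $\introspect^W_x \approx \sample^W_x$ from \Cref{lem:approximate-commutation}:
\[
    \introspect_{x,y} = \sum_{a,b}\introspect_{x,a,y,b} \approx \sum_{a,b}\introspect^\alice_{x,a}\introspect^\bob_{y,b} = \introspect^\alice_x\introspect^\bob_y \approx \sample^\alice_x\sample^\bob_y,
\]
and claim 2 follows by Cauchy-Schwarz, $|\rho(\introspect_{x,y}) - \rho(\sample^\alice_x\sample^\bob_y)| \leq \|\introspect_{x,y} - \sample^\alice_x\sample^\bob_y\|_\tau$, together with part 4 of \Cref{thm:rigidity-of-question-sampling}, which gives $\rho(\sample^\alice_x\sample^\bob_y) \approx 2^{-2\ell}$.

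The main technical difficulty I anticipate is controlling the contribution of trivial pairs to the resolution-of-identity sum, since the $(\introspect,\introspect_W)$ cross-checks auto-win on $T$ and therefore do not directly constrain the measurement there: those terms aggregate into $\introspect^\alice_{x,a}\,\Pi_T\,\introspect^\bob_{y,b}$, where $\Pi_T = \sum_{(x',y')\in T}\introspect_{x',y'}$. I expect the way forward is to use the approximate commutations $\introspect^\alice_{x,a}\sample^\bob_y \approx \sample^\bob_y\introspect^\alice_{x,a}$ (and its $\erase$ analogue) from \Cref{lem:approximate-commutation}, together with $\introspect^W_x \approx \sample^W_x$, in order to replace the outer $\introspect^W$ factors by their sampling counterparts and reduce the estimate to $\|\sample^\alice_x\Pi_T\sample^\bob_y\|_\tau$; this residual can then be controlled via the trace estimate $\rho(\sample^\alice_x\sample^\bob_y) \approx 2^{-2\ell}$ from \Cref{thm:rigidity-of-question-sampling} aggregated over the trivial subset.
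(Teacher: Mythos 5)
Your plan is a genuinely different route from the paper's. The paper establishes $\introspect_{x,y} \approx \sample^\alice_x\sample^\bob_y$ and $\rho(\introspect_{x,y}) \approx 2^{-2\ell}$ \emph{first}, directly from the $(\introspect,\introspect_W)$ and $(\introspect_W,\sample_W)$ cross-checks, and only then attacks the factorization — using the trace estimate as a crucial numerical input to a direct computation of $\sum_a \|R_a - S_a\|_\rho^2$ with $R_a = \introspect_{x,a,y,\cdot}$ and $S_a = \sample^\bob_y\introspect^\alice_{x,a}\sample^\bob_y$, which never has to touch the trivial sector of the $\introspect$ measurement. You reverse this order. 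Your handling of the nontrivial off-diagonal terms in the resolution of identity is correct and clean: if $P$ is the sum of $\introspect_{x',a',y',b'}$ over nontrivial $(x',y')$ with $(x',a')\neq(x,a)$, then $P$ is a projection and $\rho(\introspect^\alice_{x,a}P) \leq \sum \rho\bigl(\introspect_{x',a',y',b'}(\id - \introspect^\alice_{x',a'})\bigr) \leq \delta(\eps)$ since $\introspect^\alice_{x,a} \leq \id - \introspect^\alice_{x',a'}$, hence $\|\introspect^\alice_{x,a}P\|_\tau^2 = \rho(\introspect^\alice_{x,a}P) \leq \delta(\eps)$.

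The gap is exactly where you flag it: the trivial residual $\introspect^\alice_{x,a}\,\Pi_T\,\introspect^\bob_{y,b}$ with $\Pi_T = \sum_{(x',y')\in T}\introspect_{x',y'}$. Your proposed resolution — pass to sampling measurements and invoke the trace estimate aggregated over $T$ — does not close it. The only unconditional bound is $\|\sample^\alice_x\Pi_T\sample^\bob_y\|_\tau^2 \leq \rho(\sample^\alice_x\Pi_T) \leq \rho(\sample^\alice_x) \approx 2^{-\ell}$, and $2^{-\ell/2}$ does \emph{not} vanish as $\eps\to 0$ for fixed $\ell$, so it is not a proper error function and cannot play the role of $\delta(\eps)$ in the statement. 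To push the trivial residual to zero with $\eps$ you need to know that $\Pi_T$ is, up to $\delta(\eps)$, concentrated on $\sum_{(x',y')\in T}\sample^\alice_{x'}\sample^\bob_{y'}$ — which is orthogonal to $\sample^\alice_x\sample^\bob_y$ when $(x,y)\notin T$. But that is precisely the marginal closeness $\introspect_{x',y'} \approx \sample^\alice_{x'}\sample^\bob_{y'}$ (your claim 1), which you were planning to \emph{derive} from the factorization. The circularity is unavoidable in your ordering, because the $(\introspect,\introspect_W)$ winning condition auto-accepts on trivial introspected pairs, so the consistency argument that controls $\introspect^\alice_{x,a}\introspect_{x',a',y',b'}$ over the nontrivial sector says nothing about where $\introspect_{x',y'}$ lives when $(x',y')\in T$. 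The fix is structural: prove claims 1 and 2 first (as the paper does), at which point your residual argument goes through, or adopt the paper's $R_a$/$S_a$ comparison which is engineered to avoid ever decomposing over the trivial subset.
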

\begin{proof}
The players winning probability conditioned on receiving question pair $(\introspect,\introspect_\alice)$ is $1-\delta(\eps)$. So $\sum_{x} \rho(\introspect_{x,\cdot,\cdot,\cdot} \, \introspect^{\alice}_x) = 1-\delta(\eps)$ where $\introspect^\alice_x = \sum_{a} \introspect^\alice_{x,a}$. Therefore $\introspect_{x,\cdot,\cdot,\cdot} \approx \introspect^{\alice}_x$ and consequently $\introspect_{x,\cdot,\cdot,\cdot} \approx \sample^{\alice }_x$ by \Cref{lem:consistency-consequences}. Similarly $\introspect_{\cdot,y,\cdot,\cdot} \approx \introspect^{\bob}_{y} \approx \sample^{\bob }_{y}$. Thus we have $\introspect_{x,y} = \introspect_{x,\cdot,\cdot,\cdot} \, \introspect_{\cdot,y,\cdot,\cdot} \approx \sample^{\alice }_x \, \sample^{\bob }_{y}$. By \Cref{thm:rigidity-of-question-sampling} and \Cref{prop:exchange-operators}, we conclude that $\rho(\introspect_{x,y}) \approx \frac{1}{2^{2\ell}}$.

So far we established that any question pair $(x,y)$ in the answer to the Introspection question $\introspect$ occurs almost uniformly, that is with probability approximately $1/2^{2\ell}$. Fix a nontrivial question pair $x,y$ in the original game. The probability of the event that players receive question pair $(\introspect,\introspect^A)$ and respond with $(x,a,y,b)$ and $(z,c)$, respectively, for some $a,b,c \in \cal{A}$ and $z \in \cal{X}$ is at least $(1-\delta(\eps))2^{-2\ell}/|\cal{Q}^\intro|^2$. Since the overall strategy looses with probability at most $\eps$, the probability of loosing conditioned on this event is bonded above by $$2^{2\ell}|\cal{Q}^\intro|^2\eps/(1-\delta(\eps)) \leq 2^{2\ell}|\cal{Q}^\intro|^2(1+\delta(\eps))\eps =\delta(\eps)$$ or in other words the probability of winning conditioned on this event is $1- \delta(\eps)$.
It is now a simple exercise in probability theory to see that conditioned on receiving question $(\introspect,\introspect_A)$, the probability that player receiving $\introspect$ answers with introspected questions $(x,y)$ and the players win is $\approx 2^{-2\ell}$.

By the construction of the Introspection game, if the players win, then it must be that $(z,c) = (x,a)$. Therefore we have \[\sum_{a} \rho(\introspect_{x,a,y,\cdot} \, \introspect^{\alice}_{x,a}) = \sum_{a,b}\rho(\introspect_{x,a,y,b} \, \introspect^{\alice}_{x,a}) \approx 2^{-2\ell}.\] Using the relation $\introspect_{y} \approx \sample^{\bob }_{y}$ that we proved earlier together with the approximate commutations in \Cref{lem:approximate-commutation}, we obtain 
\begin{align}\label{eq:R-S-approximation}
\sum_a \rho(\introspect_{x,a,y,\cdot}\paren{\sample^{\bob }_{y} \,\, \introspect^{\alice}_{x,a} \,\, \sample^{\bob }_{y}}) \approx \sum_a \rho(\introspect_{x,a,y,\cdot}\paren{\introspect_{y} \,\, \introspect^{\alice}_{x,a} \,\, \introspect_{y}}) =  \sum_a \rho(\introspect_{x,a,y,\cdot} \,\, \introspect^{\alice}_{x,a}) \approx 2^{-2\ell}.
\end{align}
Define positive semidefinite operators $R_a = \introspect_{x,a,y,\cdot}$ and $S_a  = \sample^{\bob }_{y} \introspect^{\alice}_{x,a}\sample^{\bob }_{y}$, and write
\begin{align*}
    \sum_{a} \| R_a - S_a \|_\rho ^2 &= \sum_{a} \rho(R_a^2 + S_a ^2 - 2R_a \, S_a )\\
                                &\leq \sum_{a} \rho(R_a + S_a  - 2R_a S_a )\\
                                &= \sum_{a} \rho(R_a) + \rho(S_a ) - 2\rho(R_a \, S_a )\\
                                &\leq 2(1+\delta(\eps))2^{-2\ell} - 2(1-\delta(\eps))2^{-2\ell}\\
                                &= \delta(\eps).
\end{align*}
The first inequality follows from the fact that $R_a,S_a$ are positive semidefinite with operator norm $\leq 1$. The last inequality follows from $\rho(\sum_a R_a S_a ) \approx 2^{-2\ell}$ which we proved in (\ref{eq:R-S-approximation}) and the following two calculations
\begin{align*}
    \rho(\sum_a R_a) &= \rho(\introspect_{x,y}) \approx 2^{-2\ell},\\
    \rho(\sum_a S_a ) &= \rho(\sample^{\bob }_{y} \introspect^{\alice}_{x} \,\, \sample^{\bob }_{y}) = \rho(\introspect^{\alice}_{x} \,\, \sample^{\bob }_{y}) \approx \rho(\sample^{\alice }_{x} \,\, \sample^{\bob }_{y}) \approx 2^{-2\ell}.
\end{align*}
We conclude that $\introspect_{x,a,y,\cdot} \approx \sample^{\bob }_{y} \,\, \introspect^{\alice}_{x,a}\ \,\, \sample^{\bob }_{y}\approx \introspect^{\alice}_{x,a} \,\, \sample^{\bob }_{y}$. By a similar argument we get that $$\introspect_{x,\cdot,y,b} \approx \introspect^{\bob}_{y,b} \,\, \sample^{\alice }_x.$$ Putting these two together $$\introspect_{x,a,y,b} = \introspect_{x,a,y,\cdot} \,\, \introspect_{x,\cdot,y,b} \approx \introspect^{\alice}_{x,a} \,\, \sample^\bob_y \,\, \introspect^{\bob}_{y,b} \,\, \sample^\alice_x \approx \introspect^{\alice}_{x,a} \,\, \sample^\alice_x \,\, \introspect^{\bob}_{y,b} \,\,\sample^\bob_y =  \introspect^{\alice}_{x,a} \,\, \introspect^\alice_x \,\, \introspect^{\bob}_{y,b} \,\, \introspect^\bob_y = \introspect^{\alice}_{x,a} \,\, \introspect^{\bob}_{y,b}.$$
\end{proof}

 We first sketch a proof of \Cref{prop:introspection-soundness}. The key step is to establish that, in any strategy that wins with high probability in $G^\intro$, when players $A$ and $B$ receive questions $\introspect_\alice$ and $\introspect_\bob$, respectively, their answers $(x_\alice,a_\alice)$ and $(x_\bob,a_\bob)$ are such that $(x_\alice,x_\bob)$ is uniformly distributed in $\cal{X}\times \cal{X}$ and $a_\alice$ has no dependence on $x_{B}$ and similarly $a_\bob$ has no dependence on $x_\alice$. In other words players introspectively asked themselves a uniformly random question $(x_\alice,x_\bob)$ and produced answers $(a_\alice,a_\bob)$ as they would have answered if they received question $(x_\alice,x_\bob)$ in the original game. 

In \Cref{lem:approximate-commutation}, we proved that $\introspect^{W}_x \approx \sample^{W}_x$. This relation implies that on question $\introspect_W$ the player effectively obtains $x_W$ part of the answer by measuring $\{\sample^{W}_x\}$. So, by the rigidity properties of the Question Sampling game, we get that $(x_a,x_b)$ is sampled (almost) uniformly at random from $\cal{X}\times \cal{X}$. We also showed in \Cref{lem:approximately-uniform-sampling} that $(x_a,x_b)$ in answer to question $\introspect$ are also distributed (almost) uniformly. From the rigidity properties of the Question Sampling game, measurements $\sample^{W}$ and $\erase^{W}$ (approximately) anticommute while they both (approximately) commute with measurements $\sample^{\cw}$ and $\erase^{\cw}$. Additionally we saw in \Cref{lem:approximate-commutation} that $\introspect^W$ commutes with both $\sample^\cw$ and $\erase^\cw$. These relationships intuitively imply that the Hilbert space $\hilb$ can be (approximately) divided into a tensor product $\hilb_\alice\otimes \hilb_\bob \otimes \hilb_G$ of three Hilbert spaces such that the players measurements for special questions $\sample_W$ and $\erase_W$ are forced to act as identity on $\hilb_\cw$. Furthermore, the commutation of $\introspect^W$ with $\sample^\cw$ and $\erase^\cw$ implies that operators $\introspect^W$ act trivially on the register $\hilb_\cw$. Now since $x_\cw$ is obtained by a measurement on $\hilb_\cw$ we conclude that $a_W$ has no dependence on $x_\cw$.

Putting these together, we get that the player with question $\introspect_W$ produces $x_W$ via a measurement on $\hilb_W$, then produces $a_W$ with a measurement that depends on $x_W$ and has a nontrivial support only on the game register $\hilb_G$. In other words $\introspect^{W}_{x,a} = \sample^{W}_x\otimes N^x_a$ for some $N^x_a$ that acts as identity on $\hilb_\cw$. We can now let $\{N^x_a\}$ be the measurements in a strategy in the original game $G$ and show that its value is large. In what follows we make this argument precise.

\begin{proof}[Proof of \Cref{prop:introspection-soundness}]
Let $\strategy^\intro = (\rho, \{P^{q}\}_{q\in \cal{Q}^\intro})$ be a synchronous strategy for $G^\intro$ that has value at least $1 - \eps$. Let $\hat{\hilb},\Pi, \hat{\algebra}, \sigma$ be as defined in \Cref{lem:new-algebra}. 

For every $W \in \{\alice,\bob\}$, $x \in \cal{X}$ and $a \in \cal{A}$ define the operator
\begin{align*}
    \tilde{W}_a^x \coloneqq O^W_x \,\, I^W_{x,a} \,\, O^W_x~.
\end{align*}
Note that for every $W\in \{A,B\}$ and $x\in \cal{X}$ the operators $\{\tilde{W}_a^x\}_{a\in \cal{A}}$ are pairwise orthogonal projections. For every $x\in \cal{X}$ define the \emph{leftover} operator
\begin{align*}
	\tilde{W}_\perp^x \coloneqq \id - \sum_{a \in \cal{A}} \tilde{W}^x_a~.
\end{align*}
Let $\tilde{\cal{A}} = \cal{A}\cup \{\perp\}$ denote the expanded answer set. Then $\{\tilde{W}_a^x\}_{a\in \tilde{\cal{A}}}$ is a projective measurement for every $W\in \{\alice,\bob\}, x \in \cal{X}$. 

Now for every $x \in \cal{X}, a\in\tilde{\cal{A}}$ define 
\begin{align*}
	\hat{W}^x_a \coloneqq \Pi \,\, \tilde{W}^x_a \,\, \Pi~.
\end{align*}
These are clearly positive semidefinite operators and 
\[
	\sum_{a\in \tilde{A}} \hat{W}_a^x = \Pi \, \Big( \sum_{a\in \tilde{A}} \tilde{W}_a^x \Big ) \, \Pi= \Pi^2 = \Pi~.
\]
Since $\Pi$ is projection onto $\hat{\hilb}$, the set of operators $\{\hat{W}_a^x\}_{a\in \tilde{A}}$ are POVMs on $\hat{\hilb}$ for every $x$. 

Our first goal is to show that for every $x,y\in \cal{X},a,b \in \cal{A}$ it holds that
\begin{align}\label{eq:hat-operators-to-original-operators}
\rho(\hat{A}_a^x \,\, \hat{B}_b^y) \approx \rho (I^\alice_{x,a} \,\, I^\bob_{y,b}).
\end{align}
We achieve this by repeatedly applying \Cref{prop:exchange-operators}. First recall from \Cref{cor:entanglement-bound-question-sampling} that $\Pi \approx \sample^\alice_0 \sample^\bob_0$. Here we use $0$ as a shorthand notation for $0^\ell$. So we have
\begin{align*}
    \rho\paren{\hat{A}_a^x \,\, \hat{B}_b^y} &= \rho \paren{\Pi \,\, \tilde{A}_a^x \,\, \Pi \,\, \tilde{B}_b^y\,\,  \Pi}\\
    &\approx \rho\paren{\sample^\alice_0 \,\, \tilde{A}_a^x \,\, \sample^\alice_0 \,\, \sample^\bob_0 \,\, \tilde{B}_b^y \,\, \sample^\bob_0 },
\end{align*}
where we used \Cref{thm:rigidity-of-question-sampling} which states that $\sample^{\alice }_{0}$ and $\sample^{\bob }_{0}$ approximately commute. We continue by expanding $\tilde{A}_a^x$ and $\tilde{B}_a^x$ to obtain
\begin{align*}
    \rho\paren{\sample^\alice_0 \,\, \tilde{A}_a^x \,\, \sample^\alice_0 \,\, \sample^\bob_0 \,\, \tilde{B}_b^y \,\, \sample^\bob_0 } &=\rho\paren{\sample^\alice_0 \, \paren{O^\alice_x \,\, I^\alice_{x,a} \,\, O^\alice_x} \,\, \sample^\alice_0 \,\, \sample^\bob_0 \,\, \paren{O^\bob_y \,\, I^\bob_{y,b} \,\, O^\bob_y} \,\, \sample^\bob_0}\\
    &\approx \rho\paren{ \paren{O^{\alice}_{x} \,\, \sample^\alice_x \,\, I^\alice_{x,a} \,\, \sample^\alice_x \,\, O^\alice_x } \,\,   \paren{O^{\bob}_{y} \,\, \sample^\bob_y \,\, I^\bob_{y,b} \,\, \sample^\bob_y \,\, O^\bob_y }}
\end{align*}
where in the last line, we used \Cref{thm:rigidity-of-question-sampling} which states that $\sample^{W}_{0}\,O^{W}_{x} \approx O^{W}_{x}\,\sample^{W}_{x}$. By \Cref{lem:approximate-commutation} we have $\introspect^{W}_x \approx \sample^W_x$ so 
\begin{align*}
    \rho\paren{ \paren{O^{\alice}_{x} \,\, \sample^\alice_x \,\, I^\alice_{x,a} \,\, \sample^\alice_x \,\, O^\alice_x } \,\,   \paren{O^{\bob}_{y} \,\, \sample^\bob_y \,\, I^\bob_{y,b} \,\, \sample^\bob_y \,\, O^\bob_y }}
    &\approx \rho\paren{\paren{O^{\alice }_{x} \,\, \introspect^{\alice}_{x} \,\, \introspect^{\alice}_{x,a} \,\, \introspect^{\alice}_{x}O^{\alice }_{x}}   \paren{O^{\bob }_{y} \,\,\introspect^{\bob}_{y} \,\,\introspect^{\bob}_{y,b}\,\, \introspect^{\bob}_{y}\,\, O^{\bob }_{y}}}\\
    &\approx \rho\paren{\paren{O^{\alice }_{x} \,\,\introspect^{\alice}_{x,a}\,\, O^{\alice }_{x} }  \paren{O^{\bob }_{y}\,\, \introspect^{\bob}_{y,b}\,\, O^{\bob }_{y}}}
\end{align*}
where in the last line we used that $\introspect^{W}_x = \sum_a \introspect^{W}_{x,a}$ and that $\introspect^{W}_{x,a}$ are projections. Now using \Cref{lem:approximate-commutation} again, we know that erasure observables $O^W$ approximately commute with $\introspect^\cw$ projections. We also know that erasure observables $O^{\alice }$ and $O^{\bob }$ approximately commute. So we continue as follows
\begin{align*}
    \rho\paren{\paren{O^{\alice }_{x} \,\,\introspect^{\alice}_{x,a}\,\, O^{\alice }_{x} }  \paren{O^{\bob }_{y}\,\, \introspect^{\bob}_{y,b}\,\, O^{\bob }_{y}}}
    &\approx \rho\paren{O^{\bob }_{y} \,\, O^{\alice }_{x} \,\, \introspect^{\alice}_{x,a}   \introspect^{\bob}_{y,b} \,\, O^{\alice }_{x} \,\, O^{\bob }_{y}}\\
    &\approx \rho\paren{\paren{O^{\bob }_{y}}^2 \,\, \paren{O^{\alice }_{x}}^2 \,\, \introspect^{\alice}_{x,a}   \,\, \introspect^{\bob}_{y,b}}\\
    &= \rho\paren{\introspect^{\alice}_{x,a} \,\,  \introspect^{\bob}_{y,b}}.
\end{align*}
This completes the proof of \Cref{eq:hat-operators-to-original-operators}.

Our next goal is to show that POVMs $\{\hat{W}_a^x\}_a$ are close to being projective measurements. To this end, we first show that for any $x\in \cal{X}$ and $a,b\in \cal{A}$ 
\begin{align}\label{eq:kronecker}
    \hat{W}_a^x\hat{W}_b^x \approx \hat{W}_a^x\mathbf{1}_{a=b}
\end{align}
where $\mathbf{1}_{a=b}$ is the indicator variable for the equality $a=b$. First expanding according to the definitions
\begin{align*}
    \hat{W}_a^x\hat{W}_b^x &= \Pi \,\, O^{W}_x \,\, \introspect^{W}_{x,a} \,\, O^{W}_x \,\, \Pi \,\, O^{W}_x \,\, \introspect^{W}_{x,b} \,\, O^{W}_x \,\,\Pi\\
    &\approx \Pi \,\, O^{W}_x \,\, \introspect^{W}_{x,a} \,\, O^{W}_x \paren{\sample^{\cw}_{0}\,\, \sample^{W}_{0}\,\,\sample^{\cw}_{0}} O^{W}_x \,\, \introspect^{W}_{x,b} \,\, O^{W}_x  \,\, \Pi
\end{align*}
where in the last line we used the fact that $\Pi \approx \sample^{\cw}_{0}\,\,\sample^{W}_{0}\,\,\sample^{\cw}_{0}$ by \Cref{cor:entanglement-bound-question-sampling}. Now sampling projections $\sample^\cw$ commute with erasure observables $O^W$ and Introspection projections $\introspect^W$ so
\begin{align*}
    \Pi \,\, O^{W}_x \,\, \introspect^{W}_{x,a} \,\, O^{W}_x \paren{\sample^{\cw}_{0}\,\, \sample^{W}_{0}\,\,\sample^{\cw}_{0}} O^{W}_x \,\, \introspect^{W}_{x,b} \,\, O^{W}_x  \,\, \Pi &\approx \Pi \,\, \sample^{\cw}_{0} \,\, O^{W}_x  \,\, \introspect^{W}_{x,a} \,\, O^{W}_x \,\, \sample^{W}_{0} \,\, O^{W}_x \,\, \introspect^{W}_{x,b}  \,\, O^{W}_x \,\, \sample^{\cw}_{0} \,\, \Pi\\
    &\approx \Pi \,\, O^{W}_x \,\, \introspect^{W}_{x,a} \,\, O^{W}_x \,\, \sample^{W}_{0} \,\, O^{W}_x \,\, \introspect^{W}_{x,b} \,\, O^{W}_x \,\, \Pi
\end{align*}
where in the last line we use the fact that $\Pi \approx \sample^{\cw}_{0}\,\,\sample^{W}_{0}\,\,\sample^{\cw}_{0}$, and hence $\Pi \,\, \sample^{\cw}_{0} \approx \Pi \approx \sample^{\cw}_{0} \,\, \Pi$. Now moving $\sample^W_0$ passed $O^W_x$ using the relation $O^W_x \,\, \sample^W_0 \approx \sample^W_x \,\, O^W_x$, and then using the fact that $(O^W_x)^2 =I$ (as $O^W_x$ is an observable), we get
\begin{align*}
    \Pi \,\, O^{W}_x \,\, \introspect^{W}_{x,a} \,\, O^{W}_x \,\, \sample^{W}_{0} \,\, O^{W}_x \,\, \introspect^{W}_{x,b} \,\, O^{W}_x \,\, \Pi  &\approx \Pi \,\, O^{W}_x \,\, \introspect^{W}_{x,a} \,\, \sample^{W}_{x} \,\, (O^{W}_x)^2 \,\, \introspect^{W}_{x,b} \,\, O^{W}_x \,\, \Pi\\
    &= \Pi \,\, O^{W}_x \,\, \introspect^{W}_{x,a} \,\, \sample^{W}_{x} \,\, \introspect^{W}_{x,b} \,\, O^{W}_x \,\, \Pi
\end{align*}
Now substituting $\introspect^W_x$ in place of $\sample^W_x$ we get
\begin{align*}
    \Pi \,\, O^{W}_x \,\, \introspect^{W}_{x,a} \,\, \sample^{W}_{x} \,\, \introspect^{W}_{x,b} \,\, O^{W}_x \,\, \Pi
    &\approx \Pi \,\, O^{W}_x \,\, \introspect^{W}_{x,a} \,\, \introspect^{W}_{x} \,\, \introspect^{W}_{x,b} \,\, O^{W}_x \,\, \Pi\\
    &\approx \,\, \Pi \,\, O^{W}_x \,\, \introspect^{W}_{x,a} \,\, \introspect^{W}_{x,b} \,\, O^{W}_x \,\, \Pi\\
    &= \hat{W}_a^x\,\, \delta_{a,b},
\end{align*}
where in the last line we used the fact that $\introspect^{W}_{x,a}$ and $\introspect^{W}_{x,b}$ are orthogonal projections when $a\neq b$. This completes the proof of \Cref{eq:kronecker}. From this, we immediately obtain that $\paren{\hat{W}_\perp^x}^2 \approx \hat{W}_\perp^x$ also. So we established that $$\paren{\hat{W}_a^x}^2 \approx \hat{W}_a^x$$ for all $x \in \cal{X}$ and $a \in \tilde{\cal{A}}$. Using \Cref{prop:exchange-operators}, this in turn implies that $$\rho((\hat{W}_a^x)^2) \approx \rho(\hat{W}_a^x)$$ for all $a \in \tilde{\cal{A}}$. By definition of $\sigma$ it is also true that $$\sigma((\hat{W}_a^x)^2) \approx \sigma(\hat{W}_a^x).$$ 
So far we established that $\hat{W}^x_a$, as operators in $\hat{\algebra}$ acting on $\hat{\hilb}$, are close to projections. So applying \Cref{lem:projectivization}, for every $W\in \{A,B\}$ and $x \in \cal{X}$, there exists a projective measurement $\{W_a^x\}_a \subset \hat{\algebra}$ that is close to $\{\hat{W}_a^x\}_a$. 

Our final goal is to build a strategy for $G$ using these hard-earned projective measurements $\{{A}^{x}\}$ and $\{{B}^{y}\}$. On our way, we first need to relate $\{{A}_{a}^{x}\}_a$ and $\{B_{b}^{y}\}_b$ to the original measurements $\introspect^A_{x,a}$ and $\introspect^B_{y,b}$. For every $x,y \in \cal{X},a,b \in \cal{A}$, we can write
\begin{align*}
    \sigma\paren{{A}_{a}^{x}{B}_{b}^{y}} &\approx  \sigma\paren{{\hat{A}}_{a}^{x}{\hat{B}}_{b}^{y}}= \frac{\rho\paren{\hat{A}_{a}^{x}\hat{B}_{b}^{y}}}{\rho(\Pi)}\approx \frac{\rho\paren{\hat{A}_{a}^{x}\hat{B}_{b}^{y}}}{2^{-2\ell}}\approx \frac{\rho\paren{\introspect^{\alice}_{x,a} \introspect^{\bob}_{y,b}}}{2^{-2\ell}}.
\end{align*}
From this and \Cref{lem:approximately-uniform-sampling}, if $x,y$ is nontrivial in $G$, it holds that \[\frac{1}{2^{2\ell}}\sigma\paren{{A}_{a}^{x}{B}_{b}^{y}}\approx \rho\paren{\introspect_{x,a,y,b}}.\]
Therefore summing over all nontrivial question pairs, we have
\begin{align*}
    \sum_{\substack{x,y\\\text{nontrivial}}} \frac{1}{2^{2\ell}} \sum_{a,b \in \cal{A}} D(x,y,a,b) \sigma\paren{{A}_{a}^{x}{B}_{b}^{y}} &\approx \sum_{\substack{x,y\\\text{nontrivial}}} \sum_{a,b \in \cal{A}} D(x,y,a,b) \rho\paren{\introspect_{x,a,y,b}}.
\end{align*}
A similar approximate identity holds when summing over trivial question pairs, that is
\[\sum_{\substack{x,y\\\text{trivial}}} \frac{1}{2^{2\ell}} \sum_{a,b \in \cal{A}} D(x,y,a,b) \sigma\paren{{A}_{a}^{x}\,\, {B}_{b}^{y}}  \approx \sum_{\substack{x,y\\\text{trivial}}} \sum_{a,b \in \cal{A}} D(x,y,a,b)\rho\paren{\introspect_{x,a,y,b}}.\]
Let us see why this is true. First using the fact that $D(x,y,a,b) = 1$ for all $a,b$ and trivial question pair $x,y$, we can write
\begin{align*}
    \sum_{\substack{x,y\\\text{trivial}}} \frac{1}{2^{2\ell}} \sum_{a,b \in \cal{A}} D(x,y,a,b) \sigma\paren{{A}_{a}^{x}\,\,{B}_{b}^{y}}  =  \sum_{\substack{x,y\\\text{trivial}}} \frac{1}{2^{2\ell}} \sum_{a,b \in \cal{A}} \sigma\paren{{A}_{a}^{x}\,\, {B}_{b}^{y}} =  \sum_{\substack{x,y\\\text{trivial}}} \frac{1}{2^{2\ell}}
\end{align*}
where in the last equality we used the fact that $\sum_{a,b} {A}_{a}^{x}\,\, {B}_{b}^{y} = I_{\hat{\hilb}}$. Luckily, we also know that $\rho(\introspect_{x,y}) \approx \frac{1}{2^{2\ell}}$ by \Cref{lem:approximately-uniform-sampling}, and thus
\begin{align*}
    \sum_{\substack{x,y\\\text{trivial}}} \frac{1}{2^{2\ell}}
    &\approx  \sum_{\substack{x,y\\\text{trivial}}} \rho(I_{x,y})\\
    &= \sum_{\substack{x,y\\\text{trivial}}} \sum_{a,b \in \cal{A}} \rho\paren{\introspect_{x,a,y,b}}\\
    &= \sum_{\substack{x,y\\\text{trivial}}} \sum_{a,b \in \cal{A}} D(x,y,a,b)\rho\paren{\introspect_{x,a,y,b}}
\end{align*}
where in the last line we again used the fact that $D(x,y,a,b) = 1$ for all $a,b$ and trivial question pair $x,y$. 

So overall we established that
\[\sum_{x,y} \frac{1}{2^{2\ell}} \sum_{a,b \in \cal{A}} D(x,y,a,b) \sigma\paren{{A}_{a}^{x}\,\,{B}_{b}^{y}} \approx \sum_{x,y} \sum_{a,b \in \cal{A}} D(x,y,a,b) \rho\paren{\introspect_{x,a,y,b}}.\]
The right-hand-side is an upper bound on the probability of winning of $\strategy^\intro$ conditioned on the event that one of the players received the Introspection question $\introspect$. This probability must be at least $1-\delta(\eps)$ by a simple averaging argument. So we have
\begin{align}\label{eq:winning-probability-of-non-synchronous-strategy}
\sum_{x,y} \frac{1}{2^{2\ell}} \sum_{a,b \in \cal{A}} D(x,y,a,b)\sigma\paren{{A}_{a}^{x}\,\,{B}_{b}^{y}} = 1-\delta(\eps).
\end{align}
To summarize, at a high level, we constructed a set of operators $A_a^x$ and $B_b^y$ that together resemble a strategy for $G$ albeit with two sets of measurement operators instead of one. It remains to show that we can turn this into a synchronous strategy. From \Cref{eq:winning-probability-of-non-synchronous-strategy}, for every $x\in \cal{X}$ it must be that 
\[
    \sum_{a,b \in \cal{A}} D(x,x,a,b) \sigma\paren{{A}_{a}^{x}\,\,{B}_{b}^{x}} = 1 - \delta(\eps). 
\]
Since $G$ is synchronous, we have $D(x,x,a,b) = 0$ whenever $a\neq b$. Therefore 
\[\sum_{a\in \cal{A}} \sigma\paren{{A}_{a}^{x}\,\,{B}_{a}^{x}} = 1 - \delta(\eps)\]
or equivalently that $A_a^x \simeq B_a^x$ for every $x\in \cal{X}$. Therefore by \Cref{lem:consistency-consequences}, it holds that $A_a^x \approx B_a^x$ for every $x\in \cal{X}$. Therefore $\sigma(A^x_a \,\, B^y_b) \approx \sigma(A^x_a \,\, A^y_b)$. Using this approximation in (\ref{eq:winning-probability-of-non-synchronous-strategy}) we conclude that 
\begin{align}\label{eq:value-of-derived-strategy2}
    \sum_{x,y \in \cal{X}} \frac{1}{2^{2\ell}} \sum_{a,b \in \cal{A}} D(x,y,a,b) \sigma\paren{{A}_{a}^{x}\,\,{A}_{b}^{y}} = 1 - \delta(\eps).
\end{align}
Now we reduced to one set of measurement operators $A_a^x$ that more closely resemble a synchronous strategy for $G$.
Unfortunately we are not quite there as the set of operators $\{A_a^x\}_{a\in \cal{A}}$ is not a projective measurement if $A^x_{\perp} \neq 0$. We can resolve this issue by defining projective measurements $\{N_a^x\}_{a\in \cal{A}}$ for every $x$ such that $N_{a^\ast}^x = A_{a^\ast}^x + A_\perp^x$ for some special element $a^\ast\in \cal{A}$ and $N_a^x = A_a^x$ for all $a \neq a^\ast$. Now $\strategy = (\sigma, \{N^x\}_{x\in \cal{X}})$ is a synchronous strategy and is such that $\sigma(N^x_a N^y_b) \geq \sigma(A^x_a A^y_b)$. So by (\ref{eq:value-of-derived-strategy2}), we have
\begin{align*}
    \val(G,\strategy)=\sum_{x,y \in \cal{X}} \frac{1}{2^{2\ell}} \sum_{a,b \in \cal{A}} D(x,y,a,b) \sigma\paren{{N}_{a}^{x}{N}_{b}^{y}} = 1-\delta(\eps).
\end{align*}
So for all sufficiently small $\eps$, if there exists a strategy $\strategy^\intro$ with value at least $1-\eps$, we showed the existence of a strategy for $G$ with value $1-\delta(\eps)$. This in turn implies that for all $t \in \{q,co\}$
\[\val_t^s(G^\intro) = 1 \implies \val_t^s(G) = 1.\]

Next we prove the inequality
\[\mathcal{E} (G^{\intro},1) \geq \max \left \{ \mathcal{E} (G,1) , 2^{2\ell} \right\}.\]
Suppose the finite dimensional strategy $\strategy^\intro = (\rho,\{P^q\}_{q\in\cal{Q}^\intro})$ defined over a Hilbert space $\hilb$ has value $1$. Then since the strategy restricted to the Question Sampling game also wins with probability $1$, from \Cref{cor:entanglement-bound-question-sampling}, we get that the dimension of $\hilb$ is at least $2^{2\ell}$.

It remains to show that $\mathcal{E} (G^{\intro},1) \geq \mathcal{E} (G,1)$. Consider the finite-dimensional strategy $\strategy = (\sigma,\{N^x_a\})$ constructed above for the original game $G$. The inequality now follows from the fact that the strategy $\strategy$ is over the Hilbert space $\hat{\hilb}$ defined in \Cref{lem:new-algebra} which is a subspace of $\hilb$.

\end{proof}

\subsection{Proof of \Cref{thm:question-reduction}}
\label{sec:question-reduction-thm-proof}

From \Cref{def:verifiers}, we can let $G_n = (\cal{X}_n,\cal{A}_n,D_n)$ where $\cal{X}_n = \{0,1\}^{\ell_n}$ for some polynomial-time computable function $\ell_n$ of $n$. As we indicated in \Cref{def:verifiers}, the decider and checker Turing machines discard any string that comes after the $\ell_n$th bit in their second and third input tapes. By assumption, for all sufficiently large $n$, we have $\ell_n \leq n^\alpha$, so from our previous statement, we can simply assume that $\ell_n = n^\alpha$. We design the algorithm $\alg{QuestionReduction}_\alpha$ so that $G_n^\intro$ is the Introspection game $(G_n)^\intro$ as defined in \Cref{sec:introspection-definition}. From the definition of Introspection, it is straightforward to see that a polynomial-time algorithm exists that computes a description of $\verifier^\intro = (D^\intro,C^\intro)$ from a description of $\verifier = (D,C)$. The question length of $G_n^\intro$ is $\poly(\alpha,\log n)$ by the definition of the Introspection game and the assumption that $\ell_n  = n^\alpha$.

Given a pair of questions in $G^\intro_n$, if they are both Question Sampling questions, then they are a nontrivial question pair in the Introspection game if and only if they are a nontrivial question pair in the Question Sampling game. If questions are both among special questions $$\sample_\alice ,\erase_\alice ,\introspect_\alice, \introspect_\alice \sample_\bob  ,\introspect_\alice \erase_\bob ,\sample_\bob ,\erase_\bob , \introspect_\bob, \introspect_\bob \sample_\alice ,\introspect_\bob \erase_\alice ,$$ then the pair is nontrivial if they are connected by an edge or a self-loop in \Cref{fig:graph-of-introspection-game}. Since this graph has constant size, this can be decided in $O(1)$. If one question is a Question Sampling question that is not any of $\sample_\alice ,\sample_\bob ,\erase_\alice ,\erase_\bob $ and the other is a special Introspection game question $$\introspect_\alice, \introspect_\alice \sample_\bob  ,\introspect_\alice \erase_\bob , \introspect_\bob, \introspect_\bob \sample_\alice ,\introspect_\bob \erase_\alice ,$$ then the pair is trivial. Therefore the complexity of deciding if a pair is trivial in $G^\intro_n$ is asymptotically the same as the complexity of deciding if a pair is trivial in $\qs_{n^\alpha}$ which is $\poly(\alpha,\log n)$ (see \Cref{tab:question-sampling}).

Next we bound the complexity of $D^\intro(n)$. The bit length of questions in the Introspection game $G_n^\intro$ is $\poly(\alpha,\log n)$. The answer length of $G^\intro_n$ is $n^\alpha$ (as the answer length of $G_n$ is bounded by $\TIME_D(n)$). So the decider can compute in time $\poly(n^\alpha)$ whether the answer format of $G^\intro_n$ is respected. The decider, by simulating $D(n)$ and $C(n)$, can compute in time $\poly(|D|,|C|,\alpha,n^\alpha)$ whether a give quadruple $(q,r,\hat{a},\hat{b})$ is an accepting quadruple in $G_n^\intro$ according to \Cref{tab:introspection}. 

The completeness, soundness, and the dimension bound follow immediately from Propositions \ref{prop:introspection-completeness} and \ref{prop:introspection-soundness}.

\newpage
\section{Answer Reduction}
\label{sec:answer-reduction}

In this section we present the answer reduction transformation, whose properties are given by the following Theorem.
\begin{theorem}[Answer Reduction]
\label{thm:answer-reduction}
    For all $\beta \in \N$ there exists a polynomial-time algorithm $\alg{AnswerReduction}_{\beta}$ that takes as input a pair of Turing machines $(D,C)$ and outputs a pair of Turing machines $(D^\ans,C^\ans)$ such that the following holds. If $\verifier = (D,C)$ is a verifier for a sequence of games $\UGS_\verifier = (G_n)_{n \in \N}$ and $n_0 \in \N$ is an integer such that for all $n \geq n_0$,
        \begin{align*}
        	&\text{The questions of $G_n$ have length at most $\log^\beta(n)$, } \\
    &\TIME_{C}(n) = \log^\beta n~, \text{ and} \\    	
	&\TIME_{D}(n) \leq n^\beta
		\end{align*}
	then the output $\verifier^\ans = (D^\ans,C^\ans)$ is a verifier for a sequence of games $\UGS_{\verifier^\ans} = (G^\ans_n)_{n \in \N}$ with the following properties. There exists $\gamma = \poly(\beta)$ and $n_0^\ans = \poly(\gamma^\gamma,n_0)$ such that for all $n \geq n_0^\ans$,
    \begin{enumerate}
        \item (Complexity bounds) 
        \begin{gather*}
        	\TIME_{D^{\ans}}(n) = \log^\gamma n \\
			\TIME_{C^{\ans}}(n) = \log^\gamma n~.
		\end{gather*}
        \item (Completeness) For all oracularizable synchronous strategies $\strategy$ for $G_n$, there exists an oracularizable synchronous strategy $\strategy^\ans$ for $G_n^\ans$ such that
        \[
            \val(G^{\ans}_n,\strategy^\ans) \geq \frac{1}{2} + \frac{1}{2}\val(G_n,\strategy).
        \]
        Furthermore, if $\strategy$ is finite-dimensional, then so is $\strategy^\ans$.
        \item (Soundness) For all $t \in \{q,co\}$ we have 
        \[
        \val_t^s(G_n) < 1 \Longrightarrow \val_t^s(G^{\ans}_n) < 1~.
       	\]
		\item (Entanglement bound) 
		\[
		\mathcal{E}(G_n^\ans,1) \geq \mathcal{E}(G_n,1)~.
		\]
    \end{enumerate}
\end{theorem}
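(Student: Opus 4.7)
The plan is to implement the scheme outlined in Section~\ref{sec:compression-intro}: first oracularize $G_n$, then apply a Cook--Levin reduction to delegate the decision computation to the players, who now only need to supply a few bits of an assignment certifying $D_n(x,y,a,b)=1$. Concretely, I would first define an intermediate oracularized game $G_n^\orac$ in which one player (the \emph{oracle}) receives a pair $(x,y)\in\cal{X}_n^2$ and responds with $(a,b)\in\cal{A}_n^2$, while the other (the \emph{isolated} player) receives either $x$ or $y$ and responds with a single answer. The verifier checks $D_n(x,y,a,b)=1$ and that the isolated player's response agrees with the oracle's response on the corresponding coordinate. Oracularizability of the input strategy is exactly what allows the honest oracle player to simultaneously apply both of the original measurements, so the value of $G_n^\orac$ under an oracularizable strategy matches the value of $G_n$.

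Next I would apply a hyperefficient Cook--Levin reduction: for each nontrivial $(x,y)$, the Turing machine $D_n(x,y,\cdot,\cdot)$, running in time $n^\beta$, yields a 3SAT formula $\varphi_{n,x,y}$ of size $\poly(n^\beta)$ together with a canonical witness $\pi_{a,b}$ whose first $|a|+|b|$ bits encode $(a,b)$, such that $\pi_{a,b}$ satisfies $\varphi_{n,x,y}$ iff $D_n(x,y,a,b)=1$. The key property is that, given a clause index $k$, the three literals of the $k$-th clause and the types of the variables they involve can be computed in time $\mathrm{polylog}(n^\beta)=\mathrm{polylog}(n)$ directly from the tableau. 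The game $G_n^\ans$ has two question kinds: a \emph{variable query} $(x,i)$ expecting the $i$-th bit of the original answer to $x$, and a \emph{clause query} $(x,y,k)$ expecting the restriction of $\pi_{a,b}$ to the three variables of clause $k$. The verifier uniformly mixes a synchronicity self-loop, a cross-check pairing $(x,i)$ with a clause query whose clause mentions the $i$-th input bit (requiring agreement on that bit), and a two-sided clause query accepted iff the returned partial assignment satisfies the clause and the players agree on shared variables. Answers are $O(1)$ bits and all checks run in $\mathrm{polylog}(n)$ time, yielding $\TIME_{D^\ans}(n),\TIME_{C^\ans}(n)\le \log^\gamma n$ for some $\gamma=\poly(\beta)$.

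For completeness, given an oracularizable $\strategy=(\tau,\{N^x\})$ for $G_n$, the honest $\strategy^\ans$ answers variable queries by applying $N^x$ and reading the requested bit, and clause queries by simultaneously applying $N^x,N^y$ (possible by oracularizability when $(x,y)$ is nontrivial, and by a canonical padding when $(x,y)$ is trivial) to obtain $(a,b)$ and returning $\pi_{a,b}$ restricted to the clause. Synchronicity and cross-checks always pass, while the clause-satisfaction test passes exactly when $(a,b)$ is accepted by $D_n$; mixing the two check families with equal weight gives the advertised $\tfrac12+\tfrac12\val(G_n,\strategy)$ bound, and $\strategy^\ans$ inherits oracularizability and finite-dimensionality from $\strategy$. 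For soundness and the entanglement bound, suppose $\val_t^s(G_n^\ans)=1$. The self-loops force the variable and clause measurements to be projective; cross-checks with variable queries $(x,i)$ over all $i$ force each clause measurement to decompose consistently across the input bits of $a$ and of $b$; invoking the pasting lemma (\Cref{lem:pasting}) produces, for each $(x,y)$, a joint projective measurement $\{M^{x,y}_{a,b}\}$ whose $a$- and $b$-marginals recover the isolated player's measurements on $x$ and $y$. The two-sided clause-satisfaction checks, taken across all clauses of $\varphi_{n,x,y}$, then force $M^{x,y}_{a,b}$ to be supported on pairs $(a,b)$ for which $\pi_{a,b}$ satisfies $\varphi_{n,x,y}$, i.e.\ on $(a,b)$ accepted by $D_n$; this yields a perfect synchronous strategy for $G_n$ on the same Hilbert space, so $\mathcal{E}(G_n^\ans,1)\ge \mathcal{E}(G_n,1)$.

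The main obstacle is the final stage of the soundness argument: stitching the per-clause consistency and satisfaction relations into a single joint measurement $\{M^{x,y}_{a,b}\}$ compatible with all of them. In the gapless regime, exact commutation across the $\poly(n^\beta)$ clauses of $\varphi_{n,x,y}$ follows from the zero-error hypothesis, but one must design the cross-checks in $G_n^\ans$ so that the hypotheses of the pasting lemma are literally satisfied for \emph{every} pair of clauses sharing a variable, and so that the $N^x$ measurements extracted from the marginals are themselves oracularizable on nontrivial pairs (preserving this invariant is essential for recursive composition inside the compression loop). A secondary technical hurdle is keeping the variable-type lookup inside the cross-check, which identifies the set of clauses mentioning the $i$-th bit of $a$ or $b$, implementable in $\mathrm{polylog}(n)$ time directly from the Cook--Levin tableau, without ever materializing the full formula $\varphi_{n,x,y}$.
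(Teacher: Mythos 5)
Your overall plan — oracularize, Cook--Levin, query local parts of the witness — matches the paper's. But there are two genuine gaps in the soundness argument as you describe it.

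First, the phrase ``exact commutation \ldots follows from the zero-error hypothesis'' does not hold for $t=q$. The hypothesis $\val_q^s(G_n^\ans)=1$ is a supremum over finite-dimensional synchronous strategies and does \emph{not} imply the existence of a perfect finite-dimensional strategy (indeed, Slofstra's theorem shows this can fail). The soundness implication must therefore be proved robustly: take a strategy for $G_n^\ans$ of value $1-\eps$, deduce approximate (not exact) consistency and commutation relations, apply the pasting lemma with an error function $\delta_{\mathrm{pasting}}(\eps)\to 0$, and conclude a strategy for $G_n$ of value $1-\delta(\eps)$. Taking $\eps\to 0$ then gives $\val_q^s(G_n)=1$. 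Your exact argument only works for $t=co$, where a perfect synchronous strategy does exist by the Helton--Kim--Paulsen--etc.\ result.

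Second, your question format is too weak to instantiate the pasting lemma. You give the isolated player a single-bit query $(x,i)$ and the oracle player only clause queries $(x,y,k)$ for genuine clauses of $\varphi_{n,x,y}$. But \Cref{lem:pasting} requires \emph{approximate commutation between every pair} of the single-variable measurements $N^{x,y,i}$, not just pairs that co-occur in some clause. Your cross-check between clause queries sharing a variable only gives consistency on the shared variable, not a commutation relation, and two clauses sharing no variable impose no constraint at all. The paper fixes this in two ways you omit: (a) the isolated player is asked for a \emph{pair} of answer bits $(a_j,a_k)$, which, as the common refinement, lets one run the utility \Cref{lem:ar-utility} to conclude that the two single-bit marginals approximately commute; and (b) the oracle player's ``triple'' queries $((x,y),(j,k,\ell))$ range over \emph{all} of $[L]^3$, not only the triples corresponding to clauses — the verifier simply accepts any answer when no clause exists — precisely so that pairwise commutation can be established for every pair $i,j$. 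Without these additional question types, the pasting step is unjustified, even in the exact, $t=co$ setting. You flag this as an ``obstacle,'' but it is not a minor detail to be finessed: it is the structural reason the oracle player's proof questions in the paper are indexed by $[L]\cup[L]^2\cup[L]^3$ rather than by clause indices.
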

Intuitively, the answer reduction transformation transforms a sequence of games $(G_1,G_2,\ldots)$ to a sequence $(G_1^\ans,G_2^\ans,\ldots)$ such that the time complexity of the ``answer reduced'' game $G_n^\ans$ (in terms of computing its decision predicate) is \emph{polylogarithmic} in the time complexity $T(n)$ of the ``original game'' $G_n$, and \emph{polynomial} in the question length $Q(n)$ of $G_n$. The reason this transformation is called ``answer reduction'' is as follows. Suppose the original game $G_n$ already has polylogarithmic-length questions (i.e. $Q(n) \leq \poly(\log T(n))$), but the answer lengths are, say, $\Omega(T(n))$; this will be the case when we apply answer reduction to the introspection games from the previous section. The resulting game $G_n^\ans$ then has time complexity $\poly(\log T(n))$ and in particular both the question and answer lengths of $G_n^\ans$ are at most $\poly(\log T(n))$. 

We describe and analyze the answer reduction transformation $G \mapsto G^\ans$ for a single game (rather than a sequence), and then prove \Cref{thm:answer-reduction} in \Cref{sec:answer-reduction-proof}.

\subsection{Overview}
Let $Q,T \in \N$ be integers and let $G = (\cal{X},\cal{A},D)$ be a synchronous game where $\cal{X} = \{0,1\}^Q$ and $\cal{A} = \{0,1\}^T$, and $\TIME_D \leq T$ (meaning that on all inputs $D$ halts within $T$ timesteps). We can assume via padding that all questions have the same length, and all the answers have the same length.

\paragraph{Oracularization.} We first give an overview of a transformation on $G$ called \emph{oracularization}. This produces the following game $G^\orac$. The verifier may send a player either a question $x \in \cal{X}$ or a pair of questions $(x,y) \in \cal{X}^2$; thus the question alphabet is $\cal{X} \cup \cal{X}^2$. When a player receives a single question $x$ we call them an \emph{isolated player} and its question an \emph{isolated question}. When a player receives a pair $(x,y)$ we call them an \emph{oracle player} and its question an \emph{oracle question}. 

If both players receive the same question (either isolated or oracle), then they must return the same answer. If one player receives an oracle question $(x,y) \in \cal{X}^2$ that is nontrivial for the original game $G$ and the other receives an isolated question $x$ (resp. receives $y$), then the players win if the oracle player responds with an answer pair $(a,b) \in \cal{A}^2$ such that $D(x,y,a,b) = 1$ and the isolated player responds with answer $a$ (resp. responds with answer $b$). All other question combinations are considered trivial for $G^\orac$, and the players automatically win in those cases. 

Intuitively, in the oracularization of $G$ an oracle player must ``simulate'' the behavior of the two players in $G$, and the isolated player (who only receives half of the oracle question) is used to check that the oracle player's answers $(a,b)$ are produced in a way that $a$ only depends on $x$ and $b$ only depends on $y$.

\paragraph{Answer Reduction.} We now give a high-level overview of the \emph{answer-reduced} game $G^\ans = (\cal{X}^\ans,\cal{A}^\ans,D^\ans)$.  The questions of $G^\ans$ are of the form $(g,p)$, where $g$ is a \emph{game question} and $p$ is a \emph{proof question}. The game question $g$, intuitively, is meant to indicate a question from the original game $G$. However, in the answer reduction transformation, the game questions $g$ come from the oracularization $G^\orac$ of $G$.

In the oracularized game $G^\orac$, the players are supposed to respond with either an answer from $\cal{A}$ or from $\cal{A}^2$, depending on whether they received an isolated or oracle question. In the answer reduced game $G^\ans$, however, the players do not respond with a ``full-sized'' answer in $\cal{A} \cup \cal{A}^2$. Instead, the verifier expects that the oracle players will generate a \emph{proof} $\pi$ that they can produce answers $(a,b) \in \cal{A}^2$ that satisfies the decision predicate of the game $G$, and furthermore these answers can be produced in a way such that $a$ only depends on $x$ and $b$ only depends on $y$. The verifier does not examine this purported proof $\pi$ in its entirety but instead uses the proof question $p$ to query it in a constant number of locations.

The main point is this: now the players only have to respond with a constant number of bits corresponding to the proof locations queried, rather than with a symbol from the set $\cal{A} \cup \cal{A}^2$ (whose size we think of as growing to infinity). To ensure that the players' answers to the local queries are consistent with a global proof string $\pi$, and that the purported answers $(a,b)$ (which are included in $\pi$) was generated ``honestly'' (e.g., $a$ does not depend on $x$), the verifier performs cross-checks between the two players. Before describing the format of the proof questions, we first explain in detail what a proof is supposed to look like.

The starting point is the well-known \emph{Cook-Levin reduction} from classical computer science: this is an efficient transformation that maps Turing machines $M$ to 3SAT formulas $\varphi_M$ such that there is an input $w$ (called the \emph{witness}) where $M(w) = 1$ if and only if $\varphi_M$ is satisfiable. Furthermore, it is well-known~\cite[Chapter~20]{papadimitriou1994computational} that the clauses of the SAT formula $\varphi_M$ can be computed extremely efficiently -- in fact, in time that is \emph{logarithmic} in the size of the entire SAT formula (if we treat the description length of $M$ as a constant):

\begin{theorem}[Cook-Levin Theorem]
\label{thm:cook-levin}
For all $1$-input Turing machines $M$ and integers $R,T \in \N$, there exists a 3SAT formula $\varphi(M,T,R)$ (called a \emph{Cook-Levin SAT formula}) with $L = \poly(|M|,T,R)$ variables, such that
\begin{itemize}
	\item For all $w \in \{0,1\}^R$ such that $M(w)$ accepts within $T$ time steps, there exists a unique satisfying assignment $\pi$ for the formula $\varphi(M,T,R)$, and furthermore $\pi_{\leq R}$ (the first $R$ bits of $\pi$) is $w$, and
	\item For all satisfying assignments $\pi$ for the formula $\varphi(M,T,R)$, the Turing machine $M$ accepts $\pi_{\leq R}$ within $T$ time steps.	
\end{itemize}
Furthermore, there exists a polynomial-time algorithm $\alg{CookLevin}$ that takes as input a tuple $(M,T,R,i,j,k)$ where $R,T,i,j,k$ are integers written in binary, and outputs the literals of the clause(s) of $\varphi(M,T,R)$ that contains the $i$-th, $j$-th, and $k$-th variables (or outputs a null symbol if no such clause exists).
\end{theorem}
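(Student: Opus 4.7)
The plan is to use the standard tableau construction from the Cook-Levin theorem, paying careful attention to the indexing of variables so that clause lookup is hyper-efficient. Without loss of generality, assume $M$ is a one-tape deterministic Turing machine with alphabet $\Sigma$ and state set $Q$, both of constant size (depending only on $|M|$). Running $M$ for $T$ steps visits at most $T$ tape cells, so I would represent the computation as a $T \times T$ tableau where entry $(t,p)$ records the symbol in cell $p$ at time $t$, whether the head is at $p$, and if so the current state. First, I would fix a canonical binary encoding of tuples $(t, p, \text{type}, \text{value})$ so that the integer index of a variable determines its meaning via simple bit extraction in time $O(\log(T+R))$.

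The variables of $\varphi(M,T,R)$ are organized so that the first $R$ correspond to the bits of the claimed witness $w \in \{0,1\}^R$ written on the input tape, and the remaining variables encode the tableau contents (one Boolean per possible value of each type at each cell). The total count is $\poly(|M|, T, R)$. Next, I would construct clauses enforcing: (a) each tableau cell takes exactly one value of each type; (b) the initial row agrees with the witness variables on the first $R$ cells and is otherwise blank, with the head at position $0$ in the start state; (c) each local transition from row $t$ to row $t+1$ is consistent with $M$'s transition function, encoded as a constant-size CNF over cells $(t, p-1), (t,p), (t,p+1), (t+1, p)$; and (d) some cell in the final row is in an accepting state. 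By determinism of $M$, for any $w$ making $M$ accept within $T$ steps there is exactly one tableau consistent with it, so exactly one satisfying assignment, whose projection onto the first $R$ variables is $w$; conversely, any satisfying assignment certifies an accepting run on $\pi_{\leq R}$. This gives both completeness/soundness and uniqueness.

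The only non-routine part is the hyper-efficient clause-access algorithm $\alg{CookLevin}$. Given a triple $(i,j,k)$ of indices in binary, the algorithm must decide in time $\poly(|M|, \log T, \log R)$ whether some clause contains all three variables and, if so, output that clause's literals. My plan is to decode each $i,j,k$ into its $(t,p,\text{type},\text{value})$ tuple, then check whether these three tuples lie within the support of an initial, transition, cell-consistency, or accepting clause. Since each such clause has support inside a constant-size window around a single $(t,p)$, the check reduces to comparing a constant number of $O(\log T)$-bit integers (is $t' = t+1$? is $|p'-p|\le 1$? etc.), all doable in $\poly(\log T, \log R)$ time. Once the clause type and anchoring cell are identified, the literals are read off from the transition table of $M$, which is hard-coded into $\alg{CookLevin}$.

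The main obstacle will be bookkeeping: making the indexing scheme clean enough that every clause type (transitions, cell-consistency, boundary rows/columns, accepting row) admits a uniform $O(1)$-window description, and ensuring the algorithm gracefully handles triples $(i,j,k)$ that do not jointly lie in any clause support by returning a null symbol. A secondary subtlety is that transition clauses ordinarily have four-cell support, so I would expand the transition relation into 3SAT via standard auxiliary variables, enlarging $L$ by a constant factor while preserving both locality and uniqueness of the satisfying assignment (the auxiliary variables being functionally determined by the tableau cells they gate).
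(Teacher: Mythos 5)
The paper does not prove this theorem; it cites it as well-known, pointing to Papadimitriou~\cite{papadimitriou1994computational}, so there is no internal proof against which to compare. Your proposal is a correct sketch of the standard tableau construction with the structured indexing needed for log-time clause access, and it hits the two ingredients that make the stated theorem stronger than the usual Cook--Levin statement: (i) uniqueness of the satisfying assignment per witness, obtained from determinism plus Tseitin auxiliaries that are biconditionally (hence functionally) constrained, and (ii) the hyper-efficient $\alg{CookLevin}$ lookup, obtained from a canonical variable indexing whose local-window structure can be decoded by $O(\log T + \log R)$-bit arithmetic.

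Two small points deserve a line each in a careful writeup. First, the theorem as stated is for the paper's multi-tape Turing machine model (separate input/work/output tapes per \Cref{section:preliminaries}), so you should either run the standard multi-tape $\to$ single-tape simulation first (incurring a polynomial overhead absorbed into $\poly(|M|,T,R)$) or build the tableau for the multi-tape model directly; your "without loss of generality assume one tape" needs that sentence of justification. Second, uniqueness requires pinning down the tableau after acceptance: if $M(w)$ halts at step $t^\ast < T$, rows $t^\ast+1,\dots,T$ must be forced to a canonical continuation (e.g., the accepting configuration repeats), otherwise there are spurious assignments differing only in the dead rows. Finally, note that $\alg{CookLevin}$ is expected to return all clause(s) on $\{i,j,k\}$, not just one; since your indexing localizes each clause to a constant-size window anchored at a single tableau cell, there are only $O(1)$ candidates to enumerate, so this is harmless — just make sure the return type reflects it.
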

We note that while the algorithm $\alg{CookLevin}$ runs in polynomial time in the length of its input, it runs in \emph{logarithmic} time in the number of variables of the Cook-Levin SAT formula $\varphi(M,T,R)$. This is because the length of the input tuple $(M,T,R,i,j,k)$ is $O(|M| + \log T + \log R + \log i + \log j + \log k)$, and since the variable indices $i,j,k$ are at most $\poly(|M|,T,R)$, the time complexity of the algorithm $\alg{CookLevin}$ is at most $\poly(|M|,\log T,\log R)$.

The verifier in the answer-reduced game $G^\ans$ expects an oracle player who received game question pair $g = (x,y)$ to compute a string $\pi$ satisfying the following:
\begin{enumerate}
	\item $\pi$ is a satisfying assignment for the Cook-Levin SAT formula $\varphi(D_{x,y},T,2T)$ where $D_{x,y}$ is the $1$-input Turing machine that on input $(a,b) \in \{0,1\}^{2T}$ executes the Turing machine $D$ on input $(x,y,a,b)$, and
	\item $\pi$ is composed of three strings $(a,b,\pi') \in \{0,1\}^{T} \times \{0,1\}^{T} \times \{0,1\}^L$ where \\ $L = \poly( |D_{x,y}|,T) = \poly(|D|, Q, T)$. Here we used that the description length $|D_{x,y}| = O(|D| + |x| + |y|) = \poly(|D|,Q)$. 
\end{enumerate}
Henceforth we shall abbreviate the Cook-Levin formula $\varphi(D_{x,y},T,2T)$ as $\varphi_{x,y}$. 

The verifier asks proof questions $p$ in order to ascertain whether it is possible for an oracle player to generate a proof $\pi$ satisfying these conditions. This requires the verifier to ask proof questions to both oracle players \emph{and} isolated players. Oracle players (who get game question pair $g = (x,y)$) can get asked to provide:
\begin{itemize}
	\item A single bit $\pi_i$ of the proof $\pi$, or
	\item A triple of bits $(\pi_i,\pi_j,\pi_k)$ from the proof $\pi$ (which may not necessarily correspond to a clause in $\varphi_{x,y}$). \end{itemize}
An isolated player (who gets a single question $x$ or $y$) is asked to provide a pair of bits $(a_i,a_j)$ of their purported answer $a \in \{0,1\}^{T}$. 

Thus the proof questions are sampled from the set $[L] \cup [L]^2 \cup [L]^3$. Thus the question and answer sets for $G^\ans$ are
\[
	\cal{X}^\ans = \cal{X}^\orac \times ([L] \cup [L]^2 \cup [L]^3) \qquad \qquad \cal{A}^\ans = \{0,1\} \cup \{0,1\}^2 \cup \{0,1\}^3 
\]
where $\cal{X}^\orac = \cal{X} \cup \cal{X}^2$ is the question alphabet for the oracularized game $G^\orac$. 

Since the player answers $(a,b)$ are supposed to be embedded into a proof $\pi$, we use the following mapping to translate between indexing into answer $a$ or $b$ versus indexing into the proof $\pi$: given an index $i \in [T]$, the $i$-th bit of the \emph{first} answer $a$ (corresponding to the \emph{first} question $x$) is mapped to index $\eta(i) = i$ of the proof $\pi$, and the $i$-th bit of the \emph{second} answer $b$ (corresponding to the \emph{second} question $y$) is mapped to index $\lambda(i) = T + i$ of $\pi$.

\subsection{The answer-reduced decision procedure}
\label{sec:ar-decider}

We now formally specify the decision procedure $D^\ans$. On input $(\hat{x},\hat{y},\hat{a},\hat{b})$, it checks if $(\hat{x},\hat{y})$ (resp. $(\hat{y},\hat{x})$) is one of the nontrivial question pairs of $G^\ans$, which are presented in \Cref{tab:answer-reduction}. If so, then it accepts if and only if the answers $(\hat{a},\hat{b})$ (resp. $(\hat{b},\hat{a})$) satisfy the corresponding winning condition. Otherwise, if $(\hat{x},\hat{y})$ is a trivial question, the verifier automatically accepts.

\begin{table}[H]
\centering
\begin{tabularx}{\textwidth}{ L{8cm} L{8cm} } 
 \toprule
 \textbf{Nontrivial Question Pair $(\hat{x},\hat{y})$} & \textbf{Winning Condition on Answers $(\hat{a},\hat{b})$} \\
 \midrule
  $\hat{x} = \hat{y}$ &  $ \hat{a}=\hat{b}$ \\
 \midrule 
 $\hat{x} = ((x,y),i)$ where $(x,y)$ is nontrivial for $G$ & $(s_j,s_k,s_\ell)$ satisfies clause(s) specified by  \\
  $\hat{y} = ((x,y),(j,k,\ell))$ where $i \in \{j,k,\ell\}$ & $\alg{CookLevin}(D_{x,y},T,2T,j,k,\ell)$ and $r_i = s_i$, where $\hat{a} = r_i \in \{0,1\}$, $\hat{b} = (s_j,s_k,s_\ell) \in \{0,1\}^3$  \\
 \midrule
 $\hat{x} = ((x,y),i)$ where $(x,y)$ is nontrivial for $G$ & $r_i = a_{\eta^{-1}(i)}$   \\
 $\hat{y} = (x,(j,k))$ where $i \in \{\eta(j),\eta(k)\}$ & where $\hat{a} = r_i \in \{0,1\}$, $\hat{b} = (a_j,a_k) \in \{0,1\}^2$ \\  
 \midrule
	$\hat{x} = ((x,y),i)$ where $(x,y)$ is nontrivial for $G$ & $r_i = b_{\lambda^{-1}(i)}$  \\
	$\hat{y} = (y,(j,k))$ where  $i \in \{\lambda(j),\lambda(k)\}$ & where $\hat{a} = r_i \in \{0,1\}$, $\hat{b} = (b_j,b_k) \in \{0,1\}^2$\\ 
 \bottomrule  
\end{tabularx}
\caption{The nontrivial question pairs and winning conditions for the game $G^\ans$.}
\label{tab:answer-reduction}
\end{table}

\Cref{tab:answer-reduction} should be read as follows. In the second row, for example, the nontrivial question pair is where $\hat{x} = (g_1,p_1)$ where $g_1 = g_2 = (x,y) \in \cal{X}^2$ where $(x,y)$ is nontrivial for $G$, $p_1 = i$ for some $i \in [L]$, and $p_2 = (j,k,\ell) \in [L]^3$ such that $i \in \{j,k,\ell\}$. The answer $\hat{a}$ is expected to be a single bit $r_i$ and $\hat{b}$ is expected to be a triple of bits $(s_j,s_k,s_\ell)$; otherwise the verifier rejects. The verifier then checks that $r_i = s_i$ (i.e. the first player's assignment to the $i$-th variable of the proof is the same as the second player's assignment to the $i$-th variable), and the second player's assignment $(s_j,s_k,s_\ell)$ satisfies the clause of $\varphi_{x,y}$ that involves the triple of variables $(j,k,\ell)$. If there is no clause, then the verifier accepts any assignment to those variables.

\subsection{Completeness of answer reduction}

We now prove the completeness property of the answer reduction transformation. Similarly to \Cref{sec:question-reduction}, the completeness property implies that the value of $G^\ans$ is lower bounded by the value of $G$.

\begin{proposition}
\label{prop:answer-reduction-completeness}
For all oracularizable synchronous strategies $\strategy$ for $G$, there exists an oracularizable synchronous strategy $\strategy^\ans$ for $G^\ans$ such that 
	\[
		\val(G^{\ans}_n,\strategy^\ans) \geq \frac{1}{2} + \frac{1}{2}\val(G_n,\strategy)~.
	\] 
	Furthermore, if $\strategy$ is finite-dimensional then so is $\strategy^\ans$.	
\end{proposition}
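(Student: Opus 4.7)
The plan is to construct $\strategy^\ans = (\tau, \{M^{\hat{x}}\})$ on the same tracial von Neumann algebra as $\strategy = (\tau, \{N^x\})$, with projective measurements defined case by case on the question type in $G^\ans$. For an isolated question $\hat{x}=(x,p)$ with $p \in [L]^2$, $M^{\hat{x}}$ is the data-processed version of $N^x$ that, on outcome $a\in\cal{A}$, outputs the two bits of $a$ at the positions corresponding to $p$ (converting from proof indices to answer indices via $\eta^{-1}$ or $\lambda^{-1}$ as appropriate). For an oracle question $\hat{x}=((x,y),p)$ with $(x,y)$ nontrivial for $G$, I would crucially invoke the oracularizability of $\strategy$: since $N^x_a$ and $N^y_b$ then commute, the set $\{N^x_a N^y_b\}_{a,b}$ is a joint projective measurement; $M^{\hat{x}}$ measures it to obtain $(a,b)$ and outputs the bits at indices $p$ of the Cook-Levin witness $\pi$, which is the unique extension of $(a,b)$ to a satisfying assignment of $\varphi_{x,y}$ when $D(x,y,a,b)=1$ and a fixed default string otherwise. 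Oracle questions $((x,y),p)$ with $(x,y)$ trivial for $G$ appear only in trivial pairs of $G^\ans$, so $M^{\hat{x}}$ may be defined arbitrarily there. Finite-dimensionality is preserved by reusing the Hilbert space of $\strategy$.

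\textbf{Verification of strategy properties and value computation.} Next I would verify that $\strategy^\ans$ is synchronous and oracularizable, and then compute its value. Projectivity is immediate since each measurement is a deterministic post-processing of a projective measurement. Oracularizability requires checking commutation for the nontrivial pair types listed in \Cref{tab:answer-reduction}: self-pairs and row-2 pairs are both built from the same projective measurement $\{N^x_a N^y_b\}$, hence commute; rows 3 and 4 pair $\{N^x_a N^y_b\}$ on one side with $\{N^x_a\}$ (resp.\ $\{N^y_b\}$) on the other, which commute by the oracularizability of $\strategy$. For the value, the combination of synchronicity, cyclicity of $\tau$, and commutation yields the identity $\tau(N^x_a N^y_b \cdot N^x_{a'} N^y_{b'}) = \delta_{aa'}\delta_{bb'}\tau(N^x_a N^y_b)$, meaning that whenever two players apply the same projective measurement they obtain matching outcomes with probability $1$. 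Consequently, on row-2 pairs both oracle players compute the same $(a,b)$ and hence the same proof $\pi$, so the clause check succeeds precisely when $D(x,y,a,b)=1$; on rows 3 and 4 the analogous cross-check $r_i=a_{\eta^{-1}(i)}$ succeeds with probability $1$; and self-pairs succeed trivially.

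\textbf{Final bound and main obstacle.} Summing these contributions over the uniform distribution on $\cal{X}^\ans \times \cal{X}^\ans$, the trivial pairs contribute $1$, the row-2 pairs contribute an average of $\val(G,\strategy)$ (by averaging the row-2 success probability over the uniform $(x,y)$, using that the marginal over game-question pairs is uniform), and all other nontrivial pair types contribute $1$; bounding the fraction of trivial pairs from below by $\tfrac{1}{2}$ via a direct counting of pair types in \Cref{tab:answer-reduction} then gives $\val(G^\ans,\strategy^\ans) \geq \tfrac{1}{2} + \tfrac{1}{2}\val(G,\strategy)$. The main obstacle is the careful bookkeeping of pair counts and of the $\eta,\lambda$ indexing conventions relating proof positions to answer positions in both the measurement definitions and the decision predicate; the one genuine subtlety is handling oracle questions whose $(x,y)$ is trivial for $G$, where the oracularizability of $\strategy$ gives no commutation between $N^x$ and $N^y$, but this is resolved by the observation above that such oracle questions never participate in a nontrivial pair of $G^\ans$.
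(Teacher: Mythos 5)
Your construction matches the paper's almost exactly — data-processing the joint measurement $\{N^x_a N^y_b\}$ (well-defined on nontrivial pairs by oracularizability), Cook-Levin to convert $(a,b)$ to a witness $\pi$, and a case analysis of \Cref{tab:answer-reduction} to verify synchronicity and oracularizability. The identity $\tau(N^x_a N^y_b\,N^x_{a'}N^y_{b'}) = \delta_{aa'}\delta_{bb'}\tau(N^x_a N^y_b)$ and the observation that oracle questions with trivial $(x,y)$ never appear in a nontrivial pair of $G^\ans$ are both used in the paper's argument as well.

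There are two points where you drift from the paper, and the first is a genuine error in the construction. You set $\pi$ to ``a fixed default string'' when $D(x,y,a,b)=0$. With a fixed default that ignores $(a,b)$, the row-3 and row-4 cross-checks $r_i = a_{\eta^{-1}(i)}$ can fail: the oracle player reports $\pi_i$ while the isolated player reports $a_{\eta^{-1}(i)}$, and these no longer coincide once $D(x,y,a,b)=0$. Your claim that ``on rows 3 and 4 the analogous cross-check succeeds with probability 1'' therefore does not hold as stated. The fix, which is what the paper's phrasing ``$\pi$ denotes the unique satisfying assignment $\ldots$ such that $\pi=(a,b,w)$'' implicitly mandates, is to always place $(a,b)$ in the first $2T$ coordinates of $\pi$, filling the remainder with (say) zeros when no satisfying completion exists. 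Then rows 3 and 4 always pass, and only the row-2 clause check can fail.

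The second point is shared with the paper but still worth flagging: conditioned on a row-2 pair, the game question pair $(x,y)$ is uniform over the \emph{nontrivial} pairs of $G$, not over all of $\cal{X}^2$, so the row-2 conditional success probability equals $\rho := \frac{1}{K}\sum_{\text{nontrivial }(x,y)}\sum_{a,b}D(x,y,a,b)\tau(N^x_a N^y_b)$ where $K$ is the number of nontrivial pairs, and $\rho = 1 - \frac{|\cal{X}|^2}{K}\bigl(1-\val(G,\strategy)\bigr) \le \val(G,\strategy)$, with strict inequality when trivial pairs exist. Your identity $\val(G^\ans,\strategy^\ans) = 1 - p_{\mathrm{row2}}(1-\rho)$ then requires $p_{\mathrm{row2}}\cdot\frac{|\cal{X}|^2}{K} \le \tfrac12$, which is strictly stronger than the $p_{\mathrm{triv}}\ge\tfrac12$ (equivalently $p_{\mathrm{row2}}\le\tfrac12$) you invoke, since $\frac{|\cal{X}|^2}{K}\ge 1$. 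The needed bound does hold — a row-2 pair requires both players to independently draw the same oracle game question, so $p_{\mathrm{row2}} \le K/|\cal{X}^\orac|^2$, hence $p_{\mathrm{row2}}\cdot\frac{|\cal{X}|^2}{K} \le |\cal{X}|^2/(|\cal{X}|+|\cal{X}|^2)^2 \le \tfrac14$ — but the accounting has to be done at this level of granularity rather than via the fraction of trivial pairs. (The paper's own shortcut ``$\gamma\le\tfrac12$'', with $\gamma$ the probability that at least one player receives an oracle question with a nontrivial game pair, is similarly optimistic for games with many nontrivial pairs, and the same refinement repairs it.)
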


\begin{proof}

Let $\strategy = (\tau,\{M^x\})$ be a tracial synchronous strategy for $G$ that commutes on the set of nontrivial questions of $G$.  We now define a tracial strategy $\strategy^\ans = (\tau,\{N^x\})$ for $G^\ans$. Before doing so, we define some intermediate measurements. Let $\cal{X}$ and $\cal{A}$ denote the question and answer sets of $G$, respectively. For all $x,y \in \cal{X}, a,b \in \cal{A}$:
\begin{itemize}
    \item $N^{x,y}_{a,b} = \left\{
	\begin{array}{ll}
		M^x_a \,\, M^y_b  & \mbox{if $(x,y)$ is a nontrivial question for $G$}  \\
		\id & \mbox{if $(x,y)$ is a trivial question for $G$ and $a = b = 0$} \\
		0 & \mbox{otherwise}
	\end{array}
    \right.$
    \item $N^{x}_a = M^x_a$.
\end{itemize}
The POVM $N^{x}$ is projective because $M^x$ is projective. Note that whenever $(x,y)$ is a nontrivial question of $G$, the projectors $M^x_a$ and $M^y_b$ commute, so $N^{x,y}$ is always projective. 

Now we define the measurements for $\strategy^\ans$:
\begin{enumerate}
    \item $N^{x,j,k} = N^{x}_{[a \mapsto (a_j,a_k)]}$
\item $N^{x,y,i} = N^{x,y}_{[(a,b) \mapsto \pi_i]}$
    \item $N^{x,y,i,j,k} = N^{x,y}_{[(a,b) \mapsto (\pi_i,\pi_j,\pi_k)]}$
\end{enumerate}
where here $\pi$ denotes the unique satisfying assignment to the Cook-Levin SAT formula $\varphi_{x,y}$ such that $\pi = (a,b,w)$ for some string $w$.

We now verify that the strategy $\strategy^\ans$ satisfies the desired properties: it is synchronous because the measurements are all projective. It commutes on the nontrivial questions of $G^\ans$, as seen by the following case analysis: letting $\hat{x} = (g_1,p_1)$ and $\hat{y} = (g_2,p_2)$,
\begin{enumerate}
	\item If $\hat{x} = \hat{y}$, then clearly the measurements $N^{\hat{x}}$ and $N^{\hat{y}}$ commute with each other because they are the same measurement. 
	\item If $g_1 = g_2 = (x,y)$, $p_1 = i$, and $p_2 = (j,k,\ell)$, then $N^{\hat{x}}$ and $N^{\hat{y}}$ are marginalizations of the same projective measurement $\{N^{x,y}\}$, and thus $N^{\hat{x}}$, $N^{\hat{y}}$ commute with each other. 
	\item If $g_1 = (x,y)$, $p_1 = i$, $g_2 =x$ (or $g_2 = y$) and $p_2 = (j,k)$, then either $(x,y)$ is a trivial question for $G$ (in which case $N^{\hat{x}}$ is the identity measurement, which commutes with everything), or $(x,y)$ is a nontrivial question, in which case $N^{\hat{x}}$ is a marginalization of the product $M^x_a M^y_b$, whereas $N^{\hat{y}}$ is a marginalization of $M^x_a$ (resp. $M^y_b$), which commutes with $M^y_b$ (resp. $M^x_a$).
\end{enumerate}

Clearly, the dimensionality of $\strategy^\ans$ is the same as the dimension of $\strategy$. 

Finally, we can evaluate the winning probability of $\strategy^\ans$ as follows: let $\gamma$ denote the probability that at least one of the players that receives a question $(g,p)$ where $g = (x,y)$ with $(x,y)$ nontrivial for $G$. If neither player receives such a game question, then either their question pair $(\hat{x},\hat{y})$ is trivial for $G^\ans$ (in which case the players win automatically), or $\hat{x} = \hat{y}$ (in which case the players win because their strategy is synchronous). 

Suppose one of the players (say, the first player) receiving such question pair $\hat{x} = (g,p)$. Intuitively, this oracle player will simultaneously measure $M^x$ and $M^y$ to obtain answers $(a,b)$. Since $x$ an $y$ are drawn uniformly at random, the probability that $D(x,y,a,b) = 1$ is exactly $\omega(G, \strategy)$. Suppose $(a,b)$ are winning answers. Then the oracle player can compute a satisfying assignment $\pi = (a,b,w)$ for the Cook-Levin formula $\varphi_{x,y}$ -- this uses the assumption that $\TIME_D \leq T$. Furthermore, the second player, no matter what question $\hat{y}$ they receive, they will be able to obtain perfectly consistent answers (if they receive game question $(x,y)$, then they can obtain the same proof $\pi = (a,b,w)$; if they receive game questions $x$ or $y$, they will obtain the same answers $a$ or $b$, respectively). Thus the success probability of the strategy $\strategy^\ans$ overall is at least
\[
	\val(G^\ans,\strategy^\ans) \geq (1 - \gamma) + \gamma \,\, \val(G,\strategy)\;.
\]
Since $\gamma \leq 1/2$, the Proposition follows. \end{proof}

\subsection{Soundness of answer reduction}

\begin{proposition}
\label{prop:answer-reduction-soundness}
For all $t \in \{q,co\}$, $\val_t^s(G) < 1 \Longrightarrow \val_t^s(G^\ans) < 1$.
\end{proposition}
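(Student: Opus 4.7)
The plan is to extract from any high-value synchronous strategy $\strategy^\ans = (\tau, \{N^{\hat{x}}\})$ for $G^\ans$ a synchronous strategy for $G$ of correspondingly high value, giving the contrapositive. Fix $t \in \{q, co\}$ and suppose $\val(G^\ans, \strategy^\ans) \geq 1 - \eps$.

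First I would process the oracle-player measurements to reconstruct a global ``proof measurement.'' For each nontrivial question pair $(x, y)$ of $G$, the cross-checks in the second row of \Cref{tab:answer-reduction} between the one-bit oracle measurements $\{N^{((x,y), i)}\}_{i \in [L]}$ and the three-bit oracle measurements $\{N^{((x,y), (j,k,\ell))}\}$ give, by an averaging argument plus \Cref{lem:consistency-consequences}, approximate consistency between each one-bit measurement and the corresponding marginal of the three-bit measurement. Applying the commutativity derivation from the last subsection of \Cref{subsec:asymptotics} to the question triples $((x,y),i), ((x,y),i'), ((x,y),(i,i',k))$, the family $\{N^{((x,y), i)}\}_i$ is pairwise approximately commuting. \Cref{lem:pasting} then yields a single projective measurement $\{M^{x,y}_\pi\}_{\pi \in \{0,1\}^L}$ whose single-bit marginals approximate the $N^{((x,y), i)}$. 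The winning clause check further forces $M^{x,y}$ to be approximately supported on satisfying assignments $\pi$ of the Cook-Levin formula $\varphi_{x,y}$; by \Cref{thm:cook-levin}, the first $2T$ bits of such a $\pi$ form a pair $(a, b)$ with $D(x, y, a, b) = 1$.

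Next I would handle the isolated-player side. Rows three and four of \Cref{tab:answer-reduction} force, again via averaging and \Cref{lem:consistency-consequences}, that the isolated bit-pair measurements $\{N^{(x, (j, k))}\}_{j, k}$ are approximately consistent with the first-$T$-coordinate marginals of $M^{x, y}$, uniformly in $y$. A second application of the pasting lemma then produces, for each $x \in \cal{X}$, a single projective measurement $\{M^x_a\}_{a \in \{0, 1\}^T}$ whose $(j, k)$-marginals approximate $N^{(x, (j, k))}$. Crucially, $M^x$ does not depend on $y$: this is the payoff of oracularization, since both $x$ and $(x, y)$ probe the same underlying bits of the proof on the isolated side.

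Finally I would assemble $\strategy := (\tau, \{M^x\})$ and compute its value in $G$. Approximate synchronicity of $M^x$ follows from the diagonal case $\hat{x} = \hat{y}$. Using \Cref{prop:exchange-operators} to exchange $M^x_a M^y_b$ for the pair-marginal $\sum_w M^{x, y}_{(a, b, w)}$ of the proof measurement, and using that $M^{x, y}$ is approximately supported on satisfying proofs (hence on $(a, b)$ with $D(x, y, a, b) = 1$), one obtains $\val(G, \strategy) \geq 1 - \delta(\eps)$ for some proper error function $\delta$. Since $\eps$ may be taken arbitrarily small under the hypothesis $\val_t^s(G^\ans) = 1$ (for $t = q$ via a sequence of finite-dimensional strategies, for $t = co$ by \Cref{thm:synchronous} or a direct compactness argument), this yields $\val_t^s(G) = 1$, proving the contrapositive. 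The main technical step is the two invocations of \Cref{lem:pasting}: one must verify pairwise approximate commutativity over the full range of $L = \poly(|D|, Q, T)$ proof indices, and since we work in the gapless regime the error blowup in $L$ is absorbed into the proper error function $\delta(\eps)$ without further effort.
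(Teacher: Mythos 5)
Your proposal is correct and follows essentially the same approach as the paper's proof: establish approximate commutativity from the cross-checks, apply the pasting lemma twice to obtain $M^{x,y}$ and $M^x$, and conclude via the Cook--Levin support of $M^{x,y}$ together with cyclicity of the trace. You construct $M^{x,y}$ before $M^x$ while the paper does the reverse, which is immaterial. The one slip to fix when fleshing out the details: to produce $M^x$ over $\{0,1\}^T$ you must paste the $T$ single-bit marginals $N^{x,i}_r := N^{x,i,k}_{[(a_i,a_k)\mapsto a_i\mid r]}$ (for an arbitrary fixed $k$), rather than the $T^2$ pair measurements $N^{x,j,k}$ themselves, since the latter would yield a measurement with outcomes in $(\{0,1\}^2)^{T^2}$ rather than in $\{0,1\}^T$.
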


\begin{proof}
Let $\strategy^\ans = (\tau,\{N^\hx \})$ be a tracial synchronous strategy for $G^\ans$ that has value $1 - \eps$. 
Our goal will be to construct measurements $\{M^x_a\}$ and $\{M^{x,y}_\pi\}$ that produce entire answer strings and entire proof strings, respectively. They will be constructed from the $N^{x,y,i}$ and $N^{x,j,k}$ measurements which only provide ``local'' views of purported answer and purported proof strings. In order to ``paste'' these ``local'' views together into consistent ``global'' views, we will need to establish pairwise consistency conditions between the measurement operators of the strategy $\strategy^\ans$.

From the condition that the strategy $\strategy^\ans$ has value $1 - \eps$, we obtain the following consistency conditions pointwise over all $x,y \in \cal{X}$ and $i,j,k,\ell \in [L]$: \begin{itemize}
    \item $N^{x,y,i}_r \simeq N^{x,y,j,k,\ell}_{[(s_j,s_k,s_\ell) \mapsto s_i \mid r]}$ whenever $i \in \{j,k,\ell\}$,
    \item $N^{x,y,\eta(j)}_r \simeq N^{x,j,k}_{[(a_j,a_k) \mapsto a_j \mid r]}$ and $N^{x,y,\eta(k)}_r \simeq N^{x,j,k}_{[(a_j,a_k) \mapsto a_k \mid r]}$
	\item $N^{x,y,\lambda(j)}_r \simeq N^{y,j,k}_{[(a_j,a_k) \mapsto a_j \mid r]}$ and $N^{x,y,\lambda(k)}_r \simeq N^{y,j,k}_{[(a_j,a_k) \mapsto a_k \mid r]}$
\end{itemize}
In other words, the assignments to variables that are in common to both players' questions are approximately consistent. Here and throughout this proof, all approximations ``$\simeq$'' and ``$\approx$'' implicitly hide some error function $\delta(\eps)$ that goes to $0$ as $\eps \to 0$. Furthermore, the error function will generally be different each time the ``$\simeq$'' or ``$\approx$'' notation is used. (See \Cref{subsec:asymptotics} for a more in-depth discussion of approximations and asymptotics).

We first prove a utility lemma, which will be used repeatedly throughout the analysis of soundness:
\begin{lemma}
\label{lem:ar-utility}
Let $t \in \N$ and let $A = \{A_{r} \}$ denote a projective measurement with outcomes in $\cal{R}^t$. For $i \in [t]$, let $B^i = \{B_r^i\}$ be a POVM with outcomes in $\cal{R}$. Suppose that for all $i \in [t]$,
\begin{align*}
	A_{[r \mapsto r_i \mid c]} \simeq_\delta B_{c}^i
\end{align*}
where the answer summation is over $c \in \cal{R}$. Then for all permutations $\sigma \in S_t$, we have that
\[
A_r \approx_{t\sqrt{2\delta}} B^{\sigma(1)}_{r_{\sigma(1)}} \cdot B^{\sigma(2)}_{r_{\sigma(2)}} \cdots B^{\sigma(t)}_{r_{\sigma(t)}} \;.
\]
In other words, the measurement $\{A_r\}$ is $t\sqrt{2\delta}$-close to the product of the $\{B^i_{r_i}\}$, in any order. Furthermore, 
\[
B^{\sigma(1)}_{r_{\sigma(1)}} \cdot B^{\sigma(2)}_{r_{\sigma(2)}} \cdots B^{\sigma(t)}_{r_{\sigma(t)}} \approx_{2t\sqrt{2\delta}} B^{\rho(1)}_{r_{\rho(1)}} \cdot B^{\rho(2)}_{r_{\rho(2)}} \cdots B^{\rho(t)}_{r_{\rho(t)}}
\]
for all permutations $\rho,\sigma \in S_t$.
\end{lemma}
\begin{proof}
We first argue that
\[
	A_r \approx_{t\sqrt{2\delta}} B^1_{r_1} \cdot B^2_{r_2} \cdots B^t_{r_t} \;.
\]
Using \Cref{lem:consistency-consequences} we get that for all $i \in [t]$,
\begin{gather}
\label{eq:ar-utility-0}
	A_{[r \mapsto r_i \mid c]} \approx_{\sqrt{2\delta}} B_r^i\;.
\end{gather}
Using \Cref{lem:add-a-measurement} we can right-multiply \Cref{eq:ar-utility-0} for $i = 1$ by the measurement $A_{[r \mapsto r_2 : d]}$ to deduce
\begin{equation}
\label{eq:ar-utility-2}
A_{[r \mapsto r_1]} \cdot A_{[r \mapsto r_2]} \approx_{\sqrt{2\delta}} B^1_{r_1} \cdot A_{[r \mapsto r_2]}
\end{equation}

Using using \Cref{lem:add-a-measurement} again we get that the right hand side of \Cref{eq:ar-utility-2} is $\sqrt{2\delta}$-close to $B^1_{r_1} \cdot B^2_{r_2}$, and therefore via the triangle inequality we get
\[
A_{[r \mapsto r_1]} \cdot A_{[r \mapsto r_2]} \approx_{2\sqrt{2\delta}} B^1_{r_1} \cdot B^2_{r_2}.
\]
Notice that since $A$ is projective, we have
\[
A_{[r \mapsto r_1]} \cdot A_{[r \mapsto r_2]} = A_{[r \mapsto (r_1,r_2)]}
\]
Thus $A_{[r \mapsto (r_1,r_2)]} \approx_{2\sqrt{2\delta}} B^1_{r_1} \cdot B^2_{r_2}$. By repeatedly using \Cref{lem:add-a-measurement}, we deduce that 
\[
A_r \approx_{t\sqrt{2\delta}} B^1_{r_1} \cdot B^2_{r_2} \cdots B^t_{r_t}
\]
as desired. The same argument holds with any other ordering of the $B^i$'s. 

The ``Furthermore'' part of the lemma then follows from the triangle inequality.
\end{proof}

\paragraph{Constructing the $M^x_a$ measurements.} The first step is to show that, for fixed $x,y$, the $\{ N^{x,y,i} \}$ measurements approximately commute. 

Fix $i,j \in [T]$. Using \Cref{lem:ar-utility} with $A = N^{x,i,j}$, $B^1 = N^{x,y,\eta(i)}$ and $B^2 = N^{x,y,\eta(j)}$, we get
\begin{equation}
\label{eq:ar-0}
	N^{x,y,\eta(j)}_s \cdot  N^{x,y,\eta(i)}_r  \approx  N^{x,y,\eta(i)}_r  \cdot N^{x,y,\eta(j)}_s \;.
\end{equation}

The next step is to deduce that the marginalizations of the $N^{x,i,j}$ measurements commute. Since $N^{x,y,\eta(i)}_r \approx N^{x,i,k}_{[(a_i,a_k) \mapsto a_i \mid r]}$ and $N^{x,y,\eta(j)}_s \approx N^{x,j,k}_{[(a_j,a_k) \mapsto a_j \mid s]}$ for all $k \in [T]$. Thus, using \Cref{lem:add-a-measurement} twice we get
\[
N^{x,y,\eta(j)}_s \cdot  N^{x,y,\eta(i)}_r  \approx N^{x,y,\eta(j)}_s \cdot  N^{x,i,k}_{[(a_i,a_k) \mapsto a_i \mid r]} \approx N^{x,j,k}_{[(a_j,a_k) \mapsto a_j \mid s]} \cdot  N^{x,i,k}_{[(a_i,a_k) \mapsto a_i \mid r]} 
\]
and similarly we get
\[
N^{x,y,\eta(i)}_r  \cdot N^{x,y,\eta(j)}_s \approx N^{x,i,k}_{[(a_i,a_k) \mapsto a_i \mid r]} \cdot N^{x,j,k}_{[(a_j,a_k) \mapsto a_j \mid s]} \;.
\]
Using the triangle inequality and \Cref{eq:ar-0}, we get for all $x \in \cal{X}$ and $i, j, k \in [T]$,
\[
N^{x,j,k}_{[(a_j,a_k) \mapsto a_j \mid s]} \cdot  N^{x,i,k}_{[(a_i,a_k) \mapsto a_i \mid r]}  \approx N^{x,i,k}_{[(a_i,a_k) \mapsto a_i \mid r]} \cdot N^{x,j,k}_{[(a_j,a_k) \mapsto a_j \mid s]} 
\]
Fix an arbitrary $k \in [T]$ and define
\[
	N^{x,i}_r = N^{x,i,k}_{[(a_i,a_k) \mapsto a_i \mid r]} \;.
\]
Fix an $x \in \cal{X}$. We invoke the Pasting Lemma (\Cref{lem:pasting}) on the set of measurements $\{ N^{x,i} \}_{i \in [T]}$, and obtain a projective measurement $M^x = \{M^x_a\}$ with outcomes in $\{0,1\}^{T}$ such that for all $i \in [T]$,
\[
	M^x_{[a \mapsto a_i \mid r]} \approx N^{x,i}_r~.
\]
Furthermore, by the triangle inequality, for all $y \in \cal{X}$ we have that
\begin{equation}
\label{eq:ar-1}
M^x_{[a \mapsto a_i \mid r]} \approx N^{x,y,\eta(i)}_r \;.
\end{equation}
Via the same arguments as above we have that $N^{x,i}_r \approx N^{y,x,\lambda(i)}_r$, which means that
\[
M^x_{[a \mapsto a_i \mid r]} \approx N^{y,x,\lambda(i)}_r \;.
\]

\paragraph{Constructing the $M^{x,y}_\pi$ measurements.} Fix $x,y \in \cal{X}$ and $i,j,k \in [L]$. Using \Cref{lem:ar-utility} with $A = N^{x,y,i,j,k}$, $B^1 = N^{x,y,i}$, $B^2 = N^{x,y,j}$, and $B^3 = N^{x,y,k}$ we get that the product of $N^{x,y,i}_r$, $N^{x,y,j}_s$, and $N^{x,y,k}_t$ (using any ordering) is close to $N^{x,y,i,j,k}$. 

In particular, we have
\[
	N^{x,y,i}_r \cdot N^{x,y,j}_s \approx N^{x,y,j}_s \cdot N^{x,y,i}_r \;.
\]
Using the Pasting Lemma on the set of measurements $\{N^{x,y,i}\}$, we obtain a projective measurement $M^{x,y} = \{M^{x,y}_\pi\}$ with outcomes in $\{0,1\}^R$ (i.e. proof strings) such that 
\[
	M^{x,y}_{[\pi \mapsto \pi_i \mid r]} \approx N^{x,y,i}_r~.
\]

Using \Cref{lem:add-a-measurement} repeatedly, we get that for all $i,j,k \in [L]$,
\begin{align*}
M^{x,y}_{[\pi \mapsto \pi_i \mid r]} \cdot M^{x,y}_{[\pi \mapsto \pi_j \mid s]} \cdot M^{x,y}_{[\pi \mapsto \pi_k \mid t]} &\approx N^{x,y,i}_r \cdot M^{x,y}_{[\pi \mapsto \pi_j \mid s]} \cdot M^{x,y}_{[\pi \mapsto \pi_k \mid t]} \\
&\approx N^{x,y,i}_r \cdot N^{x,y,j}_s \cdot M^{x,y}_{[\pi \mapsto \pi_k \mid t]}  \\
&\approx N^{x,y,i}_r \cdot N^{x,y,j}_s \cdot N^{x,y,k}_t \\
&\approx N^{x,y,i,j,k}_{r,s,t}
\end{align*}
where the last approximation follows from our earlier application of \Cref{lem:ar-utility}. Since $M^{x,y}_\pi$ is projective, we have that
\begin{equation}
\label{eq:ar-1-5}
M^{x,y}_{[\pi \mapsto (\pi_i,\pi_j,\pi_k) \mid (r,s,t)]} \approx N^{x,y,i,j,k}_{r,s,t} \;.
\end{equation}

We now relate the $M^{x,y}$ measurements to the $M^x$ measurements constructed previously. Using the triangle inequality with \Cref{eq:ar-1} we get for all $x,y \in \cal{X}$ and $j \in [T]$,
\begin{equation}
\label{eq:ar-2}
M^{x,y}_{[\pi \mapsto \pi_{\eta(j)} \mid r]} \approx M^x_{[a \mapsto a_j\mid r]}
\end{equation}
and similarly
\begin{equation}
\label{eq:ar-3}
M^{x,y}_{[\pi \mapsto \pi_{\lambda(j)} \mid r]} \approx M^y_{[a \mapsto a_j\mid r]} \;.
\end{equation}

Before proceeding we prove a utility lemma that allows us to argue that if all the marginalizations of projective measurements are close, then the original measurements must be close. 
\begin{lemma}
\label{lem:ar-utility2}
Let $A$ and $B$ be projective measurements with outcomes in $\{0,1\}^K$ such that for all $i \in [K]$, we have $A_{[r \mapsto r_i]} \approx_\kappa B_{[r \mapsto r_i]}$. Then 
\[
	A_r \approx_{K \kappa} B_r \;.
\]
\end{lemma}
\begin{proof}
	We prove this inductively on the prefix length of $r$. For the base case $t = 1$, we have that $A_{[r \mapsto r_1]} \approx_\kappa B_{[r \mapsto r_1]}$ by assumption. Let the inductive hypothesis be that for some $t \geq 1$, $A_{[r \mapsto r_{\leq t}]} \approx_{t\kappa} B_{[r \mapsto r_{\leq t}]}$ where $r_{\leq t}$ denotes the first $t$ bits of $r$. Then using \Cref{lem:add-a-measurement} twice, we get that
	\[
	A_{[r \mapsto r_{\leq t}]} \cdot A_{[r \mapsto r_{t+1}]} \approx_{t\kappa} B_{[r \mapsto r_{\leq t}]} \cdot A_{[r \mapsto r_{t+1}]} \approx_{\kappa} B_{[r \mapsto r_{\leq t}]} \cdot B_{[r \mapsto r_{t+1}]}
	\]
	which, via the triangle inequality, implies that
	\[
	A_{[r \mapsto r_{\leq t+1}]} \approx_{t\kappa} B_{[r \mapsto r_{\leq t+1}]}
	\]
	where we used the fact that the $A$ and $B$ measurements are projective. By induction, this statement is true for all $t$, and since $A_{[r \mapsto r_{\leq K}]} = A_r$ and $B_{[r \mapsto r_{\leq K}]} = B_r$, we conclude the proof.
\end{proof}

Applying \Cref{lem:ar-utility2} to \Cref{eq:ar-2,eq:ar-3} and interpreting the outcome of the $M^{x,y}$ measurement as a triple $(a,b,w) \in \{0,1\}^{T} \times \{0,1\}^{T} \times \{0,1\}^L$, we get
\begin{gather}
\label{eq:ar-4}
	M^{x,y}_{[(a,b,w) \mapsto a]} \approx M^x_a  \\
\label{eq:ar-5}
	M^{x,y}_{[(a,b,w) \mapsto b]} \approx M^y_b	 \;.
\end{gather}
Using \Cref{lem:add-a-measurement} several times with \Cref{eq:ar-4,eq:ar-5} we get
\begin{align*}
M^{x,y}_{[(a,b,w) \mapsto a]} \cdot M^{x,y}_{[(a,b,w) \mapsto b]} \cdot M^{x,y}_{[(a,b,w) \mapsto a]} &\approx M^x_a  \cdot M^{x,y}_{[(a,b,w) \mapsto b]} \cdot M^{x,y}_{[(a,b,w) \mapsto a]} \\
&\approx M^x_a \cdot M^y_b \cdot M^{x,y}_{[(a,b,w) \mapsto a]} \\
&\approx M^x_a \cdot M^y_b \cdot M^x_a
\end{align*}
and thus
\begin{equation}
\label{eq:ar-6}
	M^{x,y}_{[(a,b,w) \mapsto (a,b)]} \approx M^x_a \cdot M^y_b \cdot M^x_a\;.
\end{equation}

\paragraph{Evaluating the probability of success of the $M^x$ measurements.} Define the tracial synchronous strategy $\strategy = (\tau,\{M^x\})$ for game $G$. Its success probability can be lower-bounded as follows:
\begin{align*}
	\omega(G,\strategy) &= \E_{x,y} \sum_{a,b} D(x,y,a,b) \cdot \tau(M^x_a \,\, M^y_b) \\
	&= \E_{x,y} \sum_{a,b} D(x,y,a,b) \cdot \tau(M^x_a \cdot M^y_b \cdot M^x_a) \\
	&= \E_{x,y} \sum_{a,b} D(x,y,a,b) \cdot \Big( \tau \Big (M^{x,y}_{[(a,b,w) \mapsto (a,b)]} \Big) +  \tau \Big(M^{x,y}_{[(a,b,w) \mapsto (a,b)]} - M^x_a \,\,M^y_b \,\,M^x_a \Big) \Big) \\
	&\geq \E_{x,y} \sum_{a,b} D(x,y,a,b) \cdot \tau \Big (M^{x,y}_{[(a,b,w) \mapsto (a,b)]} \Big) - \E_{x,y} \sum_{a,b} \Big |  \tau\Big(M^{x,y}_{[(a,b,w) \mapsto (a,b)]} - M^x_a\,\, M^y_b \,\,M^x_a\Big) \Big| 
\end{align*}

We bound the second term first. From \Cref{lem:consistency-consequences} applied to \Cref{eq:ar-6} we get that $M^{x,y}_{[(a,b,w) \mapsto (a,b)]} \simeq_{\delta} M^x_a \cdot M^y_b \cdot M^x_a$ for some proper error function $\delta = \delta(\eps)$. We then apply \Cref{lem:consistency-to-prob-closeness} to get that
\[
\E_{x,y} \sum_{a,b} \Big |  \tau\Big(M^{x,y}_{[(a,b,w) \mapsto (a,b)]} - M^x_a \,\,M^y_b\,\, M^x_a\Big) \Big|  \leq 2 \delta~.
\]

Next, we evaluate
\begin{align*}
&\E_{x,y} \sum_{a,b} D(x,y,a,b) \cdot \tau \Big (M^{x,y}_{[(a,b,w) \mapsto (a,b)]} \Big) \\
&= \E_{x,y} \sum_{a,b,w } D(x,y,a,b) \cdot  \tau \Big (M^{x,y}_{a,b,w} \Big) \\
&= \E_{x,y} \sum_{a,b,w } \mathbf{1}[\text{$\exists w'$ : $(a,b,w')$ satisfies $\varphi_{x,y}$}] \cdot  \tau \Big (M^{x,y}_{a,b,w} \Big) \\
&\geq \E_{x,y} \sum_{a,b,w } \mathbf{1}[\text{$(a,b,w)$ satisfies $\varphi_{x,y}$}] \cdot  \tau \Big (M^{x,y}_{a,b,w} \Big) \\
&= 1 - \E_{x,y} \sum_{a,b,w} \mathbf{1}[\text{$(a,b,w)$ does not satisfy $\varphi_{x,y}$}] \cdot  \tau \Big (M^{x,y}_{a,b,w} \Big)
\end{align*}
where in the second line we use the conclusion of \Cref{thm:cook-levin} that since $\TIME_D \leq T$, we have $D(x,y,a,b) = 1$ if and only if there exists a satisfying assignment $(a,b,w')$ for the Cook-Levin formula $\varphi_{x,y}$.

Via the union bound, the probability that $\pi = (a,b,w)$ does not satisfy $\varphi_{x,y}$ is at most the sum, over all $i,j,k \in [L]$, that $(\pi_i,\pi_j,\pi_k)$ does not satisfy a clause in $\varphi_{x,y}$ (if there exists such a clause). Thus we have
\[
\E_{x,y} \sum_{a,b,w} \mathbf{1}[\text{$(a,b,w)$ unsat. $\varphi_{x,y}$}] \cdot  \tau \Big (M^{x,y}_{a,b,w} \Big) \leq \E_{x,y} \sum_{i,j,k} \sum_{\pi} \mathbf{1}[\text{$(\pi_i,\pi_j,\pi_k)$ unsat. $\varphi_{x,y}$}] \cdot \tau \Big (M^{x,y}_\pi \Big)
\]
We can now relate this quantity to the success probability of $\strategy^\ans$ in the answer-reduced game $G^\ans$. Let $\theta$ denote the probability that one of the players receives a question $\hat{x} = (g,p)$ of the form $g = (x,y)$ and $p = (i,j,k)$, and the other player receives a question $\hat{y} = (g',p')$ of the form $g' = x$ and $p \in \{i,j,k\}$. In this situation, by the design of the decider (see \Cref{sec:ar-decider}), the verifier checks whether the player who got question $\hat{x}$ responds with proof bits $(\pi_i,\pi_j,\pi_k)$ that satisfy a corresponding clause in $\varphi_{x,y}$. Thus, since the overall success probability of the strategy $\strategy^\ans$ in the game $G^\ans$ is at least $1 - \eps$, it must be that conditioned on a player receiving question of the form $\hat{x} = (x,y,i,j,k)$, their answer does not satisfies a corresponding clause in the formula $\varphi_{x,y}$ (if one exists) with probability at most $\eps/\theta$. In other words:
\[
	\E_{x,y,i,j,k} \sum_{\pi_i,\pi_j,\pi_k} \mathbf{1}[\text{$(\pi_i,\pi_j,\pi_k)$ unsat. $\varphi_{x,y}$}] \cdot \tau( N^{x,y,i,j,k}_{\pi_i,\pi_j,\pi_k}) \leq \eps/\theta.
\]
Multiplying both sides by $L^3$, we get that 
\[
\E_{x,y} \sum_{i,j,k} \sum_{\pi_i,\pi_j,\pi_k} \mathbf{1}[\text{$(\pi_i,\pi_j,\pi_k)$ unsat. $\varphi_{x,y}$}] \cdot \tau( N^{x,y,i,j,k}_{\pi_i,\pi_j,\pi_k}) \leq L^3 \eps/\theta \;.
\]
Using \Cref{lem:consistency-consequences} with \Cref{eq:ar-1-5}, we get that for every $i,j,k \in [L]$ and on average over $x,y$, 
\[
M^{x,y}_{[\pi \mapsto (\pi_i,\pi_j,\pi_k) \mid r,s,t]} \simeq_{\nu} N^{x,y,i,j,k}_{r,s,t}
\]
for some proper error function $\nu = \nu(\eps)$. Then using \Cref{lem:consistency-to-prob-closeness} we get that
\[
\E_{x,y} \sum_{r,s,t} \Big | \tau \Big ( M^{x,y}_{[\pi \mapsto (\pi_i,\pi_j,\pi_k) \mid r,s,t]} - N^{x,y,i,j,k}_{r,s,t} \Big) \Big| \leq 2\nu
\]
for every $i,j,k \in [L]$. 
Putting everything together, we find that
\begin{align*}
	&\E_{x,y} \sum_{i,j,k} \sum_{\pi} \mathbf{1}[\text{$(\pi_i,\pi_j,\pi_k)$ unsat. $\varphi_{x,y}$}] \cdot \tau \Big (M^{x,y}_\pi \Big) \\
	&\leq \E_{x,y} \sum_{i,j,k} \sum_{\pi_i,\pi_j,\pi_k} \mathbf{1}[\text{$(\pi_i,\pi_j,\pi_k)$ unsat. $\varphi_{x,y}$}] \cdot \tau( N^{x,y,i,j,k}_{\pi_i,\pi_j,\pi_k}) + 2\nu \\
	&\leq L^3 \Big( \frac{\eps}{\theta} + 2\nu \Big)\;.
\end{align*}
Let $\zeta = L^3 \Big( \frac{\eps}{\theta} + 2\nu \Big) + 2\delta$. Then we deduce that
\[
	\omega(G,\strategy) \geq 1 - \zeta.
\]
Since $\delta,\nu$ are proper error functions of $\eps$, so is $\zeta$. Thus $\zeta \to 0$ as $\eps \to 0$. Furthermore, the strategy $\strategy$ is finite-dimensional if and only if $\strategy^\ans$ is finite-dimensional. Thus, suppose that $\val^s_t(G^\ans) = 1$ for $t = q$ (resp. for $t = co$). This implies that there is a sequence of finite-dimensional (resp. commuting operator) strategies $\strategy^\ans$ such that $\val(G^\ans,\strategy^\ans)$ approaches $1$. This in turn implies the existence of a sequence of finite-dimensional (resp. commuting operator) strategies $\strategy$ such that $\val(G,\strategy)$ approaches $1$, and thus $\val^s_t(G) = 1$. Taking the contrapositive, we conclude that
\[
	\val^s_t(G) < 1 \Longrightarrow \val^s_t(G^\ans) < 1~.
\]
This finishes the proof of the Proposition.
\end{proof}

\subsection{Proof of \Cref{thm:answer-reduction}}
\label{sec:answer-reduction-proof}

We now prove the main result of this section, \Cref{thm:answer-reduction}. Fix $\beta \in \N$. The algorithm $\alg{AnswerReduction}_{\beta}$, on input $(D,C)$ where $D$ is a $5$-input Turing machine and $C$ is a $3$-input Turing machine, computes the descriptions of $5$-input and $3$-input Turing machines $D^\ans,C^\ans$ respectively as follows. Let $Q(n) = \log^\beta n$ and $T(n) = n^\beta$.

\medskip

\noindent \underline{Question checker $C^\ans$}. At a high level, the Turing machine $C^\ans$, on input $(n,\hat{x},\hat{y})$ checks whether the question pair $(\hat{x},\hat{y})$ is nontrivial according to \Cref{tab:answer-reduction}, where ``$G$'' in the table is supposed to be the $n$-th game $G_n$ of the sequence specified by the verifier $\verifier = (D,C)$, ``$D_{x,y}$'' in the table is supposed to be the Turing machine $D_{n,x,y}$ which on input $(a,b)$ outputs $D(n,x,y,a,b)$, and ``$T$'' in the table is supposed to be $T(n)$.

In order to compute whether $(\hat{x},\hat{y})$ (or $(\hat{y},\hat{x})$) is one of the question pairs specified by \Cref{tab:answer-reduction}, the Turing machine $C^\ans$ has to compute the question lengths of the $n$-th answer-reduced game $G^\ans$: it computes $L_n$, the number of variables of a Cook-Levin formula corresponding to a Turing machine with description length $|D| + O(\log n) + 2Q(n)$. (This is the description length of a Turing machine $D_{n,x,y}$, which is $D$ with $(n,x,y)$ ``hardwired'' into it.) It then checks whether $\hat{x},\hat{y}$ are (binary encodings of) elements of $(\{0,1\}^{Q(n)} \cup \{0,1\}^{2Q(n)}) \times ([L_n] \cup [L_n]^2 \cup [L_n]^3)$, which is the question alphabet of $G_n^\ans$. It not, then it outputs $0$. At this point, the Turing machine $C^\ans$ has ensured that $(\hat{x},\hat{y})$ is a properly-formatted question pair in the $n$-th answer-reduced game $G_n^\ans$.

The Turing machine $C^\ans$ then attempts to parse $(\hat{x},\hat{y})$ or $(\hat{y},\hat{x})$ as one of the combinations specified in \Cref{tab:answer-reduction} and outputs $1$ if there is a match; otherwise it outputs $0$. To determine whether $(x,y) \in (\{0,1\}^{Q(n)})^2$ is nontrivial for $G_n$, it computes whether $C(n,x,y) = 1$. This concludes the description of $C^\ans$.

\medskip

\noindent \underline{Decider $D^\ans$}. The Turing machine $D^\ans$ on input $(n,\hat{x},\hat{y},\hat{a},\hat{b})$ first computes $C^\ans(n,\hat{x},\hat{y})$. If the output is $0$ (i.e. the question pair $(\hat{x},\hat{y})$ is trivial), then the Turing machine $D^\ans$ accepts (i.e. outputs $1$). Otherwise, it continues. It computes $L_n$ just like with $C^\ans$, and then matches $(\hat{x},\hat{y})$ (resp. $(\hat{y},\hat{x})$) to one of the entries of the table. Since $C^\ans(n,\hat{x},\hat{y}) = 1$, there must be a match. The Turing machine $D^\ans$ then evaluates whether the winning conditions $(\hat{a},\hat{b})$ (resp. $(\hat{b},\hat{a})$) are satisfied according to \Cref{tab:answer-reduction}. If the winning conditions are satisfied, then $D^\ans$ outputs $1$ (accepts), otherwise it outputs $0$ (rejects).

\medskip
\medskip
Now assume the conditions of \Cref{thm:answer-reduction}; i.e., that $\verifier = (D,C)$ is a verifier for a sequence of games $\UGS_\verifier = (G_n)_{n \in \N}$ and
\begin{enumerate}
	\item The questions of $G_n$ have length at most $Q(n)$,
	\item $\TIME_C(n) \leq Q(n)$, and
	\item $\TIME_D(n) \leq T(n)$.
\end{enumerate}
Now we argue that the output $\verifier^\ans = (D^\ans,C^\ans)$ is a verifier for a sequence of games $\UGS_{\verifier^\ans} = (G_n^\ans)_{n \in \N}$ satisfying the conclusions of \Cref{thm:answer-reduction}.

\paragraph{Complexity of the question checker $C^\ans$.} 
The question checker $C^\ans$ for the answer-reduced game first has to compute $L_n$, the number of variables in the Cook-Levin formula corresponding to $D_{n,x,y}$. This requires computing the description length of $D_{n,x,y}$, where $x,y$ are questions in the original game $G_n$, which by assumption has length at most $Q(n)$. It then has to check that the questions $(\hat{x},\hat{y})$ are properly formatted questions from the question alphabet of $G_n^\ans$, which takes time $\poly(Q(n),\log L_n)$. Then, it has to determine whether $(\hat{x},\hat{y})$ matches one of the question pairs in \Cref{tab:answer-reduction}, which includes running the question checker $C$ for the original verifier $\verifier$. Thus overall we have $\TIME_{C^\ans}(n) \leq \poly(|D|,|C|,Q(n),\log T(n),\log n) = \poly(|D|,|C|,\beta, \log^\beta n)$.

\paragraph{Complexity of the decider $D^\ans$.} The time complexity of the answer-reduced verifier $D^\ans$ includes the complexity of computing the question checker $C^\ans(n,x,y)$ and computing the number of variables $L_n$. It also includes the complexity of computing a clause of the Cook-Levin formula $\varphi_{n,x,y}$, which involves invoking the algorithm $\alg{CookLevin}$ on the input $(D_{n,x,y},T(n),2T(n),i,j,k)$ for some variable indices $i,j,k \in [L_n]$, where $(x,y)$ are questions for the original game $G_n$ (which have length $Q(n)$ by assumption). Computing the description of $D_{n,x,y}$ takes time $\poly(|D|,|x|,|y|,\log n)$ because it involves ``hard-wiring'' the integer $n$ and strings $x$, $y$ into the description of $D$. Thus it takes at most $\poly(|D|,Q(n),\log T(n),\log n)$ to compute a clause. Computing the $\eta(\cdot)$ and $\lambda(\cdot)$ maps also take time at most $\poly(\log T(n))$ (because it requires computing $T(n)$). Thus, in total, the complexity of the answer-reduced verifier is $\poly(|D|,|C|,Q(n),\log T(n),\log n) = \poly(|D|,|C|,\beta, \log^\beta n)$.

\paragraph{Completeness and Soundness.} Completeness follows from \Cref{prop:answer-reduction-completeness}. Soundness follows from \Cref{prop:answer-reduction-soundness}.

\medskip

\noindent This completes the proof of \Cref{thm:answer-reduction}.

\newpage
\section{Compression of nonlocal games and their applications}
\label{sec:compression}

In this section we describe the compression theorems and some of their applications. 

\subsection{Gapless compression}
First we present the main technical result of this paper, which is a gapless compression theorem for both the quantum and commuting operator value of nonlocal games. This theorem statement is a formalization of \Cref{thm:gapless-comp-informal} from the introduction.

\begin{theorem}[Gapless compression of nonlocal games]
\label{thm:gapless-comp-formal}
	For all $\alpha \in \N$ there is a polynomial time algorithm $\alg{GaplessCompress}_{\alpha}$ that takes as input a pair of Turing machines $(D,C)$ and outputs a pair of Turing machines $(D',C')$ such that the following holds. If $\verifier =(D,C)$ is a verifier for a sequence of games $\UGS_{\verifier} = (G_n)_{n \in \N}$ and $n_0 \in \N$ is an integer such that for all $n \geq n_0$,
	\begin{equation}
	\label{eq:gapless-time-bound}
    	\max \Big \{ \TIME_C(n), \TIME_D(n) \Big \} \leq n^\alpha~,
	\end{equation}
    then $\verifier' = (D',C')$ is a verifier for a sequence of games $\UGS_{\verifier'} = (G_n')_{n \in \N}$ with the following properties. There exist an integer $\gamma = \poly(\alpha)$ and $n_0' = \poly(\gamma,n_0)$ such that for all $n \geq n_0'$,
    \begin{enumerate}
        \item (Complexity bounds)
        \begin{gather*}
			\max \left \{ \TIME_{C'}(n), \TIME_{D'}(n) \right \} \leq \log^\gamma n~.
		\end{gather*}
        \item (Completeness) For all oracularizable synchronous strategies $\strategy$ for $G_n$, there exists an oracularizable synchronous strategy $\strategy'$ for $G_n'$ such that
        \[
            \val(G_n',\strategy') \geq \frac{1}{2} + \frac{1}{2} \val(G_n,\strategy)~.
        \]
        Furthermore, if $\strategy$ is finite dimensional, so is $\strategy'$. 
        \item (Soundness) For all $t \in \{q,co\}$ we have 
        \[\val_t^s(G_n) < 1 \Longrightarrow \val_t^s(G_n') < 1~.\]
        \item (Entanglement bound)
        \[
        	\mathcal{E} (G_n',1) \geq \max \left \{ \mathcal{E} (G_n,1) , 2^{2n} \right \}~.
        \]
    \end{enumerate}
\end{theorem}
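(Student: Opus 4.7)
The plan is to obtain $\alg{GaplessCompress}_\alpha$ by composing the Question Reduction transformation (\Cref{thm:question-reduction}) with the Answer Reduction transformation (\Cref{thm:answer-reduction}). Concretely, on input $(D,C)$ the algorithm first runs $\alg{QuestionReduction}_\alpha$ to produce a verifier $\verifier^\intro = (D^\intro,C^\intro)$, and then runs $\alg{AnswerReduction}_\beta$ on $\verifier^\intro$, where $\beta$ is the polynomial in $\alpha$ guaranteed by \Cref{thm:question-reduction}. The output is the resulting pair $(D',C')$. Since both sub-algorithms run in polynomial time, so does the composition.

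To verify the four conclusions, first I would check the hypothesis chain. By \Cref{eq:gapless-time-bound} the input verifier satisfies the hypothesis of \Cref{thm:question-reduction}, and its output verifier $\verifier^\intro$ satisfies exactly the hypothesis required by \Cref{thm:answer-reduction} (question length $\leq \log^\beta n$, $\TIME_{C^\intro}(n) \leq \log^\beta n$, and $\TIME_{D^\intro}(n) \leq n^\beta$). Applying \Cref{thm:answer-reduction} therefore gives $\gamma = \poly(\beta) = \poly(\alpha)$ and a threshold $n_0'$ that is polynomial in $\gamma$ and the original $n_0$, together with the complexity bound $\max\{\TIME_{D'}(n),\TIME_{C'}(n)\}\leq \log^\gamma n$. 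This proves item~1.

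For item~2 (completeness), given an oracularizable synchronous strategy $\strategy$ for $G_n$, \Cref{thm:question-reduction} produces an oracularizable synchronous strategy $\strategy^\intro$ for $G_n^\intro$ with $\val(G_n^\intro,\strategy^\intro)\geq \val(G_n,\strategy)$, which is precisely the input required by the completeness statement of \Cref{thm:answer-reduction}; the latter then yields an oracularizable synchronous $\strategy'$ for $G_n'$ with
\[
\val(G_n',\strategy') \;\geq\; \tfrac12+\tfrac12\,\val(G_n^\intro,\strategy^\intro) \;\geq\; \tfrac12+\tfrac12\,\val(G_n,\strategy),
\]
and finite-dimensionality is preserved at each step. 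For item~3 (soundness), for every $t\in\{q,co\}$, chaining the contrapositives gives
\[
\val_t^s(G_n)<1 \;\Longrightarrow\; \val_t^s(G_n^\intro)<1 \;\Longrightarrow\; \val_t^s(G_n')<1.
\]
For item~4 (entanglement bound), composing the two bounds yields
\[
\mathcal{E}(G_n',1) \;\geq\; \mathcal{E}(G_n^\intro,1) \;\geq\; \max\bigl\{\mathcal{E}(G_n,1),\,2^{2n}\bigr\},
\]
as required.

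There is no genuine obstacle left in this proof: all the heavy lifting has been done in \Cref{sec:rigidity,sec:question-reduction,sec:answer-reduction}. The only point that needs a little care is bookkeeping of parameters, making sure the output of $\alg{QuestionReduction}_\alpha$ exactly matches the input format demanded by $\alg{AnswerReduction}_\beta$ (i.e.\ question length, checker time, and decider time bounds), and tracking that the composed threshold $n_0'$ is polynomial in $\gamma$ and $n_0$. The main conceptual content, namely that introspection preserves both completeness and soundness for oracularizable strategies while shrinking question length, and that Cook--Levin answer reduction then shrinks answer length and decider complexity at the cost of at most a factor-$\tfrac12$ contraction in value, has already been established, so the proof reduces to invoking the two theorems in sequence.
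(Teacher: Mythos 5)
Your proposal is correct and follows essentially the same route as the paper: compose $\alg{QuestionReduction}_\alpha$ with $\alg{AnswerReduction}_\beta$ (with $\beta = \poly(\alpha)$ from the former), verify the hypothesis chain, and propagate the completeness, soundness, and entanglement guarantees through both transformations. The only detail the paper spells out a bit more explicitly is the definition of $n_0'$ (it is set to the $n_0^{\ans}$ of Answer Reduction, which itself depends on $n_0^{\intro}$ from Question Reduction), but your bookkeeping remark captures this.
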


We prove this by combining the question reduction and answer reduction algorithms of \Cref{sec:question-reduction,sec:answer-reduction}. The algorithm $\alg{GaplessCompress}_\alpha$ is presented below. The parameter $\beta$ in \Cref{alg:gapless-compress} is defined to be the same $\beta = \poly(\alpha)$ from \Cref{thm:question-reduction}. 

\vspace{10pt}
\IncMargin{1em}
\begin{algorithm}[H]
\DontPrintSemicolon

\textbf{Input}: $D,C$.

Compute $(D^{\intro}, C^{\intro}) = \alg{QuestionReduction}_{\alpha}(D,C)$. 

Compute $(D', C') = \alg{AnswerReduction}_{\beta}(D^{\intro}, C^{\intro})$. 

Return $(D', C')$.

\caption{$\alg{GaplessCompress}_\alpha$} 
\label{alg:gapless-compress}
\end{algorithm}\DecMargin{1em}
\vspace{10pt}

\begin{proof}
First, it is clear that $\alg{GaplessCompress}_\alpha$ runs in polynomial time in the description length of the input $(D,C)$, because the algorithm $\alg{QuestionReduction}_\alpha$ runs in time $\poly(|D|,|C|)$ and the algorithm $\alg{AnswerReduction}_{\beta}$ runs in time $\poly(|D^\intro|,|C^\intro|) = \poly(|D|,|C|)$. This last equality uses that $\max \{ |D^\intro|, |C^\intro| \} \leq \poly(|D|,|C|)$ because the running time of $\alg{QuestionReduction}_{\alpha}$ is an upper bound on the length of the descriptions of $D^\intro$ and $C^\intro$. 

Next, suppose that $\verifier = (D,C)$ is such that the time bound of~\eqref{eq:gapless-time-bound} is satisfied. Then, the complexity bounds on $(D^\intro,C^\intro)$ given by the conclusion of \Cref{thm:question-reduction} are exactly those that satisfy the conditions of \Cref{thm:answer-reduction}. Thus, the output $(D',C')$ of $\alg{AnswerReduction}_\beta(D^\intro,C^\intro)$ satisfy the conclusions of \Cref{thm:answer-reduction} (with $\gamma = \poly(\beta) = \poly(\alpha)$) and thus this establishes the desired complexity bounds on the output verifier $\verifier'$. 

Define the integers $\beta = \poly(\alpha)$,$n_0^\intro = \poly(\beta,n_0)$ as given by \Cref{thm:question-reduction}. Then, define the integers $\gamma = \poly(\beta)$, $n_0^\ans = \poly(\gamma^\gamma,n_0^\intro) = \poly(\alpha^\alpha,n_0)$ as given by \Cref{thm:answer-reduction}. Define $n_0' = n_0^\ans$. We assume that $n \geq n_0'$ implies that $n \geq \max \{ n_0, n_0^\intro \}$ as well.

We now establish the completeness property of $\verifier'$. Let $n \geq n_0'$. Let $\strategy$ be an oracularizable synchronous strategy for $G_n$. By the completeness of Question Reduction, this implies there is an oracularizable synchronous strategy $\strategy^{intro}$ for $G_n^{\intro}$ such that
\[
	\val(G^{\intro}_n,\strategy^{intro}) \geq \val(G_n,\strategy)~.
\]
Then, by the completeness of Answer Reduction, there is an oracularizable synchronous strategy $\strategy'$ for $G_n'$ such that
\[
\val(G_n',\strategy') \geq \frac{1}{2} + \frac{1}{2}\val(G^{\intro}_n,\strategy^{intro}) \geq \frac{1}{2} + \frac{1}{2}\val(G_n,\strategy)~.
\]
Furthermore, if $\strategy$ is finite-dimensional, then so are $\strategy^\intro$ and $\strategy'$. 

We establish the soundness property of $\verifier'$ by combining the soundness guarantees of Question Reduction and Answer Reduction:
\[
    \val_t^s(G_{n}) < 1 \Longrightarrow \val_t^s(G_n^{\intro}) < 1 \Longrightarrow \val_t^s(G_n') < 1.
\]

Finally, we establish the entanglement bound property by combining the entanglement bounds from Question Reduction and Answer Reduction 
\[ 
\mathcal{E} (G_n',1) \geq \mathcal{E} (G_n^{intro},1) \geq \max \left \{ \mathcal{E} (G_n,1) , 2^{2n} \right \}~.
\]

\end{proof}

\subsection{Super compression} \label{subsec:super-comp}

The gapless compression procedure of \Cref{thm:gapless-comp-formal} transforms uniform sequences of games $(G_1,G_2,\ldots)$ to another uniform sequence $(G_1',G_2',\ldots)$ that is, in a sense, exponentially more efficient. Using this we prove a \emph{super compression} procedure, which transforms a sequence of games $(G_1,G_2,\ldots)$ into a \emph{single} game $G'$ such that $\val_t^s(G') = 1$ if and only if $\val_t^s(G_n) = 1$ for every sufficiently large $n$ and $t \in \{q, co \}$. 

\begin{theorem}[Super compression of nonlocal games]
\label{thm:super-comp}
	For all $\alpha \in \N$ there is a polynomial time algorithm $\alg{SuperCompress}_{\alpha}$ that takes as input a pair of Turing machines $(D,C)$ and outputs a pair of Turing machines $(D^\super,C^\super)$ such that the following holds. If $\verifier =(D,C)$ is a verifier for a sequence of games $\UGS_{\verifier} = (G_n)_{n \in \N}$ and $n_0 \in \N$ is an integer such that for all $n \geq n_0$,
	\begin{equation}
	\label{eq:super-comp-time-bound}
    	\max \Big \{ \TIME_C(n), \TIME_D(n) \Big \} \leq n^\alpha~,
	\end{equation}
    then $\verifier^\super = (D^\super,C^\super)$ is a verifier for a sequence of games $\UGS_{\verifier^\super} = (G_n^\super)_{n \in \N}$ such that there exist integers $\lambda = O(\alpha)$ and $\kappa = \poly(|D|,|C|,\alpha,n_0,\lambda^{\poly(\lambda)})$ and the $\kappa$-th game in the sequence, $G_\kappa^\super$, satisfies the following properties:
    \begin{enumerate}
        \item (Complexity bounds) \begin{gather*}
			\max \left \{ \TIME_{C^\super}(\kappa), \TIME_{D^\super}(\kappa) \right \} \leq \kappa^\lambda~.
		\end{gather*}
\item (Completeness for $t = q$) If for all $n \geq \kappa$ we have
        \[
        	\sup_{\text{finite-dim osync } \strategy_n} \val(G_n,\strategy_n) = 1
        \]
        where the supremum is over finite-dimensional oracularizable synchronous strategies $\strategy_n$, then $\val_q^s(G_\kappa^\super) = 1$. 
 		\item (Completeness for $t = co$) If for all $n \geq \kappa$, there exists an oracularizable synchronous strategy $\strategy_n$ for $G_n$ such that $\val(G_n,\strategy_n) = 1$, then $\val_{co}^s(G_\kappa^\super) = 1$. 
		      
        \item (Soundness) For all $t\in \{q,co\}$, if there exists an $n \geq \kappa$ such that $\val_t^s(G_n) < 1$, then $\val_t^s(G_\kappa^\super) < 1$.
        
        \item (Entanglement lower bound) There is no finite-dimensional strategy $\strategy^\super_\kappa$ such that $\val(G_\kappa^\super,\strategy^\super_\kappa) = 1$.
    \end{enumerate}
\end{theorem}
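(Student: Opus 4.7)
The construction is a fixed-point application of \Cref{thm:gapless-comp-formal}. By Kleene's second recursion theorem, we build a self-referential pair of Turing machines $(D^\super,C^\super)$ so that the $n$-th game $G_n^\super$ is the uniform mixture of two games, indexed by a public coin $c\in\{0,1\}$ sampled by the verifier and sent identically to both players: if $c=0$ the verifier runs $D$ on the original $n$-th game $G_n$, and if $c=1$ it runs the decider $D'$ on the $(n+1)$-st game of the compressed sequence $\tilde{\UGS}^\super = \alg{GaplessCompress}_{\lambda}(D^\super,C^\super)$, for a fixed $\lambda=O(\alpha)$ determined below. The recursion theorem is what lets $D^\super$ invoke the compressor on its own description at runtime; the outer polynomial-time algorithm $\alg{SuperCompress}_\alpha$ simply writes down this self-referential code.

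The complexity bound comes from a routine accounting. The $c=0$ branch costs at most $n^\alpha$ by hypothesis, while the $c=1$ branch costs $\poly(|D^\super|,|C^\super|)+\log^{\gamma}n$ by \Cref{thm:gapless-comp-formal} with $\gamma=\poly(\lambda)$; the first term is constant in $n$ since $|D^\super|,|C^\super|=\poly(|D|,|C|,\alpha)$. Choosing $\lambda$ slightly larger than $\alpha$ absorbs both, and choosing $\kappa$ to exceed $n_0$, the threshold at which gapless compression is valid on $\UGS^\super$ itself (which is of order $\lambda^{\poly(\lambda)}$ by tracing through \Cref{thm:question-reduction,thm:answer-reduction}), and the threshold at which $n^\lambda$ dominates the constant terms, yields the stated form of $\kappa$.

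The core identity for the value analysis is $v_n:=\val_t^s(G_n^\super)=\tfrac12\val_t^s(G_n)+\tfrac12\val_t^s(\tilde{G}_{n+1}^\super)$, which holds because the two summands of the mixture use disjoint question labels and strategies can be combined via tensor products of algebras and tracial states; the identity respects oracularizability, synchronicity, and finite-dimensionality. Plugging in the completeness half of \Cref{thm:gapless-comp-formal}, which gives $\val_t^s(\tilde G_{n+1}^\super)\geq \tfrac12+\tfrac12 v_{n+1}$, yields the recurrence $v_n\geq \tfrac34+\tfrac14 v_{n+1}$ whenever $\val_t^s(G_n)=1$. Iterating this recurrence forces $v_\kappa=1$, establishing completeness for $t=co$; for $t=q$ we unroll the recurrence for $N$ steps (using a trivial finite-dimensional strategy at level $N$) to get a finite-dimensional oracularizable synchronous strategy for $G_\kappa^\super$ of value $\geq 1-4^{-(N-\kappa)}$, and letting $N\to\infty$ gives $\val_q^s(G_\kappa^\super)=1$.

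Soundness is the same identity read backwards with the soundness half of \Cref{thm:gapless-comp-formal}: if $v_\kappa=1$ then both summands equal $1$, so $\val_t^s(G_\kappa)=1$ and $v_{\kappa+1}=1$, and the induction propagates to $\val_t^s(G_n)=1$ for every $n\geq \kappa$, the contrapositive of the claim. The entanglement lower bound is the analogous bootstrap on the entanglement invariant: a dimension-$d$ perfect strategy for $G_\kappa^\super$ would restrict to a dimension-$d$ perfect strategy for $\tilde G_{\kappa+1}^\super$ and hence, by the entanglement bound in \Cref{thm:gapless-comp-formal}, satisfy $d\geq 2^{2(\kappa+1)}$; propagating this to every $G_m^\super$ with $m>\kappa$ forces $d\geq 2^{2m}$ for arbitrarily large $m$, a contradiction. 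The main step requiring real care is the self-referential construction: we must verify that the recursion-theorem fixed point has a description of size $\poly(|D|,|C|,\alpha)$ independent of $n$, so that the compressor can be invoked on it in constant time inside the $c=1$ branch and the complexity accounting closes.
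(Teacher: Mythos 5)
Your proposal has the right construction, the right invariants, and the right value analysis, and the unrolling argument for completeness is a valid (arguably cleaner) alternative to the paper's infimum-contradiction: the paper defines $c = \inf_n c_n$ where $c_n$ is the sup of $\val(G_n^\super,\cdot)$ over finite-dimensional oracularizable synchronous strategies and derives $c \geq (7+c)/8$, whereas you just build an explicit osync strategy of value $\geq 1 - 4^{-(N-\kappa)}$ by starting from a trivial strategy at level $N$ and tensoring down. Both work, and both rely (as you note) on the construction preserving oracularizability. Your shared-coin mixture ($1/2$--$1/2$) differs cosmetically from the paper's independent-bit version, which auto-accepts with probability $1/2$ on mismatch and plays $G_n$ or $G'_{n+1}$ with probability $1/4$ each; the recurrence constants change but nothing else does.

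The genuine gap is in the complexity accounting, which as written is circular. You bound $\TIME_{D^\super}(n)$ by invoking the $\TIME_{D'}(n+1) \leq \log^\gamma(n+1)$ conclusion of \Cref{thm:gapless-comp-formal}, but that conclusion is stated under the \emph{hypothesis} that $\TIME_{D^\super}(n') \leq (n')^\lambda$, which is exactly what you are trying to establish. ``Choosing $\lambda$ slightly larger than $\alpha$ absorbs both'' does not break this circle, and neither does the description-size observation you flag as the delicate point — that is a separate (correct, but distinct) concern. The paper resolves the circularity by building an explicit timeout into the self-referential machines: $D^\super_\lambda$ and $C^\super_\lambda$ begin by installing a clock that forces a halt-and-reject after $n^\lambda$ steps, so the hypothesis of \Cref{thm:gapless-comp-formal} holds \emph{by fiat}. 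The paper then proves separately, via a careful choice of $\lambda^* = O(\alpha)$ and $\kappa$, that for $n \geq \kappa$ the intended computation in fact finishes before the clock fires, so the timeout never changes the game for the indices that matter. Without this device, or without a separate argument that the output decider's runtime is in fact unconditionally polylogarithmic independent of the input decider's runtime, the complexity bound you claim has no non-circular derivation. This is a fixable technical point, but it is exactly the step the construction is built around, and your proposal does not address it.
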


Note that, unlike \Cref{thm:gapless-comp-formal}, the conclusions of \Cref{thm:super-comp} pertain to a \emph{single} game in the output sequence $\UGS_{\verifier^\super} = (G_n^\super)_n$ of games, namely, $G_\kappa^\super$.

At a high level, the games $(G_n^\super)_n$ has the following structure: with probability $\frac{1}{2}$, the verifier in the game $G_n^\super$ plays the game $G_n$. With the other probability the verifier plays the game $G_{n+1}'$ where $(G_n')_n$ is the compression of $(G_n^\super)_n$ using $\alg{GaplessCompress}$ from \Cref{thm:gapless-comp-formal}. Note the self-referentiality! We now proceed with the proof.

\begin{proof}

Let $(D,C)$ be a pair of Turing machines and let $\alpha$ be such that \cref{eq:gapless-time-bound} is satisfied. We first define, for every integer $\lambda \in \N$, a pair of Turing machines $(D^\super_{\lambda},C^\super_{\lambda})$ whose descriptions are given below in \Cref{alg:d-super,alg:c-super}. We will then identify a special $\lambda^*$ and define the algorithm $\alg{SuperCompress}_{\alpha}$ to output the descriptions of $(D^\super_{\lambda^*},C^\super_{\lambda^*})$. 

Note that the descriptions of $D^\super_{\lambda},C^\super_{\lambda}$ are self-referential: they perform computations on \emph{their own descriptions}. It is possible to define Turing machines in this manner; one can appeal to either Kleene's Recursion Theorem/Roger's Fixed Point Theorem to argue that these descriptions are well-defined (see, e.g.\,~\cite[Chapter 14]{jones1997computability} for a modern explanation).
The description lengths of these Turing machines satisfy 
\[
	\max \{ |D^\super_{\lambda}|, |C^\super_{\lambda}| \} \leq \poly(\lambda,|D|,|C|)~.
\]

\vspace{10pt}
\IncMargin{1em}
\begin{algorithm}[H]
\DontPrintSemicolon

\textbf{Input}: $n, x, y, a,b$

If the following takes more than $n^\lambda$ steps, then automatically reject.

Parse $x = (t_x,\hat{x})$ and $y = (t_y,\hat{y})$, where $t_x,t_y \in \{0,1\}$. 

\If{$t_x = t_y = 0$}{ 
	If $D(n,\hat{x},\hat{y},a,b)$ accepts, then accept. Otherwise, reject.
}
\ElseIf{$t_x = t_y = 1$} {

Compute $(D',C') = \alg{GaplessCompress}_{\lambda}(D^\super_{\lambda},C^\super_{\lambda})$.

If $D'(n+1,\hat{x},\hat{y},a,b)$ accepts, then accept. Otherwise, reject.

} 

On all other inputs, accept.

\caption{Specification of Turing machine $D^\super_{\lambda}$.}
\label{alg:d-super}
\end{algorithm}\DecMargin{1em}
\vspace{10pt}

\vspace{10pt}
\IncMargin{1em}
\begin{algorithm}[H]
\DontPrintSemicolon

\textbf{Input}: $n, x, y$

If the following takes more than $n^\lambda$ steps, then automatically reject.

Parse $x = (t_x,\hat{x})$ and $y = (t_y,\hat{y})$, where $t_x,t_y \in \{0,1\}$. 

\If{$t_x = t_y = 0$}{ 
	Output $C(n,\hat{x},\hat{y})$.
}
\ElseIf{$t_x = t_y = 1$} {

Compute $(D',C') = \alg{GaplessCompress}_{\lambda}(D^\super_{\lambda},C^\super_{\lambda})$.

Output $C'(n+1,\hat{x},\hat{y})$.

} 

On all other inputs, output $1$.

\caption{Specification of Turing machine $C^\super_{\lambda}$.}
\label{alg:c-super}
\end{algorithm}\DecMargin{1em}
\vspace{10pt}

First, observe that by construction both $D^\super_{\lambda}$ and $C^\super_{\lambda}$, when given index $n$, run in time at most $n^\lambda$. Thus, $(D^\super_{\lambda},C^\super_{\lambda})$ satisfy the complexity conditions of \Cref{thm:gapless-comp-formal} for the algorithm $\alg{GaplessCompress}_{\lambda}$, and thus the output Turing machines $(D',C')$ satisfy the complexity bounds in the conclusion of $\alg{GaplessCompress}_{\lambda}$, namely, that there exists $\gamma = \poly(\lambda)$ such that for all $n \in \N$, 
\[
	\max \{ \TIME_{D'}(n), \TIME_{C'}(n) \} \leq \log^\gamma n~.
\]

The next claim shows that we can find an integer $\lambda^*$ such that for sufficiently large $n$, the Turing machines $D^\super_{\lambda^*},C^\super_{\lambda^*}$ never encounter the time-out.

\begin{claim}
There exist integers $\lambda^* = O(\alpha),\kappa = \poly(|D|,|C|,\alpha,n_0,\lambda^{\poly(\lambda)})$ such that for all $n \geq \kappa$, the Turing machines $D^\super_{\lambda^*},C^\super_{\lambda^*}$ when given index $n$ never reject due to exceeding the $n^{\lambda^*}$ time-out. 
\end{claim}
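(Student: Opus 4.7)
The plan is to bound the running time of $D^\super_\lambda$ and $C^\super_\lambda$ on input with index $n$, assuming no time-out is triggered, and then pick $\lambda^* = O(\alpha)$ so that this running time is at most $n^{\lambda^*}$ for all sufficiently large $n$. First, I would invoke Kleene's recursion theorem (or Rogers's fixed-point theorem) to argue that the self-referential descriptions of $D^\super_\lambda, C^\super_\lambda$ in \Cref{alg:d-super}, \Cref{alg:c-super} yield well-defined Turing machines whose description lengths satisfy $|D^\super_\lambda|, |C^\super_\lambda| \leq \poly(\lambda, |D|, |C|)$; this is precisely the kind of guarantee the recursion theorem supplies.

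Next I would trace through the pseudocode to estimate the running time of $D^\super_\lambda$ and $C^\super_\lambda$ on index $n$, ignoring the time-out. The branch $t_x = t_y = 0$ simply simulates $D(n,\hat{x},\hat{y},a,b)$ or $C(n,\hat{x},\hat{y})$, which by the hypothesis $n \geq n_0$ runs in at most $O(n^\alpha)$ steps modulo a constant simulation overhead. The branch $t_x = t_y = 1$ does two things: (i) compute $(D',C') = \alg{GaplessCompress}_\lambda(D^\super_\lambda, C^\super_\lambda)$, and (ii) simulate $D'$ or $C'$ on index $n+1$. The key observation is that (i) is a fixed computation on a fixed input that does not depend on $n$: since $\alg{GaplessCompress}_\lambda$ is a polynomial-time algorithm (with degree possibly depending on $\lambda$) and its input has size $\poly(\lambda, |D|, |C|)$, its running time can be bounded by a function $T_{\mathrm{GC}}(\lambda, |D|, |C|) \leq (\lambda + |D| + |C|)^{\poly(\lambda)}$. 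Step (ii) takes time at most $O(\log^\gamma(n+1))$ where $\gamma = \poly(\lambda)$ by the complexity guarantee of \Cref{thm:gapless-comp-formal}. All other branches take $O(1)$ time.

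With these estimates in hand, I would set $\lambda^* = \alpha + 1$, which is manifestly $O(\alpha)$. For $n \geq \kappa$, we need $n^{\lambda^*} = n^{\alpha+1}$ to dominate both $O(n^\alpha)$ (which holds once $n$ exceeds a simulation-overhead constant) and $T_{\mathrm{GC}}(\lambda^*, |D|, |C|) + O(\log^{\gamma^*}(n+1))$, where $\gamma^* = \poly(\lambda^*) = \poly(\alpha)$. Since the polylogarithmic term is of lower order, the binding constraint is $n^{\alpha+1} \geq 2\, T_{\mathrm{GC}}(\lambda^*, |D|, |C|)$, i.e. $n \geq (2\, T_{\mathrm{GC}}(\lambda^*, |D|, |C|))^{1/(\alpha+1)}$. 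Taking $\kappa$ to be the maximum of $n_0$, this quantity, and the constants from the simulation overhead gives exactly the asserted bound $\kappa = \poly(|D|, |C|, \alpha, n_0, \lambda^{\poly(\lambda)})$.

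The main obstacle is not conceptual but bookkeeping: (a) verifying that the self-reference in the descriptions of $D^\super_\lambda, C^\super_\lambda$ does not inflate their description length beyond $\poly(\lambda, |D|, |C|)$, and (b) controlling the $\lambda$-dependence in the running time of $\alg{GaplessCompress}_\lambda$. Point (b) is the reason for the unusual factor $\lambda^{\poly(\lambda)}$ in $\kappa$: although $\alg{GaplessCompress}_\lambda$ is polynomial-time for every fixed $\lambda$, the degree of the polynomial may itself depend on $\lambda$, and this blow-up must be absorbed into $\kappa$ so that $n^{\lambda^*}$ can dominate the (constant-in-$n$) cost of the compression step.
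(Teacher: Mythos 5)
Your overall plan matches the paper's: bound the running time of $D^\super_\lambda,C^\super_\lambda$ absent the time-out, then choose $\lambda^*$ and $\kappa$ large enough that the time-out can never fire. You also correctly identify the two branches that matter and the polylogarithmic cost $\log^\gamma(n+1)$ with $\gamma = \poly(\lambda)$ from the compressed decider. But your specific choice $\lambda^* = \alpha+1$ is too small in general, and the justification you give for the $\lambda^{\poly(\lambda)}$ factor in $\kappa$ is attributed to the wrong place.

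On $\lambda^*$: you write that simulating $D$ costs ``$O(n^\alpha)$ steps modulo a constant simulation overhead,'' which would make $n^{\alpha+1}$ dominate once $n$ exceeds a fixed constant. This is only valid if universal simulation of a Turing machine that runs for $T$ steps costs $O(T)$ (or quasi-linear in $T$); for a generic single-tape model the overhead is polynomial, and then the cost of line 4 of \Cref{alg:d-super} is $\poly(n^\alpha) = n^{\sigma\alpha}$ for some universal constant $\sigma$ possibly greater than $1$, which $n^{\alpha+1}$ does not dominate when $\alpha \geq 2$. The paper avoids this pitfall by bounding the whole running time as $\sigma(n^\alpha \cdot |D| \cdot |C| \cdot \lambda \cdot \log^\gamma n)^\sigma$ for a universal constant $\sigma$ and then taking $\lambda^* = 2\sigma\alpha$; this is still $O(\alpha)$ but with a model-dependent universal constant folded in. Your argument would be correct if you replaced $\lambda^* = \alpha + 1$ by $\lambda^* = c\alpha$ for a suitable universal $c$, and the rest of the bookkeeping goes through.

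On the source of the $\lambda^{\poly(\lambda)}$ factor: you attribute it to $\alg{GaplessCompress}_\lambda$ potentially having a polynomial time bound whose degree grows with $\lambda$. The paper instead bounds $\alg{GaplessCompress}_\lambda$'s cost as $\poly(|D|,|C|,\lambda)$ with a universal degree, and the $\lambda^{\poly(\lambda)}$ factor in $\kappa$ actually comes from enforcing $\log^{\gamma\sigma}(n) \leq n^{\lambda^*}$ with $\gamma = \poly(\lambda^*)$: to dominate a polylogarithm of degree $\gamma\sigma$ one must take $n \geq (\gamma\sigma)^{\gamma\sigma}$, which is $\lambda^{*\poly(\lambda^*)}$. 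Your version of the factor happens to land in the right place in $\kappa$ under your more pessimistic estimate of $T_{\mathrm{GC}}$, but the stated reason is not the one the paper relies on, and under the paper's tighter (universal-degree) bound on $\alg{GaplessCompress}_\lambda$ the polylogarithm term is the binding constraint.
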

\begin{proof}

Next, the time complexity of $D^\super_{\lambda}$ (resp. $C^\super_\lambda$) \emph{without} the automatic $n^\lambda$ timeout is polynomial in the complexity of running the decider $D$/checker $C$, computing $\alg{GaplessCompress}_{\lambda}$, and running the decider $D'$ (resp. checker $C'$). By our assumptions on $(D,C)$, when $n \geq n_0$ we have the bounds from \cref{eq:gapless-time-bound}. The algorithm $\alg{GaplessCompress}_{\lambda}$ runs in time $\poly(|D^\super_{\lambda}|,|C^\super_{\lambda}|,\lambda) = \poly(|D|,|C|,\lambda)$. Putting this together with the complexity bounds on $D'$ (resp. $C'$), we have that the complexity of $D^\super_{\lambda}$ (resp. $C^\super_\lambda$), without the automatic timeout, is at most
\begin{equation}
\label{eq:super-comp-0b}
	\sigma (n^\alpha \cdot |D| \cdot |C| \cdot \lambda \cdot \log^\gamma n)^\sigma
\end{equation}
for all $n \geq n_0$, where $\sigma \in \N$ is some universal constant.

We can find integers $\lambda^*, \kappa \in \N$ such that each component of the expression in~\eqref{eq:super-comp-0b} is at most $n^{\lambda^*}$ for all $n \geq \kappa$. Namely:
\begin{itemize}
	\item By taking $\lambda^* \geq \sigma \cdot \alpha$ and $\kappa \geq \sigma$, we have that $\sigma n^{\alpha \cdot \sigma} \leq n^{\lambda^*}$ for all $n \geq \kappa$.
	\item By taking $\lambda^* \geq \sigma$ and $\kappa \geq |D| \cdot |C| \cdot \lambda^*$, we have that $(|D| \cdot |C| \cdot \lambda^*)^\sigma \leq n^{\lambda^*}$ for all $n \geq \kappa$.
	\item By taking $\lambda^* \geq 2$ and $\kappa \geq (\gamma \cdot \sigma)^{\gamma \cdot \sigma}$ where $\gamma = \poly(\lambda^*)$, we have that $\log^{\gamma \cdot \sigma}(n) \leq n^{\lambda^*}$ for all $n \geq \kappa$.
\end{itemize}
Putting everything together, by setting $\lambda^* = 2\sigma\alpha$ and $\kappa = \sigma \cdot \alpha \cdot |D| \cdot |C| \cdot \lambda^* \cdot  (\gamma \cdot \sigma)^{\gamma \cdot \sigma} \cdot n_0$, we get that the Turing machines $D^\super_{\lambda^*}$ and $C^\super_{\lambda^*}$ run in time that is less than $n^{\lambda^*}$ for all $n \geq \kappa$.

\end{proof}

We define the algorithm $\alg{SuperCompress}_\alpha$, on input $(D,C)$, to compute $\lambda^* = O(\alpha)$ and output the descriptions of $(D^\super_{\lambda^*},C^\super_{\lambda^*})$. The algorithm clearly runs in polynomial time. 

By construction the Turing machines $(D^\super_{\lambda^*},C^\super_{\lambda^*})$ satisfy the desired time complexity bound on index $n = \kappa$. What remains is to argue completeness and soundness. For notational simplicity we fix $\lambda^*$ and let $(D^\super,C^\super) = (D^\super_{\lambda^*},C^\super_{\lambda^*})$.

Fix $t \in \{q,co\}$. Since the Turing machines $D^\super,C^\super$ never reject due to the time-out, we have that the verifier in the game $G_n^\super$ automatically accepts with probability $\frac{1}{2}$ (when $t_x \neq t_y$), plays the game $G_n$ with probability $\frac{1}{4}$ (when $t_x = t_y = 0$), and plays the game $G_{n+1}'$ with probability $\frac{1}{4}$ (when $t_x = t_y = 1$) where $G_{n+1}'$ is the $(n+1)$-st game in the sequence of games output by $\alg{GaplessCompress}$ on input $(D^\super,C^\super)$. 

We first prove completeness for $t = q$. Suppose for all $n \geq \kappa$ we have
\begin{equation}
\label{eq:super-comp-0}
        	\sup_{\text{finite-dim osync } \strategy_n} \val(G_n,\strategy_n) = 1.
\end{equation}
Define
\[
	c_n = \sup_{\text{finite-dim osync } \strategy_n^\super} \val(G_n^\super,\strategy_n^\super)
\]
and define $c = \inf_{n \geq \kappa} c_n$. We aim to prove that $c = 1$; this would imply that $\val_q^s(G_n^\super) = 1$ for all $n \geq \kappa$. Suppose this were not true, so that $0 \leq c < 1$. We now show that $c_n \geq \frac{7 + c}{8} > c$ for all $n \geq \kappa$, which would contradict the fact that $c$ is the infimum of the sequence $(c_n)_{n \geq \kappa}$. 

For all $m \geq \kappa$, let: (a) $\strategy_m$ be a finite-dimensional oracularizable synchronous (``finite-dim osync'') strategy for $G_m$, (b) let $\strategy_m^\super$ denote a finite-dim osync strategy for $G_m^\super$ whose value is at least $c$, and (c) let $\strategy_m'$ denote the finite-dim osync strategy for $G_m'$, given by the completeness property of \Cref{thm:gapless-comp-formal}, whose value satisfies
\begin{equation}
\label{eq:super-comp-1}
	\val(G_m',\strategy_m') \geq \frac{1}{2} + \frac{1}{2} \val(G_m^\super,\strategy_m^\super) \geq \frac{1 + c}{2}~.
\end{equation}

We now construct, for all $n \geq \kappa$, a finite-dim osync strategy $\mathscr{T}_n$ for $G_n^\super$ that has value at least 
\begin{equation}
\label{eq:super-comp-2}
	\val(G_n^\super,\mathscr{T}_n) \geq \frac{1}{2} + \frac{1}{4} \val(G_n,\strategy_n) + \frac{1}{4} \val(G_{n+1}',\strategy_{n+1}') \geq \frac{5+c}{8} + \frac{1}{4} \val(G_n,\strategy_n)
\end{equation}
where the second inequality follows from \cref{eq:super-comp-1}. The strategy $\mathscr{T}_n$ is constructed as follows. Its tracial state is the tensor product of the tracial states from $\strategy_n$ and $\strategy_{n+1}'$; since both of these strategies are finite-dimensional so is the strategy $\mathscr{T}_n$. When a player gets question $x = (0,\hat{x})$, they perform the measurement corresponding to question $\hat{x}$ from the strategy $\strategy_n$. When a player gets question $x = (1,\hat{x})$, they perform the measurement corresponding to question $\hat{x}$ from the strategy $\strategy_n'$. Thus when both players get questions whose first bit is $0$, they are essentially playing the game $G_n$, and when they both get questions whose first bit is $1$, they are essentially playing the game $G_{n+1}'$. Taking the supremum of the right-hand side of \cref{eq:super-comp-2} over finite-dim osync strategies $\strategy_n$ for $G_n$ and using \cref{eq:super-comp-0}, we get that $c_n \geq \frac{7 + c}{8}$, which yields a contradiction as desired.

The proof of completeness for $t = co$ is virutally identical, except we consider all oracularizable synchronous strategies, not just finite-dimensional ones.

We now prove the soundness property. Let $t \in \{q,co\}$. Let $n^* \geq \kappa$ be such that $\val_t^s(G_{n^*}) < 1$. For all $m \geq \kappa$, by construction of the game $G^\super_m$ we have
\[
	\val_t^s(G^\super_m) = \frac{1}{2} + \frac{1}{4} \val_t^s(G_m) + \frac{1}{4} \val_t^s(G_{m+1}')~,
\]
so therefore $\val_t^s(G^\super_{n^*}) < 1$. By the soundness property of \Cref{thm:gapless-comp-formal}, this means that  $\val_t^s(G'_{n^*}) < 1$, and therefore $\val_t^s(G^\super_{n^*-1}) < 1$. This in turn implies that $\val_t^s(G^\super_{n^*-2}) < 1$, and so on, until we obtain $\val_t^s(G^\super_\kappa) < 1$, the desired conclusion.

Finally, we prove that there is no finite-dimensional perfect strategy for $G^\super_\kappa$. Suppose for contradiction that there a $d$-dimensional strategy $\strategy^\super_\kappa$ such that $\val(G^\super_\kappa,\strategy^\super_\kappa) = 1$. Then in particular it must give rise to a $d$-dimensional strategy $\strategy'_{\kappa+1}$ such that $\val(G'_{\kappa+1},\strategy'_{\kappa+1}) = 1$ (simply by taking the measurement operators corresponding to questions $x = (1,\hat{x})$). By the entanglement bound of \Cref{thm:gapless-comp-formal}, it must be that the dimension $d$ is at least $\mathcal{E}(G^\super_{\kappa + 1},1)$. If this quantity is infinite, then we arrive at a contradiction and are done. Otherwise, there is a $d$-dimensional perfect strategy $\strategy^\super_{\kappa+1}$ for $G^\super_{\kappa + 1}$. Again, this must imply a $d$-dimensional perfect strategy for $G'_{\kappa+2}$. Continuing in this fashion, we either obtain a contradiction or deduce the existence of a $d$-dimensional perfect strategy for $G'_m$ for all $m \geq \kappa$. On the other hand, the entanglement bound of \Cref{thm:gapless-comp-formal} also implies that $\mathcal{E}(G'_m,1) \geq 2^{2m}$. Thus, $d \geq 2^{2m}$ for all $m \geq \kappa$, contradicting the assumption that $d$ is finite.
\end{proof}

\subsection{$\Pi_1$-completeness of the exact $co$-value problem} \label{subsec:application_pi1}

As a warmup, we present an application of the super compression procedure to show that the exact $co$-value problem (i.e. determining whether $\val_{co}(G) = 1$) is complete for $\Pi_1$, also known as $\mathsf{coRE}$. This was first shown by Slofstra~\cite{slofstra_tsirelsons_problem_and_an_embedding_theorem} using very different techniques based on group theory.

\begin{theorem}
\label{thm:pi-1-completeness-of-exact-co}
The exact $co$-value problem is complete for $\Pi_1$.
\end{theorem}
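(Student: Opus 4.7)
The plan is to show containment and hardness separately. For containment in $\Pi_1$, I would appeal to the Navascués-Pironio-Acín / Doherty-Liang-Toner-Wehner semidefinite programming hierarchy~\cite{navascues2008convergent,doherty2008quantum}, which supplies a computable sequence of values $\omega_n(G)$ converging from above to $\val_{co}(G)$. This makes the complementary problem $\{G : \val_{co}(G) < 1\}$ recursively enumerable: enumerate $n$ together with increasingly precise rational approximations of $\omega_n(G)$ and halt as soon as such an approximation witnesses $\omega_n(G) < 1$. Hence the exact $co$-value problem lies in $\coRE = \Pi_1$.

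For $\Pi_1$-hardness, I plan to reduce from the non-halting problem $\{M : M \text{ never halts on empty input}\}$ via the super compression theorem (\Cref{thm:super-comp}). Given a Turing machine $M$, I would define a verifier $\verifier = (D,C)$ for the sequence $\UGS = (G_n)_{n \in \N}$ of trivial synchronous games, where each $G_n$ has a single question $0$ and a single answer $0$, and where $D(n,0,0,0,0) = 1$ iff $M$ has not halted within $n$ steps. Simulating $M$ for $n$ steps takes $\poly(n)$ time, so $\verifier$ meets the complexity hypothesis of \Cref{thm:super-comp} for some $\alpha$. Feeding $\verifier$ into $\alg{SuperCompress}_\alpha$ produces $\verifier^\super$ and a distinguished game $G_S := G_\kappa^\super$, and the resulting map $M \mapsto G_S$ is polynomial-time computable.

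Correctness will then follow from the completeness and soundness clauses of \Cref{thm:super-comp} together with \Cref{thm:synchronous}. If $M$ never halts, then for every $n \geq \kappa$ the trivial constant strategy achieves value $1$ in $G_n$ and is vacuously oracularizable because $G_n$ has no nontrivial question pairs; the $t = co$ completeness clause then yields $\val_{co}^s(G_S) = 1$, which lifts to $\val_{co}(G_S) = 1$ since synchronous strategies are commuting operator strategies. If instead $M$ halts within $T$ steps, then $\val_{co}(G_n) = 0$ for every $n \geq T$, so for $n \geq \max(\kappa,T)$ we have $\val_{co}^s(G_n) < 1$, and the soundness clause of \Cref{thm:super-comp} forces $\val_{co}^s(G_S) < 1$; the contrapositive of \Cref{thm:synchronous} then upgrades this to $\val_{co}(G_S) < 1$.

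I expect the main obstacle to be purely bookkeeping rather than conceptual: verifying that the trivial sequence $(G_n)$ genuinely satisfies the synchronicity, uniform-question-distribution, oracularizability, and polynomial-complexity hypotheses of \Cref{thm:super-comp} and \Cref{thm:synchronous}, and that the index $\kappa$ supplied by super compression can be extracted effectively so that the whole reduction $M \mapsto G_S$ is explicitly computable. All of the genuinely nontrivial content is already absorbed into the super compression theorem itself, so no new techniques beyond those developed earlier in the paper should be needed.
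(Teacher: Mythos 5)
Your proof is correct and takes essentially the same route as the paper: containment via the NPA hierarchy (the paper phrases it as a $\Pi_1$ sentence over hierarchy levels; you phrase it as recursive enumerability of the complement — these are the same observation), and hardness by feeding a sequence of "time-bombed" games into the super compression theorem and invoking \Cref{thm:synchronous} to pass between synchronous and general commuting operator value. The only cosmetic difference is that the paper reduces from an arbitrary $\Pi_1$ sentence $\forall x\,\phi(x)$ (building a machine $T_\phi$ that halts iff $S$ is false), while you reduce directly from the non-halting problem; these are interchangeable since non-halting is $\Pi_1$-complete. One small bookkeeping point worth being aware of: both your construction and the paper's produce games $G_n$ that, for $n$ past the halting time, have $D(n,x,x,a,a) = 0$, which strictly speaking violates the synchronicity condition ($D(x,x,a,b)=1 \iff a=b$) required of games produced by a verifier. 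This is harmless — such a game has value $0$, so the soundness clause applies and no synchronous strategy exists anyway — and the paper itself does not address it, but if you wanted to be fastidious you could modify the decider to only reject when $x\neq y$ once the machine has halted, which restores literal synchronicity while keeping the value strictly below $1$.
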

\begin{proof}
The easy direction is that the exact $co$-value problem is contained in $\Pi_1$ because one can express it as a $\Pi_1$ sentence: for all nonlocal games $G$, $\val_{co}(G) = 1$ if and only if $\forall x \, \phi(x)$ where $\phi(x)$ is a computable predicate that is true when the $x$-th level of the semidefinite programming hierarchy of~\cite{navascues2008convergent,doherty2008quantum} computes an upper bound of $1$ on $\val_{co}(G)$. In other words, the best upper bound on the commuting operator value of $G$ computed by the $x$-th level of the hierarchy is $1$. If this is true for all $x$, then this implies that $\val_{co}(G) = 1$. On the other hand, if $\val_{co}(G) < 1$, then there exists a level $x$ such that $\phi(x)$ is false.

Now we turn to the other direction. To prove $\Pi_1$-hardness, we reduce an arbitrary $\Pi_1$ sentence $S = \forall x \, \phi(x)$ to a nonlocal game $G$ such that $S$ is true if and only if $\val_{co}(G) = 1$. 

Define the Turing machine $T_{\phi}$ that halts on the empty input if and only if the sentence $S$ is false:

\vspace{10pt}
\IncMargin{1em}
\begin{algorithm}[H]
\DontPrintSemicolon

\For{$x \in \{0,1\}^* $}{
    If $\phi(x)$ is false then halt.
}

\caption{Specification of $T_\phi$.}
\end{algorithm}\DecMargin{1em}
\vspace{10pt}

Next, define the sequence of games $\UGS_{\phi} = (G_n)_{n \in \N}$ with  verifier $\verifier = (D,C)$, where $C(x,y) = 1$ if and only if $x = y$, and where the decider $D$ is defined as follows:

\vspace{10pt}
\IncMargin{1em}
\begin{algorithm}[H]
\DontPrintSemicolon

\textbf{Input}: $n, x, y, a,b$

If $T_{\phi}$ halts in $n$ steps, reject.

If any of $x,y,a,b$ exceed $n$ bits, reject.

If $x = y$ but $a \neq b$, reject.

Otherwise, accept.

\caption{Specification of Turing machine $D$.}
\label{alg:d-super}
\end{algorithm}\DecMargin{1em}
\vspace{10pt}

Notice that $\max \{ \TIME_D(n),\TIME_c(n) \} \leq O(n)$, which is at most $n^2$ for sufficiently large $n$. Furthermore, $\val_{co}(G_n) = 1$ if and only if the Turing machine $T_\phi$ does not halt in $n$ steps. 
Furthermore, if $T_\phi$ does not halt in $n$ steps, then there exists an oracularizable synchronous (``osync'') strategy $\strategy_n$ such that $\val(G_n,\strategy_n) = 1$: the strategy is to output a fixed answer no matter what the question is. 

We apply super compression to the family of games $\UGS_\phi$: the output of $\alg{SuperCompress}_\alpha(D,C)$ where $\alpha = 2$ is a verifier $(D^\super,C^\super)$ for a sequence of games $\UGS^\super = (G_n^{\super})_{n \in \N}$ such that $\val_{co}^s(G^{\super}_\kappa) = 1$ if and only if there exists an osync value-$1$ strategy $\strategy_n$ for $G_n$, where $\kappa$ is defined as in \Cref{thm:super-comp}.

Thus if $S$ is true, then $T_\phi$ never halts, and there exists an osync strategy $\strategy_n$ such that $\val(G_n,\strategy_n) = 1$ for all $n \in \N$, and thus $\val_{co}^s(G^\super_\kappa) = 1$. On the other hand, if $S$ is false and $T_\phi$ does halt in some time $t$, then $\val_{co}^s(G_n) < 1$ for all $n \geq t$, which implies that $\val_{co}^s(G^\super_\kappa) < 1$. 

By~\cite{paulsen2016estimating}, since $G^\super_\kappa$ is a synchronous game, we have that $\val_{co}^s(G^\super_\kappa) = 1$ if and only if $\val_{co}(G^\super_\kappa) = 1$. This, combined with the fact that the mapping from the $\Pi_1$ sentence $S$ to the game $G^\super_\kappa$ is computable, implies that the exact $co$-value problem is $\Pi_1$-hard.

\end{proof}

Note that the exact same proof, considering $q$-type strategies rather than $co$-type strategies, shows that the exact $q$-value problem is hard for $\Pi_1$. While we improve this lower bound to $\Pi_2$ in the next section, we note that this directly implies that the set of quantum correlations is not closed, a result that was also established by Slofstra in~\cite{slofstra_set_of_quantum_correlations}.\footnote{Briefly, the set of quantum correlations on $n$ inputs and $k$ outputs, denoted by $C_q(n,k)$, is the (convex) set of all vectors $p_{xyab} \in \R^{n \times n \times k \times k}$ such that 
\[
	p_{xyab} = \bra{\psi} A^x_a \otimes B^y_b \ket{\psi}
\]
for some dimension $d$, some quantum state $\ket{\psi} \in \C^d \otimes \C^d$, and some POVMs $\{A^x_a\}, \{B^y_b\}$. } Again, the proof approaches are quite different: his proof uses techniques from approximate representation theory as well as group theory. 

\begin{corollary}[\cite{slofstra_set_of_quantum_correlations}]
The set of quantum correlations is not closed.
\end{corollary}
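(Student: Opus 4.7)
The plan is to derive the non-closure of $C_q$ directly from the $\Pi_1$-hardness of the exact $q$-value problem, which is mentioned just before the corollary. By running the proof of \Cref{thm:pi-1-completeness-of-exact-co} with $q$-type strategies replacing $co$-type strategies throughout, one obtains a computable map from Turing machines $M$ to synchronous games $G_M$ such that $\val_q(G_M) = 1$ if $M$ does not halt, and $\val_q(G_M) < 1$ otherwise. This makes the exact $q$-value problem hard for $\Pi_1$ (equivalently, $\coRE$).

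The key step is to show that, if the sets $C_q(n,k)$ of finite-dimensional quantum correlations were all closed, then the exact $q$-value problem would lie in $\Sigma_1$. Since each $C_q(n,k)$ sits inside the bounded cube $[0,1]^{n^2 k^2}$, closedness would make it compact. Because $\val_q(G)$ is the supremum of a continuous linear functional over $C_q$, compactness forces the supremum to be attained. Consequently, $\val_q(G) = 1$ would be equivalent to the existence of some finite dimension $d$ and a $d$-dimensional strategy achieving value exactly $1$. For each fixed $d$, the existence of such a strategy can be written as an existential sentence over $\R$ whose variables are the entries of a state vector and the POVM operators (together with the polynomial identities encoding positivity, the partition-of-identity condition, and the value-$1$ equation). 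By Tarski's decidability of the first-order theory of real closed fields, this sentence is decidable; enumerating $d = 1, 2, 3, \ldots$ and running the decision procedure yields a semi-decision algorithm for $\val_q(G) = 1$, placing the exact $q$-value problem in $\Sigma_1$.

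Combining the two observations would give $\Pi_1 \subseteq \Sigma_1$, contradicting the standard fact that the first level of the arithmetical hierarchy is strict ($\Pi_1 \neq \Sigma_1$). Hence $C_q$ cannot be closed.

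The main obstacle is ensuring the $\Sigma_1$ upper bound under the closedness hypothesis is formulated carefully. Specifically, one must verify that ``there exists a $d$-dimensional synchronous (or tensor-product) quantum strategy of value exactly $1$ for $G$'' is genuinely expressible as an existential first-order sentence over $\R$ in a way that the Tarski--Seidenberg procedure can handle uniformly in the input $G$; this is routine but must be written down. Once this is in place, the conclusion is immediate, and the entire argument hinges on nothing beyond the $\Pi_1$-hardness mechanism already developed in \Cref{subsec:application_pi1}.
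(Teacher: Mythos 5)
Your proof is correct, but it takes a genuinely different route from the paper's. The paper argues directly: take a true $\Pi_1$-sentence $S$, apply \Cref{thm:super-comp}, and read off that the game $G^\super_\kappa$ satisfies $\val_q(G^\super_\kappa) = 1$ (completeness for $t=q$) while \emph{no} finite-dimensional strategy attains value $1$ (the entanglement lower bound, item~5 of \Cref{thm:super-comp}); the corresponding linear functional thus has supremum $1$ over $C_q$ but never attains it, so $C_q$ cannot be closed. Your argument is a purely complexity-theoretic contrapositive: if every $C_q(n,k)$ were closed and hence compact, the supremum defining $\val_q(G)$ would be attained, so $\val_q(G)=1$ would become a $\Sigma_1$-statement (enumerate dimensions $d$, and for each $d$ decide the resulting first-order sentence over $\R$ by Tarski--Seidenberg), contradicting the $\Pi_1$-hardness of the exact $q$-value problem together with $\Pi_1 \neq \Sigma_1$. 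Both proofs ultimately rest on the super compression machinery of \Cref{subsec:application_pi1}, but the paper's uses the explicit entanglement bound and requires no logic beyond the game itself, whereas yours uses only the $\Pi_1$-hardness (which does not need the entanglement bound) at the cost of invoking decidability of real closed fields. Your route is somewhat more modular --- it would work for any hardness class incomparable with $\Sigma_1$ --- while the paper's is more constructive, literally naming a game witnessing non-closure. One minor wording caveat: positive semidefiniteness of the POVM elements is a semidefinite \emph{inequality}, not a polynomial identity; to keep the fixed-$d$ sentence purely existential one would introduce auxiliary variables $L$ with $A^x_a = L^*L$, though this is not strictly required since Tarski decides arbitrary first-order sentences over $\R$ regardless of quantifier shape, and you only need decidability for each fixed $d$ before enumerating $d$.
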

\begin{proof}

Let $S$ be a true $\Pi_1$ sentence. The construction of the game $G^\super_\kappa$ from $S$ in \Cref{thm:pi-1-completeness-of-exact-co}, by \Cref{thm:super-comp}, has the property that $\val_q(G^\super_\kappa) = 1$ but there is no finite-dimensional strategy $\strategy$ that actually achieves value $1$ in the game. 

\end{proof}

\subsection{$\Pi_2$-completeness of the exact $q$-value problem} \label{subsec:application_pi2}

We now prove the main result of this paper, which is the $\Pi_2$-completeness of the exact $q$-value problem. 
As explained in \Cref{sec:compression-intro}, we combine our gapless compression theorem with a consequence of the $\MIP^* = \RE$ theorem from~\cite{ji_mip_re}, which we state in the following theorem. In the theorem, nonlocal games $G$ are represented via an integer $n \in \N$, and a pair of Turing machines $(D,C)$ where $D$ represents the decider for $G$ (so is a $4$-input Turing machine) and $C$ represents the checker (so is a $2$-input Turing machine). The game $G$ is then defined to be $(\cal{X},\cal{A},D)$ where $\cal{X} = \cal{A} = \{0,1\}^n$. The checker $C$, on input $(x,y) \in \cal{X} \times \cal{X}$, indicates whether $(x,y)$ is trivial for $G$.

\begin{theorem}[\cite{ji_mip_re}]
\label{thm:halting-game}
There is a universal constant $\lambda_\Halt \in \N$ and algorithm $\alg{HaltingGame}$ that takes as input the description of a $\Sigma_1$ sentence $S$ and outputs a tuple $(D,C)$ for a nonlocal game $G$ such that
\begin{enumerate}
    \item (Completeness) If $S$ is true, then 
    \[\sup_{\text{finite-dim osync } \strategy} \val(G,\strategy) = 1.\]
    \item (Soundness) If $S$ is false, then 
    \[\val^s_q(G) < 1.\]
    \item (Complexity bounds) Letting $|S|$ denote the description length of the sentence $S$, we have \[\max \Big \{ \TIME_{C}, \TIME_{D}, \TIME_{\alg{HaltingGame}(S)} \Big \} \leq O(|S|^{\lambda_{\Halt}}) \]
    where $\TIME_C,\TIME_D$ denote the time complexities of $C,D$ (on any input), and $\TIME_{\alg{HaltingGame}(S)}$ denotes the time complexity of $\alg{HaltingGame}$ on input $S$.
\end{enumerate}
\end{theorem}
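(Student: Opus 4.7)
The plan is to obtain this theorem essentially as a repackaging of the $\MIP^* = \RE$ result of \cite{ji_mip_re}, together with a mild post-processing step to put the output game into the synchronous, oracularizable format that our framework requires. First, given a $\Sigma_1$ sentence $S = \exists x\, \phi(x)$, I would construct in time linear in $|S|$ a $1$-input Turing machine $M_S$ that on the empty input dovetails over all binary strings $x$ and halts precisely when some $\phi(x)$ is true. Then $S$ is true iff $M_S \in \HALT$, so by $\MIP^* = \RE$ there is a polynomial-time reduction producing a nonlocal game $G_0$ with $\val_q(G_0) = 1$ when $S$ is true and $\val_q(G_0) \leq \tfrac{1}{2}$ when $S$ is false. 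Equivalently, one can invoke the gap-preserving compression theorem of \cite{ji_mip_re} (\Cref{thm:gap-comp-informal}) on a trivial constant-complexity sequence of games encoding the truth of $M_S$'s halting, and take $G_0$ to be the first game of the compressed sequence; the complexity of this reduction is polynomial in $|S|$.

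Second, I need to massage $G_0$ so that (a) the game is synchronous and has uniform question distribution (as required by our definition of verifiers and of $\val_q^s$), and (b) the completeness case produces \emph{oracularizable synchronous} strategies that approach value $1$. For (a), the standard synchronization trick suffices: let $G$ be the game in which, with probability $\tfrac{1}{2}$, the verifier plays $G_0$ (using the uniform distribution on its question set, which can be enforced by a constant blow-up), and with probability $\tfrac{1}{2}$ sends the same uniformly random question to both players and checks that they return the same answer. Standard arguments show $\val_q^s(G) = 1$ iff $\val_q(G_0) = 1$, and if $\val_q(G_0) \leq \tfrac{1}{2}$ then $\val_q^s(G) < 1$ with a quantitative gap; this handles soundness. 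For (b), I would rely on the fact that the games in the gap-preserving compression pipeline of \cite{ji_mip_re} are themselves built by composing question-reduction and answer-reduction transformations that preserve oracularizability and synchronicity (indeed, our \Cref{thm:question-reduction} and \Cref{thm:answer-reduction} are the gapless analogues and are explicitly oracularizability-preserving). Thus, in the yes case, the honest strategy arising from the compression construction is finite-dimensional, synchronous, and oracularizable, and has value $1$.

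Third, the complexity bounds are straightforward: $\alg{HaltingGame}$ runs in polynomial time in $|S|$, so $\TIME_{\alg{HaltingGame}(S)} \leq \poly(|S|)$, and since the reduction from $\MIP^* = \RE$ produces a verifier whose decider and checker run in time polynomial in the description length of the Turing machine input and its index, we obtain $\max\{\TIME_D,\TIME_C\} \leq \poly(|S|)$. Absorbing all polynomials yields a single universal exponent $\lambda_\Halt$.

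The main obstacle, in my view, is not the reduction itself but rather verifying item (b): the $\MIP^* = \RE$ proof is long, and one must be careful to check that at no point does the construction break synchronicity or oracularizability of the honest strategy. If the construction of \cite{ji_mip_re} did not already preserve these properties, one would need to insert an extra oracularization step (such as the one used in \Cref{sec:answer-reduction}), which introduces an isolated-vs-oracle player question format and whose analysis in the quantum (rather than commuting-operator) setting would need to be checked. A clean way to handle this in the exposition is probably to cite the synchronous-games framework (\Cref{sec:synchronous}) and \Cref{thm:synchronous}, which guarantees that whenever $\val_q(G_0) = 1$ a sequence of finite-dimensional synchronous strategies approaches value $1$; one then separately argues oracularizability from the explicit form of the honest strategy in the compression construction.
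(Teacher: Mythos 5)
Your overall approach matches the paper's: construct the dovetailing machine $M_S$, invoke Theorem 12.7 of~\cite{ji_mip_re}, then massage the resulting game into the synchronous/uniform framework. But the details diverge in a way that both complicates the argument and introduces a gap you don't need to have.

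The key point the paper exploits, which your proposal misses, is that the game $H$ produced by Theorem 12.7 of~\cite{ji_mip_re} \emph{is already synchronous}. The only thing that needs fixing is the question distribution $\mu$, which is not uniform. The paper handles this with a one-line device: replace $\mu$ by the uniform distribution over all $n$-bit pairs and declare every pair outside $\mathrm{supp}(\mu)$ to be trivial (the decider auto-accepts on it); this preserves synchronicity, preserves the $q$-value being $1$ or $<1$, and gives a checker $C$ computable in $\poly(|S|)$ time because membership in $\mathrm{supp}(\mu)$ is efficiently decidable. Your ``with probability $\tfrac12$, play $G_0$; with probability $\tfrac12$, send the same question to both and demand equal answers'' synchronization trick is unnecessary, and worse, it is not immediately sound: if $G_0$ were \emph{not} synchronous, an optimal strategy for $G_0$ need not be consistent on diagonal question pairs, so it is not obvious that $\val_q(G_0)=1$ implies $\val_q(G)=1$ after adding the consistency test. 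You would want to symmetrize the strategy, but the symmetrization tool you cite (\Cref{thm:synchronous}) applies to games that are \emph{already} synchronous, so this is circular. Since $H$ is in fact synchronous, none of this is needed — but as written your route has a hole.

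Your concern about oracularizability of the honest strategies in item~(b) is legitimate: the paper's proof is also terse on this, asserting synchronicity and citing~\cite[Theorem 12.7]{ji_mip_re} without explicitly verifying that the honest completeness strategies there are oracularizable. So you are right to flag it, though the burden of verification sits with the cited theorem rather than with either proof here. Note also that appealing to \Cref{thm:question-reduction} and \Cref{thm:answer-reduction} to argue oracularizability of the \cite{ji_mip_re} construction is not quite right — those are the present paper's \emph{gapless} transformations, not the gap-preserving ones used in \cite{ji_mip_re}, so the inference is only by analogy, not citation.
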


\begin{proof}
	This is a corollary of~\cite[Theorem 12.7]{ji_mip_re} which reduces the Halting problem to deciding whether the $q$-value of a nonlocal game is equal to $1$ or at most $1/2$. To obtain the present theorem, we first observe that every $\Sigma_1$ sentence $S = \exists x \, \phi(x)$ can be expressed as an equivalent instance of the Halting problem: define the Turing machine $M_S$ that on the empty input, starts looping over all $x$ and evaluates $\phi(x)$. If it finds an $x$ such that $\phi(x)$ is true, then it halts. Clearly $S$ is true if and only if $M_S$ halts. 
	
	The game $H$ corresponding to $M_S$ from~\cite[Theorem 12.7]{ji_mip_re} is synchronous and the decider complexity is at most some polynomial in the description length of $S$. However, the question distribution $\mu$ of the game $H$ is not uniform. Without loss of generality, assume that the question and answer sets of $H$ are represented by $n$-bit strings. Because the reduction from $M_S$ to $H$ is efficient, we have that $n = \poly(|S|)$. 
	
	The game $G$ that we construct will be $H$ but with a uniform distribution over all $n$-bit question pairs $(x,y)$. Whenever a sampled question pair $(x,y)$ is not in the support of $\mu$, the decider $D$ of $G$ will automatically accept (and thus $(x,y)$ is a trivial question). Otherwise, the decider from the game $H$ is invoked. The key thing to note is that $\val_q(H) = 1$ if and only if $\val_q(G) = 1$. Furthermore, since $G$ is a synchronous game (since $H$ is a synchronous game), it holds that $\val_q^s(G) = 1$ if and only if $\val_q(G) = 1$. 
	
	Finally, since determining the support of the question distribution of $H$ can be done in $\poly(|S|)$ time, we obtain a checker $C$ for the game $G$ that runs in $\poly(|S|)$ time. Thus, on input $S$, the algorithm $\alg{HaltingGame}$ can output the tuple $(D,C)$ which satisfies the conclusions of the theorem. \end{proof}

We break up the proof of the $\Pi_2$ completeness of the exact $q$-value problem into two parts. First we show hardness.

\begin{lemma}
The exact $q$-value problem is hard for $\Pi_2$. \end{lemma}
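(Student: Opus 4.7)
The plan is to reduce an arbitrary $\Pi_2$ sentence to an instance of the exact $q$-value problem by combining the halting-game reduction of \Cref{thm:halting-game} with the super compression procedure of \Cref{thm:super-comp}. Start with a $\Pi_2$ sentence $S = \forall m \, \exists k \, \phi(m,k)$. The first move is to rewrite $S$ as $\forall n \, S'_n$ where
\[
    S'_n \;:=\; \exists (k_0,\ldots,k_n) \; \bigwedge_{m=0}^{n} \phi(m,k_m).
\]
Each $S'_n$ is a $\Sigma_1$ sentence of description length $O(|S| + \log n)$ (using a standard computable pairing to encode the tuple). Crucially, $S$ is true if and only if every $S'_n$ is true; and if $S$ is false, letting $m^*$ be the least value for which $\exists k\,\phi(m^*,k)$ fails, $S'_n$ is false for \emph{every} $n \geq m^*$. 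This monotone rewriting is what will let us line up the eventual falsehood with the parameter $\kappa$ produced by super compression.

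\textbf{Building the game sequence.} Apply $\alg{HaltingGame}$ to each $S'_n$ to produce a synchronous nonlocal game $H_n$, and let $\verifier_S = (D,C)$ be the verifier that, on index $n$, constructs $S'_n$ from $S$ and $n$ and then simulates $\alg{HaltingGame}(S'_n)$ to run the decider and checker for $H_n$. By \Cref{thm:halting-game} the time needed is $\poly(|S|,\log n)$, so for any fixed $\alpha\geq 1$ there is $n_0 = \poly(|S|)$ beyond which $\TIME_D(n),\TIME_C(n) \leq n^\alpha$. Now invoke $\alg{SuperCompress}_\alpha$ on $\verifier_S$ to obtain $\verifier_S^\super = (D^\super,C^\super)$ and the associated index $\kappa = \poly(|S|)$ from \Cref{thm:super-comp}; output the single game $G_S := G_\kappa^\super$. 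Every stage of this construction is computable from $S$ in time polynomial in $|S|$, so $S \mapsto G_S$ is a computable reduction.

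\textbf{Verifying the reduction and identifying the main obstacle.} If $S$ is true, every $S'_n$ is true, so by the completeness clause of \Cref{thm:halting-game} we have $\sup_{\text{finite-dim osync }\strategy} \val(H_n,\strategy) = 1$ for every $n\geq \kappa$; the $t=q$ completeness clause of \Cref{thm:super-comp} then gives $\val_q^s(G_S) = 1$, and synchronicity~\cite{paulsen2016estimating} upgrades this to $\val_q(G_S) = 1$. Conversely, if $S$ is false then for every $n \geq \max(\kappa,m^*)$ the sentence $S'_n$ is false, hence $\val_q^s(H_n) < 1$ by the soundness clause of \Cref{thm:halting-game}; the soundness clause of \Cref{thm:super-comp} then forces $\val_q^s(G_S) < 1$, which by synchronicity gives $\val_q(G_S) < 1$. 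The delicate point I expect to be the main obstacle is that the completeness guarantee of $\alg{HaltingGame}$ is merely a \emph{supremum} over finite-dimensional oracularizable synchronous strategies rather than the existence of an exact perfect one; a naive attempt to chain the reductions under a hypothesis of the form ``there exists a finite-dimensional perfect osync strategy'' would fail in the truth case because the games produced by $\MIP^*=\RE$ generally admit only approximately perfect finite-dimensional strategies. The present formulation of super compression was designed precisely so that its $t=q$ completeness hypothesis matches the sup-type guarantee of $\alg{HaltingGame}$, making the two theorems compose cleanly. A secondary care point is ensuring the false-case index $n^*$ can be taken $\geq \kappa$, which is exactly what the cumulative rewriting $S = \forall n\, S'_n$ guarantees.
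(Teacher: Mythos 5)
Your proof is correct and follows essentially the same route as the paper's: rewrite $S$ as a monotone sequence of $\Sigma_1$ sentences, apply $\alg{HaltingGame}$ to each to get a polynomial-complexity verifier, and apply $\alg{SuperCompress}$ to extract the single game $G_\kappa^\super$. The observation you flag as the delicate point (the sup-over-finite-dimensional-strategies hypothesis in super compression matching the sup-type completeness guarantee of $\alg{HaltingGame}$) is indeed the reason the two theorems are stated the way they are, and the paper relies on it implicitly in exactly the same way.
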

\begin{proof}

Fix a $\Pi_2$ sentence $S = \forall x \exists y \, \phi(x,y)$ where $\phi$ is a computable predicate. For every $n \in \N$ define the $\Sigma_1$ sentence 
\[
S_n = \exists y_1,\ldots,y_n \, \bigwedge_{i = 1}^n \phi(i,y_i).
\]
Thus the sentence $S$ is true if and only if the sentences $S_n$ are true for all $n \in \N$. Note that if $S_n$ is true then $S_i$ is true for all $i \leq n$ .

Using $\alg{HaltingGame}$ we construct the sequence of games $\UGS_{\phi} = (G_n)_{n \in \N}$ with verifier $\verifier = (D,C)$. Let
\[
	c_n = \sup_{\text{finite-dim osync } \strategy_n} \val(G_n,\strategy_n),
\]
then these games have the property that $c_n = 1$ if and only if the sentence $S_n$ is true.

\vspace{10pt}
\IncMargin{1em}
\begin{algorithm}[H]
\DontPrintSemicolon

\textbf{Input}: $n, x, y, a,b$

Compute the game decider and checker $(D_n,C_n)$ for $\alg{HaltingGame}(S_n)$.

If $D_n(x,y,a,b)$ accepts, then accept. 

Otherwise, reject.

\caption{Specification of Turing machine $D$.}
\end{algorithm}\DecMargin{1em}
\vspace{10pt}

\vspace{10pt}
\IncMargin{1em}
\begin{algorithm}[H]
\DontPrintSemicolon

\textbf{Input}: $n, x, y$

Compute the game decider and checker $(D_n,C_n)$ for $\alg{HaltingGame}(S_n)$.

Output $C_n(x,y)$.

\caption{Specification of Turing machine $C$.}
\end{algorithm}\DecMargin{1em}
\vspace{10pt}

For large enough $n$ the verifier is bounded by 
\[\max \Big \{ \TIME_C(n), \TIME_D(n) \Big \} \leq n^{\lambda_{\Halt}+1}\] since

\[\max \Big \{ \TIME_{C_n}, \TIME_{D_n}, \TIME_{\alg{HaltingGame}(S_n)} \Big \} \leq (n|S|)^{\lambda_{\Halt}}.\]

We apply super compression to the family of games $\UGS_\phi$: the output of $\alg{SuperCompress}_{\alpha}(D,C)$ where $\alpha = \lambda_{\Halt} + 1$ is a verifier $(D^\super,C^\super)$ for a sequence of games $\UGS^\super = (G_n^{\super})_{n \in \N}$ such that $\val_{q}^s(G^{\super}_\kappa) = 1$ if and only if $c_n = 1$ for all $n \geq \kappa$, where $\kappa$ is defined as in \Cref{thm:super-comp}.

Therefore, $\val_{q}^s(G^{\super}_\kappa) = 1$ if and only if the sentences $S_n$ are true for $n \geq \kappa$, which is equivalent to the $\Pi_2$ sentence $S$ being true. We have therefore reduced the problem of deciding an arbitrary $\Pi_2$ sentence to deciding the exact $q$-value problem.
\end{proof}

\noindent Finally, we argue that the exact $q$-value problem is contained in $\Pi_2$. 

\begin{lemma}
The exact $q$-value problem is in $\Pi_2$.
\end{lemma}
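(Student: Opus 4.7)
The plan is to express the predicate ``$\val_q(G) = 1$'' directly as a $\Pi_2$ arithmetical sentence. First, by the definition of $\val_q(G)$ as a supremum over finite-dimensional strategies, we have $\val_q(G) = 1$ if and only if
\[
\forall k \in \N_{>0}, \ \exists d \in \N, \ \exists \ \strategy \text{ of dimension } d \text{ with } \val(G,\strategy) \geq 1 - 1/k.
\]
To place this in $\Pi_2$, it suffices to show that, for a fixed finite game $G$ and fixed positive integers $d$ and $k$, the innermost predicate ``there exists a $d$-dimensional strategy $\strategy$ with $\val(G,\strategy) \geq 1 - 1/k$'' is a decidable predicate of $(G,d,k)$.

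For this step I would appeal to the Tarski--Seidenberg theorem, i.e.\ the decidability of the existential theory of the reals. In any fixed dimension $d$, a strategy is specified by a unit vector $\ket{\psi} \in \C^d \otimes \C^d$ together with two families of POVMs $\{A^x_a\}, \{B^y_b\} \subset \B(\C^d)$, which altogether amount to finitely many real parameters (the real and imaginary parts of the vector and matrix entries). The constraints ``$\ket{\psi}$ is a unit vector,'' ``$(A^x_a)^* = A^x_a$,'' and ``$\sum_a A^x_a = \id$'' (with analogues for $B$) are polynomial equalities in these parameters; positive semidefiniteness of each POVM element is a finite conjunction of polynomial inequalities via Sylvester's criterion (all principal minors nonnegative). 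Finally, the value $\val(G,\strategy) = \sum_{x,y,a,b} \mu(x,y) D(x,y,a,b) \bra{\psi} A^x_a \otimes B^y_b \ket{\psi}$ is a polynomial in these parameters with rational coefficients, since $G$ is a finite object with rational data. Therefore ``$\exists$ a $d$-dimensional strategy with $\val \geq 1 - 1/k$'' is a first-order existential sentence over $(\R;+,\cdot,0,1,\leq)$, and hence decidable.

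Combining these two observations writes ``$\val_q(G) = 1$'' in the form $\forall k \,\exists d \, \psi(G,d,k)$ with $\psi$ a computable predicate, placing the exact $q$-value problem in $\Pi_2$. The only step that needs some care is the Tarski--Seidenberg invocation---one must cast positive semidefiniteness and the value constraint as a Boolean combination of polynomial (in)equalities---but this presents no real obstacle. A more elementary alternative, avoiding Tarski--Seidenberg altogether, replaces the inner decidable predicate by the recursively enumerable statement ``$\exists d, \exists$ a rational-entry strategy of dimension $d$ with value $\geq 1 - 1/k$''; the forward direction is then handled by density of rational matrices in the parameter space together with continuity of $\val(G,\cdot)$ (allowing one to perturb any near-optimal strategy to a rational one with only a small loss in value), while the reverse direction is immediate. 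Either route yields the desired $\Pi_2$ upper bound.
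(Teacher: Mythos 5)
Your proof is correct. The paper's own argument also reduces ``$\val_q(G)=1$'' to a $\forall\exists$ sentence, but handles the inner existential by discretizing the strategy space: it fixes an $\eps$-net $\cal{S}_\eps^d$ of $d$-dimensional strategies (finite for each $d$, by compactness) and writes the $\Pi_2$ sentence as $\forall n\,\exists\strategy\in\bigcup_d\cal{S}_{1/n}^d$ with $\val(G,\strategy)>1-2/n$; the inner predicate is then a finite arithmetic check on an explicitly-given strategy, so no appeal to real-closed-field decidability is needed. Your ``more elementary alternative'' via rational-entry strategies, density, and continuity is morally identical to the paper's $\eps$-net route (one would need to note, as you flag, that perturbing a strategy while preserving the POVM constraints requires a small projection argument, which is what the net construction packages up). Your Tarski--Seidenberg route is a genuine alternative: it keeps the inner $\exists$ ranging over the continuous parameter space and instead invokes decidability of the existential theory of $\R$ to conclude the inner predicate is computable. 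That is cleaner in that it avoids constructing a net, at the cost of a heavier black box; the paper in fact uses exactly this device for the noncommutative-polynomial-optimization version of this lemma in its appendix, so both techniques are in its toolkit. Either route yields $\Pi_2$ membership.
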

\begin{proof}
We will state the exact $q$-value problem as a $\Pi_2$ sentence. Fix a nonlocal game $G$ then we would like to decide if 
\[
\sup_{\text{finite-dim } \strategy} \val(G,\strategy) = 1.
\]

Let $\cal{S}_{\epsilon}^d$ be an $\epsilon$-net for quantum strategies of dimension $d \in \N$. This is a finite set, since strategies of a fixed dimension form a compact set \cite{compact}. Let $\cal{S}_\epsilon = \bigcup_{d \in \N}\cal{S}_\epsilon^d$. Then we can equivalently formulate the decision problem as
\[
\forall \epsilon \in (0,1] ~ \exists \strategy \in \cal{S}_\epsilon \text{ such that } \val(G,\strategy) > 1 - 2\epsilon.
\]
This in turn is equivalent to the $\Pi_2$ sentence
\[
\forall n \in \N ~ \exists \strategy \in \cal{S}_{\frac{1}{n}} \text{ such that } \val(G,\strategy) > 1 - \frac{2}{n}.
\]

\end{proof}

\noindent Putting the two together, we get:
\begin{theorem} \label{thm:pi-2-completeness-of-exact-q}
The exact $q$-value problem is complete for $\Pi_2$.
\end{theorem}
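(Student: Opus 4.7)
The plan is to establish both directions of the $\Pi_2$-completeness separately: first that the exact $q$-value problem lies in $\Pi_2$, and second that it is $\Pi_2$-hard. These two pieces together give the theorem.

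For the containment in $\Pi_2$, I would exploit the fact that, for each fixed dimension $d$, the set of $d$-dimensional synchronous strategies for a fixed question/answer alphabet is compact, and the map from strategies to game values is continuous. This lets me choose, for each $n \in \N$, a finite $\frac{1}{n}$-net $\cal{S}_{1/n}^{d}$ of strategies in dimension $d$, take the union $\cal{S}_{1/n} = \bigcup_{d} \cal{S}_{1/n}^{d}$, and observe that $\val_q(G) = 1$ is equivalent to the statement
\[
\forall n \in \N \,\, \exists \strategy \in \cal{S}_{1/n} \text{ such that } \val(G,\strategy) > 1 - 2/n.
\]
The inner predicate is computable (enumerate $\cal{S}_{1/n}$, compute the value of each $\strategy$ to sufficient precision, check the inequality), so the overall sentence is $\Pi_2$. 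The only subtlety is being careful about the representation of strategies in the net so that the witness $\strategy$ can be encoded as a natural number, which is a standard technical exercise.

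For the $\Pi_2$-hardness, the strategy is to combine the $\MIP^* = \RE$ result (packaged as \Cref{thm:halting-game}) with the super compression procedure of \Cref{thm:super-comp}. Given a $\Pi_2$ sentence $S = \forall x \,\exists y\, \phi(x,y)$, I would first rewrite it as $\forall n\, S_n$, where $S_n$ is the $\Sigma_1$ sentence $\exists y_1,\ldots,y_n\, \bigwedge_{i=1}^n \phi(i,y_i)$; this reformulation has the monotonicity property that $S_n$ true implies $S_m$ true for $m \leq n$, so $S$ is true iff every $S_n$ is true. Using $\alg{HaltingGame}$ I convert each $S_n$ into a nonlocal game $H_n$, and I define the verifier $\verifier = (D,C)$ for a sequence $\UGS_\phi = (G_n)_{n \in \N}$ where on index $n$ the verifier simply runs the decider/checker of $H_n$. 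The polynomial bound $\TIME_{\alg{HaltingGame}(S_n)} \leq (n|S|)^{\lambda_\Halt}$ from \Cref{thm:halting-game} gives $\TIME_D(n), \TIME_C(n) \leq n^{\lambda_\Halt + 1}$ for sufficiently large $n$, meeting the input hypothesis of super compression with $\alpha = \lambda_\Halt + 1$.

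Applying $\alg{SuperCompress}_\alpha$ to $(D,C)$ yields a single game $G^\super_\kappa$. By the completeness clause of \Cref{thm:super-comp} for $t = q$, if every $G_n$ (for $n \geq \kappa$) admits finite-dimensional oracularizable synchronous strategies of value arbitrarily close to $1$, then $\val_q^s(G^\super_\kappa) = 1$; the completeness clause of \Cref{thm:halting-game} provides exactly this conclusion when all $S_n$ are true (assuming the reduction produces games for which near-optimal strategies can be taken oracularizable synchronous, which is the format guaranteed by the synchronous construction in that theorem). Conversely, by the soundness clause of \Cref{thm:super-comp}, if some $S_n$ is false for $n \geq \kappa$, then $\val_q^s(G_n) < 1$ and hence $\val_q^s(G^\super_\kappa) < 1$. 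Since $G^\super_\kappa$ is synchronous, $\val_q^s(G^\super_\kappa) = 1$ iff $\val_q(G^\super_\kappa) = 1$, so the computable map $S \mapsto G^\super_\kappa$ is the desired reduction.

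The main obstacle I anticipate is the interface between the two completeness guarantees: $\alg{SuperCompress}$ requires finite-dimensional oracularizable synchronous near-perfect strategies for each $G_n$, whereas \Cref{thm:halting-game} is stated as giving $\sup_{\text{finite-dim osync } \strategy} \val(H,\strategy) = 1$ in the true case. Verifying that these two notions line up (including that the trivialization of out-of-support questions used when uniformizing the distribution in \Cref{thm:halting-game} does not destroy oracularizability) is the one non-routine check; everything else is an inductive bookkeeping argument that falls out of the stated properties of the super compression procedure.
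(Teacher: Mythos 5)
Your proposal is correct and follows essentially the same route as the paper: the $\Pi_2$-containment via dimension-indexed $\eps$-nets and the reformulation $\forall n\,\exists \strategy \in \cal{S}_{1/n}$ with $\val(G,\strategy) > 1-2/n$, and the $\Pi_2$-hardness via the monotone $\Sigma_1$-sentence decomposition $S_n = \exists y_1,\ldots,y_n\,\bigwedge_i \phi(i,y_i)$, $\alg{HaltingGame}$, and $\alg{SuperCompress}_{\lambda_\Halt + 1}$. The interface concern you flag at the end is already discharged by the paper: the completeness clause of \Cref{thm:halting-game} is stated precisely as $\sup_{\text{finite-dim osync }\strategy}\val(G,\strategy)=1$, which is word-for-word the hypothesis needed by the $t=q$ completeness clause of \Cref{thm:super-comp}, so no additional verification is required there.
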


\subsection{Necessity of compression}
\label{sec:equivalence}
We will show how to compress nonlocal games given many-one reductions from arithmetical hierarchy classes to the corresponding $t$-value problems for $t \in \{q,co\}$. This shows that, in a certain sense, compression theorems are \emph{necessary} for proving the complexity lower bounds indicated in \Cref{fig:table}. In particular we construct \emph{super compression} procedures (procedures that map families of games to a single equivalent game).  

The following theorem was proved in~\cite{mousavi2020complexity}:

\begin{theorem} \label{thm:gap-preserving-supercompression}
Assume that the approximate $q$-value problem is $\Sigma_1$-hard. Then there exists a computable map $\alg{GapCompress}_{q}$ that takes in as input a description of a sequence of games  $\UGS = (G_n)_{n \in \N}$ and outputs the description of a single game $G'$ such that
 \begin{enumerate}
    \item $\val_q(G') = 1$ if $\val_q(G_n) = 1$ for some game $G_n \in \UGS$.
    \item $\val_q(G') < \frac{1}{2}$ if $\val_q(G_n) < \frac{1}{2}$ for every game $G_n \in \UGS$.
\end{enumerate}
\end{theorem}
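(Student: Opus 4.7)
The plan is to realize $\alg{GapCompress}_{q}$ as the composition of two computable reductions: an internal encoding of the sequence-of-games promise problem into a $\Sigma_1$ (promise) language, followed by the hypothesised $\Sigma_1$-hardness reduction from that language to the approximate $q$-value problem.

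Define the promise problem $P$: the input is a verifier $\verifier$ describing a sequence $\UGS = (G_n)_{n \in \N}$; the YES instances are those for which there exists $n$ with $\val_q(G_n) = 1$, and the NO instances are those for which $\val_q(G_n) < \tfrac{1}{2}$ for every $n$. First I would check that the YES side of $P$ is semi-decidable, which places $P$ in $\Sigma_1$. A semi-decision procedure enumerates all triples $(n,k,\strategy)$ in which $\strategy$ ranges over finite-dimensional synchronous strategies with rational entries for the game $G_n$ (computed from $\verifier$), and accepts as soon as it finds one satisfying $\val(G_n,\strategy) > \tfrac{1}{2} + \tfrac{1}{k}$. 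If some $G_n$ has $\val_q(G_n) = 1$, then by density of rational finite-dimensional strategies in the set of all such strategies, a triple witnessing $\val(G_n,\strategy) > \tfrac{1}{2} + \tfrac{1}{k}$ is eventually found; if every $G_n$ satisfies $\val_q(G_n) < \tfrac{1}{2}$, no such witness exists, and the procedure loops forever. Thus $P$ sits in $\Sigma_1$ as a promise problem.

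Next I would apply the assumption that the approximate $q$-value problem is $\Sigma_1$-hard under many-one reductions. By the standard formulation, every $\Sigma_1$ promise language admits a computable map to instances of the approximate $q$-value problem sending YES instances to games with $q$-value exactly $1$ and NO instances to games with $q$-value less than $\tfrac{1}{2}$. Composing this reduction with the obvious encoding of a verifier $\verifier$ for $\UGS$ as an input to the semi-decision procedure above yields a computable map $\alg{GapCompress}_{q}$ that takes $\UGS$ to a single nonlocal game $G'$ with the two desired properties: $\val_q(G') = 1$ whenever some $G_n$ has $\val_q(G_n) = 1$, and $\val_q(G') < \tfrac{1}{2}$ whenever every $G_n$ satisfies $\val_q(G_n) < \tfrac{1}{2}$.

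The crux, rather than any deep technical obstacle, is to handle the promise-problem interfaces correctly. One must ensure that $P$ genuinely lies in $\Sigma_1$ despite implicitly quantifying over the infinite sequence $\UGS$, and that the $\Sigma_1$-hardness hypothesis is strong enough to accommodate promise problems (so that games $G_n$ with values in $[\tfrac{1}{2},1)$ occurring inside a YES instance of $P$ cause no difficulty). Both points are resolved by the observation that a single rational strategy on a single $G_n$ already constitutes a finite $\Sigma_1$ witness for YES instances of $P$, while the NO promise rules out any such witness at any index.
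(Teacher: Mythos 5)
Your proposal is correct and takes essentially the same route as the paper: the paper cites Theorem~\ref{thm:gap-preserving-supercompression} from~\cite{mousavi2020complexity}, and the analogous proofs given in full in this paper (Theorems~\ref{thm:gap-amplifying-supercompression}--\ref{thm:gapless-supercompression}) follow exactly your pattern of (i) constructing a semi-decision procedure over the sequence $\UGS$, (ii) packaging its behaviour as a $\Sigma_1$ (or $\Pi_1$, $\Pi_2$) sentence, and (iii) feeding that sentence into the assumed hardness reduction. Your observation that the NO promise (every $\val_q(G_n)<\tfrac12$) guarantees the search procedure has no rational-strategy witness at any index, so that the resulting recursively enumerable set cleanly separates the two promise sides, is precisely the bridge the paper's proofs rely on to move from a promise problem to a total $\Sigma_1$ language before invoking the many-one reduction.
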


Now we show that if the approximate $co$-value problem is $\Pi_1$ hard, then there exists a gap-preserving compression procedure for the commuting operator value of games. 

\begin{theorem} \label{thm:gap-amplifying-supercompression}
Assume that the approximate $co$-value problem is $\Pi_1$ hard. Then there exists a computable map $\alg{GapCompress}_{co}$ that takes in as input a description of a sequence of games  $\UGS = (G_n)_{n \in \N}$ and outputs the description of a single game $G'$ such that
\begin{enumerate}
    \item $\val_{co}(G') = 1$ if $\val_{co}(G_n) = 1$ for every game $G_n \in \UGS$.
    \item $\val_{co}(G') < \frac{1}{2}$, otherwise.
\end{enumerate}
\end{theorem}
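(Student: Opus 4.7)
My proposed proof follows the same template as Theorem \ref{thm:gap-preserving-supercompression} from \cite{mousavi2020complexity}, but with the quantifiers swapped: where that argument encodes the $\Sigma_1$-sentence ``$\exists n: \val_q(G_n) = 1$'' via the assumed $\Sigma_1$-hardness of the approximate $q$-value problem, here I would encode the $\Pi_1$-sentence ``$\forall n: \val_{co}(G_n) = 1$'' via the assumed $\Pi_1$-hardness of the approximate $co$-value problem. The role played in \cite{mousavi2020complexity} by brute-force enumeration over finite-dimensional strategies (which witnesses $\val_q(G) < 1$) is played here by the NPA/DLTW semidefinite programming hierarchy of \cite{navascues2008convergent,doherty2008quantum}, whose levels converge from above to $\val_{co}(G)$.

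Concretely, given a description of $\UGS = (G_n)_{n \in \N}$ (equivalently, a verifier $\verifier$ for $\UGS$), the first step is to construct a single Turing machine $M_\UGS$ that dovetails over pairs $(m,\ell) \in \N^2$: for each such pair it computes a rational upper bound, to precision $2^{-\ell}$, on $\val_{co}(G_m)$ using the $\ell$-th level of the NPA hierarchy, and it halts as soon as it finds a pair for which this upper bound is strictly less than $1 - 2^{-\ell}$. Because the NPA hierarchy converges to $\val_{co}(G_m)$, the machine $M_\UGS$ halts if and only if some $G_m$ has $\val_{co}(G_m) < 1$. Equivalently, $M_\UGS$ runs forever exactly when $\val_{co}(G_n) = 1$ for every $n$, so the latter condition is expressed as the non-halting of a single, uniformly computable machine.

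The second step invokes the assumed $\Pi_1$-hardness of the approximate $co$-value problem: by definition, this hardness yields a computable reduction $R$ from Turing machines $M$ to nonlocal games $R(M)$ with $\val_{co}(R(M)) = 1$ when $M$ does not halt and $\val_{co}(R(M)) < \tfrac{1}{2}$ when $M$ halts. I would then define
\[
\alg{GapCompress}_{co}(\UGS) \;:=\; R(M_\UGS),
\]
which is computable as a composition of two computable maps. Setting $G' = \alg{GapCompress}_{co}(\UGS)$, the two required properties fall out immediately: if $\val_{co}(G_n) = 1$ for all $n$ then $M_\UGS$ never halts and hence $\val_{co}(G') = 1$; otherwise there is some $n$ with $\val_{co}(G_n) < 1$, so $M_\UGS$ halts and hence $\val_{co}(G') < \tfrac{1}{2}$.

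There is no single serious obstacle, but the main subtlety lies in making $M_\UGS$ \emph{genuinely} computable: each NPA SDP must be solved to a provable rational precision in finite time, which requires committing to a standard bit-complexity representation (e.g.\ the ellipsoid method applied to a Gram-matrix formulation, with explicit polynomial bounds in the level and the description of $G_m$). Beyond that, the entire burden of converting a non-halting computation into an infinite-dimensional commuting-operator perfect strategy -- and, dually, a halting computation into a strategy of value $< \tfrac{1}{2}$ -- is absorbed into the hypothesized $\Pi_1$-hardness reduction, which plays here the role that the $\MIP^* = \RE$ reduction plays in the proof of Theorem~\ref{thm:gap-preserving-supercompression}.
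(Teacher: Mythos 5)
Your proof is correct and takes essentially the same approach as the paper: both construct a Turing machine that dovetails the NPA hierarchy over all games $G_m$ in the sequence, halting precisely when a certificate that $\val_{co}(G_m) < 1$ is found, and then invoke the assumed $\Pi_1$-hardness to map the resulting non-halting instance to a single game $G'$. The only cosmetic difference is that the paper phrases the reduction as starting from the $\Pi_1$-sentence ``$\forall n,\ T^{co}_\UGS$ does not halt in $n$ steps,'' whereas you phrase it directly as a reduction from the non-halting problem; these are interchangeable since non-halting is $\Pi_1$-complete.
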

\begin{proof}
Consider the following Turing machine $T^{co}_{\UGS}$: it interleaves running some number of levels of the NPA semidefinite programming hierarchy~\cite{navascues2008convergent} on each game $G_m$ in the sequence, trying to find a game $m$ for which $\val_{co}(G_m) < 1$. The completeness of the NPA hierarchy implies that if $\val_{co}(G_m) < 1$ for some $m$, then eventually a certificate will be found. Thus the Turing machine halts only if there exists $m$ such that $\val_{co}(G_m) < 1$.

\vspace{10pt}
\IncMargin{1em}
\begin{algorithm}[H]
\DontPrintSemicolon

\For{$n \in \N$}{
    \For{$m \in \{1, ..., n\}$}{
        Run the first $n$ levels of the NPA hierarchy for the game $G_m \in \UGS$.
        
        If there is a certificate that $\val_{co}(G_m) < 1$ then halt.
    }
}

\caption{Specification of $T^{co}_\UGS$}
\end{algorithm}\DecMargin{1em}
\vspace{10pt}

Consider the sentence $S$ defined as ``$\forall n \in \N,\,   T^{co}_{\UGS}$ does not halt in $n$ steps''. Note that $S$ is a $\Pi_1$ sentence, and since the approximate $co$-value problem is $\Pi_1$-hard, this means there is a corresponding game $G'$ computable from $S$ such that such that $\val_{co}(G') = 1$ if $T^{co}_{\UGS}$ never halts (i.e. $\val_{co}(G_m) = 1$ for all $m$), otherwise $\val_{co}(G') < \frac{1}{2}$.
\end{proof}

Next we show that $\Pi_1$-hardness of the \emph{exact} $co$-value problem implies a \emph{gapless} compression theorem for the commuting operator value of nonlocal games.

\begin{theorem} \label{thm:gapless-supercompression-necessity}
Assume that the exact $co$-value problem is $\Pi_1$ hard. Then there exists a computable map $\alg{GaplessCompress}_{co}$ that takes in as input a description of a sequence of games  $\UGS = (G_n)_{n \in \N}$ and outputs the description of a single game $G'$ such that $\val_{co}(G') = 1$ if and only if $\val_{co}(G_n) = 1$ for all $n \in \N$.
\end{theorem}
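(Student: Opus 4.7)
The plan is to mimic the strategy used in the proof of Theorem \ref{thm:gap-amplifying-supercompression}, but leveraging the weaker (exact) hypothesis instead of the gap-preserving one. The central observation is that the set $\{\UGS : \forall n,\ \val_{co}(G_n)=1\}$ is itself a $\Pi_1$-statement about $\UGS$, because the NPA semidefinite programming hierarchy of \cite{navascues2008convergent,doherty2008quantum} provides a convergent sequence of upper bounds on $\val_{co}(G_m)$, so that $\val_{co}(G_m)<1$ is recursively enumerable uniformly in $m$.

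Concretely, given a verifier describing $\UGS=(G_n)_{n\in\N}$, I would define a Turing machine $T^{co}_\UGS$ that dovetails over $(n,m)$ with $m\leq n$, running the first $n$ levels of the NPA hierarchy on $G_m$, and halts as soon as it finds a certificate showing $\val_{co}(G_m)<1$ for some $m$. By completeness of the NPA hierarchy, $T^{co}_\UGS$ halts if and only if there exists some $n\in\N$ with $\val_{co}(G_n)<1$. Let $S_\UGS$ denote the $\Pi_1$-sentence ``for all $k\in\N$, $T^{co}_\UGS$ does not halt within $k$ steps''; then $S_\UGS$ is true if and only if $\val_{co}(G_n)=1$ for every $n$. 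The description of $S_\UGS$ is computable uniformly from the description of $\UGS$.

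Now I apply the assumed $\Pi_1$-hardness of the exact $co$-value problem: there is a computable many-one reduction taking any $\Pi_1$-sentence $S$ to a nonlocal game $G_S$ such that $S$ is true if and only if $\val_{co}(G_S)=1$. Define $\alg{GaplessCompress}_{co}(\UGS) := G_{S_\UGS}$. Chaining the two equivalences yields
\[
\val_{co}(G_{S_\UGS})=1 \iff S_\UGS\text{ is true} \iff \forall n\in\N,\ \val_{co}(G_n)=1,
\]
which is exactly the conclusion. Computability of $\alg{GaplessCompress}_{co}$ follows because both constructions in the chain are computable.

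There is no substantive obstacle here beyond bookkeeping; the argument is essentially parallel to that of Theorem \ref{thm:gap-amplifying-supercompression}, with the sole change that one now invokes the exact-value (rather than approximate-value) $\Pi_1$-hardness assumption, and correspondingly settles for exact equality $\val_{co}(G')=1$ rather than a constant gap. The only point meriting a brief check in the write-up is that the NPA hierarchy genuinely converges to $\val_{co}$ in the sense needed, so that the halting behavior of $T^{co}_\UGS$ faithfully encodes the statement $\exists n,\ \val_{co}(G_n)<1$; this is standard from \cite{navascues2008convergent,doherty2008quantum}.
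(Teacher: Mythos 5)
Your proof is correct and follows essentially the same route as the paper, which simply invokes the argument of Theorem \ref{thm:gap-amplifying-supercompression} (the same NPA-based Turing machine $T^{co}_\UGS$ and $\Pi_1$-sentence) while substituting the exact $co$-value hardness assumption for the approximate one. The only difference is that you spell out the bookkeeping explicitly where the paper relies on the earlier proof by reference.
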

\begin{proof}
This follows exactly the same proof as above, except the reduction from the sentence $S$ to the game $G'$ is such that $\val_{co}(G_m) = 1$ for all $m$ if and only if $S$ is true if and only if $\val_{co}(G') = 1$. 
\end{proof}

Finally we prove that $\Pi_2$-hardness of the exact $q$-value problem implies a gapless compression theorem for the quantum value of nonlocal games.

\begin{theorem} \label{thm:gapless-supercompression}
Assume that the approximate $q$-value problem is $\Pi_2$-hard. Then there exists a computable map $\alg{GaplessCompress}_{q}$ that takes in as input a description of a sequence of games  $\UGS = (G_n)_{n \in \N}$ and outputs the description of a single game $G'$ such that $\val_q(G) = 1$ if and only if $\val_q(G_n) = 1$ for all $n \in \N$.
\end{theorem}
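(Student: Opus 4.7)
The plan is to mirror the structure used in \Cref{thm:gapless-supercompression-necessity}: express the condition ``$\val_q(G_n) = 1$ for every $n \in \N$'' as a $\Pi_2$-sentence in a computable way, and then invoke the $\Pi_2$-hardness assumption to convert that sentence into a nonlocal game. The starting point is the argument used to show exact $q$-value is contained in $\Pi_2$ (see the proof of \Cref{thm:pi-2-completeness-of-exact-q}): for each fixed game $H$, we have $\val_q(H) = 1$ if and only if
\[
    \forall k \in \N \,\, \exists \strategy \in \cal{S}_{1/k} \text{ such that } \val(H,\strategy) > 1 - 2/k,
\]
where $\{\cal{S}_\epsilon\}_\epsilon$ is a computable sequence of $\epsilon$-nets over finite-dimensional strategies. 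Applying this uniformly over $n$ gives that $\val_q(G_n) = 1$ for all $n \in \N$ if and only if
\[
    \forall (n,k) \in \N^2 \,\, \exists \strategy \in \cal{S}_{1/k} \text{ such that } \val(G_n,\strategy) > 1 - 2/k. \tag{$\star$}
\]
Pairing the two universal quantifiers via a standard computable bijection $\N^2 \to \N$ yields a genuine $\Pi_2$-sentence whose matrix is a computable predicate in $(n,k,\strategy)$.

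Next, given an input sequence $\UGS = (G_n)_{n \in \N}$ encoded by a verifier, I would write down an explicit Turing machine that, on input $(n,k)$, enumerates the net $\cal{S}_{1/k}$ and checks whether some $\strategy \in \cal{S}_{1/k}$ satisfies $\val(G_n,\strategy) > 1 - 2/k$. Because descriptions of strategies in $\cal{S}_{1/k}$ and of $G_n$ are both computable from $(n,k)$ and the verifier of $\UGS$, and because evaluating $\val(G_n,\strategy)$ for a finite-dimensional $\strategy$ is a computable function of the inputs, the resulting sentence ($\star$) is a bona fide $\Pi_2$-sentence $S_\UGS$ whose description can be computed from the description of $\UGS$. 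Invoking the assumption that the exact $q$-value problem is $\Pi_2$-hard (under many-one reductions), there is a computable map that takes $S_\UGS$ to a nonlocal game $G'$ such that $\val_q(G') = 1$ if and only if $S_\UGS$ is true, i.e., if and only if $\val_q(G_n) = 1$ for every $n \in \N$. Composing the two computable maps defines $\alg{GaplessCompress}_q(\UGS) = G'$, as required.

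The main conceptual subtlety (rather than obstacle) is just making sure that the predicate inside the $\Pi_2$-sentence really is computable as a function of $(n,k)$: this requires (i) a uniformly computable family of $\epsilon$-nets $\cal{S}_\epsilon$ for finite-dimensional strategies on arbitrary question/answer alphabets, and (ii) exact (rational) evaluation of $\val(G_n,\strategy)$ for $\strategy$ with rational entries. Both are standard (one can restrict to rational strategies in each fixed dimension $d$ and take $d$ to range over $\N$ inside the existential quantifier, folding $d$ into $\strategy$). Once these are in place, the rest of the argument is mechanical, exactly parallel to \Cref{thm:gapless-supercompression-necessity}, with the only structural change being that a single universal quantifier is replaced by a doubly universal one so that the encoding lives in $\Pi_2$ rather than $\Pi_1$.
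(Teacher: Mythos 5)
Your proposal is correct and follows essentially the same route as the paper: both express ``$\val_q(G_n)=1$ for all $n$'' as a $\Pi_2$-sentence using $\eps$-nets over finite-dimensional strategies and then invoke the hardness assumption to produce the game $G'$. The only cosmetic difference is that the paper packages the inner existential search (over dimension and strategy in the net) into the halting predicate of an explicit Turing machine $T^q_\UGS$, writing the $\Pi_2$-sentence as ``$\forall k,m\,\exists n\,$ $T^q_\UGS$ halts on $(1/k,m)$ within $n$ steps,'' whereas you quantify directly over strategies in $\mathcal{S}_{1/k}$; these are equivalent formulations of the same sentence. (You also read the hypothesis as $\Pi_2$-hardness of the \emph{exact} $q$-value problem rather than the \emph{approximate} one as the theorem statement literally says; this is the correct reading, consistent with the introduction, since the approximate $q$-value problem is $\Sigma_1$-complete and hence cannot be $\Pi_2$-hard.)
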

\begin{proof}

Consider the following Turing machine $T^q_\UGS$: it takes in as input a precision parameter $\eps$ and an integer $m$, and it searches for a finite-dimensional strategy $\strategy$ (specified with precision $\eps$) such that the game $G_m$ in the sequence $\UGS$ has $\val(G_m,\strategy) \geq 1 - 2\eps$. This can be done because given a dimension $d \in \N$ and a precision parameter $\eps$, there is an algorithm to exhaustively search over $\cal{S}_{d,\eps}$, an $\eps$-net over $d$-dimensional quantum strategies.

\vspace{10pt}
\IncMargin{1em}
\begin{algorithm}[H]
\DontPrintSemicolon

\textbf{Input}: $\epsilon, m$

\For{$d \in \N$}{
    If there exists a strategy $\strategy \in \cal{S}_{\epsilon}^d$, an $\epsilon$-net for quantum strategies of dimension $d$, such that $\val(G_m,\strategy) > 1 - 2\eps$, then halt.

}

\caption{Specification of $T^{q}_\UGS$}
\end{algorithm}\DecMargin{1em}
\vspace{10pt}

Note that if $\val_q(G_m) = 1$, then for all $\eps > 0$ there exists a finite-dimensional strategy that achieves value at least $1 - 2\eps$. On the other hand, if $\val_q(G_m) < 1$, then there exists an $\eps$ for which \emph{all} finite dimensional strategies have value at most $1 - 2\eps$. Thus $\val_q(G_m) = 1$ for all $m \in \N$ if and only if the following sentence $S$ is true: ``$\forall k ,m  \, \exists n  \, T^{q}_{\UGS}$ halts on input $\Big(\frac{1}{k}, m \Big)$ in $n$ steps''. Note that $S$ is a $\Pi_2$ sentence, and by our assumption there exists a nonlocal game $G'$ that is computable from $S$ such that $\val_q(G') = 1$ if and only if $\val_q(G_m) = 1$ for all $m \in \N$.

\end{proof} 
\appendix
\newpage

\section{The pasting lemma}
\label{sec:pasting}

We now prove \Cref{lem:pasting}, which is reproduced below for convenience. Recall that $\algebra$ is a von Neumann algebra with a normal tracial state $\tau$.

\pasting*

We introduce some notation. For every integer $k \geq 1$, vector $\vec{a} \in \cal{A}^k$, and operator index sequence $s \in [M]^k$, define the operator
\[
	P^s_{\vec{a}} = A^{(s_1)}_{\vec{a}_1} \cdot A^{(s_2)}_{\vec{a}_2} \cdots A^{(s_k)}_{\vec{a}_k}.
\]
Note that $P^s = \{ P^s_{\vec{a}} \}_{a \in \cal{A}^k}$ is a general set of operators (not necessarily a POVM, because the operators are not positive).

We first prove the following utility Lemma. We use the following notational convention: given two operator sets $C = \{C_a\}_{a \in \cal{A}}$ and $D = \{D_b\}_{b \in \cal{B}}$, we write $C \cdot D$ to denote the operator set $\{C_a \cdot D_b \}_{a \in \cal{A},b \in \cal{B}}$. 

\begin{lemma}
\label{lem:pasting-utility}
For integers $k \geq 1$, for all all sequences $s \in [M]^k$, for all $i \in [M]$, we have
\[
	\| P^s \cdot A^{(i)} - A^{(i)} \cdot P^s \|_\tau \leq k \eps
\]
\end{lemma}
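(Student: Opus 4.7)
The plan is to proceed by induction on the length $k$ of the sequence $s \in [M]^k$. The base case $k=1$ reduces to $P^s = A^{(s_1)}$, and the desired bound $\|A^{(s_1)} \cdot A^{(i)} - A^{(i)} \cdot A^{(s_1)}\|_\tau \le \eps$ is exactly the approximate commutation hypothesis of the pasting lemma.

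For the inductive step, write $s = (s_1, s')$ with $s' = (s_2,\ldots,s_k) \in [M]^{k-1}$, so that $P^s_{\vec{a}} = A^{(s_1)}_{\vec{a}_1} \cdot P^{s'}_{(\vec{a}_2,\ldots,\vec{a}_k)}$. The proof then runs through the triangle chain
\[
P^s \cdot A^{(i)} \;=\; A^{(s_1)} \cdot P^{s'} \cdot A^{(i)} \;\approx_{(k-1)\eps}\; A^{(s_1)} \cdot A^{(i)} \cdot P^{s'} \;\approx_{\eps}\; A^{(i)} \cdot A^{(s_1)} \cdot P^{s'} \;=\; A^{(i)} \cdot P^s.
\]
The first approximation applies the inductive hypothesis $\|P^{s'} A^{(i)} - A^{(i)} P^{s'}\|_\tau \le (k-1)\eps$ and then \emph{left-multiplies} both sides by the operator set $A^{(s_1)}$; this step is justified by \Cref{lem:add-a-measurement} using the fact that $\sum_a (A^{(s_1)}_a)^* A^{(s_1)}_a = \sum_a A^{(s_1)}_a = \id$, which holds because $A^{(s_1)}$ is a projective measurement. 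The second approximation applies the base-case commutation $\|A^{(s_1)} A^{(i)} - A^{(i)} A^{(s_1)}\|_\tau \le \eps$ and then \emph{right-multiplies} by $P^{s'}$, again via \Cref{lem:add-a-measurement}. A final triangle inequality yields the bound $k\eps$.

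The only step that requires care is verifying the hypothesis $\sum_{\vec{a}} P^{s'}_{\vec{a}} (P^{s'}_{\vec{a}})^* \le \id$ needed to right-multiply by $P^{s'}$. This follows by peeling off projections from the outside inward: expanding $P^{s'}_{\vec{a}} (P^{s'}_{\vec{a}})^*$ as a palindromic product in the $A^{(s_j)}_{\vec{a}_{j-1}}$ and summing first over $\vec{a}_{k-1}$ gives
\[
\sum_{\vec{a}_{k-1}} A^{(s_k)}_{\vec{a}_{k-1}} A^{(s_k)}_{\vec{a}_{k-1}} \;=\; \sum_{\vec{a}_{k-1}} A^{(s_k)}_{\vec{a}_{k-1}} \;=\; \id
\]
by projectivity of $A^{(s_k)}$; iterating this collapse inwards shows that $\sum_{\vec{a}} P^{s'}_{\vec{a}} (P^{s'}_{\vec{a}})^* = \id$. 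There is no real obstacle here; the lemma is essentially a bookkeeping exercise in the $\tau$-norm, and its main role is to be invoked (together with \Cref{lem:projectivization}, standardly) in the proof of the pasting lemma itself, where one converts the almost-commuting measurements into an almost-idempotent POVM and then projectivizes.
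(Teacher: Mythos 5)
Your proof is correct and follows essentially the same approach as the paper's: an induction on $k$ whose step splits the commutator into two terms via the triangle inequality, bounding one by the inductive hypothesis and the other by the base-case commutation, with \Cref{lem:add-a-measurement} (equivalently, the fact that $\sum_{\vec a}(P^{s'}_{\vec a})^*P^{s'}_{\vec a}=\sum_{\vec a}P^{s'}_{\vec a}(P^{s'}_{\vec a})^*=\id$ and projectivity of each $A^{(j)}$) absorbing the extra multiplications in the $\tau$-norm. The only cosmetic difference is that you peel $s_1$ off the front of the sequence where the paper appends a new index $t$ at the back; the bookkeeping and the resulting $k\eps$ bound are identical.
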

\begin{proof}
	We prove this via induction on $k$. The base case for $k = 1$ follows from the assumption of the approximate commutativity of the $A^{(i)}$ measurements. Assuming the inductive hypothesis holds for some $k \geq 1$, we now prove it for $k + 1$: let $s \in [M]^k, t \in [M]$. We can treat $(s,t)$ as an operator index sequence of length $k+1$. Then for all $i \in [M]$, we have
	\begin{align}
		&\| P^{s,t} \cdot A^{(i)} - A^{(i)} \cdot P^{s,t} \|_\tau = \| P^s \cdot A^{(t)} \cdot A^{(i)} - A^{(i)} \cdot P^{s} \cdot A^{(t)} \|_\tau \notag \\
							&\qquad \qquad \leq \Big \| P^s \cdot \Big ( A^{(t)} \cdot A^{(i)} - A^{(i)} \cdot A^{(t)} \Big) \Big \|_\tau + \Big \| \Big(P^s \cdot A^{(i)} - A^{(i)} \cdot P^{s} \Big) \cdot A^{(t)} \Big \|_\tau \label{eq:pasting-utility-1}
	\end{align}
	where the inequality follows from the triangle inequality of the $\tau$-norm on operator sets (\Cref{lem:triangle-inequality-tau-norm}). 
	
	We can bound the first term as
	\[
		\Big \| P^s \cdot \Big ( A^{(t)} \cdot A^{(i)} - A^{(i)} \cdot A^{(t)} \Big) \Big \|_\tau = \Big \| A^{(t)} \cdot A^{(i)} - A^{(i)} \cdot A^{(t)} \Big \|_\tau \leq \eps~.
	\]
	The inequality follows from the almost-commutativity of the $A$'s, and the first equality is because
	\begin{align*}
&= \sum_{\substack{\vec{a} \in \cal{A}^k \\ b,c \in \cal{A}}} \trace{ \Big ( A^{(t)}_b \cdot A^{(i)}_c - A^{(i)}_c \cdot A^{(t)}_b \Big)^* (P^s_{\vec{a}})^* P^s_{\vec{a}} \Big ( A^{(t)}_b \cdot A^{(i)}_c - A^{(i)}_c \cdot A^{(t)}_b \Big) } \\
	&= \sum_{b,c \in \cal{A}} \trace{ \Big ( A^{(t)}_b \cdot A^{(i)}_c - A^{(i)}_c \cdot A^{(t)}_b \Big)^* \Big ( A^{(t)}_b \cdot A^{(i)}_c - A^{(i)}_c \cdot A^{(t)}_b \Big) }
\end{align*}
	where we used the fact that $\sum_{\vec{a} \in \cal{A}^k}(P^s_{\vec{a}})^* P^s_{\vec{a}} = \id$.
	
	The second term in~\eqref{eq:pasting-utility-1} can be similarly bounded as
	\[
	\Big \| \Big(P^s \cdot A^{(i)} - A^{(i)} \cdot P^{s} \Big) \cdot A^{(t)} \Big \|_\tau = \Big \| P^s \cdot A^{(i)} - A^{(i)} \cdot P^{s}  \Big \|_\tau \leq k \eps
	\] 
	by the inductive hypothesis. Thus we can bound~\eqref{eq:pasting-utility-1} by $(k+1)\eps$, completing the induction.
\end{proof}

	For the remainder of the proof let $k = M$. Let $s = (1,2,\ldots,M) \in [M]^k$ denote an operator index sequence. For all $\vec{a} \in \cal{A}^k$, define
	\[
		Q_{\vec{a}} = P^s_{\vec{a}} (P^s_{\vec{a}})^* \;.
	\]
	Note that $Q_{\vec{a}}$ is positive and furthermore $\{ Q_{\vec{a}} \}$ forms a POVM with outcomes in $\cal{A}^k$ (this uses the fact that the $A^{(i)}_a$ operators are projections).  
	
	We now calculate the closeness of $Q_{[\vec{a} \mapsto \vec{a}_i \mid b]}$ to the individual $A^{(i)}_b$'s:
	\begin{align*}
		\sum_{b \in \cal{A}} \| Q_{[\vec{a} \mapsto \vec{a}_i \mid b]} - A^{(i)}_b \|_\tau^2 &= \sum_{b \in \cal{A}} \tau \Big ( \Big (Q_{[\vec{a} \mapsto \vec{a}_i \mid b]} - A^{(i)}_b \Big )^2 \Big) \\
		&\leq 2 - 2 \sum_{b \in \cal{A}} \tau \Big ( Q_{[\vec{a} \mapsto \vec{a}_i \mid b]} A^{(i)}_b \Big) \\
		&= 2 - 2 \sum_{\vec{a}} \tau \Big ( Q_{\vec{a}} A^{(i)}_{\vec{a}_i} \Big)
\end{align*}
	
	We give a lower bound on the magnitude of the second term. Spliting the index sequence $s = (s_{<i},i,s_{>i})$  and answer tuples $\vec{a} = (\vec{a}_{<i},\vec{a}_i,\vec{a}_{>i})$, we get
		\begin{align*}
	\sum_{\vec{a}} \tau\Big ( Q_{\vec{a}} A^{(i)}_{\vec{a}_i} \Big) &= \sum_{\vec{a}} \tau\Big ( P_{\vec{a}_{< i}}^{s_{<i}} \cdot A^{(i)}_{\vec{a}_i} \cdot  P_{\vec{a}_{> i}}^{s_{>i}} \cdot (P_{\vec{a}_{> i}}^{s_{>i}})^* \cdot A^{(i)}_{\vec{a}_i} \cdot (P_{\vec{a}_{< i}}^{s_{<i}})^* \cdot A^{(i)}_{\vec{a}_i} \Big) \\
	&= \sum_{\vec{a}_{< i}, \vec{a}_i} \tau\Big ( P_{\vec{a}_{< i}}^{s_{<i}} \cdot A^{(i)}_{\vec{a}_i} \cdot (P_{\vec{a}_{< i}}^{s_{<i}})^* \cdot A^{(i)}_{\vec{a}_i} \Big) \\
	&= \sum_{\vec{a}_{< i}, \vec{a}_i} \tau\Big ( P_{\vec{a}_{< i}}^{s_{<i}} \cdot A^{(i)}_{\vec{a}_i} \cdot (P_{\vec{a}_{< i}}^{s_{<i}})^* \Big) + \tau\Big ( P_{\vec{a}_{< i}}^{s_{<i}} \cdot A^{(i)}_{\vec{a}_i} \cdot  \Big( (P_{\vec{a}_{< i}}^{s_{<i}})^* \cdot A^{(i)}_{\vec{a}_i} - A^{(i)}_{\vec{a}_i} \cdot (P_{\vec{a}_{< i}}^{s_{<i}})^* \Big) \Big) \\
	&= 1 + \sum_{\vec{a}_{< i}, \vec{a}_i} \tau\Big ( P_{\vec{a}_{< i}}^{s_{<i}} \cdot A^{(i)}_{\vec{a}_i} \cdot  \Big( (P_{\vec{a}_{< i}}^{s_{<i}})^* \cdot A^{(i)}_{\vec{a}_i} - A^{(i)}_{\vec{a}_i} \cdot (P_{\vec{a}_{< i}}^{s_{<i}})^* \Big) \Big)
	\end{align*}
	We can bound the magnitude of the second term using Cauchy-Schwarz:
	\begin{align*}
	&\left | \sum_{\vec{a}_{< i}, \vec{a}_i} \tau\Big ( P_{\vec{a}_{< i}}^{s_{<i}} \cdot A^{(i)}_{\vec{a}_i} \cdot  \Big( (P_{\vec{a}_{< i}}^{s_{<i}})^* \cdot A^{(i)}_{\vec{a}_i} - A^{(i)}_{\vec{a}_i} \cdot (P_{\vec{a}_{< i}}^{s_{<i}})^* \Big) \Big) \right | \\
	&\leq \sqrt{ \sum_{\vec{a}_{< i}, \vec{a}_i} \tau\Big ( \Big( P_{\vec{a}_{< i}}^{s_{<i}} \cdot A^{(i)}_{\vec{a}_i} - A^{(i)}_{\vec{a}_i} \cdot P_{\vec{a}_{< i}}^{s_{<i}} \Big)^* \Big(  P_{\vec{a}_{< i}}^{s_{<i}} \cdot A^{(i)}_{\vec{a}_i} - A^{(i)}_{\vec{a}_i} \cdot P_{\vec{a}_{< i}}^{s_{<i}}\Big)  \Big) }\cdot \sqrt{ \sum_{\vec{a}_{< i}, \vec{a}_i} \tau \Big( P_{\vec{a}_{< i}}^{s_{<i}} \cdot A^{(i)}_{\vec{a}_i} \cdot (P_{\vec{a}_{< i}}^{s_{<i}})^* \Big) } \\
	&\leq \sqrt{ \sum_{\vec{a}_{<i},\vec{a}_i} \left \| P_{\vec{a}_{< i}}^{s_{<i}} \cdot A^{(i)}_{\vec{a}_i} - A^{(i)}_{\vec{a}_i} \cdot P_{\vec{a}_{< i}}^{s_{<i}} \right \|_\tau^2} \\
	&\leq M \eps
	\end{align*}
	where the last inequality follows from \Cref{lem:pasting-utility}. Thus we deduce that 
	\begin{equation}
	\label{eq:pasting-2}
	\sqrt{\sum_{b \in \cal{A}} \| Q_{[\vec{a} \mapsto \vec{a}_i \mid b]} - A^{(i)}_b \|_\tau^2} \leq \sqrt{2M \eps}\;.
	\end{equation}
	
	Next we argue that the $Q_{\vec{a}}$ is ``almost projective''. Using that $\sum_{\vec{a}} \tau(Q_{\vec{a}}) = \sum_{\vec{a}} \tau(P_{\vec{a}}^s) = 1$, we get
	\begin{align*}
		\sum_{\vec{a}} \tau \Big ( Q_{\vec{a}} -  Q_{\vec{a}}^2 \Big) &= \sum_{\vec{a}} \tau \Big ( P_{\vec{a}}^s -  Q_{\vec{a}}^2 \Big) \\
		&= \sum_{\vec{a}} \tau \Big ( P_{\vec{a}}^s -  P_{\vec{a}}^s \cdot Q_{\vec{a}} \Big) + \tau( (P_{\vec{a}}^s - Q_{\vec{a}}) \cdot Q_{\vec{a}} ) \\
		&= \sum_{\vec{a}} \tau \Big ( P_{\vec{a}}^s -  P_{\vec{a}}^s \cdot (P_{\vec{a}}^s)^* \Big) + \tau( (P_{\vec{a}}^s - Q_{\vec{a}}) \cdot Q_{\vec{a}} ) + \tau( ( (P_{\vec{a}}^s)^* - Q_{\vec{a}}) \cdot P_{\vec{a}}^s ) \\
		&= \sum_{\vec{a}} \tau( (P_{\vec{a}}^s - Q_{\vec{a}}) \cdot Q_{\vec{a}} ) + \tau( ((P_{\vec{a}}^s)^* - Q_{\vec{a}}) \cdot P_{\vec{a}}^s )
	\end{align*}
	where in the last line we used that $P_{\vec{a}}^s \cdot (P_{\vec{a}}^s)^* = Q_{\vec{a}}$ and $\sum_{\vec{a}} \tau(Q_{\vec{a}}) = \sum_{\vec{a}} \tau(P_{\vec{a}}^s) = 1$. Using Cauchy-Schwarz and the fact that $\sum_{\vec{a}} \| P^s_{\vec{a}} \|_\tau^2$ and $\sum_{\vec{a}} \| Q_{\vec{a}} \|_\tau^2$ are most $1$, this last line is at most $2 \sqrt{ \sum_{\vec{a}} \| P_{\vec{a}}^s - Q_{\vec{a}} \|_\tau^2 }$. To bound this, we note that we can express
	$P^s_{\vec{a}}$ and $Q_{\vec{a}}$ as longer products 
	\[
		P^t_{\vec{b}} = P^{(s_1)}_{\vec{a}_1} \cdot P^{(s_1)}_{\vec{a}_1} \cdots P^{(s_k)}_{\vec{a}_k} \cdot P^{(s_k)}_{\vec{a}_k}~,
	\qquad \qquad
		P^u_{\vec{c}} = P^{(s_1)}_{\vec{a}_1} \cdots P^{(s_k)}_{\vec{a}_k} \cdots \cdot P^{(s_1)}_{\vec{a}_1} 
	\]
	where $t = (s_1,s_1,\ldots,s_k,s_k) \in [M]^{2k}$ and $u = (s_1,\ldots,s_k,s_k,\ldots,s_1)$, and $\vec{b} = (\vec{a}_1,\vec{a}_1,\ldots,\vec{a}_k,\vec{a}_k)$ and $\vec{c} = (\vec{a}_1,\ldots,\vec{a}_k,\vec{a}_k,\ldots,\vec{a}_1)$. In particular, let $\pi$ denote a permutation on $2k$ elements such that $\pi(\vec{b}) = \vec{c}$. 
	Thus
	\[
		\sqrt{ \sum_{\vec{a} \in \cal{A}^k} \| P_{\vec{a}}^s - Q_{\vec{a}} \|_\tau^2}  = \sqrt{ \sum_{\vec{a}\in \cal{A}^k} \left \| P^t_{\vec{b}} - P^u_{\vec{c}} \right \|_\tau^2 }\leq \sqrt{ \sum_{\vec{b}\in \cal{A}^{2k}} \left \| P^t_{\vec{b}} - P^u_{\pi(\vec{b})} \right \|_\tau^2 }
	\]
	Let $\pi'$ be a permutation that differs from $\pi$ by a swap of adjacent elements. Then 
	\[
	\sqrt{ \sum_{\vec{b}\in \cal{A}^{2k}} \left \| P^t_{\vec{b}} - P^u_{\pi(\vec{b})} \right \|_\tau^2} \leq \eps
	\]
	by our assumption on the almost-commutativity of the $A$'s. Since $\pi$ can be formed from the identity permutation by swapping at most $(2k)^2$ adjacent elements, by the triangle inequality we have that 
	\[
	\sqrt{ \sum_{\vec{b}\in \cal{A}^{2k}} \left \| P^t_{\vec{b}} - P^u_{\pi(\vec{b})} \right \|_\tau^2 } \leq 4k^2 \eps
	\]
	and therefore $\sum_{\vec{a}} \tau \Big ( Q_{\vec{a}} -  Q_{\vec{a}}^2 \Big) \leq 8 M^2 \eps$.

	Thus we can apply the Projectivization Lemma (\Cref{lem:projectivization}) to the POVM $\{Q_{\vec{a}} \}$ to obtain a projective measurement $R = \{ R_{\vec{a}} \}$ such that
	\[
		R_{\vec{a}} \approx_\eta Q_{\vec{a}}
	\]
	where $\eta = \delta_{proj}(8M^2 \eps)$ where $\delta_{proj}(\cdot)$ is the error function from the Projectivization Lemma. Using the fact that $R$ is projective, we get from \Cref{lem:closeness-to-consistency} that
	\[
		R_{\vec{a}} \simeq_\eta Q_{\vec{a}}.
	\]
	Using the Data Processing Lemma for consistency (\Cref{lem:data-processing}), we get that 
	\[
		R_{[\vec{a} \mapsto \vec{a}_i \mid b]} \simeq_\eta Q_{[\vec{a} \mapsto \vec{a}_i \mid b]} \;.
	\]
	Converting from consistency to closeness (\Cref{lem:consistency-consequences}) we get
	\[
		R_{[\vec{a} \mapsto \vec{a}_i \mid b]} \approx_{\sqrt{2\eta}} Q_{[\vec{a} \mapsto \vec{a}_i \mid b]}
	\]
	Finally, we get
	\begin{align*}
		\| R_{[\vec{a} \mapsto \vec{a}_i]} - A^{(i)} \|_\tau &\leq \left \| R_{[\vec{a} \mapsto \vec{a}_i]} -  Q_{[\vec{a} \mapsto \vec{a}_i]} \right \|_\tau + \left \| Q_{[\vec{a} \mapsto \vec{a}_i]} - A^{(i)} \right \|_\tau \\
		 &\leq \sqrt{2\eta} + \sqrt{2M\eps}  \;.
	\end{align*}
	Thus we get
	\[
		R_{[\vec{a} \mapsto \vec{a}_i \mid b]} \approx_{\sqrt{2\eta} + \sqrt{2M\eps}} A^{(i)}_b  \;.
	\]	
	Setting $\delta_{pasting}(M,\cal{A},\eps) = \sqrt{2\eta} + \sqrt{2M\eps}$ proves the Lemma.

\newpage

\section{Complexity of noncommutative polynomial optimization}
\label{app:polyopt}

For convenience we recall the general formulation of noncommutative polynomial optimization (ncPO for short): given Hermitian polynomials $p,q_1,\ldots,q_m$ in $2n$-noncommutative variables \\ $(x_1,\ldots,x_n,x_1^*,\ldots,x_n^*)$ over $\C$, compute the value of the following optimization program:
\begin{align*}
	\sup \qquad & \bra{\phi} p(X) \ket{\phi} \\
		\text{s.t.} \qquad &q_i(X) \succeq 0 \qquad \text{for $i=1,\ldots,m$}
\end{align*}
The supremum is over choices of tuples $(\cal{H},X,\phi)$ where $\cal{H}$ is a Hilbert space, $X$ is an $n$-tuple of bounded operators acting on $\cal{H}$, and $\ket{\phi}$ is a unit vector on $\cal{H}$. The notation $p(X)$ and $q_i(X)$ indicates that we evaluate each of the indeterminates $x_i$ with the operator $X_i$ and $x_i^*$ with the adjoint $X_i^*$, respectively. We consider two different variations of a ncPO program $P$; if we restrict the supremum to vary only over finite -- but unbounded -- dimensional Hilbert spaces then we call the program \emph{finite-dimensional} and let $\omega_{\mathrm{fin}}(P)$ denote the value of the program. Otherwise we call the program \emph{infinite-dimensional} and let $\omega_{\infty}(P)$ denote the value.

\begin{proposition} \label{prop:gamesdp}
Given a nonlocal game $G = (\cal{X},\cal{A},\mu,D)$ there exists a ncPO program $P$ where $\omega_{\mathrm{fin}}(P) = \omega_{q}(G)$ and $\omega_{\infty}(P) = \omega_{co}(G)$.
\end{proposition}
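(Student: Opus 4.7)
My plan is to directly encode the data of a quantum/commuting-operator strategy as the variables of a ncPO program, and encode the definition of ``valid strategy'' as polynomial constraints. Introduce noncommutative indeterminates $\{a_{x,\alpha}\}_{x \in \cal{X}, \alpha \in \cal{A}}$ and $\{b_{y,\beta}\}_{y \in \cal{X}, \beta \in \cal{A}}$, one for each POVM element of Alice and Bob respectively. The intended meaning is that in any feasible tuple $(\cal{H},X,\ket{\phi})$, the indeterminate $a_{x,\alpha}$ is evaluated by a bounded operator $A^x_\alpha$ that is the element of Alice's POVM for question $x$ and answer $\alpha$, and similarly for $b_{y,\beta}$. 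The state $\ket{\phi}$ plays the role of the shared state (which, at the supremum, can be taken to be a vector state with respect to Alice's and Bob's algebra acting on a common Hilbert space).

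The constraint polynomials $q_i$ will enforce exactly the axioms of a commuting-operator strategy:
\begin{enumerate}
\item Self-adjointness: $a_{x,\alpha} - a_{x,\alpha}^* \succeq 0$ and $a_{x,\alpha}^* - a_{x,\alpha} \succeq 0$, and similarly for $b_{y,\beta}$. Since the ncPO framework admits only Hermitian constraints of the form $q(X)\succeq 0$, each equality is captured by a pair of such inequalities.
\item Positivity: $a_{x,\alpha} \succeq 0$ and $b_{y,\beta} \succeq 0$.
\item Completeness: $\id - \sum_\alpha a_{x,\alpha} \succeq 0$ and $\sum_\alpha a_{x,\alpha} - \id \succeq 0$ for every $x$, and analogously for Bob.
\item Commutation: $a_{x,\alpha}\, b_{y,\beta} - b_{y,\beta}\, a_{x,\alpha} \succeq 0$ and its reverse, for every $x,y,\alpha,\beta$.
\end{enumerate}
The objective polynomial is
\[
    p(a,b) = \sum_{x,y,\alpha,\beta} \mu(x,y)\, D(x,y,\alpha,\beta)\, a_{x,\alpha}\, b_{y,\beta}~.
\]
This polynomial is Hermitian: once the constraints above are imposed, each $A^x_\alpha$ and $B^y_\beta$ is self-adjoint and they commute, so $A^x_\alpha B^y_\beta = (A^x_\alpha B^y_\beta)^*$, and $\mu(x,y) D(x,y,\alpha,\beta) \in \R$.

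Having set up $P$ this way, the equivalence is essentially by unfolding definitions. Any feasible tuple $(\cal{H},X,\ket{\phi})$ for $P$ supplies a state $\ket{\phi}$ together with two collections of commuting POVMs acting on $\cal{H}$, so the evaluation $\bra{\phi} p(X) \ket{\phi}$ equals the winning probability $\val(G,\strategy)$ of the associated commuting-operator strategy $\strategy$. Conversely, every commuting-operator strategy $\strategy = (\ket{\psi},\{A^x_\alpha\},\{B^y_\beta\})$ on a Hilbert space $\hilb$ yields a feasible tuple $(\hilb, X, \ket{\psi})$ with $X$ given by the measurement operators, and the same equality of numerical value holds. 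Taking the supremum on each side gives $\omega_\infty(P) = \omega_{co}(G)$, and restricting to finite (unbounded) dimensional $\cal{H}$ gives $\omega_{\mathrm{fin}}(P) = \omega_q(G)$; here we use the fact, quoted in the introduction from~\cite{scholz2008tsirelson}, that in finite dimensions every commuting-operator strategy can be converted into an equivalent tensor-product strategy without changing $\bra{\psi}A^x_\alpha B^y_\beta \ket{\psi}$.

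The only genuine subtleties are bookkeeping rather than mathematical: (i) ncPO constraints must be expressed as Hermitian inequalities $q_i \succeq 0$, so the equality constraints ``self-adjointness'' and ``completeness'' each become a pair of inequality constraints, as indicated above; (ii) one must verify that the objective $p$ is Hermitian, which, as noted, follows from the commutation and self-adjointness constraints (and could also be enforced symbolically by replacing $p$ with $\tfrac{1}{2}(p + p^*)$); and (iii) to identify $\omega_{\mathrm{fin}}(P)$ with $\omega_q(G)$ rather than with a strictly intermediate ``finite-dimensional commuting-operator'' value, one invokes the finite-dimensional coincidence of the two models. None of these steps require any new ideas, so the proof is essentially a direct verification that the ncPO axioms faithfully capture the definition of a commuting-operator strategy.
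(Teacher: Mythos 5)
Your proposal is correct and takes essentially the same approach as the paper's proof: encode POVM validity (self-adjointness, positivity, completeness, commutation) as polynomial constraints and take the game value as the objective. You are somewhat more explicit than the paper about the bookkeeping — splitting equalities into pairs of inequalities, checking Hermitianity of the objective, and invoking the finite-dimensional coincidence of commuting-operator and tensor-product strategies from~\cite{scholz2008tsirelson} — but these details are implicit in the paper's argument and do not constitute a different route.
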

\begin{proof}
Define the following optimization problem $P$ over $4|\cal{X}||\cal{A}|$ variables $\{A^x_a\}, \{B^y_b\}, \{(A^x_a)^*\}, \{(B^y_b)^*\}$. The objective polynomial $p$ to be optimized is 
\[
p = \sum_{x,y \in \cal{X}} \sum_{a,b \in \cal{A}} \, \mu(x,y) \, A^x_a B^y_b \, D(x,y,a,b)~.
\]
To enforce that the operators $\{A^x_a\}, \{B^y_b\}$ correspond to POVMs, we add the constraints
\begin{enumerate}
	\item $A^x_a = (A^x_a)^*$, $B^y_b = (B^y_b)^*$ (i.e. the operators are self-adjoint);
	\item $A^x_a, B^y_b \succeq 0$ (i.e. operators are positive);
	\item $\sum_a A^x_a = \sum_b B^y_b = \id$ for all $x,y$ (i.e. operators form POVMs);
	\item $[A^x_a, B^y_b] = 0$ (i.e. Alice's and Bob's operators commute)~.
\end{enumerate}
It is easy to see that these constraints can be expressed as Hermitian polynomial inequalities. The value of this optimization problem corresponds exactly to the definition of $\omega_q$ (in the finite-dimensional case) and $\omega_{co}$ (in the infinite-dimensional case).
\end{proof}

\begin{theorem} \label{thm:ncPO-Sigma1}
Deciding if $\omega_{\mathrm{fin}}(P) \geq c$ or $\omega_{\mathrm{fin}}(P) \leq c - \eps$ for fixed $\eps > 0$ is complete for $\Sigma_1$.
\end{theorem}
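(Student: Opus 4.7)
The plan is to establish hardness and containment separately, leveraging Proposition~\ref{prop:gamesdp} for the lower bound and a brute-force enumeration argument for the upper bound.

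For $\Sigma_1$-hardness, I would reduce from the approximate $q$-value problem, which is $\Sigma_1$-hard by $\MIP^{*}=\RE$~\cite{ji_mip_re}. Given a nonlocal game $G$ with the promise that either $\omega_q(G)=1$ or $\omega_q(G)\leq 1/2$, apply Proposition~\ref{prop:gamesdp} to produce in polynomial time a ncPO program $P_G$ satisfying $\omega_{\mathrm{fin}}(P_G)=\omega_q(G)$. Setting $c=1$ and $\eps=1/2$ yields a many-one reduction to the promise problem of deciding $\omega_{\mathrm{fin}}(P)\geq c$ versus $\omega_{\mathrm{fin}}(P)\leq c-\eps$. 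To handle an arbitrary fixed $\eps>0$ (rather than just $\eps=1/2$), one can either modify the objective polynomial by a simple affine rescaling $p\mapsto\alpha p+\beta$ (which preserves the gap structure), or just observe that $\Sigma_1$-hardness for one positive value of $\eps$ already implies the theorem's statement (which is parametrized by a fixed but arbitrary $\eps$).

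For $\Sigma_1$-containment, the idea is that ``yes'' instances can be certified by exhibiting an approximately optimal finite-dimensional feasible tuple $(d,X,\ket{\phi})$ with rational entries. Concretely, I would describe a semi-decision procedure that, given $(P,c,\eps)$, enumerates triples $(d,k)\in\N^2$ and then enumerates all $n$-tuples $X$ of $d\times d$ complex matrices with Gaussian rational entries of denominator at most $2^k$, together with unit vectors $\ket{\phi}\in\C^d$ with rational coordinates of denominator at most $2^k$. For each such candidate, the procedure checks whether the constraints are approximately satisfied, i.e.\ whether $q_i(X)\succeq -\eps'\cdot\id$ for all $i$, and whether $\bra{\phi}p(X)\ket{\phi}\geq c-\eps/3$, where $\eps'=\eps'(\eps,P)$ is a sufficiently small positive rational that one fixes in advance. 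If such a candidate is found, the procedure halts and accepts. The checks are decidable because they amount to testing positive semidefiniteness and evaluating polynomial inequalities on rational matrices, both of which are computable.

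The main obstacle, and the step that requires the most care, is arguing that this procedure is both sound and complete. Completeness is the easy direction: if $\omega_{\mathrm{fin}}(P)\geq c$, then for every $\delta>0$ there exists a finite-dimensional feasible tuple achieving value at least $c-\delta$, and by continuity of $p$ and $q_i$ in the operator entries, such a tuple can be approximated arbitrarily well by Gaussian rational tuples that still achieve value at least $c-\eps/3$ and satisfy the constraints within slack $\eps'$. Soundness requires showing that if the procedure accepts some rational candidate $(X,\ket{\phi})$ with slack $\eps'$ on the constraints, then $\omega_{\mathrm{fin}}(P)>c-\eps$; equivalently, an ``$\eps'$-feasible'' candidate with value $\geq c-\eps/3$ can be perturbed to an exactly feasible tuple of slightly larger dimension whose value drops by at most $\eps/3$. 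The standard trick here is to enlarge the Hilbert space and replace each $X_i$ by a small additive perturbation of $X_i\oplus Y_i$, with $Y_i$ chosen so that the constraints are met exactly; choosing $\eps'$ sufficiently small compared to $\eps$ and the degrees and coefficients of the $q_i$'s makes the value change negligible. Packaging this perturbation argument carefully, together with a uniform bound on the operator norms implied by the feasibility constraints (which is why we restricted attention to bounded tuples in the definition of ncPO), yields the desired containment in $\Sigma_1$, and combining with hardness completes the proof.
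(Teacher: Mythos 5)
Your hardness direction coincides with the paper's: reduce from the approximate $q$-value problem using Proposition~\ref{prop:gamesdp}. (The extra remarks about general $\eps$ are harmless; hardness at a single positive $\eps$ already gives the theorem as stated.)

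Your containment argument takes a genuinely different route from the paper's, and it has a gap that the sketch does not close. The paper proves containment by fixing a dimension $d$, rewriting the $d$-dimensional restriction of $P$ as a \emph{commutative} polynomial optimization problem $P_d$ over $\C$ (variables for matrix entries and vector coordinates, positivity via minor conditions), and then invoking the decidability of the first-order theory of $\C$ to decide ``$c-\omega(P_d)\le\eps$'' exactly; the $\Sigma_1$ quantifier ranges over $d$. This avoids any rounding or perturbation issue because quantifier elimination handles the exact semialgebraic structure of the feasible set, whether or not it contains rational points.

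Your proposal instead enumerates Gaussian-rational tuples and tests $\eps'$-approximate feasibility. Completeness is fine, but soundness hinges on a perturbation step that does not work as described. You suggest replacing each $X_i$ by a small perturbation of $X_i\oplus Y_i$ so that the constraints hold exactly. But a noncommutative polynomial is a $*$-homomorphism with respect to direct sums: $q_j(X\oplus Y)=q_j(X)\oplus q_j(Y)$. Hence $q_j(X\oplus Y)\succeq 0$ still \emph{forces} $q_j(X)\succeq 0$ on the original block, so the auxiliary blocks $Y_i$ buy you nothing toward restoring exact feasibility of $X$. Only a perturbation of $X$ itself could help, and there is no a priori reason such a small perturbation exists or that the value degrades by at most $O(\eps')$: the exactly feasible set in a given dimension can be lower-dimensional, contain no rational points, or be empty while the $\eps'$-slack set is not, and one would need an effective, dimension-uniform bound (a Lojasiewicz/Positivstellensatz-type statement with explicit constants) to compute a correct $\eps'(\eps,P)$. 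Proving such a bound is a substantial project in its own right. I would follow the paper's route via commutative polynomial optimization and Tarski-style decidability, which sidesteps the rounding problem entirely.
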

\begin{proof}
$\Sigma_1$-hardness follows from Proposition \ref{prop:gamesdp} and the $\Sigma_1$-hardness of approximating $\omega_q$~\cite{ji_mip_re}.

To show that the problem is contained in $\Sigma_1$, we first argue that, when restricting the Hilbert space to have a \emph{fixed} dimension $d$, a ncPO program $P$ can be recast as a \emph{commutative} polynomial optimization problem $P_d$ over $\C$. Let $p$ denote the objective polynomial and let $q_1,\ldots,q_m$ denote the constraint polynomials. Let $x_1,\ldots,x_n$ (and $x_1^*,\ldots,x_n^*$) denote the indeterminates of the program. 

The optimization problem $P_d$ is defined as follows. To every noncommutative indeterminate $x_i$ we associate $d^2$ commutative indeterminates $x_i^{ab}$ for $1 \leq a,b \leq d$. Intuitively these indeterminates correspond to the entries of the $d \times d$ matrix that is supposed to be substituted in for $x_i$. We also introduce $d$ indeterminates $y_1,\ldots,y_d$ to represent the unit vector $\ket{\phi} \in \C^d$. 

The objective polynomial of $P_d$ is a polynomial $p_d$ that expresses the quantity \\ $\bra{\phi} p(x_1,\ldots,x_n,x_1^*,\ldots,x_n^*) \ket{\phi}$ when $\ket{\phi}$ and the indeterminates $x_i$ are substituted with the corresponding complex numbers. There are constraint polynomials in $P_d$ that encode the fact that the $x_i$ matrices are self-adjoint, and furthermore the vector $(y_1,\ldots,y_d)$ is a unit vector. To check the positivity constraints $q_i \succeq 0$ in $P$ we can instead check that all the leading principal minors of $q_i$ are positive. The order $k$ leading principal minor of a $d \times d$ matrix is the determinant of the submatrix obtained from deleting the last $d-k$ rows and columns of the matrix.

Thus, by construction, the value of $P_d$ is the value of $P$ when restricted to $d$-dimensional Hilbert spaces. We thus have $\omega_{\mathrm{fin}}(P) = \lim_{d \to \infty} \omega(P_d)$. Therefore, if $\omega_{\mathrm{fin}}(P) \geq c$ then there exists $d \in \N$ such that $c - \omega(P_d) < \epsilon$. Otherwise, if $\omega_{\mathrm{fin}}(P) \leq c - \eps$ then no such $d \in \N$ exists.

Therefore we have reduced the problem to deciding whether there exists a dimension $d$ such that $c - \omega(P_d) < \epsilon$. This corresponds to deciding a $\Sigma_1$ sentence as determining whether $c - \omega(P_d) \leq \eps$ is decidable, due to the decidability of the first order theory of the complex numbers~\cite{harrison-complex} (which is analogous to the decidability of the first order theory of the reals~\cite{decidabilityreals,existentialtheoryreal}).

\end{proof}

\begin{theorem}
Deciding if $\omega_{\mathrm{fin}}(P) \geq c$ is complete for $\Pi_2$.
\end{theorem}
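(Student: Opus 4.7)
The plan is to establish $\Pi_2$-hardness and $\Pi_2$-containment separately, mirroring the structure of the proof of Theorem \ref{thm:ncPO-Sigma1} but at one level higher in the arithmetical hierarchy.

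For hardness, I would invoke Proposition \ref{prop:gamesdp} together with the main result of the paper, Theorem \ref{thm:pi-2-completeness-of-exact-q}. Given a nonlocal game $G$, the proposition produces (in polynomial time) a ncPO program $P_G$ with $\omega_{\mathrm{fin}}(P_G) = \omega_q(G)$. Since $\omega_q(G) \leq 1$ always and the objective polynomial and constraints in the construction of $P_G$ allow one to rescale so that the relevant threshold is $c=1$, we have that $\omega_{\mathrm{fin}}(P_G) \geq 1$ if and only if $\omega_q(G) = 1$. Thus the $\Pi_2$-complete exact $q$-value problem many-one reduces to deciding $\omega_{\mathrm{fin}}(P) \geq c$, giving $\Pi_2$-hardness.

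For containment, I would reuse the dimension-by-dimension reduction to \emph{commutative} polynomial optimization already developed in the proof of Theorem \ref{thm:ncPO-Sigma1}. Recall that restricting $P$ to $d$-dimensional Hilbert spaces yields a commutative polynomial optimization problem $P_d$ over $\mathbb{C}$ (with indeterminates for the entries of each $d \times d$ matrix and for the coordinates of the unit vector, and with the positivity constraints $q_i \succeq 0$ rewritten as positivity of leading principal minors). The value $\omega(P_d)$ is a computable quantity because the first-order theory of $\mathbb{C}$ (equivalently, of $\mathbb{R}$) is decidable, and by construction $\omega_{\mathrm{fin}}(P) = \sup_d \omega(P_d)$. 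Therefore
\[
\omega_{\mathrm{fin}}(P) \geq c \quad \Longleftrightarrow \quad \forall\, k \in \mathbb{N}\ \exists\, d \in \mathbb{N} : \omega(P_d) > c - \tfrac{1}{k},
\]
and the inner predicate is decidable (comparing a computable real to a rational). This is manifestly a $\Pi_2$-sentence about $P$, placing the problem in $\Pi_2$.

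The only step that requires any care is verifying that the commutative reformulation $P_d$ used in Theorem \ref{thm:ncPO-Sigma1} correctly captures $\omega(P_d)$ as a computable quantity; this is inherited directly from the cited decidability results for the first-order theory of the complex numbers~\cite{harrison-complex}, so no new work is needed. I do not expect any substantive obstacle: the hardness direction is essentially a one-line invocation of our main theorem through Proposition \ref{prop:gamesdp}, and the containment direction is obtained by inserting one additional quantifier (``$\forall k$'') in front of the $\Sigma_1$ formulation already used for the gapped case.
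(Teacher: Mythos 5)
Your proof is correct and mirrors the paper's argument: the hardness direction is the same one-line invocation of Proposition~\ref{prop:gamesdp} plus Theorem~\ref{thm:pi-2-completeness-of-exact-q}, and for containment you and the paper both prepend a universal quantifier to the $\Sigma_1$ formulation from Theorem~\ref{thm:ncPO-Sigma1}; you simply unfold the resulting ``$\forall k \, \exists d$'' structure explicitly rather than citing that theorem as a black box, which is fine. One cosmetic note: the inner predicate ``$\omega(P_d) > c - 1/k$'' is decided via the first-order theory of $\mathbb{R}$/$\mathbb{C}$ (as in Theorem~\ref{thm:ncPO-Sigma1}) rather than by ``comparing a computable real to a rational,'' so it would be cleaner to phrase it that way, but this does not affect the correctness of the argument.
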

\begin{proof}
$\Pi_2$-hardness follows from Proposition \ref{prop:gamesdp} and Theorem \ref{thm:pi-2-completeness-of-exact-q}. 

Furthermore, deciding if $\omega_{\mathrm{fin}}(P) \geq c$ is equivalent to deciding if $c - \omega_{\mathrm{fin}}(P) < \frac{1}{n}$ for every $n \in \N$. Therefore, following from Theorem \ref{thm:ncPO-Sigma1}, we can state the decision problem as a $\Pi_2$ sentence.
\end{proof}

\begin{theorem}
Deciding if $\omega_{\infty}(P) \geq c$ is complete for $\Pi_1$.
\end{theorem}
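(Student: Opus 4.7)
The plan is to establish hardness and containment separately.

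For $\Pi_1$-hardness, I would apply Proposition~\ref{prop:gamesdp} to convert a synchronous nonlocal game $G$ into a ncPO program $P_G$ with $\omega_\infty(P_G) = \omega_{co}(G)$. Because $\omega_{co}(G) \leq 1$ always, the decision ``$\omega_\infty(P_G) \geq 1$?'' coincides with the exact $co$-value problem, which is $\Pi_1$-hard by Theorem~\ref{thm:pi-1-completeness-of-exact-co}. An arbitrary threshold $c$ is handled by a routine affine rescaling of the objective polynomial.

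For containment, the goal is to show that the complement $\{P : \omega_\infty(P) < c\}$ is recursively enumerable. The natural tool is the noncommutative Lasserre / Pironio--Navascu\'es--Ac\'in hierarchy \cite{navascues2008convergent,doherty2008quantum}, a sequence of semidefinite relaxations whose optima $\omega_1(P) \geq \omega_2(P) \geq \cdots$ converge to $\omega_\infty(P)$ from above. Thus $\omega_\infty(P) < c$ iff some SDP value $\omega_k(P)$ is strictly below $c$; enumerating over $k \in \N$ and rationals $r < c$, and computing $\omega_k(P)$ to accuracy $|c-r|/2$ by solving a finite SDP, yields the desired $\Sigma_1$ procedure for the complement.

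The principal obstacle to verify is that the hierarchy actually converges to $\omega_\infty(P)$: this requires an Archimedean (uniform norm-bound) condition on the tuples $(\cal{H}, X, \ket{\phi})$ over which one optimizes. For ncPO programs produced by Proposition~\ref{prop:gamesdp} this is automatic, because the POVM constraints $A^x_a \succeq 0$ and $\sum_a A^x_a = \id$ already force $\|A^x_a\| \leq 1$. For a fully general ncPO program I would take the Archimedean condition as implicit in the input, which is consistent with the formulation's explicit stipulation that $X$ consists of \emph{bounded} operators; any explicit norm bound can always be encoded as an additional constraint polynomial $q_i(X) = M^2 - X_i^* X_i \succeq 0$. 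Under this convention the convergence argument from \cite{navascues2008convergent} goes through and the $\Pi_1$ upper bound is established.
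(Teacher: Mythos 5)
Your proposal follows essentially the same route as the paper: $\Pi_1$-hardness via Proposition~\ref{prop:gamesdp} combined with Theorem~\ref{thm:pi-1-completeness-of-exact-co}, and containment via the noncommutative Lasserre/NPA hierarchy converging to $\omega_\infty(P)$ from above, so that the complement is $\Sigma_1$. The one place you go beyond the paper's (terser) argument is the explicit observation that the hierarchy's convergence requires an Archimedean boundedness condition, which is automatic for the game-derived programs and must be assumed (or encoded as an extra constraint $M^2 - X_i^* X_i \succeq 0$) for general ncPOs --- a legitimate technical caveat that the paper leaves implicit, and worth keeping.
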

\begin{proof}
$\Pi_1$-hardness follows from Proposition \ref{prop:gamesdp} and Theorem \ref{thm:pi-1-completeness-of-exact-co}. The inclusion is due to \cite{navascuesconvergentncPO} where they construct a sequence of commutative polynomial optimization relaxations $\{P_i\}_{i \in \N}$ where their values converge to the value of a given ncPO. Then we can decide if $\omega_{\infty}(P) \geq c$ by the $\Pi_1$ sentence
\[
\forall i \in \N, \,\, \omega(P_i) \geq c
\]
where the $\omega(P_i)$'s converge from above to the the value of  $\omega_{\infty}(P)$.
\end{proof}

\newpage

\bibliographystyle{alpha}
\bibliography{bibliography.bib}

\end{document}